\documentclass[11pt,a4paper]{article}

\usepackage{amsmath} 
\usepackage{amsthm} 
\usepackage{amssymb}	
\usepackage{graphicx} 

\usepackage{xurl} 
\usepackage[pagebackref,colorlinks,linkcolor=NavyBlue,anchorcolor=blue,citecolor=NavyBlue]{hyperref}

\hypersetup{breaklinks=true}

\usepackage[dvipsnames]{xcolor}
\usepackage[dvips,margin=1in,bottom=1in]{geometry}

\allowdisplaybreaks[3]

\usepackage{xspace}
\usepackage[T1]{fontenc}
\AtBeginDocument{%
  \DeclareFontShape{T1}{cmr}{m}{scit}{<->ssub*cmr/m/sc}{}%
}
\usepackage[utf8]{inputenc}
\usepackage[english]{babel}

\usepackage{adjustbox} 
\usepackage{footnotehyper} 
\makesavenoteenv{table}  
\usepackage{multirow}
\usepackage{makecell}
\usepackage{booktabs}
\usepackage{float}

\usepackage{enumerate}
\usepackage{enumitem}

\usepackage{authblk}
\usepackage{diagbox}

\usepackage{mathtools}
\usepackage{thmtools}
\declaretheorem[style=plain,numberwithin=section]{theorem}
\declaretheorem[style=plain,numberlike=theorem]{lemma,corollary}
\declaretheorem[style=remark,numberlike=theorem]{remark}
\declaretheorem[style=plain,numberlike=theorem]{definition}

\declaretheorem[style=definition,numberlike=theorem]{problem,conjecture,question}

\numberwithin{equation}{section}

\usepackage[algoruled,algosection]{algorithm2e}
\usepackage[capitalize,noabbrev]{cleveref}
\crefname{algorithm}{protocol}{protocols}
\crefname{conjecture}{conjecture}{conjectures}
\newcommand{\algoref}[1]{Algorithm \ref{#1}}
\newcommand{\circuitref}[1]{Circuit \ref{#1}}

\crefname{appendix}{appendix}{appendices} 
\Crefname{appendix}{Appendix}{Appendices}

\usepackage{thm-restate}
\usepackage{xpatch}
\makeatletter
\xpatchcmd{\thmt@restatable}
  {\csname #2\@xa\endcsname\ifx\@nx#1\@nx\else[{#1}]\fi}
  {\ifthmt@thisistheone
     \csname #2\@xa\endcsname\ifx\@nx#1\@nx\else[{#1}]\fi
   \else
     \csname #2\@xa\endcsname
       \ifx\@nx#1\@nx
         [restated]%
       \else
         [{#1, restated}]%
       \fi
   \fi}
  {}
  {\PackageWarning{thm-restate}{Patch for restated theorem failed}}
\makeatother

\usepackage[most]{tcolorbox}

\usepackage[normalem]{ulem}

\usepackage{comment}

\DeclarePairedDelimiter\rbra{\lparen}{\rparen}
\DeclarePairedDelimiter\sbra{\lbrack}{\rbrack}
\DeclarePairedDelimiter\cbra{\{}{\}}
\DeclarePairedDelimiter\abs{\lvert}{\rvert}
\DeclarePairedDelimiter\norm{\lVert}{\rVert}
\DeclarePairedDelimiter\ceil{\lceil}{\rceil}
\DeclarePairedDelimiter\floor{\lfloor}{\rfloor}
\DeclarePairedDelimiter\ket{\lvert}{\rangle}
\DeclarePairedDelimiter\bra{\langle}{\rvert}

\DeclarePairedDelimiterX{\set}[2]{\{}{\}}{#1 \;\delimsize|\; #2}
\DeclarePairedDelimiterX{\sbraCond}[2]{(}{)}{#1 \;\delimsize|\; #2}

\newcommand{\ketbra}[2]{\ensuremath{\ket{#1}\bra{#2}}}

\makeatletter
\def\@buildmath#1{%
  \expandafter\def\csname bb#1\endcsname{\ensuremath{\mathbb{#1}}}%
  \expandafter\def\csname bf#1\endcsname{\ensuremath{\mathbf{#1}}}%
  \expandafter\def\csname sf#1\endcsname{\ensuremath{\mathsf{#1}}}%
  \expandafter\def\csname cal#1\endcsname{\ensuremath{\mathcal{#1}}}%
  \expandafter\def\csname rm#1\endcsname{\ensuremath{\mathrm{#1}}}%
  \expandafter\def\csname tt#1\endcsname{\ensuremath{\mathtt{#1}}}%
}
\def\@buildmathletters#1{%
  \ifx#1\relax\else
    \@buildmath{#1}%
    \expandafter\@buildmathletters
  \fi
} 
\@buildmathletters ABCDEFGHIJKLMNOPQRSTUVWXYZabcdefghijklmnopqrstuvwxyz\relax
\makeatother

\newcommand{\parheading}[1]{%
  \par\addvspace{1em}%
  \noindent\emph{#1}\enspace\ignorespaces%
}

\newcommand{\Tr} {\operatorname{Tr}}
\newcommand{\poly} {\operatorname{poly}}

\newcommand{\supp} {\operatorname{supp}}

\newcommand{\sign} {\operatorname{sgn}}
\newcommand{\arctanh}{\operatorname{arctanh}}
\newcommand{\yes}{{\rm yes}}
\newcommand{\no}{{\rm no}}

\newcommand{\Out}{\mathrm{out}}
\newcommand{\Real} {\operatorname{Re}}

\newcommand{\binset}{\{0,1\}}
\newcommand{\setcomplement}[1]{\overline{#1}}
\newcommand{\GetYes}[1]{{#1}_{\yes}}
\newcommand{\GetNo}[1]{{#1}_{\no}}

\newcommand{\PromiseProblemI}{\calI}

\newcommand{\Id}{\operatorname{Id}}
\newcommand{\bit}{\mathrm{bit}}
\newcommand{\qubit}{\mathrm{qubit}}

\newcommand{\Naturals}{\mathbb{N}}

\newcommand{\Pos}{\operatorname{Pos}}
\newcommand{\Dens}{\operatorname{Dens}}

\newcommand{\prob}[1]{\Pr\sbra*{#1}}

\newcommand{\MSucc}[1]{\mathrm{MSucc}\rbra*{#1}}
\newcommand{\BSV}[2]{\mathrm{BSV}\rbra*{#1,#2}}
\newcommand{\MSV}[2]{\mathrm{MSV}\rbra*{#1;#2}}

\newcommand{\PauliX}{X}
\newcommand{\PauliZ}{Z}

\newcommand{\BPP}{\textnormal{\textsf{BPP}}\xspace}
\newcommand{\BQP}{\textnormal{\textsf{BQP}}\xspace}

\newcommand{\IP}{\textnormal{\textsf{IP}}\xspace}
\newcommand{\QIP}{\textnormal{\textsf{QIP}}\xspace}
\newcommand{\QIPbit}{\textnormal{\textsf{QIP}\textsubscript{bit}\xspace}}
\newcommand{\IPbit}{\textnormal{\textsf{IP}\textsubscript{bit}\xspace}}
\newcommand{\QMAM}{\textnormal{\textsf{QMAM}}\xspace}

\newcommand{\SZK}{\textnormal{\textsf{SZK}}\xspace}

\newcommand{\QSZK}{\textnormal{\textsf{QSZK}}\xspace}

\newcommand{\QIPtwo}{\textnormal{\textsf{QIP(2)}}\xspace}
\newcommand{\PSPACE}{\textnormal{\textsf{PSPACE}}\xspace}
\newcommand{\QAM}{\textnormal{\textsf{QAM}}\xspace}

\newcommand{\qqQAM}{\textnormal{\textsf{qq}-\textsf{QAM}}\xspace}
\newcommand{\qcQAM}{\textnormal{\textsf{qc}-\textsf{QAM}}\xspace}

\renewcommand{\H}{\mathrm{H}}
\newcommand{\Hmin}{\mathrm{H}_\infty}
\newcommand{\HminCond}[2]{\Hmin \rbra*{#1 \mid #2}}
\renewcommand{\S}{\mathrm{S}}
\newcommand{\D}{\mathrm{D}}
\newcommand{\QJS}{\textnormal{\textrm{QJS}}\xspace}
\newcommand{\JS}{\mathrm{JS}}

\newcommand{\TV}{\mathrm{TV}}
\newcommand{\TD}{\mathrm{TD}}

\newcommand{\SD}{\textnormal{\textsc{SD}}\xspace}
\newcommand{\QSD}{\textnormal{\textsc{QSD}}\xspace}
\newcommand{\MultiQSD}{\textnormal{\textsc{MultiQSD}}\xspace}

\newcommand{\BinSteerVal}{\textnormal{\textsc{BinSteerVal}}\xspace}
\newcommand{\SteerVal}{\textnormal{\textsc{SteerVal}}\xspace}

\newcommand{\ED}{\textnormal{\textsc{ED}}\xspace}
\newcommand{\QED}{\textnormal{\textsc{QED}}\xspace}

\newcommand{\QJSP}{\textnormal{\textsc{QJSP}}\xspace}

\newcommand{\innerprodF}[2]{\langle #1 , #2 \rangle}

\newcommand{\dt}{\mathrm{d}t}
\newcommand{\dd}{\mathrm{d}}

\newcommand{\CNOT}{\textnormal{\textsc{CNOT}}\xspace}
\newcommand{\protocol}[2]{{#1}\!\rightleftharpoons\!{#2}}

\begin{document}
\setlength{\abovedisplayskip}{6pt}
\setlength{\belowdisplayskip}{6pt}


\title{On quantum interactive proofs with a laconic prover}

\author[1]{Zihan Hu\thanks{Email: \href{mailto:zihan.hu@epfl.ch}{zihan.hu@epfl.ch}}}
\author[1]{Yupan Liu\thanks{Email: \href{mailto:yupan.liu@epfl.ch}{yupan.liu@epfl.ch}}}
\affil[1]{%
  School of Computer and Communication Sciences,\protect\\
  \'Ecole Polytechnique F\'ed\'erale de Lausanne%
}

\date{}
\maketitle

\pagenumbering{roman}
\thispagestyle{empty}

\begin{abstract}
Interactive proof systems with a \emph{laconic} prover, studied by \hyperlink{cite.GVW02}{Goldreich, Vadhan, and Wigderson~(CC, 2002)}, provide a model for understanding which problems can be verified with only \emph{logarithmic} prover communication in the classical setting. 
For two-message quantum analogs of such models, an easy observation is that a \emph{single-bit} prover response suffices to contain quantum statistical zero-knowledge ($\sf QSZK$), introduced by \hyperlink{cite.Watrous02}{Watrous~(FOCS 2002)}. However, restricting the verifier's question to classical public coins collapses the corresponding class to $\sf BQP$, as established by \hyperlink{cite.BSW11}{Beigi, Shor, and Watrous~(ToC, 2011)}. 

In this work, we further investigate two-message quantum interactive proof systems with a laconic prover. To this end, we introduce the class ${\sf QIP}_{\ell\text{-}{\rm bit}}(2)$, where $\ell$ denotes the length of the prover's response, and establish the following results: 
\begin{enumerate}[label={(\arabic*)}]
    \item \emph{A natural complete characterization of ${\sf QIP}_{\ell\text{-}{\rm bit}}(2)$ by \textsc{Multi-State Distinguishability}.} In particular, \textsc{Quantum State Distinguishability} (\textsc{QSD}) is ${\sf QIP}_{\rm bit}$-complete. Together with the $\sf QSZK$-hardness of \textsc{QSD} proven by \hyperlink{cite.Watrous02}{Watrous~(FOCS 2002)}, our completeness result places ${\sf QIP}_{\ell\text{-}{\rm bit}}(2)$, for $\ell\geq 2$, in a landscape ``just above'' $\sf QSZK$.
    \item \emph{Easy regimes for ${\sf QIP}_{\ell\text{-}{\rm bit}}(2)$ that collapse to $\sf QSZK$.} We prove that $\textsc{QSD}[a,b]$ (and thus ${\sf QIP}_{\rm bit}[a,b]$) is in $\sf QSZK$ when $a(n)-b(n)\geq 1/O(\log n)$, and combine this containment with an answer compression from ${\sf QIP}_{\ell\text{-}{\rm bit}}[2,c,s]$ to ${\sf QIP}_{\rm bit}$ to obtain another easy regime when the completeness $c$ and soundness $s$ satisfy $2c>(1+2^{\ell/2})s$. Remarkably, our improvement in polarizing the trace distance to the natural regime also applies to the classical setting, showing that $\textsc{SD}[a,b]$ is in $\sf SZK$ for constant $a>b$, and resolving the first open problem posed in \hyperlink{cite.SV97}{Sahai and Vadhan~(JACM, 2003)}. 
    \item \emph{Quantum public coins also make the interaction useless in ${\sf QIP}_{\ell\text{-}{\rm bit}}(2)$}: ${\sf qc}\text{-}{\sf QAM}[O(\sqrt{\log{n}})]$ with constant promise gap is in $\sf BQP$, where ${\sf qc}\text{-}{\sf QAM}[\ell]$ denotes a subclass of ${\sf QIP}_{\ell\text{-}{\rm bit}}(2)$ in which the verifier's question consists solely of halves of EPR pairs. 
\end{enumerate}
\end{abstract}

\newpage
\tableofcontents
\thispagestyle{empty}
\newpage
\pagenumbering{arabic}


\section{Introduction}

Quantum interactive proof systems (\QIP{}) were initially studied in~\cite{Watrous99QIP}, and nearly two decades after the model was proposed, quantum interactive proof systems were shown to be \emph{no more powerful} than their classical counterparts~\cite{Babai85,GMR89} when the completeness-soundness gap, i.e., the gap between the acceptance probability guaranteed on \emph{yes} instances (\emph{completeness}) and the maximum acceptance probability on \emph{no} instances (\emph{soundness}), is at least inverse-polynomial, as established by $\QIP=\PSPACE$~\cite{JJUW11} and $\IP=\PSPACE$~\cite{LFKN92,Shamir92}. However, quantum interactive proof systems admit \emph{parallelization}: any quantum interactive proof system can be parallelized to use only \emph{three} messages~\cite{KW00}, while establishing such a property for classical interactive proof systems would collapse the polynomial-time hierarchy~\cite{Babai85,GS89,BHZ87}. 
This contrast makes the two-message setting particularly mysterious for quantum interactive proof systems and motivates a closer study of its restricted variants. 

In this work, we investigate two-message quantum interactive proof systems with a \emph{laconic prover}, where the verifier sends a question of polynomial length, but the prover's response has only \emph{logarithmic} length. Our model can be viewed as a quantum counterpart of classical interactive proof systems with a laconic prover~\cite{GH98,GVW02}, where the prover's total communication is restricted to $O(\log{n})$ bits or even shorter.

Such restricted models remain powerful \emph{in general}, and even the extremely restricted versions with a single-bit response contain interactive proof systems with the statistical zero-knowledge property in both the classical and quantum settings, particularly the classes \SZK{}~\cite{SV97,GSV98} and \QSZK{}~\cite{Watrous02,Wat09}, respectively. Nevertheless, requiring the verifier's question to consist only of \emph{classical public coins} makes the interaction \emph{useless} in laconic prover settings: the classical restricted model collapses to \BPP{} when the prover sends $O(\log{n})$ bits~\cite{GH98}, while the quantum restricted model collapses to \BQP{} when the prover sends $O(\log{n})$ qubits~\cite{BSW11}.

Beyond these complexity-theoretic motivations, two-message quantum interactive proof systems with a laconic prover also have a natural connection to succinct non-interactive protocols with setup in cryptography, especially when the verifier's question is further required to be instance-independent.
In particular, starting from such a proof system, one may obtain a candidate succinct non-interactive protocol, with ``succinct'' referring to sublinear communication, by moving the verifier's question to a setup phase and leaving only the laconic prover's response online.
Studying the power of such restricted proof systems therefore provides insight into whether succinctness and statistical soundness can be achieved simultaneously in the non-interactive protocols at the cost of an instance-independent setup, before we relax the soundness requirement from statistical to computational.

\subsection{Main results}

We begin by highlighting two modeling considerations that arise in our models of (two-message) quantum interactive proof systems with a laconic prover:
\begin{description}
    \item[Classical response from the prover.] Whether the prover's response is classical or quantum appears to matter at first glance. However, using the ideas of teleportation~\cite{BBCJPW93} and superdense coding~\cite{BW92}, any $\ell$-\emph{qubit} quantum response from the prover can be simulated by a $2\ell$-\emph{bit} classical response, and vice versa. These transformations preserve both completeness and soundness. Such ideas were used to prove that $\QIP=\mathsf{qcc}\text{-}\QAM$~\cite{KLGN19}, strengthening the earlier result $\QIP=\QMAM$~\cite{MW05} by showing that not only can the verifier's message be made classical public coins (\QMAM{}), but the prover's last response can also be made classical via teleportation. A similar teleportation argument gives the equivalence $\qqQAM=\qcQAM$.\footnote{In the two-message settings, the verifier's message, consisting of halves of EPR pairs, provides the entanglement needed to make the prover's quantum response classical.}
    \item[Quantum public coins.] Beyond classical public coins, sending halves of EPR pairs serves as their \emph{quantum} counterpart. This notion was first implicitly considered in non-interactive quantum statistical zero-knowledge~\cite{Kobayashi03} and then formalized in~\cite{KLGN19} through \qqQAM{} (or equivalently, \qcQAM{}), in which the verifier's message consists of halves of EPR pairs and the prover's response is quantum (\qqQAM{}) or classical (\qcQAM{}), respectively. 
\end{description}

Accordingly, in all our models with a laconic prover, we take the prover's response to be \emph{classical}.\footnote{See \Cref{sec:classical-messages-suffice} for the details of the equivalence between $\ell$-qubit and $2\ell$-bit prover responses in two-message quantum interactive proof systems, including its application to the quantum public coin setting.}
We write $\QIP_{\ell\text{-}\bit}[2,c,s]$ for the class of promise problems possessing two-message quantum interactive proof systems in which the prover sends an $\ell$-bit classical answer with completeness $c$ and soundness $s$, and abbreviate the single-bit class as $\QIPbit$. 
For the verifier's message, we additionally consider a \emph{fully quantum} version of public coins in laconic proof systems. We write $\qcQAM[\ell,c,s]$ for the class of promise problems possessing two-message \emph{quantum public coin} quantum interactive proof systems in which the prover sends an $\ell$-bit classical answer with completeness $c$ and soundness $s$. 
For convenience, we use $\QIP_{\ell\text{-}\bit}$ and $\qcQAM[\ell]$ to denote, respectively, the unions of $\QIP_{\ell\text{-}\bit}[2,c,s]$ and $\qcQAM[\ell,c,s]$ over all $c(n)$ and $s(n)$ satisfying $c(n)-s(n)\geq 1/\poly(n)$. 

\paragraph{A landscape ``just above'' \QSZK{}.}
Our first main result is a complete characterization of $\QIP_{\ell\text{-}\bit}$ via the \textsc{Multi-state Distinguishability Problem} (\MultiQSD{}), as stated in \Cref{thm-informal:complete-problems}. Specifically, given the description of a quantum circuit $Q$ that prepares an ensemble $\{\rho_r\}_{r\in\binset^\ell}$ after tracing out non-output qubits, $\MultiQSD[\ell,a,b]$ asks whether $\MSucc{Q}$ is at least $a$ (for \emph{yes} instances) or at most $b$ (for \emph{no} instances), where $\MSucc{Q}$ denotes the optimal probability of identifying a label $r$ chosen uniformly at random upon receiving the state $\rho_r$: 
\begin{equation}
    \label{eq:MultiQSD-pacc}
    \MSucc{Q} \coloneqq \max_{\text{POVM }\cbra{E_r}} \frac{1}{2^\ell}\sum_{r\in\binset^\ell}\Tr(E_r\rho_r)\enspace.
\end{equation}
We use the abbreviation $\MultiQSD_{2^\ell}$ for $\MultiQSD[\ell,a,b]$ with $a(n)-b(n) \geq 1/\poly(n)$.
\begin{theorem}[A natural complete characterization of $\MultiQSD_{2^\ell}$, informal version of \Cref{thm:QSD-is-QIPbit-Complete,thm:MultiQSD-is-QIP-ell-bit-Complete}]
    \label{thm-informal:complete-problems}
    For every positive integer function $\ell(n)\leq O(\log n)$, 
    \[\MultiQSD_{2^\ell} \text{ is } \QIP_{\ell\text{-}\bit}\text{-complete}\enspace.\] 
    In particular, \QSD{} is \QIPbit{}-complete. 
\end{theorem}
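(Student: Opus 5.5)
The proof has two directions: membership of $\MultiQSD[\ell,a,b]$ in $\QIP_{\ell\text{-}\bit}$, and $\QIP_{\ell\text{-}\bit}$-hardness. Both rest on one observation: in a two-message protocol with an $\ell$-bit classical answer, the verifier's question together with its private workspace forms a state ensemble indexed by the answer, and the optimal prover is exactly an optimal discriminator for that ensemble.

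\textbf{Membership.} Given a circuit $Q$, the verifier samples $r\in\binset^\ell$ uniformly at random. It prepares $\rho_r$ by running $Q$ with the label register set to $r$, keeps the non-output qubits, and sends the output register to the prover. The prover replies with $\ell$ bits $r'$, and the verifier accepts iff $r'=r$. Any prover strategy is a POVM $\{E_{r'}\}$ on the received register, so the maximum acceptance probability is exactly $\MSucc{Q}$. This gives completeness $a$ and soundness $b$, and the gap $a-b\geq 1/\poly(n)$ is preserved. One technical point must be fixed in the definition: the ensemble $\{\rho_r\}$ must be generated by a single circuit with an $\ell$-bit label input. With $\ell=O(\log n)$, uniform sampling of $r$ is efficient.

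\textbf{Hardness.} Take a problem in $\QIP_{\ell\text{-}\bit}[2,c,s]$ with verifier $V$. Write the first verifier unitary as producing a state on a private register $\mathsf{V}$ and a message register $\mathsf{M}$. On receiving answer $r$, the verifier applies a unitary $U_r$ to $\mathsf{V}$ and measures an output qubit. Let $\Pi_r$ be the accepting projector pulled back to $\mathsf{V}$, and $\sigma_{\mathsf{VM}}$ the joint state. A prover POVM $\{F_r\}$ on $\mathsf{M}$ accepts with probability $\sum_r \Tr\bigl((\Pi_r\otimes F_r)\sigma\bigr)$. The goal is to rewrite this as $2^{\ell}\cdot\frac{1}{2^\ell}\sum_r\Tr(F_r\tilde\rho_r)$, where the unnormalized $\tilde\rho_r=\Tr_{\mathsf{V}}((\Pi_r\otimes I)\sigma)$ are the conditional states on $\mathsf{M}$.

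The obstacle is that the $\tilde\rho_r$ have different traces $p_r=\Tr(\Pi_r\sigma)$, while $\MultiQSD$ needs normalized states with a uniform prior. I would resolve this with a circuit $Q$ that, on label $r$, runs the verifier's first step, applies the $r$-branch check to $\mathsf{V}$, and postselects on acceptance. Postselection cannot be done directly, so I would instead mix in a junk state. On input $r$, output $\mathsf{M}$ tensored with a flag qubit $f$. If $\Pi_r$ accepts, set $f=0$ and keep $\mathsf{M}$; otherwise set $f=1$ and replace $\mathsf{M}$ by a fixed state such as $\ketbra{0}{0}$. Then
\[
\rho_r=\ketbra{0}{0}_f\otimes\tilde\rho_r+\ketbra{1}{1}_f\otimes(1-p_r)\ketbra{0}{0}.
\]
A discriminator measuring $f$ gains at most a $\frac{1}{2^\ell}\sum_r(1-p_r)\leq 1$ contribution from the $f=1$ branch, and that contribution may be large and uninformative. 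To avoid this, I would set the $f=1$ branch to the maximally mixed state on an $\ell$-bit label register. Its contribution is then exactly $\frac{1}{2^\ell}\sum_r(1-p_r)\cdot 2^{-\ell}\cdot(\text{number of correct guesses})$, and this is bounded uniformly by $2^{-\ell}$ times the average rejection mass.

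Since the $p_r$ can differ between instances, the main work is to show that $\MSucc{Q}$ is an affine function of the protocol's optimal acceptance probability, up to a term controlled uniformly. I would make this exact by having the verifier first symmetrize: it makes every $p_r$ independent of the instance, for example by padding the acceptance check with an ancilla so that each $p_r$ equals a fixed value $1/2$ plus a term carrying the dependence. I would then get $\MSucc{Q}=\alpha\,\omega(V)+\beta$ with $\alpha\geq 2^{-\ell}\geq 1/\poly(n)$, using $\ell=O(\log n)$. Thresholds $a=\alpha c+\beta$ and $b=\alpha s+\beta$ then keep an inverse-polynomial gap. The case $\ell=1$ gives the $\QSD$ completeness.

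The normalization/prior step is where I expect the difficulty. If my mixing trick fails to be affine, my fallback is to use the symmetric structure already available: the optimal success probability for unnormalized weights is $\max\sum_r\Tr(F_r\tilde\rho_r)$, which corresponds directly to a distinguishability problem with prior $p_r/\sum p_r$. I would then convert that nonuniform prior to a uniform one by appending the label-dependent flag, as above.
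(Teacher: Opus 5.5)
Your membership direction is correct and matches the paper's standard protocol: sample $r$, send $\rho_r$, accept iff the answer equals $r$. The hardness direction has a genuine gap, exactly at the step you flag. None of your proposed fixes makes $\MSucc{Q}$ an affine function of $\omega(V(x))$.

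\emph{Why label-independent padding fails.} Whether the padding is a fixed state or the maximally mixed state on an $\ell$-qubit register, the flag-$1$ blocks of all the $\rho_r$ are proportional to the same operator. The optimal discriminator therefore assigns that whole block to $\arg\max_r(1-p_r)$. The block contributes $2^{-\ell}\max_r(1-p_r)$, not an average of the rejection masses, so $\MSucc{Q}=2^{-\ell}\bigl(\omega(V(x))+\max_r(1-p_r)\bigr)$. This can reverse the promise. A yes instance with $X_{x,r}=c\,\Tr_{\sfW}(\ketbra{\psi_x}{\psi_x})$ for all $r$ has all $p_r=c=\omega$ and gives $\MSucc{Q}=2^{-\ell}$. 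A no instance with $\omega=s$ and one always-rejecting answer gives $2^{-\ell}(1+s)$.

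\emph{The other two fixes.} The symmetrization step is not a construction: ``a fixed value $1/2$ plus a term carrying the dependence'' is still instance-dependent. Modifying the verifier to equalize the $p_r$, for example by adding a branch that accepts with probability $1-p_r$, puts $(1-p_r)$-type terms into the prover's optimal value, so $\omega$ is no longer controlled. The fallback with prior $p_r/\sum_r p_r$ has the same defect, since that prior depends on the instance.

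\emph{The missing idea.} The padding must depend on the label, so that for every classical label the largest padding weight across the ensemble is a fixed constant. Writing $X_{x,r}$ for your $\tilde\rho_r$, the paper sets
\[
\rho_r=\tfrac{2^\ell-1}{2^\ell}X_{x,r}\otimes\ketbra{0}{0}\otimes\ketbra{\bar0}{\bar0}+\ketbra{\bar0}{\bar0}\otimes\ketbra{1}{1}\otimes\mu_{x,r},\qquad \mu_{x,r}=2^{-\ell}\ketbra{r}{r}+2^{-\ell}(1-p_{x,r})\sum_{r'\neq r}\ketbra{r'}{r'}.
\]
Since $\mu_{x,r}\preceq I/2^\ell$, the padding block contributes at most $1$ to $\sum_r\Tr(N_r\rho_r)$, and the measurement $\{\ketbra{r}{r}\}$ on it attains exactly $1$. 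Hence $\MSucc{Q}=2^{-\ell}+\frac{2^\ell-1}{2^{2\ell}}\,\omega(V(x))$, with slope at least $2^{-\ell-1}\geq 1/\poly(n)$. The scaling factor $(2^\ell-1)/2^\ell$ is what leaves room for the constant self-weight $2^{-\ell}$.

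\emph{Preparing $\rho_r$.} Sample $r'$ uniformly. If $r'=r$, output label $r$ with flag $1$. Otherwise run the verifier with answer $r$: on acceptance output the message register with flag $0$, and on rejection output label $r'$ with flag $1$.

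\emph{The case $\ell=1$.} This construction gives hardness of $\QSD[a/2,b/2]$. The paper's sharper $\ell=1$ result instead uses the Holevo--Helstrom formula $\omega=\frac12\Tr(X_{x,0}+X_{x,1})+\frac12\norm{X_{x,0}-X_{x,1}}_1$ for subnormalized operators. From it the paper builds two efficiently preparable states whose trace distance is exactly $\omega$.
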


\sloppypar
Notably, \Cref{thm-informal:complete-problems} not only complements the observation that $\QSZK\subseteq\QIPbit$~\cite[Section~5]{BSW11} by establishing the corresponding \QIPbit{}-hardness, but also places $\QIP_{\ell\text{-}\bit}$ in a landscape ``just above'' \QSZK{}. 
In addition, \MultiQSD{} is a natural problem arising from quantum information theory and captures the standard quantum state discrimination problem (see, e.g.,~\cite[Section~3.1.2]{Watrous18}), whose optimal success probability admits a semidefinite programming (SDP) formulation~\cite[Section~II]{EMV03} (see also~\cite[Section~4.2]{vAG19}). This promise problem not only recovers the \textsc{Quantum State Distinguishability Problem} (\QSD{}) introduced in~\cite{Watrous02} when $\ell=1$,\footnote{When $\ell=1$, \Cref{eq:MultiQSD-pacc} becomes $\max_{\text{POVM } \cbra{\Pi,I-\Pi}} \rbra*{ \Tr(\Pi\rho_0) + \Tr((I-\Pi)\rho_1) }/2 = (1+\TD(\rho_0,\rho_1))/2$, where the underlying optimal measurement is the Holevo--Helstrom measurement~\cite{Holevo73TraceDist,Helstrom69}.}
but also provides a quantum counterpart of the \textsc{Generalized Statistical Difference Problem} (\textsc{GSD}) considered in~\cite{GVW02}.\footnote{One can recover \textsc{GSD} from \MultiQSD{} by simply forcing the underlying states to be diagonal matrices that encode probability distributions $\cbra{D_r}$ over $\Omega$. Consequently, \Cref{eq:MultiQSD-pacc} becomes the desired quantity $2^{-\ell} \sum_{z\in\Omega} \max_{r\in\binset^\ell} D_r(z)$ in \textsc{GSD}.} 

\paragraph{Easy regimes that collapse to \QSZK{}.}
Our second main result concerns polarizing the trace distance (quantum) and the total variation distance (classical) in the natural regime, thereby resolving the first open problem listed in~\cite[Section~6]{SV97}: 

\begin{theorem}[Polarizing the $\ell_1$-norm distances in the natural regime, informal version of \Cref{thm:QSD-in-QSZK-natural-regime,thm:SD-in-SZK-natural-regime}]
\label{thm:quantum-polarization-informal}
For functions $a, b \colon \Naturals \to [0, 1]$ such that $0 \leq b(n) < a(n) \leq 1$ and $a(n)-b(n)\geq1/O(\log n)$, for sufficiently large $n$, the following inclusions hold:
\[ \QSD[a,b]\in\QSZK \quad\text{and}\quad \SD[a,b]\in\SZK\enspace. \]
\end{theorem}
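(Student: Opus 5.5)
Since $\QSD[a,b]$ is in $\QSZK$ whenever $a^2>b$ (Watrous) and $\SD[a,b]$ is in $\SZK$ whenever $a^2>b$ (Sahai--Vadhan), it suffices to give a Karp reduction from $\QSD[a,b]$ with $a-b\geq 1/O(\log n)$ to $\QSD[a',b']$ with $a'^2>b'$, and likewise classically. I would build the reduction by composing operations on the pair of circuits $(Q_0,Q_1)$ that move the two trace-distance thresholds apart in a way that squares the larger one less aggressively than the smaller. The two standard tools are the \emph{direct product} (tensor powers), which sends $\TD(\rho_0,\rho_1)=\delta$ to $\TD(\rho_0^{\otimes k},\rho_1^{\otimes k})\geq 1-\exp(-k\delta^2/2)$ and $\leq k\delta$, and the \emph{XOR lemma}, which sends $\delta$ to exactly $\delta^k$ in the classical case; the quantum analogue is $\TD$ of the XOR-mixtures $\frac12(\rho_0-\rho_1)^{\otimes k}$, again giving $\delta^k$. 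The Sahai--Vadhan polarization alternates these and needs $a^2>b$ because the product step's lower bound is quadratic in $\delta$.

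The key step is a finer analysis in the natural regime. Instead of using the crude Fuchs--van de Graaf-type sandwich for the direct product, I would track the fidelity or Hellinger affinity, which is exactly multiplicative under tensor powers: $F(\rho_0^{\otimes k},\rho_1^{\otimes k})=F(\rho_0,\rho_1)^k$. The plan is as follows. First, apply the XOR lemma with a carefully chosen $k=O(\log n)$ steps to map $(a,b)$ to $(a^k,b^k)$; since $a-b\geq 1/O(\log n)$, the ratio $b^k/a^k=(b/a)^k$ can be pushed below any desired $1/\poly(n)$ while $a^k$ remains $\geq 1/\poly(n)$, so the circuit size stays polynomial. Second, I need the yes case to end with small fidelity relative to the no case. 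Using $1-F\le\TD\le\sqrt{1-F^2}$, the yes instances have $1-F\ge\Omega(\TD^2)$, roughly $a^{2k}$, while the no instances have $1-F\le b^k$. So I need $a^{2k}\gg b^k$, and this fails when $a<1$ in general. I would therefore first apply a direct product step to push $a$ close to $1$ while keeping $b$ bounded away from $1$. This is exactly where the $1/O(\log n)$ gap enters: one tensors up enough copies that $a\mapsto 1-\epsilon$ with controlled $b$, using fidelity multiplicativity on both sides, namely $F_{\yes}^m$ versus $F_{\no}^m$ with $F_{\no}\ge 1-b$ and $F_{\yes}\le\sqrt{1-a^2}$.

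The main obstacle is the first case analysis: showing that for constant $b<a$ (and more generally a gap of $1/O(\log n)$), some interleaving of product and XOR, with total cost polynomial in $n$, lands in the regime $a'^2>b'$. The issue is that the sandwich inequalities between $\TD$ and fidelity lose a square. My first attempt would be to replace them by an inequality that is tight in the relevant regime. For a single step, I would use the exact behaviour of $\TD$ under tensoring for \emph{XOR-structured} states (they are pairs with a specific symmetric form where $\TD$ and fidelity are more tightly related), or use an intermediate quantity such as the triangular/Jensen--Shannon divergence, whose additivity is cleaner. I would then verify that the resulting threshold pair satisfies $a'^2>b'$. The classical statement follows by restricting every step to diagonal states, since both operations preserve classicality, and then applying the Sahai--Vadhan result.
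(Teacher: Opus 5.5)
Your proposal has a genuine gap. It sits exactly at the step you flag as the main obstacle: you never exhibit the interleaving that reaches $a'^2>b'$, and the operations you propose cannot produce it.

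\textbf{The XOR step cannot help.} It maps $(a,b)\mapsto(a^k,b^k)$, and $(a^k)^2>b^k$ holds if and only if $a^2>b$. So XOR never moves an instance across the Sahai--Vadhan threshold. There is also a side error: with gap $\Delta=1/O(\log n)$, forcing $(b/a)^k\leq 1/\poly(n)$ needs $k=\Omega(\log n/\Delta)$, which makes $a^k$ superpolynomially small for constant $a<1$, contrary to your claim.

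\textbf{The fidelity-based product step has the same defect.} Pushing the yes-distance to $1-\epsilon$ via $F_{\yes}^m\leq(1-a^2)^{m/2}$ needs $m\gtrsim 1/a^2$. Keeping the no-distance bounded away from $1$ needs $m=O(1/b)$. Worst-case pairs show both scalings are unavoidable: Bernoulli-type pairs on the yes side, and flagged pairs with $F=1-\TD$ on the no side. So the two requirements are compatible only when $b\lesssim a^2$. For, e.g., $a=0.6$ and $b=0.5$, nothing in your plan applies.

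The square is lost in the Fuchs--van de Graaf sandwich, and any composition of these black-box operations inherits it. Holenstein and Renner showed that oblivious polarization of this kind requires $a^2>b$, which is why the Sahai--Vadhan question stayed open. Replacing fidelity by a single Jensen--Shannon divergence does not rescue the argument, since $1-\H_2((1-\TD)/2)\leq\QJS_2\leq\TD$ loses the same square.

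\textbf{The missing idea.} The paper stops polarizing $\TD$ altogether and reduces directly to \QED{} (resp.\ \ED{}), which lies in \QSZK{} (resp.\ \SZK{}) for every inverse-polynomial gap. Before that reduction, it approximates $\TD(\rho_0,\rho_1)$ to \emph{additive} error $\varepsilon$ by a \emph{signed} combination $\sum_j c_j\QJS_2(\rho_{0,\lambda_j},\rho_{1,\lambda_j})$. Here $\rho_{z,\lambda}=\frac{1+\lambda}{2}\rho_z+\frac{1-\lambda}{2}\rho_{1-z}$ are smoothed pairs.
\begin{itemize}
\item The smoothed integral representation of \QJS{} via symmetrized hockey-stick divergences, together with Bonnet's mean-value theorem, bounds the error by the scalar quantity $\sup_{\abs{x}\leq1}\abs{\sum_jc_j\Phi(\lambda_jx)-\abs{x}}$.
\item Matching an efficiently computable polynomial approximation of $\abs{x}$ against the power series of $\Phi$ through a Vandermonde system makes this at most $\varepsilon$, with $\sum_j\abs{c_j}\leq2^{O(1/\varepsilon)}$.
\item Entropy is additive over flagged mixtures, so negative coefficients are realized by swapping $I_2/2\otimes\rho_+$ and $\frac12\sum_z\ketbra{z}{z}\otimes\rho_{z,\lambda_j}$ between the two output circuits.
\end{itemize}
This yields a \QED{} instance with gap $\varepsilon/\Upsilon=\Delta\,2^{-O(1/\Delta)}$. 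That exponential normalization is exactly where $\Delta\geq1/O(\log n)$ is used. The same scalar approximation gives the classical \SD{} case via \ED{}. Your passing mention of Jensen--Shannon divergence does not contain this signed-combination mechanism.
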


Here, the promise problem $\QSD[a,b]$ asks whether the trace distance $\TD(\rho_0,\rho_1)$ between the states $\rho_0$ and $\rho_1$, admitting polynomial-size state-preparation circuits, is at least $a$ or at most $b$, and serves as the quantum counterpart of \SD{}~\cite{SV97}, which is defined via total variation distance between two efficiently samplable distributions.

Polarizing the total variation distance, originally proposed in~\cite[Section~3.2]{SV97}, is the key ingredient for establishing $\SD[a,b]\in\SZK$ for constants $a^2>b$. Such a zig-zag-type construction extends readily to the trace distance~\cite[Section~4.1]{Watrous02} and shows that $\QSD[a,b]\in\QSZK$ for the same regime. A direct inspection gives that this construction extends to the regime $a(n)^2-b(n) \geq 1/O(\log{n})$. Further improvement of the classical polarization to the regime $a(n)^2-b(n) \geq 1/\poly(n)$ was only established nearly two decades later in~\cite{BDRV19}, while a direct quantum analog was established for a slightly weaker regime in~\cite{Liu23}, and the corresponding quantum analog, i.e. $\QSD[a,b]\in\QSZK$ in this same regime, is also established in our work (\Cref{sec:QSD-in-QSZK}).

Let $\IP_{\bit}[c,s]$ denote the classical analogue of $\QIPbit[c,s]$. Taking \Cref{thm:quantum-polarization-informal} together with appropriate \QIPbit{} and \IPbit{} hardness results,\footnote{$\IPbit[a,b]$ is $\SD[a,b]$-hard for all $0<b<a<1$ in~\cite[Lemma~4.1]{GVW02}, and we prove the quantum counterpart in~\Cref{thm:QSD-is-QIPbit-hard}, which improves the $\ell=1$ restriction of \Cref{thm-informal:complete-problems} by removing the $1/2$-shift in the parameter regime.} we obtain the following containments and, informally, the equivalence $\QIPbit=\QSZK$, thereby improving~\cite[Theorem 3.1]{GVW02}:

\begin{corollary}[One-bit laconic proof systems and statistical zero-knowledge, informal version of \Cref{thm:QIPbit-in-QSZK,thm:IPbit-in-SZK}]
\label{thm:bit-containments-informal}
For functions $c, s \colon \Naturals \to [0, 1]$ such that $0 \leq s(n) < c(n) \leq 1$ and $c(n)-s(n)\geq 1/O(\log n)$ for all sufficiently large $n$, we have
\[ \QIPbit[c,s]\subseteq\QSZK \quad\text{and}\quad \IPbit[c,s]\subseteq\SZK\enspace. \]
\end{corollary}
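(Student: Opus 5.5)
The corollary follows by composing a hardness reduction with the polarization theorem. I will treat the quantum and classical statements in parallel. Given a promise problem $\Pi\in\QIPbit[c,s]$ with $c(n)-s(n)\geq 1/O(\log n)$, the first step is to invoke the \QIPbit{}-hardness of \QSD{} (\Cref{thm:QSD-is-QIPbit-hard}). This is the version without the $1/2$-shift, so that it maps $\Pi$ to $\QSD[a,b]$ with $a-b$ comparable to $c-s$. Concretely, I would fix the verifier's question-preparation circuit and its decision measurement, and form two states $\rho_0,\rho_1$ on the message register. Here $\rho_r$ is the (normalized) question state weighted by the verifier's acceptance operator when the prover answers $r$. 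The maximum acceptance probability is then $\max_\Pi \Tr(\Pi\rho_0)+\Tr((I-\Pi)\rho_1)$, which after normalization is an affine function of $\TD(\rho_0,\rho_1)$. Since a single-bit prover response is exactly a two-outcome measurement on the message, this identity is what makes the reduction exact.

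With this reduction, $x\in\Pi_{\yes}$ gives $\TD\geq a(c)$ and $x\in\Pi_{\no}$ gives $\TD\leq b(s)$. The map $(c,s)\mapsto(a,b)$ is affine, with slope bounded below by a constant, or at worst by an efficiently computable normalization factor that I would absorb by mixing in a fixed state. Hence $a-b\geq\Omega(c-s)\geq 1/O(\log n)$. Applying \Cref{thm:QSD-in-QSZK-natural-regime} then gives $\QSD[a,b]\in\QSZK$, and closure of \QSZK{} under Karp reductions yields $\Pi\in\QSZK$.

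For the classical statement I follow the identical route. The hardness input is \cite[Lemma~4.1]{GVW02}, which shows that $\IPbit[c,s]$ reduces to $\SD[a,b]$, applied with the same linear relation between $a-b$ and $c-s$. I then invoke \Cref{thm:SD-in-SZK-natural-regime} and closure of \SZK{} under polynomial-time reductions.

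The main obstacle is the parameter bookkeeping in the hardness step. I must check that the reduction does not degrade the gap beyond a constant (or $1/\poly$-free) factor, so that $1/O(\log n)$ is preserved. I also need the resulting $a,b$ to be well-defined functions in $[0,1]$ with $b<a$ for large $n$, including in edge cases such as $s=0$ or $c=1$. If the verifier's question distribution introduces a normalization issue, for example if the unconditional acceptance probability is not fixed, I would first symmetrize the protocol by randomly relabeling the prover's bit. This makes the trace-distance identity exact, and from there the polarization theorem does the heavy lifting.
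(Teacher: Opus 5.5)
Your top-level plan is the paper's route: compose the exact hardness reduction of \Cref{thm:QSD-is-QIPbit-hard} (classically, \cite[Lemma~4.1]{GVW02}) with \Cref{thm:QSD-in-QSZK-natural-regime} (resp.\ \Cref{thm:SD-in-SZK-natural-regime}). If you use \Cref{thm:QSD-is-QIPbit-hard} as a black box, the argument is complete. That theorem maps $\QIPbit[c,s]$ to $\QSD$ with the \emph{same} thresholds, up to a constant shift in input length, which trivially preserves a $1/O(\log n)$ gap. However, the concrete reduction you sketch in its place would fail. The ``parameter bookkeeping'' obstacle you anticipate comes from that sketch, not from the problem.

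Let $X_{x,r}$ be the subnormalized accepting state for the fixed answer $r$ (\Cref{lemma:optimal-acceptance-classical-response}). The optimal acceptance probability is then $\omega(V(x))=\frac12\rbra{\Tr X_{x,0}+\Tr X_{x,1}+\norm{X_{x,0}-X_{x,1}}_1}$, which depends on the instance-dependent weights $\Tr X_{x,r}$.

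\begin{itemize}
\item Normalizing to $\rho_r=X_{x,r}/\Tr X_{x,r}$ throws these weights away.
\item Symmetrizing the answer bit only equalizes them to a common $p_x$. This gives $\omega(V(x))=p_x\rbra{1+\TD(\rho_0,\rho_1)}$, whose slope $p_x$ varies across instances, and mixing in a fixed state cannot remove it.
\item In fact $\TD$ is not even monotone in $\omega$. A verifier that always accepts has $\TD(\rho_0,\rho_1)=0$ but $\omega=1$. A verifier that rarely accepts, but whose two conditional accepting states are orthogonal, has $\TD(\rho_0,\rho_1)=1$ and $\omega$ tiny.
\item Preparing $X_{x,r}/\Tr X_{x,r}$ would also require postselecting on acceptance.
\end{itemize}

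The missing idea is to keep the rejection branch in a shared flag block. With $\tau_x=\Tr_{\sfW}(\ketbra{\psi_x}{\psi_x})$, set $\rho_0=\ketbra{0}{0}\otimes X_{x,0}+\ketbra{1}{1}\otimes(\tau_x-X_{x,0})$ and $\rho_1=\ketbra{2}{2}\otimes X_{x,1}+\ketbra{1}{1}\otimes(\tau_x-X_{x,1})$. Each state is prepared by running the verifier on the fixed answer $r$ and recording accept or reject. No normalization is needed, and $\TD(\rho_0,\rho_1)=\omega(V(x))$ exactly.

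On the classical side, the only bookkeeping is that \cite[Lemma~4.1]{GVW02} is quoted for $0<s<c<1$. If $s=0$ or $c=1$, first weaken the parameters to $(c-\Delta/4,\,s+\Delta/4)$ with $\Delta=c-s$, which only halves the gap.
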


\vspace{1em}
Our third main result is an \emph{answer compression} for $\QIP_{\ell\text{-}\bit}$ proof systems that compresses the prover's response from $\ell$ bits to a single bit, provided that the completeness $c$ and soundness $s$ are sufficiently separated: 

\begin{theorem}[Answer compression for $\QIP_{\ell\text{-}\bit}$, informal version of \Cref{thm:QIPlbit-answer-compression}]
\label{thm:answer-compression-informal}
Let $c, s \colon \Naturals \to [0, 1]$  be functions such that $0 \leq s(n) < c(n) \leq 1$. For every positive integer-valued function $\ell(n)\leq O(\log n)$, the following statement holds:
\[ \text{If } c > \frac{1+2^{\ell/2}}{2} \cdot s, \quad 
    \QIP_{\ell\text{-}\bit}[2,c,s] \subseteq \QIPbit\sbra*{\frac{1}{2}+\frac{c}{2^{\ell+1}},\frac{1}{2}+\frac{(1+2^{\ell/2})s}{2^{\ell+2}}}\enspace.\]
\end{theorem}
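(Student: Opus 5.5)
The plan is to go through the state-distinguishability characterizations rather than building the one-bit verifier directly. The first step is \Cref{thm-informal:complete-problems} in its quantitative form. Starting from a $\QIP_{\ell\text{-}\bit}[2,c,s]$ proof system, I would extract a polynomial-size circuit $Q$ preparing an ensemble $\{\rho_r\}_{r\in\binset^\ell}$. The intended choice is to let $\rho_r$ be the (suitably normalized) state of the verifier's private register conditioned on the verifier accepting answer $r$. This should give $\MSucc{Q}$ at least $c$ on yes instances and at most $s$ on no instances, up to the normalization fixed in the completeness proof.

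The second step maps the $2^\ell$-state instance to a two-state \QSD{} instance. I would use the standard ``correlated versus product'' pair
\[ \sigma_0 = \frac{1}{2^\ell}\sum_{r} \ketbra{r}{r}\otimes \rho_r, \qquad \sigma_1 = \frac{1}{2^\ell}\sum_r \ketbra{r}{r} \otimes \bar\rho, \qquad \bar\rho=\frac{1}{2^\ell}\sum_r\rho_r . \]
Both states have efficient preparation circuits derived from $Q$. Composing with the \QSD{}-to-\QIPbit{} direction, in which the one-bit protocol accepts with probability $(1+\TD(\sigma_0,\sigma_1))/2$, gives the claimed form of the parameters. So it suffices to show $\TD(\sigma_0,\sigma_1)\geq c/2^\ell$ on yes instances and $\TD(\sigma_0,\sigma_1)\leq (1+2^{\ell/2})s/2^{\ell+1}$ on no instances.

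The yes case is easy. I would take the optimal POVM $\{E_r\}$ from the original protocol and use the test $\sum_r \ketbra{r}{r}\otimes E_r$. It accepts $\sigma_0$ with probability $2^{-\ell}\cdot(\text{acceptance weight } c)$ and accepts $\sigma_1$ with a probability that is negligible or cancels after normalization. The choice of normalization for $\rho_r$ (for example, mixing in a rejection flag) is where I would arrange the exact $c/2^\ell$.

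The no case is the main obstacle. We have $\TD(\sigma_0,\sigma_1)=2^{-\ell-1}\sum_r\norm{\rho_r-\bar\rho}_1$, and we must bound this sum by $(1+2^{\ell/2})\,s$ using only $\MSucc{Q}\le s$. I would write $\norm{\rho_r-\bar\rho}_1 = 2\Tr(P_r(\rho_r-\bar\rho))$ with $P_r$ the positive-part projectors and split the bound into two terms. The first, $\sum_r\Tr(P_r\rho_r)$, is controlled by turning the $P_r$ into a POVM, for instance through a square-root or pretty-good-measurement normalization $E_r = S^{-1/2}P_rS^{-1/2}$ with $S=\sum_r P_r$. I would then use Cauchy--Schwarz in the Hilbert--Schmidt inner product to lose at most a factor $\sqrt{2^\ell}$, relative to what $s$ permits, which accounts for the $2^{\ell/2}s$ term. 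The second term, $\sum_r\Tr(P_r\bar\rho)$, should contribute the remaining $s$ after subtraction. The hypothesis $2c>(1+2^{\ell/2})s$ is exactly what makes the resulting yes and no bounds separated. I would first try the Barnum--Knill style bound $\MSucc_{\mathrm{PGM}}\ge\MSucc^2$ in reverse form. If that is too lossy, I would fall back to bounding $\norm{\sum_r \ketbra{r}{r}\otimes(\rho_r-\bar\rho)}_1$ via the operator inequality $\sum_r P_r\preceq 2^{\ell/2}\,\cdot$(a single POVM element) on the relevant support.
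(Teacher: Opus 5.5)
Your no case fails, and not for lack of a sharper inequality: for the correlated-versus-product pair the bound is false. The quantity $\TD(\sigma_0,\sigma_1)=2^{-\ell}\sum_r\TD(\rho_r,\bar\rho)$ measures how far the \emph{entire} $\ell$-bit label is from uniform given the side information. Leaking just one bit of $r$ already makes it large, while the guessing probability stays at $2^{1-\ell}$.

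Concretely, take verifiers that sample a uniform $r\in\binset^\ell$ and accept iff the answer equals $r$. Then every $p_{x,r}=2^{-\ell}$, so normalizing the accepted message state is unambiguous.
\begin{itemize}
\item If only $r_1$ is sent, then $\omega=2^{1-\ell}$, $\rho_r=\ketbra{r_1}{r_1}$ and $\bar\rho=I/2$, so $\TD(\sigma_0,\sigma_1)=1/2$.
\item If $(1,r)$ is sent with probability $q$ and $(0,0^\ell)$ otherwise, then $\omega=q+(1-q)2^{-\ell}$, but $\TD(\sigma_0,\sigma_1)=q(1-2^{-\ell})$.
\end{itemize}
With $\ell=6$, $q=1/4$ and $s=1/32$, the hypothesis holds ($c\approx0.262>9/64$). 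Yet the yes instance is mapped to the \emph{smaller} trace distance. Adding an accept/reject flag, or using the padded ensemble from the \MultiQSD{} reduction, produces the same inversion. So no pretty-good-measurement, Barnum--Knill, or operator-inequality argument can bound $\sum_r\norm{\rho_r-\bar\rho}_1$ by $(1+2^{\ell/2})s$.

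Step 1 has separate problems:
\begin{itemize}
\item The relevant states live on the message register $\sfM$ that the prover measures, not on the verifier's private register.
\item Normalizing each $X_{x,r}$ separately breaks the link with $\omega=\max\sum_r\Tr(M_rX_{x,r})$ when the $p_{x,r}$ differ. The paper's padding only gives $\MSucc{Q}=2^{-\ell}+\frac{2^\ell-1}{2^{2\ell}}\omega$.
\item Your yes-case test accepts $\sigma_1$ with probability exactly $2^{-\ell}$. This is not negligible at this scale, so $c/2^\ell$ is out of reach too.
\end{itemize}

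The missing idea is to compress the label to a single bit with a pairwise-independent hash whose seed is published. You should also work with the subnormalized $X_{x,r}$ directly, plus one common rejection block:
\[ \sigma_b=\frac{1}{2^{2\ell}}\sum_{u,v}\ketbra{0,u,v}{0,u,v}\otimes\sum_{r:\,h_{u,v}(r)=b}X_{x,r}+(1-p_x)\ketbra{1,\bar0}{1,\bar0},\qquad h_{u,v}(r)=\innerprodF{u}{r}\oplus v. \]
For the yes case, guess $r$ with the optimal POVM and output $h_{u,v}(r)$. This gives $\Tr(\sigma_0M_0)+\Tr(\sigma_1M_1)=1+\omega/2^\ell$ exactly: a wrong guess still hits the right hash bit for half the seeds, and the rejection block absorbs the rest. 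Hence $\TD\geq\omega/2^\ell$.

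For the no case, $\TD(\sigma_0,\sigma_1)=2^{-\ell}\sum_u\norm{A_u}_1$ with $A_u=2^{-\ell}\sum_r(-1)^{\innerprodF{u}{r}}X_{x,r}$, and the $u=0^\ell$ term is $p_x$. Let $\bar X=2^{-\ell}\sum_rX_{x,r}$. Combine Cauchy--Schwarz over the $2^\ell-1$ nonzero seeds with Renner's bound $\norm{A_u}_1^2\le p_x\Tr\bigl((\bar X^{-1/4}A_u\bar X^{-1/4})^2\bigr)$. Then $\sum_{u\ne0^\ell}\Tr\bigl((\bar X^{-1/4}A_u\bar X^{-1/4})^2\bigr)$ collapses to $\sum_r\Tr(N_rX_{x,r})-p_x\le s-p_x$, where $N_r=2^{-\ell}\bar X^{-1/2}X_{x,r}\bar X^{-1/2}$ is the pretty-good measurement (completed to a POVM). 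This yields $2^{-\ell}\bigl(p_x+\sqrt{(2^\ell-1)p_x(s-p_x)}\bigr)\le(1+2^{\ell/2})s/2^{\ell+1}$. Your pretty-good-measurement instinct is the right tool, but only after hashing.
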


It is noteworthy that \Cref{thm:answer-compression-informal} is not only a quantum analog of~\cite[Theorem~3.4]{GVW02}, but also covers a \emph{strictly larger} parameter regime. For instance, \Cref{thm:answer-compression-informal} applies to $\ell=2$, $c=1$, and $s=3/5$, while these parameters violate the requirement $s<2^{-\ell/2}$ in~\cite[Theorem~3.4]{GVW02}. 
Combining \Cref{thm:answer-compression-informal} and \Cref{thm-informal:complete-problems} with $\ell=1$, we obtain:
\begin{corollary}[Easy regimes for $\QIP_{\ell\text{-}\bit}$ with $\ell\geq 2$]
\label{corr:easy-regime}
Let $c, s \colon \Naturals \to [0, 1]$  be functions such that $0 \leq s(n) < c(n) \leq 1$ and $c(n)-s(n)\geq 1/O(\log{n})$ for all sufficiently large $n$. For every positive integer-valued function $\ell(n)\leq O(\log n)$, the following statement holds: 
\[ \text{If } c(n) - \frac{1+2^{\ell(n)/2}}{2} \cdot s(n) \geq\frac{2^{\ell(n)}}{O(\log n)}, \quad \QIP_{\ell\text{-}\bit}[2,c,s]\subseteq\QSZK. \]
\end{corollary}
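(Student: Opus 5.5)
The plan is to chain \Cref{thm:answer-compression-informal} with \Cref{thm:bit-containments-informal}, which is the combination of \Cref{thm-informal:complete-problems} at $\ell=1$ with \Cref{thm:quantum-polarization-informal}.

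First, the hypothesis $c-\frac{1+2^{\ell/2}}{2}s\geq 2^{\ell}/O(\log n)>0$ implies in particular $c>\frac{1+2^{\ell/2}}{2}s$. So \Cref{thm:answer-compression-informal} applies and gives
\[
\QIP_{\ell\text{-}\bit}[2,c,s]\subseteq\QIPbit[c',s'],\qquad c'\coloneqq\frac12+\frac{c}{2^{\ell+1}},\quad s'\coloneqq\frac12+\frac{(1+2^{\ell/2})s}{2^{\ell+2}}.
\]

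Second, we compute the new gap:
\[
c'-s'=\frac{1}{2^{\ell+1}}\Bigl(c-\frac{1+2^{\ell/2}}{2}s\Bigr)\geq\frac{1}{2^{\ell+1}}\cdot\frac{2^{\ell}}{O(\log n)}=\frac{1}{O(\log n)}.
\]
This is exactly why the hypothesis carries the factor $2^{\ell}$: the compression shrinks the gap by $2^{-(\ell+1)}$, and the factor compensates. Also $0\leq s'<c'\leq 1$, since $c\leq1$ and $\ell\geq1$ give $c'\leq 3/4$, and $s'\geq 1/2$.

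Third, we apply \Cref{thm:bit-containments-informal} (formally \Cref{thm:QIPbit-in-QSZK}) with parameters $(c',s')$. Its hypothesis $c'-s'\geq 1/O(\log n)$ holds for sufficiently large $n$, so $\QIPbit[c',s']\subseteq\QSZK$, which yields the claim.

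The only point needing care is bookkeeping, not mathematics. The constant hidden in $O(\log n)$ must be uniform in $n$. The formal statements also need $c',s'$ to be suitable (e.g., polynomial-time computable) functions, which they inherit from $c,s,\ell$. And \Cref{thm:QIPbit-in-QSZK} must tolerate gaps of size $1/O(\log n)$ in an arbitrary location of $[0,1]$, here centered near $1/2$. The paper notes that its \QIPbit{}-hardness of \QSD{} removes the $1/2$-shift, so the reduction sends $\QIPbit[c',s']$ to $\QSD[a,b]$ with $a-b\geq 1/O(\log n)$, which \Cref{thm:quantum-polarization-informal} handles. If instead the hardness reduction maps to trace distances $2c'-1$ and $2s'-1$, the gap only doubles and the same conclusion holds.
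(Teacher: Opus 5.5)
Your proposal is correct and follows the paper's route. You apply the answer compression (\Cref{thm:QIPlbit-answer-compression}), observe that the resulting gap is $c'-s' = 2^{-(\ell+1)}\bigl(c-\frac{1+2^{\ell/2}}{2}s\bigr)\geq 1/O(\log n)$, and then invoke $\QIPbit\subseteq\QSZK$ (\Cref{thm:QIPbit-in-QSZK}), which is the shift-free \QIPbit{}-hardness of \QSD{} combined with $\QSD\in\QSZK$ in the natural regime (you could equivalently stop at the intermediate \QSD{} instance of \Cref{thm:answer-compression-QSD}, whose gap is $2^{-\ell}\bigl(c-\frac{1+2^{\ell/2}}{2}s\bigr)$, and skip the round trip through \QIPbit{}).
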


\paragraph{Public coins weaken quantum proofs with a laconic prover.} Our last main result strengthens~\cite[Theorem 5.1]{BSW11} from classical public coins to \emph{quantum public coins}: 

\begin{theorem}[Quantum public coins make interaction useless in \qcQAM{}, informal version of \Cref{thm:qcQAM[1]=BQP,thm:qcQAM[sqrt log n]=BQP}]
\label{intro:thm:public-one-bit}
Let $c, s \colon \Naturals \to [0, 1]$  be functions such that $0 \leq s(n) < c(n) \leq 1$. For every positive integer-valued function $\ell(n)\leq O\rbra*{\sqrt{\log{n}}}$, the following inclusion holds:
\[ \text{If } \Delta(n)\coloneqq c(n)-s(n) \geq \Omega(1), \quad \qcQAM[\ell,c,s]\subseteq\BQP\enspace.  \]
Here, the underlying quantum algorithm remains efficient whenever $\ell^2/\Delta^6 = O(\log{n})$. 
Furthermore, in the single-bit case, $\qcQAM[1] \subseteq \BQP$ holds for all $\Delta(n) \geq 1/\poly(n)$. 
\end{theorem}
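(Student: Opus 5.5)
Our plan for $\qcQAM[\ell,c,s]\subseteq\BQP$ has three parts: rewrite acceptance as a property of a single efficiently preparable state, estimate the prover's best strategy by tomography on a small register, and then treat the single-bit case separately. In a \qcQAM{} protocol the verifier prepares $m$ EPR pairs, sends halves (register $\mathsf{P}$) to the prover, and keeps register $\mathsf{V}$. The prover measures $\mathsf{P}$ with a POVM $\{M_r\}_{r\in\binset^\ell}$ and announces $r$. The verifier then applies a projector $\Pi_r$ on $\mathsf{V}$ (after its circuit). By the transpose trick, the acceptance probability is
\[ 2^{-m}\sum_r \Tr\bigl(M_r^{\top}\Pi_r\bigr), \]
and so the optimal value equals $2^{-m}\max_{\{M_r\}}\sum_r\Tr(M_r\Pi_r)$. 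Equivalently, it is the optimal success probability in a ``reverse'' problem: a random state from the maximally mixed ensemble on $\mathsf{V}$ is given, and the prover must guess a label $r$ so that $\Pi_r$ accepts. This is a function of the $2^\ell$ projectors only, which are implementable by the verifier.

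For the BQP algorithm, we reduce the exponential-dimensional optimization to something with only $2^\ell$ outcomes. The idea is a collision/overlap characterization. Let $\sigma$ be the $2^\ell$-dimensional classical--quantum operator $\sum_r \ketbra{r}{r}\otimes\Pi_r/2^m$, viewed as a bipartite object. The optimal value $\max_M\sum_r\Tr(M_r\Pi_r)/2^m$ is an operator-norm/SDP quantity of the form $\|\cdot\|$ of a channel. We will approximate it by a polynomial of bounded degree in the quantities $\Tr(\Pi_{r_1}\Pi_{r_2}\cdots\Pi_{r_k})/2^m$. Each of these traces can be estimated in \BQP{} by running the verifier's circuits sequentially on a maximally mixed input and measuring. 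Concretely, we plan to approximate $\max\sum_r\Tr(M_r\Pi_r)$ via a pretty-good-measurement / smoothed-max argument: write the value as $\lambda$-level statistics of the operator $A=\sum_r\ketbra{r}{r}\otimes\Pi_r$. We then use a polynomial approximation of the required threshold function to accuracy $\Delta$, with degree $\poly(\ell/\Delta)$. The number of monomials is $(2^\ell)^{\deg}$. This must be $\poly(n)$, which forces $\ell\cdot\ell/\Delta^{O(1)}=O(\log n)$. That is the stated condition $\ell^2/\Delta^6=O(\log n)$, giving $\ell=O(\sqrt{\log n})$ for constant $\Delta$.

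The main obstacle is step two: proving that the optimal discrimination value is $\Delta$-approximated by a low-degree, bounded-coefficient polynomial in normalized trace moments. We would first try a PGM-type bound, where the pretty good measurement is within a square of optimal. This bound is too lossy for completeness/soundness separation, so instead we would use the SDP dual $\min\{\Tr Y/2^m : Y\ge\Pi_r\ \forall r\}$ together with a multiplicative-weights / matrix-exponential approximation of $Y$. In that approach each iteration's trace quantities are moments of products of $\Pi_r$, and $O(\ell/\Delta^2)$ iterations suffice. Sampling error is controlled by Chernoff bounds over the maximally mixed input.

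For $\ell=1$ the value equals $\tfrac12+\tfrac{1}{2^{m+1}}\|\Pi_0-\Pi_1\|_1\cdot\tfrac12$ up to normalization, i.e. a normalized Schatten-1 norm of $\Pi_0-\Pi_1$. For two projectors, Jordan's lemma decomposes them into $2\times2$ blocks with angles $\theta_j$. The value is then $\tfrac12+\tfrac{1}{2}\mathbb{E}_j\sin\theta_j$ over the Jordan-block distribution, which is induced by a maximally mixed input. We will estimate it to $1/\poly(n)$ accuracy by sampling a random block via the maximally mixed state and running phase estimation on the reflection product $(2\Pi_0-I)(2\Pi_1-I)$, whose eigenphases are $\pm2\theta_j$. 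This gives $\qcQAM[1]\subseteq\BQP$ for any inverse-polynomial gap.
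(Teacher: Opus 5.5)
Your reformulation of the optimal acceptance probability as $2^{-m}\max_{\{M_r\}}\sum_r\Tr(M_r\Pi_r)$ matches the paper's steering-game value. The general-$\ell$ argument, however, has a genuine gap: the step you call the main obstacle is the whole content of the theorem, and neither route you sketch closes it. You give no argument that $\max_{\sigma_r\succeq0,\,\sum_r\sigma_r=I/d}\sum_r\Tr(\Pi_r\sigma_r)$ is $\Delta$-approximated by a low-degree, bounded-coefficient polynomial in the moments $\Tr(\Pi_{r_1}\cdots\Pi_{r_k})/2^m$. It is an SDP optimum, not a spectral function of one operator, and the count $(2^\ell)^{\deg}$ reproduces the exponent $\ell^2/\Delta^{O(1)}$ only by fiat. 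Your multiplicative-weights fallback is missing three ingredients that the paper's proof needs.

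(i) The marginal constraint must be removed. The paper maximizes, over all $\rho\in\Dens(\sfI,\sfR)$, the penalized objective $\Tr(C\rho)-B\norm{\Tr_{\sfI}(\rho)-I_{\sfR}/d_{\sfR}}_1$, where $C=\sum_r\ketbra{r}{r}\otimes\Pi_r$ and $B=O(1/\Delta)$. This is justified by the bounded-norm near-optimal dual solution $Y_\theta=Y^*((1-\theta)I+\theta Y^*)^{-1}$. The width then becomes $\Theta(1/\Delta)$, which gives $O(\ell/\Delta^4)$ rounds, not $O(\ell/\Delta^2)$.

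(ii) You give no argument that the number of rounds depends on $\ell$ rather than on $\log\dim=O(m+\ell)$, which is all the generic MMWU bound provides. This is fatal, because each round's block-encodings invoke those of all earlier rounds, so costs compound multiplicatively. The paper gets $O(\ell)$ from the relative-entropy regret bound with the primal initialization $\rho_1=I/(2^\ell d_{\sfR})$. The optimal $\sigma^*$ is block diagonal with $\sfR$-marginal $I_{\sfR}/d_{\sfR}$, so $\D(\sigma^*\|\rho_1)\leq\ell\ln2$. Nothing analogous is established for the dual variable $Y$ you propose to update.

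(iii) The per-round feedback is a near-maximizer of $\Tr(H(\Tr_{\sfI}(\rho_t)-I_{\sfR}/d_{\sfR}))$, essentially $\sign(\Tr_{\sfI}(\rho_t)-I_{\sfR}/d_{\sfR})$. This is a nonlinear function of the current Gibbs state, so the needed quantities are not moments of products of the $\Pi_r$ that can be sampled separately. The paper builds block-encodings of $e^{K_t-\alpha_tI}$ and of $H_t$ recursively by QSVT, using exponential and sign polynomial approximations and estimating the normalizer $z_t$ along the way. The compounding of these degrees over $T$ rounds is what yields the $\exp(O(\ell^2/\Delta^6))$ cost.

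The single-bit argument is closer but also flawed as written. Your closed form drops a term. The binary Holevo--Helstrom formula gives $\BSV{\Pi_0}{\Pi_1}=\frac{\Tr(\Pi_0+\Pi_1)+\norm{\Pi_0-\Pi_1}_1}{2\cdot2^m}$; for $\Pi_0=\Pi_1=I$ the value is $1$, not $\frac12$. So the normalized traces $\Tr(\Pi_r)/2^m$ must be estimated too. More seriously, the effective operators $\Pi_r=\rbra{\bra{\bar{0}}\otimes I}Q_r^\dagger\ketbra{1}{1}Q_r\rbra{\ket{\bar{0}}\otimes I}$ on the EPR register are general effects, not projectors, because the verifier's circuit uses ancillas. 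Jordan's lemma and the reflection product $(2\Pi_0-I)(2\Pi_1-I)$ therefore do not apply. Passing to the dilated projectors on $\sfR\otimes\sfA$ does not help either, since your input is maximally mixed only on $\sfR$. Even for genuine projectors, $\frac12+\frac12\mathbb{E}_j\sin\theta_j$ is wrong on one-dimensional Jordan blocks where both or neither projector accepts.

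The paper avoids this with block-encodings. $Q_r^\dagger(I-2\ketbra{1}{1})Q_r$ exactly block-encodes $I-2\Pi_r$, and an LCU of the two exactly block-encodes $\Pi_0-\Pi_1$. QSVT with a degree-$O(1/\Delta)$ approximation of $\abs{x}$ then approximately block-encodes $\abs{\Pi_0-\Pi_1}/4$. Hadamard-test normalized trace estimation on the maximally mixed state, combined with amplitude estimation, recovers all three terms to total error below $\Delta/4$ with $O(1/\Delta^2)$ queries. Your phase-estimation idea can be rescued the same way, by qubitizing the block-encoding of $\Pi_0-\Pi_1$ instead of using reflections.
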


\subsection{Proof techniques: Hardness results and answer compression}

Both our hardness results and answer-compression reduction start from a formulation of the maximum acceptance probability of a $\QIP_{\ell\text{-}\bit}(2)$ proof system $\protocol{P}{V}$. Since the prover returns only a classical string $s \in \binset^\ell$, the prover's strategy is described by a POVM $\cbra*{M_s}_{s \in \binset^\ell}$, and the maximum acceptance probability of $\protocol{P}{V}$ can be written as an optimization problem over $\cbra*{M_r}_{r \in \binset^\ell}$ (see \Cref{lemma:optimal-acceptance-classical-response}):
\begin{align}
\label{eqn:maximum-acceptance-prob-intro}
    \omega(V(x)) \coloneq \max_{P^*}\prob{(\protocol{P^*}{V})(x)\text{ accepts}} =\max_{\textnormal{POVM }\{M_r\}_{r\in\binset^\ell}}\sum_{r\in\binset^\ell}\Tr\rbra*{M_rX_{x,r}}\enspace.
\end{align}
Here, the maximum on the right-hand side is over all POVMs on $\sfM$, and for each $r\in\binset^\ell$, 
\begin{align*}
    X_{x,r}\coloneq\Tr_{\sfM'\sfW}\rbra*{\ketbra{1}{1}_{\sfO}V_2(x)_{\sfM'\sfW}\rbra*{\ketbra{r}{r}_{\sfM'}\otimes \ketbra{\psi_x}{\psi_x}_{\sfM\sfW}}V_2(x)^\dagger_{\sfM'\sfW}}\enspace,
\end{align*}
is a subnormalized state on $\sfM$, where $V_1(x)$ and $V_2(x)$ are the verifier's actions and $\ket{\psi_x} \coloneq V_1(x)\ket{\bar{0}}_{\sfM\sfW}$ is the state on the message register $\sfM$ and the verifier's private register $\sfW$ after the verifier's first action.

\Cref{eqn:maximum-acceptance-prob-intro} is closely related to the optimal discrimination probability in \Cref{eq:MultiQSD-pacc}, except that here $X_{x,r}$ are subnormalized states and may have different traces $p_{x, r} \coloneq \Tr(X_{x, r})$. In the following, we show how to appropriately pad the operators $\cbra*{X_{x,r}}_r$ to form efficiently preparable normalized states $\cbra*{\rho_r}_r$ whose optimal discrimination probability indicates whether $\omega(V(x)) \geq c$ or $\omega(V(x)) \leq s$, reducing a promise problem in $\QIP_{\ell\text{-}\bit}(2)$ to the corresponding state discrimination problems.

\paragraph{Hardness: $\ell = 1$.}
Our starting point is the closed-form expression given by the Holevo--Helstrom formula for \emph{binary} generalized state discrimination~\cite{AM14}:
\begin{align}
\label{eqn:omega(V)-intro}
    \omega(V(x))=\frac{\Tr(X_{x ,0}) + \Tr(X_{x, 1})+\norm{X_{x, 0}-X_{x, 1}}_1}{2}\enspace.
\end{align}
We define two states whose trace distance is $\omega(V(x))$ by encoding the terms in separate blocks:
\begin{align*}
    \rho_{0} \coloneq X_{x, 0} \otimes \ketbra{0}{0} + (\tau_x - X_{x, 0}) \otimes \ketbra{1}{1}\quad\text{and}\quad\rho_{1} \coloneq X_{x, 1} \otimes \ketbra{2}{2} + (\tau_x - X_{x, 1}) \otimes \ketbra{1}{1}\enspace,
\end{align*}
where $\tau_x \coloneq \Tr_{\sfW}(\ketbra{\psi_x}{\psi_x})$. Here $\tau_x-X_{x, r}$ is the subnormalized rejection state when we run the verifier with the fixed response $r$, and thus both states $\rho_0$ and $\rho_1$ are efficiently preparable.

\paragraph{Hardness: Extending to $\ell = O(\log n)$.}
For larger $\ell$, it's not clear how to obtain a closed-form expression as \Cref{eqn:omega(V)-intro}. We therefore return to the POVM optimization and use a different padding approach. Set $p_{x,r} \coloneq \Tr(X_{x,r})$ for $r\in\binset^\ell$. For simplicity, we omit the extra qubits initialized to $\ket{0}$ that are used to match register sizes.

A natural idea is to set $\rho_r \coloneq X_{x, r} \otimes \ketbra{0}{0} + \mu_{x, r} \otimes \ketbra{1}{1}$, in which case an optimal POVM for distinguishing $\cbra*{\rho_r}$ can also be chosen to have the same block structure, with each block optimized separately. Then, it suffices to choose $\mu_{x, r}$ so that the padding contributes a fixed amount to the optimal discrimination probability. However, the normalization requires that $\Tr(\mu_{x, r})=1-p_{x, r}$, and thus the padding contributes zero when $p_{x, r}=1$ for every $r$, but a positive amount when $p_{x, r}=0$ for all $r$. 

We therefore scale $X_r$ by $1 - 1/2^\ell$, leaving at least $1/2^\ell$ the total weight for padding, and set
\[\rho_r \coloneq \frac{2^\ell - 1}{2^\ell}X_{x, r} \otimes \ketbra{0}{0} + \mu_{x, r} \otimes \ketbra{1}{1}\enspace.\]
Then $\mu_{x, r} \coloneq \frac{1}{2^\ell}\ketbra{r}{r} + \sum_{r' \ne r}\frac{1 - p_{x, r}}{2^\ell}\ketbra{r'}{r'}$ satisfies the normalization requirement, and the padding contributes a fixed amount to the optimal discrimination probability since for each classical label, the largest weight among $\cbra*{\mu_{x, r}}_r$ is exactly $1/2^\ell$. Formally, the following relates the optimal discrimination probability over $\cbra*{\rho_r}_r$ and $\omega(V(x))$:
\begin{align*}
    \max_{\textnormal{POVM }\{N_r\}_{r}} \sum_{r} \Tr(N_r \rho_r) = \frac{2^\ell - 1}{2^\ell}\max_{\textnormal{POVM }\{M_r\}_{r}}\sum_{r}\Tr\rbra*{M_rX_{x,r}} + 1 = \frac{2^\ell - 1}{2^\ell} \cdot \omega(V(x)) + 1\enspace.
\end{align*}
To prepare $\rho_r$ efficiently, choose $r'\in\binset^\ell$ uniformly at random. If $r'=r$, output the label $r$ with flag $1$. Otherwise, run the verifier with the fixed response $r$: upon acceptance, output $\sfM$ with flag $0$; upon rejection, output the label $r'$ with flag $1$.

\paragraph{Answer compression for $\QIP_{\ell\text{-}\bit}(2)$.}
To avoid the parameter loss from the padding used for \MultiQSD{}, we return to \Cref{eqn:maximum-acceptance-prob-intro} and work directly with the $2^\ell$ subnormalized states $\cbra*{X_{x,r}}$. The goal is to encode these $2^\ell$ subnormalized states into two normalized states of a \QSD{} instance, which can then be decided by a \QIPbit{} proof system, thereby compressing the prover's response from $\ell$ bits to one bit.

Our construction replaces the task of guessing the $\ell$-bit label $r$ from $X_{x,r}$ as in \Cref{eqn:maximum-acceptance-prob-intro} with the task of guessing a single-bit label $h_{u, v}(r)$ from two states as in \QSD{} by using a pairwise independent hash function $h_{u, v}(r) = \innerprodF{u}{r}\oplus v$, where $u \in \binset^\ell$ and $v \in \binset$. Let $p_x \coloneq 2^{-\ell}\sum_r p_{x, r}$. We include the hash seed $(u,v)$ and add a padding for normalization in the output state, obtaining the following efficiently preparable states for $b\in\binset$:
\begin{align*}
    \rho_b \coloneq \frac{1}{2^{2\ell}}\ketbra{0}{0} \otimes \sum_{u, v, r \text{ s.t. } h_{u, v}(r) = b}\ketbra{u, v}{u, v} \otimes X_{x, r} + (1 - p_x)\ketbra{1}{1} \otimes \ketbra{\bar{0}}{\bar{0}}\enspace.
\end{align*}
For \emph{yes} instances, a successful guess of the $\ell$-bit label $r$ also gives the correct single-bit label $h_{u,v}(r)$. For \emph{no} instances, we use the fact that pairwise independent hash families give strong randomness extractors even
in the presence of quantum side information \cite[Theorem~6]{TSSR11}.
Adapting the argument to our construction, we bound how well $h_{u,v}(r)$ can be guessed in terms of how well $r$ can be guessed. Combining the two arguments, we obtain
\[
    \frac{\omega(V(x))}{2^\ell}\leq \TD(\rho_0,\rho_1)\leq \frac{1+2^{\ell/2}}{2^{\ell+1}}\omega(V(x))\enspace.
\]
This bound yields the desired answer compression when $c>(1+2^{\ell/2})s/2$.

\subsection{Proof techniques: Polarizing the \texorpdfstring{$\ell_1$}{}-norm distances in the natural regime}

To establish \Cref{thm:quantum-polarization-informal}, we first prove \Cref{thm:QSD-in-QSZK-improved} as a warm-up, showing that $\QSD[a,b] \in \QSZK$ when $a(n)^2-b(n)\geq 1/\poly(n)$. As in~\cite{BDRV19,Liu23}, our first idea is to reduce \QSD{} to the \emph{quantum Jensen--Shannon divergence} (\QJS{}), as introduced in~\cite[Section III]{MLP05}:
\begin{equation}
    \label{eq:QJS-def}
    \QJS(\rho_0,\rho_1)\coloneqq \S\rbra[\Big]{\frac{\rho_0+\rho_1}{2}} - \frac{\S(\rho_0)+\S(\rho_1)}{2} = \frac{1}{2} \rbra[\bigg]{ \S\rbra[\Big]{\frac{\rho_0+\rho_1}{2}\otimes\frac{\rho_0+\rho_1}{2}} - \S(\rho_0\otimes\rho_1) }\enspace.
\end{equation}

Since the quantum Jensen--Shannon divergence serves as a distance version of the quantum entropy difference $\S(\sigma_0)-\S(\sigma_1)$, as indicated in \Cref{eq:QJS-def}, combining this reduction with the \QSZK{} containment of the \textsc{Quantum Entropy Difference Problem} (\QED{}) in~\cite{BASTS10} would establish \Cref{thm:QSD-in-QSZK-improved}. However, as pointed out in~\cite[Remark~4.6]{Liu23}, a direct quantum analogue of~\cite{BDRV19} does not work. Let $\QJS_2(\rho_0,\rho_1) \coloneqq \QJS(\rho_0,\rho_1)/\ln{2}$. A straightforward reduction follows from the inequalities in \Cref{eq:QJS-vs-TD}, as shown in~\cite{Holevo73,FvdG99,BH09}\footnote{See also~\cite[Section 2.2]{Liu23} for a self-contained proof.} where $\H_2(x)$ denotes the binary entropy: 
\begin{equation}
    \label{eq:QJS-vs-TD}
    1-\H_2\rbra*{ \frac{1-\TD(\rho_0,\rho_1)}{2} } \leq \QJS_2(\rho_0,\rho_1) \leq \TD(\rho_0,\rho_1)\enspace.
\end{equation}
The straightforward reduction based on~\Cref{eq:QJS-vs-TD}, as in~\cite[Theorem 4.5]{Liu23}, shows that $\QSD[a,b]\in\QSZK$ when $a(n)^2-\sqrt{2\ln{2}}\, b(n) \geq 1/\poly(n)$. 

\paragraph{Parameterized bounds for the quantum Jensen--Shannon divergence.}
To remove the factor $\sqrt{2\ln{2}}$, we use our second idea, a \emph{parameterized} version of \Cref{eq:QJS-vs-TD}. To this end, we define the state $\rho_{j,\lambda} \coloneqq \frac{1+\lambda}{2} \rho_j + \frac{1-\lambda}{2} \rho_{1-j}$ for each $j\in\cbra{0,1}$, where $\rho_{j,\lambda}$ moves continuously from $\rho_+\coloneqq (\rho_0+\rho_1)/2$ toward $\rho_j$ as $\lambda$ increases from $0$ to $1$, in analogy with~\cite[Proposition 4.8]{BDRV19}. The parameterized bounds underlying the reduction in \Cref{thm:QSD-in-QSZK-improved} are as follows: 
\begin{equation}
    \label{eq:para-QJS-vs-TD}
    \frac{\lambda^2}{2\ln2}\TD(\rho_0,\rho_1)^2
    \leq\QJS_2(\rho_{0,\lambda},\rho_{1,\lambda})
    \leq\frac{\lambda^2}{2\ln2(1-\lambda^2)} \TD(\rho_0,\rho_1)\enspace.
\end{equation}
Here, the lower bound follows directly from the quantum Pinsker inequality (see, e.g.,~\cite[Theorem 5.38]{Watrous18}), while the upper bound is more challenging: differentiate the divergence along the smoothing path (via the Daleckii--Krein formula, see, e.g.,~\cite[Section 5.3.1]{Bhatia09}), use operator monotonicity of the logarithm to bound the derivative (see, e.g.,~\cite[Section 5.3.7]{Bhatia09}), and integrate. Choosing a dyadic $\lambda^2 = \Theta(\delta)$, where $\delta\coloneqq 1/\poly(n)$ satisfies $a^2-b \geq \delta$, the \QSZK{} containment then follows as in~\cite[Theorem 4.5]{Liu23} via the reduction based on \Cref{eq:para-QJS-vs-TD}, yielding a \QED{} instance with promise gap $g(n)=\Theta(\delta)$. 

\paragraph{Signed combinations of \QJS{} between parameterized quantum states.}
Since the von Neumann entropy and entropy difference are \emph{additive}, as used in \Cref{eq:QJS-def}, our third idea is to further take an appropriate \emph{signed linear combination} of \QJS{} between parameterized quantum states, which provides more degrees of freedom to obtain an \emph{additive-error} approximation of the trace distance. Let $\Phi(x)\coloneqq1-\H_2((1+x)/2)$. For every dyadic accuracy $\varepsilon\in(0,1]$, the core inequalities underlying our quantum polarization in the natural regime are as follows, where $J=O(1/\varepsilon)$ and  dyadic numbers $c_j$ and $\lambda_j\in[0,1/2]$: 
\begin{equation}
    \label{eq:TD-approx-QJScomb}
    \abs*{\sum_{j=1}^J c_j \QJS_2(\rho_{0,\lambda_j},\rho_{1,\lambda_j})-\TD(\rho_0,\rho_1)}
    \leq \sup_{\abs{x}\leq1}\abs*{\sum_{j=1}^Jc_j\Phi(\lambda_jx)-\abs{x}}
    \leq\varepsilon\enspace.
\end{equation}

Our strategy is inspired by quantum algorithms that use (uniform) polynomial approximations to estimate trace distance and its Schatten norm generalizations~\cite{WZ24,LGLW23,LW25Lalpha}. In particular, Chebyshev-based implementations~\cite{MY23,LGLW23} of quantum singular value transformation~\cite{GSLW19} using the linear-combination-of-unitaries method~\cite{CW12,BCCKS15} suggest constructing approximations from signed combinations of simpler functions. Our argument has two parts:
\begin{enumerate}[label={\upshape(\roman*)}]
    \item \emph{Scalar error bound for trace distance approximation}. The first inequality in \Cref{eq:TD-approx-QJScomb} bounds the difference between $\sum_{j=1}^J c_j \QJS_2(\rho_{0,\lambda_j},\rho_{1,\lambda_j})$ and $\TD(\rho_0,\rho_1)$ by the difference between the scalar functions $\sum_{j=1}^Jc_j\Phi(\lambda_jx)$ and $|x|$. The key ingredient is the \emph{smoothed integral representation}  of \QJS{}, implicit in~\cite{HircheTomamichel24}, expressed in terms of the symmetrized version of the \emph{quantum hockey-stick divergence} introduced in~\cite{SW13}. 
    \item \emph{Efficient signed approximation of the absolute value function}. The second inequality in \Cref{eq:TD-approx-QJScomb} is a \emph{uniform approximation} of the form $\sum_{j=1}^Jc_j\Phi(\lambda_jx)$ to $|x|$. The starting point is an even polynomial $P$ of degree $O(1/\varepsilon)$, with $P(0)=0$ and \emph{efficiently computable} coefficients that uniformly approximates $|x|$, as in~\cite[Lemma~3.1]{LW25}.\footnote{The existence of such a uniform polynomial approximation dates back more than a century to~\cite{Bernstein14}.} The coefficients $\cbra{c_j}$ are then obtained by matching the coefficients of $P$ and the truncation of $\sum_{j=1}^Jc_j\Phi(\lambda_jx)$ using the expansion $\Phi(x)=\sum_{k=1}^{\infty}\frac{x^{2k}}{(2k-1)(2k)\ln2}$. The resulting Vandermonde system determines the coefficients, which we subsequently round to dyadic numbers.
\end{enumerate}

The main quantitative issue is the $\ell_1$ norm of the coefficient vector $\norm{\bfc}_1 = \sum_{j=1}^J\abs{c_j}$. For our choice $\lambda_j=j/(16J)$, the underlying construction guarantees $\norm{\bfc}_1\leq 2^{O(1/\varepsilon)}$. Choosing $\Upsilon$ as the smallest power of two satisfying $\Upsilon\geq \norm{\bfc}_1+\tau$, where $\tau\in[0,1]$ is dyadic and satisfies $\abs*{\tau-(a+b)/2}\leq\varepsilon$, therefore gives $\Upsilon=2^{O(1/\varepsilon)}$.

Finally, it remains to construct state-preparation circuits of the quantum states $\rho'_0$ and $\rho'_1$ that enable the use of the error bound \Cref{eq:TD-approx-QJScomb} and satisfy
\begin{equation}
    \label{eq:TD-QJS-QED}
    \S_2(\rho'_0) - \S_2(\rho'_1) = \frac{1}{\Upsilon}\sum_{j=1}^J c_j \QJS_2(\rho_{0,\lambda_j}, \rho_{1,\lambda_j}) - \frac{\tau}{\Upsilon}\enspace.
\end{equation}
Consequently, for sufficiently small $\varepsilon=\Theta(\Delta)$, where the promise gap $\Delta\coloneq a-b$, the resulting \QED{} instance has promise gap $g(n) = \varepsilon(n)/\Upsilon \geq \Delta 2^{-O(1/\Delta)}$, which remains inverse-polynomial as long as $\Delta(n)\geq 1/O(\log{n})$, as desired. 

In addition, the classical result in \Cref{thm:quantum-polarization-informal} follows from the same scalar approximation in the second inequality of \Cref{eq:TD-approx-QJScomb} and the entropy encoding in \Cref{eq:TD-QJS-QED}.

\subsection{Proof techniques: Quantum public coins also make interaction useless}

Our starting point is a complete problem for $\qcQAM[\ell]$ via the \emph{steering-game value}. We first define a steering game as follows:\footnote{The word ``steering'' refers to Alice's ability to prepare different conditional states on Bob's qubits by measuring her own qubits, while Bob's average state remains maximally mixed.} Alice and Bob share $m$ EPR pairs, with Bob's halves stored in an $m$-qubit quantum register $\sfR$ and $d_\sfR \coloneqq \dim(\sfR)$. Alice performs an arbitrary POVM $\cbra{M_r}_{r\in\binset^\ell}$ on her halves of the EPR pairs and sends the outcome $r\in\binset^\ell$. 
For each label $r$, Bob then performs the binary test with acceptance operator $\Pi_r$ on $\sfR$, and Alice wins if the test accepts. 
Writing $\sigma_r\coloneqq M_r^T/d_\sfR$ for the subnormalized conditional state on $\sfR$ corresponding to outcome $r$, the feasible families $\cbra{\sigma_r}$ in \Cref{eq:MSV-value} are in one-to-one correspondence with Alice's measurements $\cbra{M_r}$. Hence, the \emph{steering-game value} $\MSV{\ell}{\Pi}$, defined as the maximum winning probability over all of Alice's measurements, can be written as
\begin{equation}
    \label{eq:MSV-value}
    \MSV{\ell}{\Pi} \coloneqq \max_{\sigma_r\succeq 0,
    \sum_r\sigma_r=I_{\sfR}/d_{\sfR}}
    \sum_{r\in\binset^\ell}\Tr(\Pi_r\sigma_r),
    \quad\text{where } \Pi \coloneqq \cbra{\Pi_r}_{r\in\binset^\ell}\enspace.
\end{equation}
In the $\qcQAM[\ell]$ setting, Alice and Bob correspond to the prover and verifier in the proof system, respectively, and we are interested in the regime where $\ell(n)=O(\log n)$ yet $m(n)=\poly(n)$. We thus define the promise problem $\SteerVal[\ell,a,b]$, which asks whether $\MSV{\ell}{\Pi}$ is at least $a(n)$ or at most $b(n)$, and show $\qcQAM[\ell,a,b]$-completeness in \Cref{lemma:SteerVal-qcQAM[l]-complete}.

\paragraph{Warm-up: $\qcQAM[1] = \BQP$}
We begin with the single-bit case $\ell=1$, for which we write $\BSV{\Pi_0}{\Pi_1}\coloneqq\MSV{1}{\cbra*{\Pi_0,\Pi_1}}$ for simplicity. In this case, we only need to optimize over a single subnormalized state $\sigma_0$, since $\sigma_1=I_{\sfR}/d_{\sfR}-\sigma_0$. In fact, the Holevo--Helstrom formula for \emph{binary} generalized state discrimination~\cite{AM14} gives the following closed-form expression:
\begin{equation}
    \label{eq:BSV-value}
    \BSV{\Pi_0}{\Pi_1} = \frac{\Tr(\Pi_0 + \Pi_1)+\norm{\Pi_0-\Pi_1}_1}{2d_{\sfR}}\enspace.
\end{equation}
The expression in \Cref{eq:BSV-value} eliminates the optimization over the prover's strategy, and the normalization by the dimension $d_{\sfR}$ enables efficient quantum algorithms for estimating $\Tr(\Pi_0)/d_\sfR$ and $\Tr(\Pi_1)/d_\sfR$. Since both $\Pi_0$ and $\Pi_1$ admit efficient block encodings constructed from the circuit $Q$,\footnote{See \Cref{def:block-encoding} for the definition of the block-encoding.} such algorithms follow directly from normalized trace estimation techniques originally developed in the context of \textsf{DQC1}~\cite{Shepherd06,SJ08}. This normalized trace estimation can be implemented by the block-encoded Hadamard test~\cite[Lemma~9]{GP22} (i.e., the single-bit precision phase estimation~\cite{Kitaev95}; see also~\cite{AJL09}) on the maximally mixed state. 

The more challenging part is estimating the term $\norm{\Pi_0-\Pi_1}_1/d_{\sfR}=\Tr(\abs{\Pi_0-\Pi_1})/d_{\sfR}$, which requires more sophisticated techniques. The block-encoding of $\Pi_0-\Pi_1$ follows directly from the linear-combination-of-unitaries (LCU) lemma~\cite{CW12,BCCKS15}. The block-encoding of $\abs{\Pi_0-\Pi_1}$ can then be implemented approximately using the quantum singular value transformation (QSVT)~\cite{GSLW19}, together with a uniform polynomial approximation of the absolute value function whose coefficients can be computed efficiently~\cite[Lemma 3.1]{LW25}. Combining all these ingredients with amplitude estimation~\cite{BHMT02} allows us to estimate $\BSV{\Pi_0}{\Pi_1}$ to within additive error $1/\poly(n)$, thereby establishing the \BQP{} containment.

\paragraph{The general case: $\qcQAM[\sqrt{\log n}, c, s] = \BQP$ for constant gap $c - s$.}
For $\ell > 1$, however, the optimization is over $2^\ell$ subnormalized states the closed-form expression for $\BSV{\Pi_0}{\Pi_1}$ does not extend directly. 
Instead, we estimate $\MSV{\ell}{\Pi}$ using the \emph{matrix multiplicative weights update} (MMWU) framework, developed by Arora and Kale~\cite{AK07}. It is noteworthy that MMWU has been used extensively in quantum complexity theory, particularly the line of works that eventually leads to $\QIP=\PSPACE$~\cite{JW09,JUW09,JJUW11,Wu10}.

We first combine the subnormalized states $\{\sigma_r\}_{r \in \binset^\ell}$ and acceptance operators $\{\Pi_r\}_{r \in \binset^\ell}$ into $\sigma = \sum_r \ketbra{r}{r}_{\sfI} \otimes \sigma_r$ and $C = \sum_r \ketbra{r}{r} \otimes \Pi_r$. This standard block-diagonal representation converts the objective in \Cref{eq:MSV-value} into the linear functional $\Tr(C\sigma)$ over a single state $\sigma$ and the constraint into a constraint $\Tr_{\sfI}(\sigma)=I_{\sfR}/d_{\sfR}$ over the state $\sigma$. This constraint can be replaced by a trace-norm penalty, following the idea of \cite[Theorem 5]{GW13}, namely, \[F_B(\sigma) \coloneq \Tr(C\sigma) - B\norm{\Tr_{\sfI}(\sigma) - I_{\sfR}/d_{\sfR}}_1\] over the unrestricted domain $\sigma \succeq 0$ and $\Tr(\sigma) = 1$ gives an estimation of $\MSV{\ell}{\Pi}$ up to additive error $\varepsilon$ for $B = O(1/\varepsilon)$ (see \Cref{lemma:approximate-MSV-by-introducing-penalty-on-ell-1-norm}). Finally, as in \cite{MY23}, although the penalty term $\norm{\Tr_{\sfI}(\sigma) - I_{\sfR}/d_{\sfR}}_1$ is nonlinear, the dual characterization of the trace norm 
\[\max_{H = H^\dagger,\, \norm{H}_{\infty} \leq 1} \Tr\rbra*{H\rbra*{\Tr_{\sfI}(\rho)-I_{\sfR}/d_{\sfR}}} = \norm{\Tr_{\sfI}(\rho)-I_{\sfR}/d_{\sfR}}_1\]
provides the linear feedback $H = \sign\rbra*{\Tr_{\sfI}(\rho)-I_{\sfR}/d_{\sfR}}$ required by MMWU, and the framework is robust in the sense that a Hermitian contraction $H'$ that approximately maximizes $\Tr\rbra*{H\rbra*{\Tr_{\sfI}(\rho)-I_{\sfR}/d_{\sfR}}}$ also suffices for MMWU and would yield an estimate $\widetilde{w}$ of $\MSV{\ell}{\Pi}$ up to constant accuracy in $T$ iterations where $T$ is linear in the dimension of $\rho$, i.e. $T = O(\ell +  d_{\sfR})$.

However, computing the estimate $\widetilde{w}$ of $\MSV{\ell}{\Pi}$ requires time exponential in $T$, and thus requires time exponential in $n$: the block-encoding in each iteration recursively invokes the block-encodings for earlier iterations and thus the gate complexity of the block-encodings grows exponentially in $T$. We therefore use the more fine-grained relative-entropy analysis of MMWU (see e.g., \cite{WK06}), which shows that an estimate $\widetilde{w}$ of $\MSV{\ell}{\Pi}$ can be obtained in $T$ iterations for $T = O(D(\sigma^* \| \sigma_0)) = O(\ell)$, where $\sigma^*$ is the optimal solution and $\sigma_0 \coloneq I_{\sfI\sfR}/2^\ell d_{\sfR}$ is our initialization (see \Cref{lemma:estimate-MSV-with-average-G}).

To obtain the estimate $\widetilde{w}$, we implement the $T$ iterations using QSVT~\cite{GSLW19}, with polynomial approximations to the exponential function (see \Cref{lemma:exp-polynomial-approx}) to construct the block-encodings of rescaled successive Gibbs states, and polynomial approximations to the sign function (see \Cref{lemma:sign-polynomial-approx}) to generate the block-encodings of the linear feedback operators. By combining these techniques with normalized trace estimation and choosing the precision parameters appropriately, we obtain a quantum algorithm for estimating $\MSV{\ell}{\Pi}$ up to constant additive error with time complexity $\poly(n,\ell)\exp(O(\ell^2))$, establishing the \BQP{} containment. 

\subsection{Discussion and open problems}

While classical public coins make the interaction in quantum interactive proof systems with a laconic prover useless~\cite[Section~5]{BSW11}, when the prover's response length is $\ell(n)=O(\log{n})$ and the promise gap satisfies $c(n)-s(n)\geq 1/\poly(n)$,  the \BQP{} containment of the quantum-public-coin analog holds only for $\ell(n)=O\rbra*{\sqrt{\log{n}}}$ and constant promise gap in \Cref{intro:thm:public-one-bit}, which leads to a natural question: 
\begin{enumerate}[label={\upshape(\alph*)}]
    \setcounter{enumi}{0}
    \item \emph{Are two-message quantum-public-coin interactive proof systems with a laconic prover, with the prover's response length $\ell=O(\log{n})$ and the promise gap $c(n)-s(n)\geq 1/\poly(n)$, contained in $\BQP$?}
    \label{probitem:laconic-quantum-public-coins}
\end{enumerate}

While \Cref{thm:quantum-polarization-informal} provides techniques for polarizing $\ell_1$-norm distances in the natural regime $a(n)-b(n)\geq 1/O(\log{n})$ and resolves the first open problem posed in~\cite{SV97}, the analogous question remains open for the inverse-polynomial regime: 

\begin{enumerate}[label={\upshape(\alph*)}]
    \setcounter{enumi}{1}
    \item \emph{Do $\SD[a,b]\in\SZK$ and $\QSD[a,b]\in\QSZK$
    hold for the regime $a(n)-b(n)\geq 1/\poly(n)$?}
    \label{probitem:polarization}
\end{enumerate}

A limitation of our approach is the normalization factor $C$ in \Cref{eq:TD-QJS-QED}. Indeed, every signed linear combination $\sum_{j=1}^J c_j\Phi(\lambda_jx)$ with $\lambda_j\in[0,1/2]$ that uniformly approximates $\abs{x}$ on $[-1,1]$ to error at most $\varepsilon$ satisfies $\norm{\bfc}_1\geq 2^{\Omega(1/\varepsilon)}$ as $\varepsilon\to0$.\footnote{Truncating the Taylor series of this linear combination at degree $O(\log(\norm{\bfc}_1/\varepsilon))$ introduces additional error at most $\varepsilon$, giving a polynomial approximation of $\abs{x}$ with error at most $2\varepsilon$. Bernstein's lower bound~\cite{Bernstein14} therefore implies $\log(\norm{\bfc}_1/\varepsilon)=\Omega(1/\varepsilon)$. Since $\log(1/\varepsilon)=o(1/\varepsilon)$, this bound yields $\log\norm{\bfc}_1=\Omega(1/\varepsilon)$.}
Therefore, our coefficient bound is \emph{optimal} up to constants in the exponent \emph{within this approximation family}, and resolving Question~\ref{probitem:polarization} thus requires new ideas beyond our approach.


\section{Preliminaries}

For any positive integer $n$, we adopt the notation $\sbra{n} \coloneqq \cbra{1,2,\dots,n}$. We also use $\log(x)$ to denote the base-$2$ logarithm and $\ln(x)$ to denote the natural logarithm.

For a quantum register $\sfA$, we write $\Pos(\sfA)$ for the set of positive semidefinite operators acting on the Hilbert space associated with $\sfA$, and $\Dens(\sfA) \coloneq \set{\rho \in \Pos(\sfA)}{\Tr(\rho)=1}$ for the set of density operators on $\sfA$. We slightly abuse notation by using $\sfA$ to denote both the register and its associated Hilbert space.
In addition, we denote the maximally entangled state on two $m$-qubit registers by $\ket{\Phi_{2^m}} \coloneqq 2^{-m/2} \sum_{j\in[2^m]} \ket{j}\ket{j}$.
We also use $\ket{\bar{0}}$ to denote $\ket{0}^{\otimes a}$, where $a > 1$ is at most $\poly(n)$, with the relevant length parameter $n$ determined by the context.

\paragraph{Schatten norm and a matrix H\"older inequality.}
For $1\leq p < \infty$, the Schatten $p$-norm of a matrix $A$ is defined by $\norm{A}_{p} \coloneqq \rbra[\big]{\Tr\rbra{\abs{A}^{p}}}^{1/p}$, where $\abs{A} \coloneqq \sqrt{A^\dagger A}$.
The case $p=1$ is called the \textit{trace norm} $\norm{A}_1=\Tr|A|$, while the case $p=\infty$, obtained by taking the limit $p \to \infty$, is called the \textit{operator norm} $\norm{A} \coloneqq \norm{A}_{\infty} = \sigma_{\max}(A)$, where $\sigma_{\max}(A)$ is the largest singular value of $A$. These norms are connected by the following version of the matrix H\"older inequality:
\begin{lemma}[H\"older inequality for Schatten norms, adapted from~{\cite[Equation 1.174]{Watrous18}}]
    \label{lemma:matrix-Holder}
    For each $p\in[1,\infty]$, let $q\in[1,\infty]$ satisfy $\frac{1}{p}+\frac{1}{q}=1$. For every matrix $A$, the Schatten $p$-norm and Schatten $q$-norm are \emph{dual} to each other. Consequently, for all matrices $B$, we have
    \[ \abs*{\Tr\rbra[\big]{B^\dagger A}} \leq \norm{A}_p\norm{B}_q\enspace. \]
\end{lemma}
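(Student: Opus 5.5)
The plan is to derive the inequality from the variational (dual-norm) characterization of Schatten norms. The key fact is that the Schatten $q$-norm is the dual norm of the Schatten $p$-norm under the Hilbert--Schmidt pairing $\innerprodF{B}{A} = \Tr(B^\dagger A)$. Once that is in hand, the bound $\abs{\Tr(B^\dagger A)} \leq \norm{A}_p\norm{B}_q$ is just the definition of a dual norm. So the proof reduces to the scalar H\"older inequality applied to singular values.

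Concretely, I would take singular value decompositions $A = U\Sigma V^\dagger$ and $B = U'\Sigma' V'^\dagger$, with singular values $s_i(A)$ and $s_i(B)$ arranged in nonincreasing order. I would then invoke von Neumann's trace inequality,
\[ \abs{\Tr(B^\dagger A)} \leq \sum_i s_i(A)\, s_i(B). \]
This can be proved by writing $\Tr(B^\dagger A) = \Tr(\Sigma' W_1 \Sigma W_2)$ with unitaries $W_1, W_2$, noting that the matrix $\bigl(\abs{(W_1)_{ij}(W_2)_{ji}}\bigr)_{ij}$ is doubly substochastic, and applying Birkhoff's theorem together with the rearrangement inequality. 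Next, the classical H\"older inequality for finite sequences gives
\[ \sum_i s_i(A)\,s_i(B) \leq \Bigl(\sum_i s_i(A)^p\Bigr)^{1/p}\Bigl(\sum_i s_i(B)^q\Bigr)^{1/q} = \norm{A}_p\norm{B}_q. \]
The endpoint cases $p=1, q=\infty$ and $p=\infty, q=1$ reduce to $\sum_i s_i(A)s_i(B) \leq s_{\max}(B)\sum_i s_i(A)$, which is immediate.

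For the duality statement itself, I would show the bound is attained. Take $B = U\,\Sigma^{p-1}V^\dagger / \norm{A}_p^{p-1}$; for $p=1$ take $B = UV^\dagger$. This $B$ has $\norm{B}_q = 1$ and $\Tr(B^\dagger A) = \norm{A}_p$. Hence $\norm{A}_p = \max_{\norm{B}_q\leq 1}\abs{\Tr(B^\dagger A)}$, and the roles of $p$ and $q$ are symmetric.

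The main obstacle is von Neumann's trace inequality, specifically the doubly-substochastic/Birkhoff step. Since the lemma is cited as adapted from~\cite[Equation 1.174]{Watrous18}, I would defer to that reference for this step rather than reproving it.
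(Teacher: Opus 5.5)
Your proof is correct, but it does something the paper does not: the paper gives no proof of this lemma at all. It imports the statement directly from Watrous's textbook, where the duality of Schatten norms is stated and H\"older's inequality is recorded as its one-line consequence. The paper then uses only the inequality as a black box, in the QJS derivative bound and in the penalty argument of \Cref{lemma:approximate-MSV-by-introducing-penalty-on-ell-1-norm}.

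Your route is the standard self-contained derivation: von Neumann's trace inequality, then scalar H\"older on singular values, then an explicit maximizer for attainment. It makes the lemma independent of the citation, at the cost of having to establish von Neumann's inequality. For the inequality, which is all the paper needs, your first two steps already suffice; attainment matters only for the ``dual'' phrasing.

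If you want the argument fully self-contained, fix two small points.

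First, your attainment witness covers every $p\in[1,\infty)$, and by swapping roles every $q\in[1,\infty)$. It does not produce the case $\norm{A}_\infty=\max_{\norm{B}_1\leq 1}\abs{\Tr(B^\dagger A)}$, and ``the roles are symmetric'' does not by itself supply it. Take $B=u_1v_1^\dagger$ with $u_1,v_1$ the top left and right singular vectors of $A$, or invoke the finite-dimensional bidual theorem.

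Second, deferring von Neumann's inequality to the very reference the lemma is taken from is circular in spirit. You can close the doubly-substochastic step without Birkhoff. Let $x,y$ be nonnegative nonincreasing vectors and $M$ doubly substochastic. Let $\mathbf{1}_{[k]}$ denote the indicator vector of the first $k$ coordinates. Write $x=\sum_k (x_k-x_{k+1})\mathbf{1}_{[k]}$ with $x_{n+1}=0$, and expand $y$ likewise; all coefficients are nonnegative. The row- and column-sum bounds give $\mathbf{1}_{[k]}^{T}M\mathbf{1}_{[l]}\leq\min(k,l)=\mathbf{1}_{[k]}^{T}\mathbf{1}_{[l]}$. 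Summing with those coefficients yields $x^{T}My\leq x^{T}y$, which is exactly the rearrangement bound you need.
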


\subsection{Complexity classes}

We adapt the standard definition for two-message quantum interactive proof systems from \cite[Section 3.1]{JUW09}:

\begin{definition}[Two-message quantum interactive proof systems, \QIPtwo{}]\label{def:QIPtwo}
Let $c(n)$ and $s(n)$ be efficiently computable functions of the input length $n \coloneqq |x|$ such that $0 \leq s(n) < c(n) \leq 1$. 
A promise problem $\calI = (\calI_{\yes}, \calI_{\no})$ is in $\QIP\sbra*{2,c,s}$, if there exists a quantum verifier $V(x) = (V_1(x), V_2(x))$ acting on the message register $\sfM$ and its private register $\sfW$ such that:
    \begin{itemize}
        \item \textbf{\emph{Completeness}}. For any $x \in \calI_{\yes}$, there exists a prover $P(x)$, acting on its private register $\sfQ$ and the message register $\sfM$ such that
        \[\Pr\sbra*{(\protocol{P}{V})(x)\text{ accepts}} \geq c(n)\enspace.\]
        \item \textbf{\emph{Soundness}}. For any $x \in \calI_{\no}$ and any prover $P(x)$ acting on the registers $(\sfQ,\sfM)$, 
        \[\Pr\sbra*{(\protocol{P}{V})(x)\text{ accepts}} \leq s(n)\enspace.\]
    \end{itemize}

\noindent Furthermore, we define $\QIPtwo \coloneqq \QIP\sbra*{2,2/3,1/3}$, define $\QIP_{\ell\text{-}\qubit}\sbra*{2,c,s}$ to be the restriction of $\QIP\sbra*{2,c,s}$ in which the prover sends only an $\ell$-qubit quantum state, and define $\QIP_{\ell\text{-}\bit}\sbra*{2,c,s}$ to be the restriction of $\QIP\sbra*{2,c,s}$ in which the prover sends only a computational-basis state $\ket{a}$ with $a \in \binset^\ell$. We also define $\QIPbit\sbra*{c,s} \coloneqq \QIP_{1\text{-}\bit}\sbra*{2,c,s}$.
\end{definition}

In this work, we also study two-message quantum interactive proof systems with quantum public coins (the EPR pairs). We adapt the definition of generalized Arthur--Merlin proof systems from \cite[Section 3]{KLGN19}:
\begin{definition}[\qcQAM{}]
\label{def:qcQAM}
Let $c(n)$ and $s(n)$ be efficiently computable functions of the input length $n \coloneqq |x|$ such that $0 \leq s(n) < c(n) \leq 1$. Let $k(n)$ be efficiently computable functions of the input length $n$ such that $k(n) = \poly(n)$.
A promise problem $\calI = (\calI_{\yes}, \calI_{\no})$ is in $\qcQAM\sbra*{k,c,s}$, if there exists a quantum verifier $V(x)=(V_1(x),V_2(x))$ and a polynomial $p$, where $V_1$ prepares $p(n)$ EPR pairs and sends the second halves of them to the prover, and stores the first halves of them in $\sfV$, and $V_2$ acts on the message register $\sfM$ and $\sfV$, such that\emph{:}
    \begin{itemize}
        \item \textbf{\emph{Completeness}}. For any $x \in \calI_{\yes}$, there exists a prover $P(x)$ such that, upon receiving the halves of the EPR pairs, $P(x)$ returns a $k(n)$-bit string $a \in \binset^{k(n)}$ and
        \[\Pr\sbra*{(\protocol{P}{V})(x)\text{ accepts}} \geq c(n)\enspace.\]
        \item \textbf{\emph{Soundness}}. For any $x \in \calI_{\no}$ and any prover $P(x)$ that returns $k(n)$-bit string, 
        \[\Pr\sbra*{(\protocol{P}{V})(x)\text{ accepts}} \leq s(n)\enspace.\]
    \end{itemize}
\noindent For convenience, we adopt the notation $\qcQAM[k] \coloneqq \bigcup_{c(n)-s(n)\geq 1/\poly(n)} \qcQAM[k,c(n),s(n)]$.
\end{definition}

Analogously to \Cref{def:qcQAM}, one can define the class \qqQAM{}, where the only difference is that the prover's response is a $k_1(n)$-qubit \emph{quantum state} rather than a $k_2(n)$-bit string. While $\qqQAM=\qcQAM$ is known from~\cite[Theorem 1.7(ii)]{KLGN19}, the quantitative subtlety is that the equivalence holds only for $k_2=2k_1$ (see \Cref{cor:qqQAMell=qcQAM2ell}).

\subsection{Quantum state distinguishability and mixed-state preparation}

We begin by defining the \textsc{Quantum State Distinguishability Problem} (\QSD{}): 

\begin{definition}[Quantum State Distinguishability, \QSD{}, adapted from~{\cite[Section 3.3]{Watrous02}}]
    \label{def:QSD}
    Let $(Q_0, Q_1)$ be a pair of polynomial-size quantum circuits with total description length $n$ such that $Q_0$ and $Q_1$ act on $m$ qubits with $k$ specified output qubits, where $m(n)$ and $k(n)$ are polynomial in $n$. Let $\rho_0$ and $\rho_1$ be the output states obtained by running $Q_0$ and $Q_1$ on $\ket{0}^{\otimes m}$, respectively, and tracing out all non-output qubits. Let $a(n)$ and $b(n)$ be efficiently computable functions such that $0 \leq b(n) < a(n) \leq 1$. The promise problem $\QSD[a(n),b(n)]$ asks to distinguish between the following two cases: 
    \begin{itemize}
        \item \emph{Yes:} The pair of quantum circuits $(Q_0,Q_1)$ satisfies $\TD(\rho_0,\rho_1) \geq a(n)$;
        \item \emph{No:} The pair of quantum circuits $(Q_0,Q_1)$ satisfies $\TD(\rho_0,\rho_1) \leq b(n)$. 
    \end{itemize}
\end{definition}

We also need a \emph{mixed-state} analogue of the linear-combination-of-unitaries
lemma~\cite{CW12,BCCKS15} with nonnegative dyadic coefficients:

\begin{lemma}[Preparing dyadic convex combinations of quantum states]
    \label{lemma:dyadic-linear-combi-states}
    Let $J$ be a positive integer. For each $j\in[J]$, let $Q_j$ be a quantum circuit with $r$ specified output qubits. Let $\rho_j$ denote the state obtained by applying $Q_j$ to the input state $\ket{\bar{0}}$ and tracing out all non-output qubits. Let $p_1,\ldots,p_J$ be nonnegative dyadic numbers satisfying $\sum_{j=1}^Jp_j=1$, and write $p_j=t_j/2^L$ for nonnegative integers $t_j$ and a common positive integer $L$. One can construct quantum circuits $Q$ and $\widehat Q$ whose output states are, respectively,
    \[
        \rho=\sum_{j=1}^Jp_j\rho_j
        \quad\text{and}\quad
        \widehat\rho
        =\sum_{j=1}^Jp_j\ketbra{j-1}{j-1}_{\sfF}\otimes\rho_j\enspace.
    \]
    Here, the label register $\sfF$, appearing in $\widehat{Q}$ and $\widehat{\rho}$, uses $\ceil*{\log J}$ qubits. Each circuit queries each $Q_j$ once and uses $O\rbra*{J(L+r+\log J)}$ additional one- and two-qubit quantum gates. The circuit descriptions can be computed in time polynomial in the total input description length, $r$, and $L$.
\end{lemma}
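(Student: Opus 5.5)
The plan is to build $\widehat Q$ first and then obtain $Q$ by tracing out the label register $\sfF$; tracing out is free, since we simply do not designate the $\sfF$ qubits as output qubits. The circuit $\widehat Q$ will have three stages.

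\textbf{Stage 1: sample the label.} Apply Hadamard gates to $L$ fresh qubits $\sfU$ initialized to $\ket{\bar 0}$. This produces the uniform superposition over $u\in\{0,\dots,2^L-1\}$. Partition this range into consecutive intervals $I_j=[T_{j-1},T_j)$ with $T_j=\sum_{i\le j}t_i$, so that $\abs{I_j}=t_j$; because $\sum_j t_j=2^L$, the intervals cover the range exactly. Next, using $\ceil{\log J}$ fresh qubits in $\sfF$, coherently compute the map $\ket{u}\ket{0}_\sfF\mapsto\ket{u}\ket{j(u)-1}_\sfF$, where $j(u)$ is the index with $u\in I_{j(u)}$. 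To do this, for each $j$ we compare $u$ against the classical constant $T_j$ with a reversible comparator of $O(L)$ gates. The comparison outcomes are then converted into the binary encoding of $j-1$: for each $j$, we XOR the bits of $j-1$ into $\sfF$ controlled on the condition ``$u\ge T_{j-1}$ and $u<T_j$'', at a cost of $O(\log J)$ gates. The comparator ancillas are uncomputed afterwards. If we trace out $\sfU$, or equivalently just leave it unmeasured as a non-output register, the reduced state on $\sfF$ is $\sum_j (t_j/2^L)\ketbra{j-1}{j-1}$. This is the step I expect to be the main obstacle, not conceptually but in bookkeeping. The gate count must come out as $O(J(L+\log J))$, which needs comparators against constants of linear size and no multiplicative blow-up across the $J$ indices. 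Zero $t_j$ are handled automatically because the corresponding intervals are empty.

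\textbf{Stage 2: controlled preparation.} Each $Q_j$ acts on its own fresh workspace initialized to $\ket{\bar 0}$, and we run every $Q_j$ exactly once, unconditionally. We then apply a controlled-SWAP, conditioned on $\sfF=j-1$, that moves the $r$ output qubits of $Q_j$ into a common output register $\sfO$ (initialized to $\ket{0}^{\otimes r}$). Each such controlled SWAP costs $O(r+\log J)$ gates, namely equality testing on $\sfF$ plus $r$ Fredkin gates, which gives $O(J(r+\log J))$ in total. Running every $Q_j$ unconditionally means we need no controlled-$Q_j$, so each $Q_j$ is queried exactly once as required.

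\textbf{Stage 3: correctness.} Conditioned on $\sfF=j-1$, the register $\sfO$ holds the output of $Q_j$, and the other workspaces are traced out. Since $\sfF$ is classically correlated with a traced-out register (it is a function of the computational-basis content of $\sfU$), the joint state on $\sfF\sfO$ is exactly $\sum_j p_j\ketbra{j-1}{j-1}\otimes\rho_j=\widehat\rho$. Tracing out $\sfF$ then yields $\rho$.

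\textbf{Computability.} The circuit descriptions are computable in polynomial time. The gate pattern is explicit, it depends only on the binary expansions of the thresholds $T_j$, and the $Q_j$ are copied verbatim.
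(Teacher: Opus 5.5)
Your proposal is correct and follows essentially the same route as the paper: a uniform superposition over $\{0,\dots,2^L-1\}$, interval tests against the cumulative sums $T_j$, running every $Q_j$ once unconditionally, controlled-swapping the selected output into a common register, and tracing out the index register to remove the coherences. The only cosmetic difference is bookkeeping: the paper stores the interval tests in a one-hot $J$-qubit register $\mathsf{C}$ and conditions both the label XOR into $\mathsf{F}$ and the swap on $\mathsf{C}_j$, whereas you write the binary label into $\mathsf{F}$ first and condition each swap on an equality test on $\mathsf{F}$, which gives the same $O(J(L+r+\log J))$ gate count.
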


\begin{proof}
We construct $\widehat Q$ as specified in \circuitref{circuit:dyadic-convex-combi-states}.

\begingroup
\LinesNumbered
\begin{algorithm}[!ht]
    \SetAlgorithmName{Circuit}{Algorithm}{List of Algorithms}
    \caption{Preparing a dyadic convex combination of quantum states.}
    \label{circuit:dyadic-convex-combi-states}
    \SetEndCharOfAlgoLine{.}
    \setlength{\parskip}{5pt}
    \SetKwFor{For}{For}{:}{}
    \SetKwComment{Comment}{// }{}
    \SetKwInOut{Input}{Input}
    \SetKwInOut{Output}{Output}
    \SetKwInOut{Registers}{Registers}

    \Input{Quantum circuits $Q_j$, each with $r$ specified
    output qubits, and nonnegative integers $t_j$ for $j\in[J]$,
    where $\sum_{j=1}^Jt_j=2^L$.}
    \Output{The state $\widehat\rho$ on $(\sfF,\sfA)$,
    where $\sfF$ and $\sfA$ use $\ceil*{\log J}$ and $r$
    qubits, respectively.}
    \Registers{All registers are initialized to $\ket{\bar{0}}$, including the label register $\sfF$, an $L$-qubit index register $\sfI$, and a $J$-qubit control register $\sfC$.}

    Apply $H^{\otimes L}$ to the register $\sfI$, preparing $2^{-L/2}\sum_{x=0}^{2^L-1}\ket{x}_{\sfI}$\;

    Controlled on the register $\sfI$, flip the corresponding qubits of the register $\sfC$ to implement
    \[
        \ket{x}_{\sfI}\ket{\bar{0}}_{\sfC}
        \mapsto
        \ket{x}_{\sfI}
        \ket{c_1(x),\ldots,c_J(x)}_{\sfC},
        \quad\text{where }
        c_j(x)={\bf1}_{T_{j-1}\leq x<T_j}\enspace,
    \]
    for every $x\in\{0,\ldots,2^L-1\}$. Here, $T_0\coloneqq 0$ and $T_j \coloneqq \sum_{k=1}^jt_k$ for each $j\in[J]$\;

    \For{each $j\in[J]$}{
        Apply $Q_j$ to the registers $(\sfA_j,\sfR_j)$, where $\sfA_j$ contains the specified output qubits\;

        Controlled on the $j$-th qubit of the register $\sfC$ being $1$, apply $\ket{f}_{\sfF}\mapsto\ket{f\oplus(j-1)}_{\sfF}$, and swap the registers $(\sfA_j, \sfA)$\;
    }

    Output $(\sfF,\sfA)$ and trace out all other registers\;
\end{algorithm}
\endgroup

We now show the correctness of \circuitref{circuit:dyadic-convex-combi-states}. 
For each $j\in[J]$, write $\ket{\psi_j} \coloneqq Q_j\ket{\bar{0}}$. 
For each integer $x\in[T_{j-1}, T_j)$, we have $c_j(x)=1$ and $c_{j'}(x)=0$ for every $j' \neq j$. 
Thus, the joint state immediately before Line 6 in \circuitref{circuit:dyadic-convex-combi-states} is
\[ \ket{\Psi} = 2^{-L/2}
    \sum_{j=1}^J\sum_{x=T_{j-1}}^{T_j-1} \ket{x}_{\sfI} \ket{c_1(x),\ldots,c_J(x)}_{\sfC} \ket{j-1}_{\sfF}
    \otimes\ket{\psi_j}_{\sfA\sfR_j}\otimes\ket{\bar{0}}_{\sfA_j}
    \otimes\bigotimes_{\substack{j'\in[J]\\j'\ne j}}\ket{\psi_{j'}}_{\sfA_{j'}\sfR_{j'}}\enspace. \]
Consequently, a direct calculation gives
\begin{align*}
    \Tr_{(\sfI,\sfC,\sfA_1,\sfR_1,\ldots,\sfA_J,\sfR_J)} \rbra*{ \ketbra{\Psi}{\Psi} }
    &= 2^{-L}\sum_{j=1}^J\sum_{x=T_{j-1}}^{T_j-1} \ketbra{j-1}{j-1}_{\sfF}
        \otimes \Tr_{\sfR_j} \rbra*{ \ketbra{\psi_j}{\psi_j}_{\sfA\sfR_j} }\\
    &= \sum_{j=1}^J\frac{T_j-T_{j-1}}{2^L} \ketbra{j-1}{j-1}_{\sfF}\otimes\rho_j\\
    &= \sum_{j=1}^J p_j \ketbra{j-1}{j-1}_{\sfF}\otimes\rho_j
    =\widehat{\rho}\enspace.
\end{align*}
Here, the first line uses $\Tr(\ketbra{x}{y})=\delta_{x,y}$ when tracing out $\sfI$ and $\Tr(\ketbra{\psi_{j'}}{\psi_{j'}})=1$ when tracing out $(\sfA_{j'},\sfR_{j'})$ for each $j' \neq j$, the second line follows from $\Tr_{\sfR_j}(\ketbra{\psi_j}{\psi_j}) = \rho_j$, and the last line follows from $(T_j-T_{j-1})/2^L = t_j/2^L = p_j$. Further tracing out $\sfF$ gives $\Tr_{\sfF}(\widehat{\rho}) = \sum_{j=1}^J p_j \rho_j =\rho.$

Each circuit applies every $Q_j$ once. Using reversible interval tests, decompositions of Toffoli gates into one- and two-qubit gates, and controlled swaps, the additional gate count is
\[
    \underbrace{L}_{\text{Line 1}} + \underbrace{O(JL)}_{\text{Line 2: interval test}} + \sum_{j=1}^J \rbra*{
        \underbrace{O(\log J)}_{\text{Line 5: label}}
        +\underbrace{O(r)}_{\text{Line 5: swap}}
    }
    =O\rbra*{J(L+r+\log J)}\enspace.
\]
Computing the endpoints $T_j$ and generating the two circuit descriptions of $Q$ and $\widehat Q$ take time polynomial in the total input description length, $r$, and $L$.
\end{proof}

\subsection{Information-theoretic measures for distributions and states}

We begin by introducing the classical and quantum $\ell_1$-norm distances: 

\begin{definition}[Total variation distance]
	\label{def:statDist}
    Let $p_0$ and $p_1$ be two probability distributions over a finite set $S$. The total variation distance between two $p_0$ and $p_1$ is defined by
	\[\TV(p_0,p_1)  \coloneqq  \frac{1}{2}\|p_0-p_1\|_1 = \frac{1}{2}\sum_{x\in S} |p_0(x)-p_1(x)|\enspace.\]
\end{definition}

\begin{definition}[Trace distance]
    Let $\rho_0$ and $\rho_1$ be quantum states of the same dimension. 
    The trace distance between $\rho_0$ and $\rho_1$ is defined by 
     \[\TD(\rho_0,\rho_1) \coloneqq \frac{1}{2} \norm{\rho_0-\rho_1}_1 = \frac{1}{2}\Tr\rbra{|\rho_0-\rho_1|}\enspace.\]
\end{definition}

We then introduce classical and quantum entropies, together with their symmetrized distance versions and some basic properties:

\begin{definition}[Shannon and binary entropies]
    Let $p$ be a probability distribution over a finite set $S$. The Shannon entropy of $p$ is defined by 
    \[ \H(p) \coloneqq -\sum_{x\in S} p(x) \ln p(x) \enspace. \]
    For a binary distribution $p=(x,1-x)$, $\H(p)$ becomes the (Shannon) binary entropy: 
    \[\H(x) \coloneqq -x \ln{x} -(1-x) \ln(1-x) \enspace.\]
    For convenience, we use $\H_2(p) \coloneqq \H(p)/\ln{2}$ and $\H_2(x) \coloneqq \H(x)/\ln{2}$ to denote the Shannon and binary entropies using base-$2$ logarithms.
\end{definition}

\begin{definition}[Quantum Jensen--Shannon divergence, adapted from~{\cite[Section~III]{MLP05}}]
    Let $\rho_0$ and $\rho_1$ be quantum states of the same dimension. 
    The quantum Jensen--Shannon divergence between $\rho_0$ and $\rho_1$ is defined by 
    \[\QJS(\rho_0,\rho_1)  \coloneqq  \S\rbra[\Big]{\frac{\rho_0+\rho_1}{2}} -\frac{\S(\rho_0)+\S(\rho_1)}{2}=\frac{1}{2}\rbra*{ \D\rbra[\Big]{ \rho_0\Big\| \frac{\rho_0+\rho_1}{2} } + \D\rbra[\Big]{ \rho_1\Big\| \frac{\rho_0+\rho_1}{2} } }\enspace.\]
	Here, the von Neumann entropy $\S(\rho)$ and the quantum relative entropy $\D(\rho_0 \| \rho_1)$ are defined by
    \[ \S(\rho) \coloneqq -{\rm Tr}(\rho \ln \rho) \quad\text{and}\quad  \D(\rho_0 \| \rho_1) \coloneqq \Tr\rbra*{ \rho_0 \rbra*{\ln \rho_0 - \ln \rho_1} }\enspace.\]    
\end{definition}

For convenience, we use $\QJS_2(\rho_0,\rho_1) \coloneqq \QJS(\rho_0,\rho_1)/\ln{2}$ to denote the quantum Jensen--Shannon divergence defined using base-$2$ logarithms, and likewise use $\S_2(\rho)\coloneqq\S(\rho)/\ln{2}$ for the von Neumann entropy.
Throughout this work, we use $\rho_\pm \coloneqq (\rho_0\pm\rho_1)/2$, and define
\begin{equation}
    \label{eq:state-interpolate}
    \forall z\in\binset, \quad \rho_{z,\lambda}
    \coloneqq \rho_+ + (-1)^z \lambda \ \rho_-
    = \frac{1+\lambda}{2}\rho_z+\frac{1-\lambda}{2}\rho_{1-z},
    \quad\text{where } \lambda\in\sbra*{0,\frac{1}{2}}\enspace.
\end{equation}

\begin{lemma}[Joint entropy theorem, adapted from~{\cite[Theorem~11.8]{NC10}}]
    \label{lemma:joint-entropy}
    Let $\cbra*{\rho_j}_{j\in [k]}$ be a set of quantum states where $k\in\bbN$. Let $k$-tuple $\mu\coloneqq (\mu_1,\ldots,\mu_k)$ be a probability distribution. Then, we have
    \[ \S_2\rbra[\Bigg]{\sum_{j\in[k]} \mu_j\ketbra{j}{j}\otimes\rho_j}
        =\H_2(\mu)+\sum_{j\in[k]} \mu_j\S_2(\rho_j)\enspace. \]
    In particular, for probability distributions $X_j$ on a common finite set $S$,
    \[ \S_2\rbra[\Bigg]{\sum_{j\in[k]} \sum_{x\in S} \mu_j X_j(x) \ketbra{j}{j}\otimes\ketbra{x}{x}}
        =\H_2(\mu)+\sum_{j\in[k]}\mu_j\H_2(X_j)\enspace. \]
\end{lemma}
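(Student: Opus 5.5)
The plan is to reduce to the ordinary (unlabeled) case by reading the label register as a classical system and expanding the entropy along its spectral decomposition. Write each $\rho_j=\sum_{i}\lambda_{j,i}\ketbra{e_{j,i}}{e_{j,i}}$ in its own eigenbasis. Then the operator
\[
\sigma\coloneqq\sum_{j\in[k]}\mu_j\ketbra{j}{j}\otimes\rho_j=\sum_{j\in[k]}\sum_i \mu_j\lambda_{j,i}\,\ketbra{j}{j}\otimes\ketbra{e_{j,i}}{e_{j,i}}
\]
is already in spectral form. The vectors $\ket{j}\otimes\ket{e_{j,i}}$ are orthonormal even across different $j$, because the label states $\ket{j}$ are orthogonal. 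So the eigenvalues of $\sigma$ are exactly $\{\mu_j\lambda_{j,i}\}_{j,i}$. Notably, this holds even when the $\rho_j$ fail to commute, which is the only conceptual point to check.

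Next I compute $\S_2(\sigma)=-\sum_{j,i}\mu_j\lambda_{j,i}\log(\mu_j\lambda_{j,i})$ and split the logarithm as $\log\mu_j+\log\lambda_{j,i}$. The first piece is
\[
-\sum_j\mu_j\log\mu_j\sum_i\lambda_{j,i}=\H_2(\mu),
\]
using $\sum_i\lambda_{j,i}=\Tr\rho_j=1$. The second piece is
\[
\sum_j\mu_j\Bigl(-\sum_i\lambda_{j,i}\log\lambda_{j,i}\Bigr)=\sum_j\mu_j\S_2(\rho_j).
\]
Throughout I use the convention $0\log 0=0$, so that terms with $\mu_j=0$ or $\lambda_{j,i}=0$ cause no trouble; this is a routine technicality.

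For the classical statement, I specialize to $\rho_j=\sum_{x\in S}X_j(x)\ketbra{x}{x}$, which is diagonal with $\S_2(\rho_j)=\H_2(X_j)$. The operator in the claim is then exactly $\sum_j\mu_j\ketbra{j}{j}\otimes\rho_j$, and the first identity gives the second.

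I do not expect a real obstacle here. The only step needing care is the orthogonality of eigenvectors across different blocks, and that follows immediately from the block-diagonal structure.
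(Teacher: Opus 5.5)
Your proof is correct and is essentially the standard argument behind the result the paper cites without proof (\cite[Theorem~11.8]{NC10}). There, as in your proposal, one observes that the block-diagonal operator has orthonormal eigenvectors $\ket{j}\otimes\ket{e_{j,i}}$ with eigenvalues $\mu_j\lambda_{j,i}$, splits the logarithm to obtain $\H_2(\mu)+\sum_j\mu_j\S_2(\rho_j)$, and gets the classical case by taking each $\rho_j$ diagonal.
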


Next, we define the \emph{quantum hockey-stick divergence} and provide the properties needed for the smoothed integral representation of $\QJS$.
For an Hermitian matrix $X$, let ${\bf1}_{X\succ 0}$ denote the projector onto the eigenspaces of $X$ corresponding to its positive eigenvalues. 
\begin{definition}[Quantum hockey-stick divergence, adapted from~\cite{SW13}]
    Let $\rho_0$ and $\rho_1$ be quantum states of the same dimension. 
    For any real $\gamma\geq 1$, the quantum hockey-stick divergence between $\rho_0$ and $\rho_1$ is defined by 
    \[ E_\gamma(\rho_0\|\rho_1) \coloneqq \Tr(\rho_0-\gamma\rho_1)_+, \quad\text{where } A_+ \coloneqq A{\bf1}_{A\succ0}
     \enspace. \]
\end{definition}

\begin{lemma}[Properties of the quantum hockey-stick divergence, adapted from~{\cite[Lemmas~2.2 and~2.5]{LiuHircheCheng25}}]
    \label{lem:hockey-derivative}
    Let $\rho_0$ and $\rho_1$ be quantum states of the same dimension. Then, we have
    \begin{enumerate}[label={\upshape(\arabic*)}]
        \item\label{thmitem:hockey-stick-convexity} \textbf{\emph{Convexity}}. The function $\gamma\mapsto E_\gamma(\rho_0\|\rho_1)$ is convex and nonincreasing on $[1,\infty)$. 
        \item\label{thmitem:hockey-stick-right-derivative} \textbf{\emph{Right derivative}}. $\frac{\dd^+}{\dd \gamma}E_\gamma(\rho_0\|\rho_1) =-\Tr \rbra*{ \rho_1\mathbf1_{\rho_0-\gamma\rho_1>0} }$, where $\frac{\dd^+}{\dd x}h(x)  \coloneqq\lim\limits_{\delta\to 0^+}\frac{h(x+\delta)-h(x)}{\delta}$.
    \end{enumerate}
\end{lemma}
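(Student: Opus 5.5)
The plan is to work with the spectral decomposition of the Hermitian operator $A_\gamma \coloneqq \rho_0-\gamma\rho_1$, which depends affinely on $\gamma$. Write $\lambda_i(\gamma)$ for its eigenvalues. Then
\[E_\gamma(\rho_0\|\rho_1)=\Tr(A_\gamma)_+=\sum_i \max\{\lambda_i(\gamma),0\}\enspace.\]
The cleanest route to both items is a variational formula:
\[E_\gamma(\rho_0\|\rho_1)=\max_{0\preceq P\preceq I}\Tr\rbra*{P(\rho_0-\gamma\rho_1)}\enspace,\]
with the maximum attained at $P=\mathbf1_{A_\gamma\succ0}$. This holds because, for any $0\preceq P\preceq I$, one has $\Tr(PA_\gamma)\le \Tr(PA_{\gamma,+})\le\Tr(A_{\gamma,+})$, and equality holds at the positive-part projector.

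For \ref{thmitem:hockey-stick-convexity}, note that for each fixed $P$ the map $\gamma\mapsto\Tr(P\rho_0)-\gamma\Tr(P\rho_1)$ is affine and nonincreasing, since $\Tr(P\rho_1)\ge0$. A pointwise supremum of affine nonincreasing functions is convex and nonincreasing, which gives the claim immediately.

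For \ref{thmitem:hockey-stick-right-derivative}, I would compute the right derivative at $\gamma$ by a two-sided estimate.

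\emph{Lower bound.} Fix $P_\gamma\coloneqq\mathbf1_{A_\gamma\succ0}$. For $\delta>0$,
\[E_{\gamma+\delta}\ge\Tr(P_\gamma A_{\gamma+\delta})=E_\gamma-\delta\Tr(\rho_1P_\gamma)\enspace,\]
so the difference quotient is at least $-\Tr(\rho_1 P_\gamma)$.

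\emph{Upper bound.} Set $P_\delta\coloneqq\mathbf1_{A_{\gamma+\delta}\succ0}$. Then
\[E_\gamma\ge\Tr(P_\delta A_\gamma)=E_{\gamma+\delta}+\delta\Tr(\rho_1P_\delta)\enspace,\]
so the difference quotient is at most $-\Tr(\rho_1P_\delta)$.

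It therefore remains to show that $\Tr(\rho_1P_\delta)\to\Tr(\rho_1P_\gamma)$ as $\delta\to0^+$. This convergence is the main obstacle, because spectral projectors are discontinuous where eigenvalues cross zero.

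To handle it, I would use analytic (Rellich/Kato) perturbation theory for the affine Hermitian family $A_{\gamma+\delta}$. For small $\delta>0$, the eigenvalues and eigenprojections can be chosen analytic, and no eigenvalue changes sign on $(0,\delta_0)$. Hence $P_\delta$ is constant on that interval, equal to some projector $P^+$ that dominates $P_\gamma$. The extra directions in $P^+$ come from the zero-eigenspace $K$ of $A_\gamma$.

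On $K$, the first-order perturbation of $A$ is $-\delta\,\Pi_K\rho_1\Pi_K\preceq0$. The branches emerging from $0$ with derivative $<0$ become negative. The branches with derivative $0$ correspond to vectors $v\in K$ with $\langle v,\rho_1v\rangle=0$, that is $\rho_1v=0$ because $\rho_1\succeq0$. So any extra direction $w$ in $P^+\ominus P_\gamma$ lies in the limit of such a branch and satisfies $\rho_1w=0$.

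It follows that $\Tr(\rho_1P^+)=\Tr(\rho_1P_\gamma)$, and the squeeze gives the formula. If the analytic-branch argument turns out messy, my fallback is to use the convexity from \ref{thmitem:hockey-stick-convexity}. The right derivative exists and equals $\lim_{\delta\to0^+}(-\Tr(\rho_1P_\delta))$. One then identifies this limit via $\Tr(\rho_1P^+)-\Tr(\rho_1P_\gamma)\le0$ from the lower bound, and shows the reverse inequality using the same $\rho_1v=0$ observation on $\ker A_\gamma$.
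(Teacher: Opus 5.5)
The paper gives no proof of this lemma; it imports both items from \cite{LiuHircheCheng25}. Your argument is therefore a self-contained alternative rather than a reconstruction of the paper's proof. Part~(1), via the variational formula $E_\gamma(\rho_0\|\rho_1)=\max_{0\preceq P\preceq I}\Tr(P(\rho_0-\gamma\rho_1))$, is correct and complete. In part~(2), both one-sided bounds on the difference quotient are also correct.

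\textbf{The error.} One sentence in the convergence step is false: $P_\delta$ is \emph{not} constant on $(0,\delta_0)$. Rellich's theorem only makes the \emph{sign pattern} of the analytic eigenvalue branches constant there. The eigenprojections $P_i(\delta)$ themselves rotate with $\delta$ whenever $\rho_0$ and $\rho_1$ do not commute. So for $\delta>0$ the projector $P_\delta$ need not dominate $P_\gamma$ (already for two rank-one $\rho_0,\rho_1$ in dimension two the positive eigenvector of $A_{\gamma+\delta}$ moves).

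\textbf{The repair.} What is true is that $P_\delta=\sum_{i\in S}P_i(\delta)$ for a fixed index set $S$. Hence $P_\delta\to P^+\coloneqq\sum_{i\in S}P_i(0)\succeq P_\gamma$ as $\delta\to0^+$, and this is all the squeeze needs. With that rewording, your kernel analysis gives $\Tr(\rho_1P_\delta)\to\Tr(\rho_1P^+)=\Tr(\rho_1P_\gamma)$, and the proof closes. Your fallback does not really avoid this issue, because its ``reverse inequality'' is exactly this convergence statement.

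\textbf{A simpler route.} You can drop the perturbation theory entirely, because in fact $P^+=P_\gamma$. Since $A_{\gamma+\delta}=A_\gamma-\delta\rho_1\preceq A_\gamma$, Weyl's monotonicity gives $\lambda_i^\downarrow(A_{\gamma+\delta})\leq\lambda_i^\downarrow(A_\gamma)$ for every $i$. So no branch starting at $0$ can become positive, and your ``extra directions'' never occur.
\begin{itemize}
\item Suppose $A_\gamma$ has $k\geq1$ positive eigenvalues, the smallest being $c>0$. For $0<\delta<c$, the top $k$ eigenvalues of $A_{\gamma+\delta}$ are at least $c-\delta>0$, and the remaining ones are at most $0$.
\item Then $P_\gamma$ and $P_\delta$ have the same rank, and the Davis--Kahan $\sin\Theta$ theorem with gap $c$ and $\norm{\delta\rho_1}\leq\delta$ gives $\norm{P_\delta-P_\gamma}\leq\delta/c$.
\item Consequently $\abs{\Tr(\rho_1(P_\delta-P_\gamma))}\leq\delta/c$.
\item The case $k=0$ is trivial, since then $E_{\gamma'}(\rho_0\|\rho_1)=0$ for all $\gamma'\geq\gamma$.
\end{itemize}
This yields the right-derivative formula with an explicit $O(\delta)$ rate, using only elementary matrix analysis.
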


\begin{lemma}[Smoothed integral representation of $\QJS$, implicit in~\cite{HircheTomamichel24}]
    \label{lem:smoothed-integral}
    Let $\rho_0$ and $\rho_1$ be quantum states of the same dimension. For any $\lambda \in [0,1/2]$, the quantum Jensen--Shannon divergence between $\rho_{0,\lambda}$ and $\rho_{1,\lambda}$ admits the following integral representation:
    \[ \QJS(\rho_{0,\lambda},\rho_{1,\lambda})
        =\lambda^2 \int_0^1
        \frac{K(t;\rho_0,\rho_1)}{1-\lambda^2t^2}\,\dd t\enspace. \]
    Here, $K(t;\rho_0,\rho_1)$ denotes a \emph{symmetrized} version of the quantum hockey-stick divergence:
    \[ \forall t\geq 0, \quad K(t;\rho_0,\rho_1) \coloneqq E_{1+t}(\rho_0\|\rho_+)+E_{1+t}(\rho_1\|\rho_+)
    =\Tr(\rho_--t\rho_+)_++\Tr(-\rho_--t\rho_+)_+\enspace. \]
\end{lemma}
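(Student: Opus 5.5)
The plan is to start from the Hirche--Tomamichel integral representation of the quantum relative entropy in terms of hockey-stick divergences, valid for $\rho\ll\sigma$ and with $\D$ in nats:
\[
\D(\rho\|\sigma)=\int_1^\infty\rbra*{\frac{E_\gamma(\rho\|\sigma)}{\gamma}+\frac{E_\gamma(\sigma\|\rho)}{\gamma^2}}\dd\gamma\enspace.
\]
This is where the attribution ``implicit in~\cite{HircheTomamichel24}'' points. I will treat this formula as the one imported ingredient.

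By the definition of $\QJS$, and since $(\rho_{0,\lambda}+\rho_{1,\lambda})/2=\rho_+$, we have
\[
\QJS(\rho_{0,\lambda},\rho_{1,\lambda})=\tfrac12\sum_{z\in\binset}\D(\rho_{z,\lambda}\|\rho_+)\enspace.
\]
Support issues do not arise: $\rho_{z,\lambda}\preceq\frac{1+\lambda}{2}(\rho_0+\rho_1)$, so its support lies in that of $\rho_+$. Conversely, for $\lambda\le1/2<1$ we have $\rho_{z,\lambda}\succeq\frac{1-\lambda}{2}(\rho_0+\rho_1)$, so $\rho_+\ll\rho_{z,\lambda}$ as well. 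Thus both hockey-stick terms are finite and the formula applies.

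\textbf{Step 1: the first hockey-stick term.} Using $\rho_{z,\lambda}=\rho_++(-1)^z\lambda\rho_-$, we get
\[
E_\gamma(\rho_{z,\lambda}\|\rho_+)=\Tr\rbra*{(-1)^z\lambda\rho_--(\gamma-1)\rho_+}_+=\lambda\Tr\rbra*{(-1)^z\rho_--t\rho_+}_+,\qquad t\coloneqq(\gamma-1)/\lambda\enspace.
\]
Summing over $z$ gives $\lambda K(t;\rho_0,\rho_1)$. With $\dd\gamma=\lambda\,\dd t$ and $\gamma=1+\lambda t$, the first part of $\QJS$ becomes
\[
\tfrac12\int_0^\infty\frac{\lambda^2K(t)}{1+\lambda t}\,\dd t\enspace.
\]

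\textbf{Step 2: the second hockey-stick term.} Here
\[
E_\gamma(\rho_+\|\rho_{z,\lambda})=\Tr\rbra*{-(\gamma-1)\rho_+-(-1)^z\gamma\lambda\rho_-}_+=\gamma\lambda\Tr\rbra*{-(-1)^z\rho_--t\rho_+}_+,\qquad t\coloneqq\frac{\gamma-1}{\gamma\lambda}\enspace.
\]
Summing over $z$ again gives $\gamma\lambda K(t)$, because $K$ is symmetric in the sign of $\rho_-$. The substitution gives $\gamma=1/(1-\lambda t)$ and $\dd\gamma=\lambda(1-\lambda t)^{-2}\,\dd t$, with $t$ ranging over $[0,1/\lambda)$. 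The integrand $\gamma^{-2}\cdot\gamma\lambda K(t)\,\dd\gamma$ then simplifies to $\lambda^2K(t)/(1-\lambda t)\,\dd t$. The second part of $\QJS$ is therefore
\[
\tfrac12\int_0^{1/\lambda}\frac{\lambda^2K(t)}{1-\lambda t}\,\dd t\enspace.
\]

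\textbf{Step 3: truncate and combine.} We have $\pm\rho_-\preceq\rho_+$, since $\rho_+\pm\rho_-\in\{\rho_0,\rho_1\}\succeq0$. Hence for $t\ge1$, both $\pm\rho_--t\rho_+\preceq(1-t)\rho_+\preceq0$, and so $K(t)=0$. Because $\lambda\le1/2$ gives $1/\lambda\ge2>1$, both integrals can be cut at $t=1$. Adding them and using
\[
\frac{1}{1+\lambda t}+\frac{1}{1-\lambda t}=\frac{2}{1-\lambda^2t^2}
\]
yields exactly $\lambda^2\int_0^1 K(t)/(1-\lambda^2t^2)\,\dd t$.

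\textbf{Main obstacle.} The algebra above is routine; the real work is justifying the relative-entropy integral formula itself.

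If I cannot cite it directly, I would derive it from the derivative characterization in \Cref{lem:hockey-derivative}. I would write $\D(\rho\|\sigma)=\frac{\dd}{\dd\alpha}\big|_{\alpha=1}$ of a suitable $f$-divergence, or equivalently use the scalar identity
\[
x\ln x-x+1=\int_1^\infty\rbra*{\frac{(x-\gamma)_+}{\gamma}+\frac{(1-\gamma x)_+}{\gamma^2}}\dd\gamma\enspace,
\]
and lift it to operators.

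Lifting to operators is the step I expect to be hardest in the noncommutative case. The first thing I would try is Frenkel's integral representation $\D(\rho\|\sigma)=\int\frac{\dd s}{|s|(1-s)^2}\Tr\rbra*{(1-s)\rho+s\sigma}_-$, which is proved via the resolvent formula for the logarithm. After an affine change of variables, it can be rewritten in the above hockey-stick form.
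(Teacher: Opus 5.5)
Your proof is correct, but it takes a different route from the paper.

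\textbf{The paper's route.} The paper invokes \cite[Proposition~2.9]{HircheTomamichel24} to view $\QJS$ itself as a single $f$-divergence, with $F(x)=\frac12\bigl(x\ln x-(1+x)\ln\frac{1+x}{2}\bigr)$. Its weight $F''(\gamma)=\gamma^{-3}F''(\gamma^{-1})=\frac{1}{2\gamma(1+\gamma)}$ is the same in both directions. This gives
$\QJS(\rho_0,\rho_1)=\frac12\int_1^\infty\frac{E_\gamma(\rho_0\|\rho_1)+E_\gamma(\rho_1\|\rho_0)}{\gamma(1+\gamma)}\,\dd\gamma$.
The paper applies this directly to the pair $(\rho_{0,\lambda},\rho_{1,\lambda})$. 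It uses $\rho_{j,\lambda}-\gamma\rho_{1-j,\lambda}=\lambda(1+\gamma)\bigl((-1)^j\rho_--t\rho_+\bigr)$ and then a single substitution $\gamma=(1+\lambda t)/(1-\lambda t)$.

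\textbf{Your route.} You split $\QJS$ into the two relative entropies to the midpoint $\rho_+$. You then use only the Frenkel/Jen\v{c}ov\'a relative-entropy representation, which is the $f(x)=x\ln x$ case of the Hirche--Tomamichel framework. This costs two substitutions, $\gamma=1+\lambda t$ and $\gamma=1/(1-\lambda t)$, plus the partial-fraction recombination.

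\textbf{Comparison.} Your approach rests on the more standard imported formula. It lines up directly with the lemma's own definition $K(t)=E_{1+t}(\rho_0\|\rho_+)+E_{1+t}(\rho_1\|\rho_+)$. It also explains the kernel $1/(1-\lambda^2t^2)$ as the average of a forward kernel $1/(1+\lambda t)$ and a backward kernel $1/(1-\lambda t)$. The paper's approach needs the extra identification of $\QJS$ as an $f$-divergence, but then requires only one symmetric change of variables.

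One small point: your substitutions divide by $\lambda$. Treat $\lambda=0$ separately, as the paper does; there both sides vanish trivially.
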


\begin{proof}
    As pointed out in~\cite[Equation~(2.105)]{HircheTomamichel24}, applying~\cite[Proposition~2.9]{HircheTomamichel24} with $f(x)=x\ln{x}$ and $\lambda=\mu=1/2$ yields $\QJS(\rho_0,\rho_1)$, which can be viewed as the quantum $f$-divergence associated with $F(x)=\frac{1}{2}\rbra*{ x\ln x-(1+x)\ln\frac{1+x}{2} }$. Following~\cite[Definition~2.4]{HircheTomamichel24} and noting that $F''(\gamma)    =\gamma^{-3}F''(\gamma^{-1})=\frac{1}{2\gamma(1+\gamma)}$, we obtain the integral representation
    \begin{equation}
        \label{eq:QJS-integral-rep}
        \QJS(\rho_0,\rho_1)=\frac{1}{2}\int_1^\infty \frac{E_\gamma(\rho_0\|\rho_1)+E_\gamma(\rho_1\|\rho_0)}{\gamma(1+\gamma)}\,\dd \gamma\enspace.
    \end{equation}

    Since $-\rho_+\preceq\rho_-\preceq\rho_+$, we have $(\rho_--t\rho_+)_+=0$ and $(-\rho_--t\rho_+)_+=0$ for $t\geq 1$. For $\lambda>0$, setting $t=(\gamma-1)/(\lambda(\gamma+1))$, it follows that
    \[ \rho_{j,\lambda}-\gamma\rho_{1-j,\lambda}
        =\lambda(1+\gamma)\rbra*{ (-1)^j\rho_--t\rho_+ }, \quad\text{where } \gamma=\frac{1+\lambda t}{1-\lambda t}\enspace.\]

    Applying \Cref{eq:QJS-integral-rep} to $\rho_{0,\lambda}$ and $\rho_{1,\lambda}$, we see that the integrand vanishes for $\gamma\geq(1+\lambda)/(1-\lambda)$ under this substitution. Consequently, changing variables from $\gamma$ to $t$ yields
    \[
        \QJS(\rho_{0,\lambda},\rho_{1,\lambda})
         =\frac{1}{2}\int_0^1 \frac{\lambda K(t;\rho_0,\rho_1)}{\gamma} \frac{\dd \gamma}{\dd t}\,\dd t
        =\lambda^2\int_0^1\frac{K(t;\rho_0,\rho_1)}{1-\lambda^2t^2}\,\dd t\enspace.
    \]
    Here, the second equality follows from $\frac{\dd \gamma}{\dd t}=\frac{2\lambda}{(1-\lambda t)^2}$, while both sides vanish at $\lambda=0$. 
\end{proof}

\subsection{Quantum algorithmic toolkit}

\paragraph{Quantum singular value transformation.}
\begin{definition}[Block-encoding, adapted from~\cite{GSLW19}]
\label{def:block-encoding}
Let $A$ be a linear operator on a quantum register $\sfR$. A unitary $U$ acting on the registers $(\sfR,\sfA)$, where $\sfA$ is initialized to $\ket{0}^{\otimes a}$, is an \emph{$(\alpha,a,\epsilon)$-block-encoding} of $A$ if
\[ \norm*{ A - \alpha\rbra*{\bra{0}^{\otimes a} \otimes I_{\sfR}} U \rbra*{\ket{0}^{\otimes a} \otimes I_{\sfR}} } \leq \epsilon\enspace. \]
When $\alpha=1$ and $\epsilon=0$, we call $U$ an \emph{exact block-encoding} of $A$ \enspace.
\end{definition}

We use the following version of quantum singular value transformation (QSVT):

\begin{theorem}[QSVT for Hermitian operators, adapted from~{\cite[Theorem~31]{GSLW19}}]
\label{thm:qsvt-hermitian}
Let $U_A$ be a unitary operator that is an $(\alpha,a,\epsilon)$-block-encoding of a Hermitian operator $A$, where $\alpha>0$, $\epsilon\geq0$, and $\norm{A}\leq\alpha$. Let $P\in\bbR[x]$ be a degree-$d$ polynomial, where $d\geq1$, and suppose that $\abs{P(x)}\leq1/2$ for all $x\in[-1,1]$.
Then, for every $\delta\in(0,1)$, there exists an explicit quantum circuit $U$ that is a $\rbra[\big]{1,a+2,4d\sqrt{\epsilon/\alpha}+\delta}$-block-encoding of $P(A/\alpha)$.
This circuit can be implemented using $O(d)$ applications of $U_A$ and $U_A^\dagger$, one application of controlled-$U_A$, and $O((a+1)d)$ additional one- and two-qubit quantum gates. Moreover, given the coefficients of $P$, a classical description of $U$ can be computed in deterministic time $\poly(d,\log(1/\delta))$.
\end{theorem}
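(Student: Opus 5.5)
The statement is the QSVT theorem for Hermitian block-encodings, and the plan is to derive it from the general QSVT machinery of~\cite{GSLW19}, applied to the normalized matrix $A/\alpha$.

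\textbf{Step 1: normalization.} Since $U_A$ is an $(\alpha,a,\epsilon)$-block-encoding of $A$, it is an $(1,a,\epsilon/\alpha)$-block-encoding of $A/\alpha$. The top-left block $\widetilde A\coloneqq(\bra{0}^{\otimes a}\otimes I)U_A(\ket{0}^{\otimes a}\otimes I)$ is a contraction satisfying $\norm{\widetilde A-A/\alpha}\leq\epsilon/\alpha$. In general $\widetilde A$ need not be Hermitian.

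\textbf{Step 2: exact QSVT on the block $\widetilde A$.} The polynomial $P$ has definite parity only in special cases, so we split it as $P=P_{\mathrm{even}}+P_{\mathrm{odd}}$; each part is still bounded by $1$ on $[-1,1]$. For each parity part, we use the real-polynomial version of QSVT, which applies an $O(d)$-length alternating-phase sequence of $U_A,U_A^\dagger$ and $a$-controlled reflections, plus one ancilla to take the real part via a linear combination with the sign-flipped phases. This gives an exact block-encoding of the singular value transformation $P^{(SV)}_{\mathrm{parity}}(\widetilde A)$. A further ancilla combines the even and odd parts by LCU with weights $1/2$, which is where the hypothesis $\abs{P}\leq1/2$ is spent; the controlled-$U_A$ arises here. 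In total there are $a+2$ ancillas. The phase angles are computed classically to precision $\delta$ in time $\poly(d,\log(1/\delta))$ by the stable angle-finding algorithms cited in~\cite{GSLW19}, and this contributes the additive $\delta$.

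\textbf{Step 3: perturbation, which is the main obstacle.} We must bound $\norm{P^{(SV)}(\widetilde A)-P(A/\alpha)}$, where $\widetilde A$ is non-Hermitian while $A/\alpha$ is Hermitian. For a Hermitian $H$, the eigenvalue transform and the parity-respecting singular value transform coincide, so the task is a robustness bound for singular value transforms. The plan is to invoke the robustness lemma of~\cite{GSLW19} (its Lemma~22/23), which states
\[
\norm{P^{(SV)}(\widetilde A)-P^{(SV)}(A')}\leq 4d\sqrt{\norm{\widetilde A-A'}}
\]
for a degree-$d$ polynomial bounded by $1$ on contractions. Its proof goes through the Chebyshev-like bound on the derivative combined with a square-root modulus of continuity for $\sqrt{\cdot}$ of $\widetilde A^\dagger\widetilde A$. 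The square-root dependence comes from this step. Taking $A'=A/\alpha$ gives the error $4d\sqrt{\epsilon/\alpha}$, and adding $\delta$ gives the claimed total. Finally, the gate count of $O((a+1)d)$ follows because each of the $O(d)$ phase steps uses an $(a+1)$-qubit controlled reflection.
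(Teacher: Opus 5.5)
Your proposal is correct and reproduces the argument of~\cite{GSLW19} that the paper simply cites without proof: parity split, real-part QSVT on each part, an equal-weight LCU, the square-root robustness lemma for the $4d\sqrt{\epsilon/\alpha}$ term, and approximate phase finding for $\delta$. The parts you block-encode must be $2P_{\mathrm{even}}$ and $2P_{\mathrm{odd}}$, which is legitimate because $\abs{P}\leq 1/2$ bounds each parity part by $1/2$ (not merely $1$); averaging then yields $P$ rather than $P/2$, and the two halved robustness errors sum to $4d\sqrt{\epsilon/\alpha}$ --- likewise, the \emph{single} controlled-$U_A$ comes from sharing the common alternating $U_A,U_A^\dagger$ sequence of the two parity circuits and controlling only the phase rotations and the one extra query of the longer sequence, not from a naive select over two full circuits, which would control every query.
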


The following corollary follows from \Cref{thm:qsvt-hermitian} by a slight rescaling and a linear combination of a QSVT circuit and its adjoint:\footnote{
Applying \Cref{thm:qsvt-hermitian} to polynomial $Q \coloneq (1-\delta/2)P/2$ with error
$\delta/4$ gives a unitary $V$ whose zero block $N$ satisfies $\norm{N-Q(A/\alpha)}\leq\delta/4$ and hence $\norm{N}\leq(1-\delta/2)/2+\delta/4=1/2$.
Thus, $A_P\coloneqq N+N^\dagger$ is Hermitian, $\norm{A_P}\leq1$, and $\norm{A_P-P(A/\alpha)}\leq \delta$. Taking an equal-weight linear combination of $V$ and $V^\dagger$ gives a $(2,a+4,0)$-block-encoding of $A_P$:
in addition to the $a+2$ ancilla qubits of $V$, we use one qubit to select between $V$ and $V^\dagger$ and one qubit to implement the resulting doubly controlled oracle calls using singly controlled calls.}

\begin{corollary}[QSVT with an exactly encoded Hermitian contraction]
\label{corr:qsvt-hermitian-contraction}
Let $U_A$ be a unitary operator that is an $(\alpha,a,0)$-block-encoding of a Hermitian operator $A$, where $\alpha>0$ and $\norm{A}\leq\alpha$. Let $P\in\bbR[x]$ be a degree-$d$ polynomial, where $d\geq1$, and suppose that $\abs{P(x)}\leq1$ for all $x\in[-1,1]$.
Then, for every $\delta\in(0,1)$, there exists an explicit quantum circuit $U$ that is a $(2,a+4,0)$-block-encoding of a Hermitian operator $A_P$ satisfying
\[
    \norm{A_P}\leq1
    \quad\text{and}\quad
    \norm*{A_P-P(A/\alpha)}\leq\delta\enspace.
\]
This circuit can be implemented using $O(d)$ applications of $U_A$, $U_A^\dagger$, and their controlled versions, and $O((a+1)d)$ additional one- and two-qubit quantum gates. Moreover, given the coefficients of $P$, a classical description of $U$ can be computed in deterministic time $\poly(d,\log(1/\delta))$.
\end{corollary}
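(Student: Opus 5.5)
The plan follows the route sketched in the footnote preceding the statement. We rescale the target polynomial, apply \Cref{thm:qsvt-hermitian}, and then Hermitize the resulting zero block by an equal-weight linear combination of the QSVT circuit and its adjoint.

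First, set $Q\coloneqq(1-\delta/2)P/2$. Since $\abs{P(x)}\leq1$ on $[-1,1]$, we get $\abs{Q(x)}\leq 1/2$ there, and $Q$ has degree $d$. Apply \Cref{thm:qsvt-hermitian} to $U_A$ (with $\epsilon=0$) and $Q$, using precision $\delta/4$. This gives a unitary $V$ that is a $(1,a+2,\delta/4)$-block-encoding of $Q(A/\alpha)$. Write $N\coloneqq(\bra{0}^{\otimes(a+2)}\otimes I)V(\ket{0}^{\otimes(a+2)}\otimes I)$, so that $\norm{N-Q(A/\alpha)}\leq\delta/4$.

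Next, define $A_P\coloneqq N+N^\dagger$, which is Hermitian by construction. Because $A$ is Hermitian and $Q$ is real, $Q(A/\alpha)$ is Hermitian, and therefore
\[
\norm{A_P-2Q(A/\alpha)}\leq \norm{N-Q(A/\alpha)}+\norm{N^\dagger-Q(A/\alpha)}\leq\delta/2.
\]
We also have $\norm{2Q(A/\alpha)-P(A/\alpha)}=(\delta/2)\norm{P(A/\alpha)}\leq\delta/2$, since $\norm{P(A/\alpha)}\leq 1$ follows from $\norm{A/\alpha}\leq1$ and $\abs{P}\leq 1$ on the spectrum. The triangle inequality then gives $\norm{A_P-P(A/\alpha)}\leq\delta$. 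For the norm bound, $\norm{N}\leq\norm{Q(A/\alpha)}+\delta/4\leq(1-\delta/2)/2+\delta/4=1/2$, so $\norm{A_P}\leq 2\norm{N}\leq 1$.

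Finally, we build the block-encoding. Take a fresh selector qubit $\sfS$ and apply $H_{\sfS}$, then controlled-$V$ on $\ket{0}_\sfS$ and controlled-$V^\dagger$ on $\ket{1}_\sfS$, then $H_{\sfS}$ again. The $\ket{0}$-block of this circuit on $\sfS$ together with the $a+2$ ancillas equals $(N+N^\dagger)/2=A_P/2$, which is exactly a $(2,a+3,0)$-block-encoding of $A_P$. The exactness holds because $A_P$ is \emph{defined} as the encoded block, so no approximation is incurred at this step.

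Controlling $V$ requires controlled versions of its constituent gates. $V$ already contains one controlled-$U_A$, so controlling it naively produces a doubly controlled $U_A$. We handle this with one extra ancilla: compute the AND of the two controls into it with a Toffoli gate, use a singly controlled $U_A$, and uncompute. That extra qubit is returned to $\ket{0}$, giving $a+4$ ancillas in total, as claimed. The query count stays at $O(d)$ uses of $U_A$, $U_A^\dagger$, and their controlled versions. The additional gates each become controlled, increasing the count by only a constant factor, so it stays $O((a+1)d)$. The classical description is computed by the algorithm of \Cref{thm:qsvt-hermitian} for $Q$, whose coefficients are a dyadic rescaling of those of $P$, in time $\poly(d,\log(1/\delta))$.

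The only delicate point is the bookkeeping for the control structure and ancilla count. The approximation estimates are routine once the factor $(1-\delta/2)$ is chosen to absorb the $\delta/4$ QSVT error into the norm bound.
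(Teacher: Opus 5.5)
Your proposal is correct and is essentially the paper's own argument, given in the footnote preceding the statement: rescale to $Q=(1-\delta/2)P/2$, apply QSVT with error $\delta/4$, Hermitize the zero block as $A_P=N+N^\dagger$, and take an equal-weight combination of $V$ and $V^\dagger$. The ancilla count $a+4$ comes from one selector qubit plus one qubit that reduces the doubly controlled oracle calls to singly controlled ones, exactly as you describe; the only slip is calling $(1-\delta/2)/2$ a dyadic rescaling, which it need not be for general $\delta$, but this does not affect the $\poly(d,\log(1/\delta))$ description-time claim.
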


Using~\cite[Lemma 3.1]{LW25} with the parameters $r=1$ and $\alpha=0$, we obtain:\footnote{Let $P_{\eta_0}$ be the polynomial given by~\cite[Lemma~3.1]{LW25} with $r=1$, $\alpha=0$, and $\epsilon=\eta_0$, for a fixed sufficiently small $\eta_0\in(0,1/2)$. We then enlarge $\eta_0$ to $\eta\in(\eta_0,1/2)$ by setting $P_\eta=P_{\eta_0}$. We may thus enlarge $\beta$ to an integer greater than $\max\cbra*{1,\deg(P_{\eta_0})/2}$, yielding the desired degree bound  $\deg(P_\eta)=\deg(P_{\eta_0})<2\beta<\beta/\eta$.} 
\begin{lemma}[Efficient uniform polynomial approximation of the absolute value function, adapted from~{\cite[Lemma 3.1]{LW25}}]
    \label{lemma:abs-polynomial-approx}
    For every $\eta \in (0,1/2)$, there is an efficiently computable even polynomial $P_{\eta}\in\bbR[x]$ of degree $d \leq \ceil*{\beta/\eta}$, where $\beta>1$ is a constant, such that 
    \[ \max_{x\in[-1,1]} \abs*{ P_{\eta}(x)-\frac{|x|}{2} } \leq \eta \quad\text{and}\quad \max_{x\in[-1,1]} \abs*{ P_{\eta}(x) } \leq 1\enspace. \]
\end{lemma}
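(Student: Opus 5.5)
The plan is to obtain the lemma as a direct specialization of~\cite[Lemma~3.1]{LW25}. The only extra work is matching normalizations and making the degree bound hold uniformly over all $\eta\in(0,1/2)$.

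First, I will instantiate~\cite[Lemma~3.1]{LW25} with $r=1$ and $\alpha=0$. I expect this to give, for every accuracy $\epsilon$ below some fixed threshold, an efficiently computable even polynomial $Q_\epsilon$ of degree $O(1/\epsilon)$ with the following properties:
\begin{itemize}
    \item it approximates the target function ($|x|$, or $|x|/2$, depending on the normalization used there) uniformly on $[-1,1]$ within error $\epsilon$;
    \item it is bounded in absolute value by $1$ on $[-1,1]$.
\end{itemize}
If the target is $|x|$, I set $P_\eta \coloneqq Q_{2\eta}/2$. This polynomial is even and approximates $|x|/2$ within $\eta$. Moreover, $\abs{P_\eta}\leq 1/2\leq 1$ on $[-1,1]$, and its degree is $O(1/\eta)$. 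If the target is already $|x|/2$, I take $P_\eta \coloneqq Q_\eta$ directly. In both cases, efficient computability of the coefficients is inherited from the cited lemma.

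Second, I handle the degree bound $d\leq\ceil{\beta/\eta}$ with a single constant $\beta>1$. The cited construction only guarantees degree $\leq C/\epsilon$ for $\epsilon$ below a fixed threshold $\eta_0\in(0,1/2)$. For $\eta\leq\eta_0$, I use the construction directly, and any $\beta\geq C$ works. For $\eta\in(\eta_0,1/2)$, I set $P_\eta\coloneqq P_{\eta_0}$. This polynomial satisfies the approximation bound, since its error is $\eta_0<\eta$, and it keeps the boundedness and evenness. Its degree is the fixed integer $\deg(P_{\eta_0})$. I then enlarge $\beta$ to an integer exceeding $\max\{1,C,\deg(P_{\eta_0})/2\}$. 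This gives $\deg(P_\eta)<2\beta<\beta/\eta\leq\ceil{\beta/\eta}$ for $\eta<1/2$.

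The only step I expect to require care is the first one: reading off the precise normalization and sup-norm bound in~\cite[Lemma~3.1]{LW25}. The lemma requires the stronger bound $\abs{P_\eta}\leq1$ on all of $[-1,1]$, and I must confirm this is not only approximate. If the cited polynomial were bounded only by $1+\epsilon$, I would rescale it by $(1+\epsilon)^{-1}$ and absorb the extra $O(\epsilon)$ error into a constant-factor change of $\epsilon$. This affects only the constant $\beta$.
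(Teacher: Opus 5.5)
Your proposal is correct and matches the paper's own argument. The paper's footnote also instantiates \cite[Lemma~3.1]{LW25} with $r=1$ and $\alpha=0$ for a fixed small $\eta_0$, reuses $P_{\eta_0}$ for $\eta\in(\eta_0,1/2)$, and enlarges $\beta$ to an integer exceeding $\max\{1,\deg(P_{\eta_0})/2\}$ so that $\deg(P_\eta)<2\beta<\beta/\eta$; your extra normalization and rescaling safeguards are harmless, since the paper applies the cited polynomial directly.
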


Rescaling the polynomial approximation in~\cite[Theorem~4.1]{SV14} gives \Cref{lemma:exp-polynomial-approx} with the additional guarantee $\max_{x\in[-1,1]}\abs{P_{\eta,\kappa}(x)}\leq1$:\footnote{For $\kappa=0$, take $P_{\eta,0}=1$. For $\kappa>0$, apply~\cite[Theorem~4.1]{SV14} on $[0,2\kappa]$ with error $\eta/2$ and substitute $y=\kappa(1-x)$ to obtain a polynomial $Q$ satisfying $\max_{x\in[-1,1]}\abs{Q(x)-e^{\kappa(x-1)}}\leq\eta/2$.
Set $P_{\eta,\kappa}=Q/(1+\eta/2)$. Since $0<e^{\kappa(x-1)}\leq1$ on $[-1,1]$, we have $\max_{x\in[-1,1]}\abs{P_{\eta,\kappa}(x)}\leq1$ and $\max_{x\in[-1,1]}\abs{P_{\eta,\kappa}(x)-e^{\kappa(x-1)}} \leq\eta/(1+\eta/2)\leq\eta$. The rescaling preserves efficient computability, and replacing $\eta$ by $\eta/2$ preserves the degree bound.}
\begin{lemma}[Efficient uniform polynomial approximation of the exponential function, adapted from~{\cite[Corollary~64 of the full version]{GSLW19}}]
    \label{lemma:exp-polynomial-approx}
    For every $\kappa\geq0$ and $\eta\in(0,1/2)$, there is an efficiently computable polynomial $P_{\eta, \kappa}\in\bbR[x]$ of degree $d=O\rbra*{\sqrt{\kappa\ln(1/\eta)}+\ln(1/\eta)}$ such that
    \[
        \max_{x\in[-1,1]} \abs*{P_{\eta, \kappa}(x)-e^{\kappa(x-1)}}\leq\eta
        \quad\text{and}\quad
        \max_{x\in[-1,1]}\abs*{P_{\eta, \kappa}(x)}\leq1\enspace.
    \]
\end{lemma}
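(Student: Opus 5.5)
The statement to prove is \Cref{lemma:exp-polynomial-approx}. The footnote preceding it already sketches the reduction, so the plan is to make that sketch rigorous, starting from the Sachdeva--Vishnoi bound~\cite[Theorem~4.1]{SV14}. That theorem says: for every $b>0$ and $\eta'\in(0,1/2]$ there is an efficiently computable polynomial $q$ of degree $O\rbra*{\sqrt{\max\{b,\ln(1/\eta')\}\ln(1/\eta')}}$ with $\sup_{y\in[0,b]}\abs{e^{-y}-q(y)}\leq\eta'$. That degree is at most $O\rbra*{\sqrt{b\ln(1/\eta')}+\ln(1/\eta')}$.

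\textbf{Step 1: the trivial case.} If $\kappa=0$, take $P_{\eta,0}\equiv1$. It is exact, has degree $0$, and is bounded by $1$.

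\textbf{Step 2: the main case $\kappa>0$.} Apply the theorem with $b=2\kappa$ and $\eta'=\eta/2$, and substitute $y=\kappa(1-x)$. This affine map sends $x\in[-1,1]$ onto $y\in[0,2\kappa]$, so $Q(x)\coloneqq q(\kappa(1-x))$ satisfies
\[
\max_{x\in[-1,1]}\abs{Q(x)-e^{\kappa(x-1)}}\leq\eta/2.
\]
The substitution preserves the degree, giving
\[
\deg Q=O\rbra*{\sqrt{\kappa\ln(2/\eta)}+\ln(2/\eta)}=O\rbra*{\sqrt{\kappa\ln(1/\eta)}+\ln(1/\eta)},
\]
because $\ln(2/\eta)\leq 2\ln(1/\eta)$ when $\eta<1/2$. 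The coefficients of $Q$ are computable from those of $q$ by binomial expansion, so efficiency is preserved.

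\textbf{Step 3: normalize.} Set $P_{\eta,\kappa}\coloneqq Q/(1+\eta/2)$. On $[-1,1]$ we have $0<e^{\kappa(x-1)}\leq1$, so $\abs{Q}\leq1+\eta/2$ and therefore $\abs{P_{\eta,\kappa}}\leq1$. For the approximation error, the triangle inequality gives
\[
\abs*{P_{\eta,\kappa}-e^{\kappa(x-1)}}\leq \frac{\abs{Q-e^{\kappa(x-1)}}+(\eta/2)\,e^{\kappa(x-1)}}{1+\eta/2}\leq\frac{\eta}{1+\eta/2}\leq\eta.
\]

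The only delicate point is checking that the precise hypotheses and degree form of~\cite[Theorem~4.1]{SV14} match what Step 2 uses: namely that its degree bound dominates $\sqrt{b\ln(1/\eta')}+\ln(1/\eta')$ and that its coefficients are efficiently computable, for instance via the Chebyshev truncation of $e^{-y}$. If the cited form only covers the regime $b\geq\ln(1/\eta')$, I will handle small $\kappa$ by applying it with $b=\max\{2\kappa,\ln(2/\eta)\}$ and restricting to $[0,2\kappa]$. The resulting degree is still $O\rbra*{\sqrt{\kappa\ln(1/\eta)}+\ln(1/\eta)}$.
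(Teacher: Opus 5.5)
Your proposal is correct and is essentially the paper's own argument, given in the footnote preceding the lemma. In both, you take $P_{\eta,0}=1$ when $\kappa=0$; otherwise you apply \cite[Theorem~4.1]{SV14} on $[0,2\kappa]$ with error $\eta/2$, substitute $y=\kappa(1-x)$, and divide by $1+\eta/2$. This gives the uniform bound $1$ at the cost of error at most $\eta/(1+\eta/2)\le\eta$. Your extra checks are harmless additions that the paper leaves implicit: absorbing $\ln(2/\eta)\le 2\ln(1/\eta)$ into the degree bound, and the fallback for $\kappa$ small relative to $\ln(1/\eta)$.
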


\begin{lemma}[Efficient polynomial approximation of the sign function, adapted from~{\cite[Corollary~6]{LC17} and~\cite[Lemma 14]{GSLW19}}]
    \label{lemma:sign-polynomial-approx}
    For every $\delta>0$ and $\eta \in (0,1/2)$, there exists an efficiently computable odd polynomial $P_{\eta, \delta}\in\bbR[x]$ of degree $d=O\rbra*{\ln(1/\eta)/\delta}$ such that
    \[ \max_{x\in[-2,2]\setminus(-\delta,\delta)} \abs{P_{\eta, \delta}(x)-\sign(x)}\leq \eta \quad\text{and}\quad \max_{x\in[-2,2]} \abs{P_{\eta, \delta}(x)}\leq 1  \enspace. \]
\end{lemma}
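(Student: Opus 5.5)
The plan is to take the sign-function approximation of~\cite[Lemma~14]{GSLW19}, which is built on~\cite[Corollary~6]{LC17} and lives on $[-1,1]$, and transfer it to $[-2,2]$ by rescaling the argument. Then fix the sup-norm bound and check that degree and efficient computability are preserved.

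\textbf{Step 1 (base polynomial).} Let $\delta'\coloneqq\min\cbra{\delta/2,1/2}$ and $\eta'\coloneqq\eta/3$. By~\cite[Lemma~14]{GSLW19} there is an odd real polynomial $S$ of degree $O\rbra*{\ln(1/\eta')/\delta'}=O\rbra*{\ln(1/\eta)/\delta}$ such that:
\begin{itemize}
    \item $\abs{S(y)-\sign(y)}\leq\eta'$ for $y\in[-1,1]\setminus(-\delta',\delta')$;
    \item $\abs{S(y)}\leq1$ on $[-1,1]$.
\end{itemize}
This polynomial comes from the~\cite{LC17} construction: approximate $\sign(y)$ by $\mathrm{erf}(ky)$ with $k=\Theta\rbra*{\sqrt{\ln(1/\eta')}/\delta'}$, then truncate the Jacobi--Anger/Chebyshev expansion of $e^{-k^2y^2}$ and integrate term by term. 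The coefficients are explicit expressions in modified Bessel functions, so they can be computed to any required precision in time polynomial in the degree and $\log(1/\eta)$. This is what ``efficiently computable'' will mean.

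If the version invoked only guarantees $\abs{S}\leq1+\eta'$, I will replace $S$ by $S/(1+\eta')$. This costs at most an additional $\eta'$ in the approximation error, since $\abs{\sign}=1$. The total error is then at most $2\eta'<\eta$, the normalized polynomial is still odd, and the degree is unchanged.

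\textbf{Step 2 (rescaling).} Define $P_{\eta,\delta}(x)\coloneqq S(x/2)$.
\begin{itemize}
    \item \emph{Oddness and degree.} Oddness is preserved, and the degree is unchanged, so it is $O\rbra*{\ln(1/\eta)/\delta}$. The coefficients are those of $S$ multiplied by powers of $1/2$, hence still efficiently computable.
    \item \emph{Sup-norm bound.} For $x\in[-2,2]$ we have $x/2\in[-1,1]$, so $\abs{P_{\eta,\delta}(x)}\leq1$.
    \item \emph{Approximation bound.} Take $x\in[-2,2]\setminus(-\delta,\delta)$. Then $\abs{x/2}\geq\delta/2\geq\delta'$, so $x/2\in[-1,1]\setminus(-\delta',\delta')$. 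Since $\sign(x/2)=\sign(x)$, this gives $\abs{P_{\eta,\delta}(x)-\sign(x)}\leq\eta'\leq\eta$.
\end{itemize}

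\textbf{Edge case.} When $\delta>1$, the cap $\delta'\leq1/2$ only lowers the degree bound relative to $O(\ln(1/\eta)/\delta)$ by a constant factor at most, so the stated bound still holds. If $\delta>2$, the constraint set is empty and any bounded odd polynomial works.

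\textbf{Main obstacle.} The only delicate point is bookkeeping. The cited results differ slightly in normalization: some bound $\abs{S}$ only on $[-1,1]$ and allow $1+\eta$ rather than $1$, and some state the gap condition as $\abs{y}\geq\delta$ versus $\abs{y}>\delta$. The halving of $\delta$ and the division by $1+\eta'$ absorb all of these discrepancies. Beyond that, the argument is routine once~\cite[Lemma~14]{GSLW19} is invoked.
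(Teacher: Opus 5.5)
Your proposal is correct and matches the paper, which gives no separate proof and simply imports \cite[Lemma~14]{GSLW19} (itself built on \cite[Corollary~6]{LC17}). That cited lemma is already stated on $[-2,2]$ with exactly these guarantees, so your rescaling $x\mapsto x/2$ and the division by $1+\eta'$ are harmless but unnecessary; the only nit is that for $\delta>2$ the bound $O(\ln(1/\eta)/\delta)$ eventually forces degree $0$, so you should take $P_{\eta,\delta}=0$ rather than ``any bounded odd polynomial''.
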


The linear combination of unitaries (LCU) method was developed for Hamiltonian simulation in~\cite{CW12,BCCKS15}. We use the following formulation in terms of block-encodings: 
\begin{lemma}[Linear combination of block-encoded matrices,
    adapted from~{\cite[Lemma~29]{GSLW19}}]
    \label{lemma:lcu}
    Let $A_0,\ldots,A_{m-1}$ be operators on the same $s$-qubit register, and let $U_j$ be an $(\alpha,a,\epsilon)$-block-encoding of $A_j$ for every $0\leq j<m$, where $\alpha>0$ and $\epsilon\geq0$. Let $w_0,\ldots,w_{m-1}\in\bbR$. Let $W$ be a $b$-qubit unitary such that $W\ket{0}^{\otimes b} =\sum_{j=0}^{m-1}\sqrt{\frac{\abs{w_j}}{w}}\ket{j}$, where $w=\sum_{j=0}^{m-1}\abs{w_j}>0$ and $b=\ceil*{\log_2 m}$. 
    Set $w_j=0$ and $U_j=I$ for $m\leq j<2^b$. Then the unitary
    \[
        U=(W^\dagger\otimes I) U_{\rm select} (W\otimes I),
        \quad\text{where } U_{\rm select}\coloneqq \sum_{j=0}^{2^b-1} (-1)^{{\bf1}_{\cbra{w_j<0}}}\ketbra{j}{j}\otimes U_j\enspace,
    \]
    is an $(\alpha',a+b,\epsilon')$-block-encoding of $A=\sum_{j=0}^{m-1}w_jA_j$ with $\alpha'=\alpha w$ and $\epsilon'=\epsilon w$. Here, the indicator ${\bf1}_{\cbra{w_j<0}}$ equals $1$ if $w_j<0$ and $0$ otherwise.
Suppose that $W$ and $U_j$ can be implemented using
$S_W$ and $S_j$ one- and two-qubit gates, respectively.
Then $U$ can be implemented using
    $O\rbra[\big]{S_W+(b+1)^2\sum_{j=0}^{m-1}(S_j+1)}$
one- and two-qubit gates, without additional ancillary
qubits. When the weights $w_0,\ldots,w_{m-1}$ are explicitly given,
$W$ can be chosen to use $O(m(b+1)^2)$ gates.

\noindent In particular, if $m=\poly(n)$ and every $U_j$ can be implemented using $\poly(n)$ one- and two-qubit gates, then $U$ can also be implemented using $\poly(n)$ such gates.
\end{lemma}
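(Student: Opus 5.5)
The statement is the standard LCU lemma, so I would prove it in three stages: a direct computation of the top-left block of $U$, an error bound inherited from the $U_j$, and a gate count.

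\textbf{Exact case.} First assume each $U_j$ exactly satisfies $\alpha(\bra{0}^{\otimes a}\otimes I)U_j(\ket{0}^{\otimes a}\otimes I)=A_j$. Write $\ket{W}\coloneqq W\ket{0}^{\otimes b}=\sum_j\sqrt{\abs{w_j}/w}\,\ket{j}$. Projecting onto $\bra{0}^{\otimes b}\bra{0}^{\otimes a}$ on the left and $\ket{0}^{\otimes b}\ket{0}^{\otimes a}$ on the right of $U$ leaves the $\ket{W}$-block of $U_{\rm select}$:
\[
\sum_{j,k}\sqrt{\frac{\abs{w_j}\abs{w_k}}{w^2}}\,\bra{j}\ket{k}\,(-1)^{{\bf1}_{\{w_k<0\}}}\,(\bra{0}^{\otimes a}\otimes I)U_k(\ket{0}^{\otimes a}\otimes I).
\]
Since $\bra{j}\ket{k}=\delta_{jk}$ and $\abs{w_k}(-1)^{{\bf1}_{\{w_k<0\}}}=w_k$, this equals $\frac{1}{w\alpha}\sum_k w_kA_k$. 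The padded indices $j\ge m$ carry zero amplitude in $\ket{W}$, so they do not contribute.

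\textbf{Approximate case.} Write $B_j\coloneqq\alpha(\bra{0}\otimes I)U_j(\ket{0}\otimes I)$, so that $\norm{B_j-A_j}\le\epsilon$. The identical computation shows that the top-left block of $U$, multiplied by $\alpha w$, equals $\sum_j w_jB_j$. By the triangle inequality,
\[
\norm*{\sum_j w_j(B_j-A_j)}\le\sum_j\abs{w_j}\,\epsilon=w\epsilon .
\]
This gives the $(\alpha w,a+b,\epsilon w)$-block-encoding. The ancilla registers combine into $a+b$ qubits in the required order, with the $b$-qubit selector register prepended.

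\textbf{Gate count.} $U_{\rm select}$ is a product over $j$ of $\ket{j}\bra{j}$-controlled $U_j$, each with a sign. Controlling on a specific $b$-bit value costs $O(b)$ Toffoli-type gates, which decompose into one- and two-qubit gates. Turning each gate of $U_j$ into a $b$-controlled gate costs $O((b+1)^2)$ elementary gates if no ancillas are allowed, using standard multi-controlled-gate constructions without ancillas. This gives $O\rbra{(b+1)^2(S_j+1)}$ per term, where the $+1$ accounts for the controlled phase $-1$. Adding $2S_W$ for $W$ and $W^\dagger$ yields the stated total.

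For explicitly given weights, $W$ is a state-preparation unitary for a known $m$-dimensional real amplitude vector. A binary-tree construction of controlled $R_y$ rotations, with $O(m)$ multi-controlled rotations each costing $O((b+1)^2)$, gives $O(m(b+1)^2)$ gates. The rotation angles are computed classically; exactness requires exact angles, which is the standard idealization also used in the cited work. The final claim follows because $m$, $b$, and the $S_j$ are all polynomial.

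The main obstacle is not the algebra but the ancilla-free gate accounting: justifying the $(b+1)^2$ overhead for multiply-controlled gates with no extra qubits. For this I would cite the standard Barenco et al.-style decompositions rather than rederive them.
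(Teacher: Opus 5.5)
Your proposal is correct. The block computation (using $\bra{0}^{\otimes b}W^\dagger=\bra{W}$ and $\abs{w_k}(-1)^{\mathbf{1}_{\{w_k<0\}}}=w_k$), the triangle-inequality error bound $w\epsilon$, and the ancilla-free Barenco-style count of $O((b+1)^2)$ gates per multiply-controlled gate are exactly what the statement requires.

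The paper gives no proof of its own and defers to \cite[Lemma~29]{GSLW19}. That lemma is proved by this same direct computation, with the signs placed in a state-preparation pair rather than in the select oracle, so your route is essentially the paper's.
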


\paragraph{Normalized trace estimation.} To estimate the normalized trace of an exact block-encoding, we need the next two subroutines. The first is quantum amplitude estimation: 
\begin{lemma} [Quantum amplitude estimation, {\cite[Theorem 12]{BHMT02}}] \label{lemma:amplitude-estimation}
    Let $U$ be a unitary operator such that $U \ket{0} \ket{0} = \sqrt{p} \ket{0} \ket{\phi_0} + \sqrt{1-p} \ket{1} \ket{\phi_1}$, where $\ket{\phi_0}$ and $\ket{\phi_1}$ are normalized pure quantum states and $p \in \sbra{0, 1}$. 
    Then, there is a quantum query algorithm using $O\rbra{M}$ queries to $U$ that outputs $\tilde p$ such that 
    \[
    \Pr\sbra*{ \abs*{\tilde p - p} \leq \frac{2\pi\sqrt{p\rbra{1-p}}}{M} + \frac{\pi^2}{M^2} } \geq \frac{8}{\pi^2}\enspace.
    \]
    Furthermore, if $U$ acts on $n$ qubits, the quantum query algorithm can be implemented by using $O\rbra{Mn}$ one- and two-qubit quantum gates.
\end{lemma}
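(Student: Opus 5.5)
The plan is to follow the standard Brassard--H{\o}yer--Mosca--Tapp route: reduce amplitude estimation to phase estimation of a Grover-type iterate whose eigenphases encode $p$, and then translate the phase error into an error on $p$.

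\textbf{Step 1 (two-dimensional invariant subspace).} Write $p=\sin^2\theta_p$ with $\theta_p\in[0,\pi/2]$. Set $\ket{\Psi}\coloneqq U\ket{0}\ket{0}$, $\ket{\Psi_0}\coloneqq\ket{0}\ket{\phi_0}$ (the ``good'' part, marked by the first qubit being $0$) and $\ket{\Psi_1}\coloneqq\ket{1}\ket{\phi_1}$. Define the iterate
\[
Q \coloneqq -U S_0 U^\dagger S_\chi ,
\]
where $S_\chi = (I-2\ketbra{0}{0})\otimes I$ flips the sign of the good subspace, and $S_0 = I - 2\ketbra{0}{0}\otimes\ketbra{0}{0}$ is the reflection about the all-zero input. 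Up to sign, $Q$ is a product of two reflections, one about $\ket{\Psi}$ and one about $\ket{\Psi_1}$. It therefore preserves $\mathrm{span}\{\ket{\Psi_0},\ket{\Psi_1}\}$ and acts there as a rotation by angle $2\theta_p$. Its eigenvectors $\ket{\Psi_\pm}=(\ket{\Psi_0}\pm i\ket{\Psi_1})/\sqrt2$ have eigenvalues $e^{\pm 2i\theta_p}$, and
\[
\ket{\Psi}=\frac{-i}{\sqrt2}\rbra*{e^{i\theta_p}\ket{\Psi_+}-e^{-i\theta_p}\ket{\Psi_-}}.
\]
This is a direct $2\times2$ computation.

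\textbf{Step 2 (phase estimation).} Run phase estimation with an $m$-qubit register, $M=2^m$, on the input $\ket{\Psi}$. This uses controlled powers $Q^{2^k}$, hence $O(M)$ applications of $Q$, i.e., $O(M)$ queries to $U$ and $U^\dagger$, followed by an inverse QFT on $m=\log M$ qubits. Measuring yields $y$, and we output $\tilde p=\sin^2(\pi y/M)$.

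Since $\ket{\Psi}$ is a superposition of the two eigenvectors, the outcome $y$ is distributed as a mixture of phase estimation on phase $\theta_p/\pi$ and on $1-\theta_p/\pi$. Both branches give the same value of $\sin^2(\pi y/M)$. So it suffices to analyze a single branch. I will invoke the standard fact (Theorem 11 of BHMT) that with probability at least $8/\pi^2$ the outcome satisfies
\[
\abs{\tilde\theta-\theta_p}\le \pi/M, \qquad \tilde\theta\coloneqq\pi y/M .
\]
This comes from summing the two nearest grid points of the Fej\'er-kernel distribution $\abs{\sin(M\Delta)}^2/(M^2\sin^2\Delta)$, which gives $8/\pi^2$. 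I regard this as the main technical point, although it is a classical calculation.

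\textbf{Step 3 (error translation).} Write $\varepsilon\coloneqq\tilde\theta-\theta_p$ with $\abs{\varepsilon}\le\pi/M$. Then
\[
\sin^2(\theta_p+\varepsilon)-\sin^2\theta_p=\sin(2\theta_p)\sin\varepsilon\cos\varepsilon+\cos(2\theta_p)\sin^2\varepsilon .
\]
Its absolute value is at most
\[
2\sqrt{p(1-p)}\,\abs{\varepsilon}+\varepsilon^2\le \frac{2\pi\sqrt{p(1-p)}}{M}+\frac{\pi^2}{M^2},
\]
using $\abs{\sin\varepsilon\cos\varepsilon}\le\abs{\varepsilon}$, $\sin^2\varepsilon\le\varepsilon^2$, and $\sin 2\theta_p = 2\sqrt{p(1-p)}$.

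\textbf{Step 4 (gate count).} Each application of $Q$ costs one $U$, one $U^\dagger$, and two reflections, each implementable with $O(n)$ gates. Controlling these on the phase register adds a constant-factor overhead per application. The inverse QFT costs $O(\log^2 M)$ gates. Together this gives $O(Mn)$ gates, absorbing the QFT cost for $n\gtrsim\log M$, or otherwise using the approximate QFT.
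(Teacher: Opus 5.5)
Your proposal is correct. It is precisely the Brassard--H{\o}yer--Mosca--Tapp argument (Grover-iterate eigenphases $e^{\pm 2i\theta_p}$, the $8/\pi^2$ phase-estimation bound, and the $\sin^2$ error translation), which the paper does not reprove but simply imports by citing \cite[Theorem~12]{BHMT02}. Two small fixes are worth making: for general $M$, round up to the next power of two, which only tightens the bound and keeps $O(M)$ queries; and obtain controlled-$Q$ by controlling only $S_0$, $S_\chi$ and the global sign, since $U\,(\text{controlled-}S_0)\,U^\dagger$ is automatically controlled and so no controlled queries to $U$ are needed.
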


The second subroutine is a tailored version of one-bit precision phase estimation~\cite{Kitaev95}, often referred to as the Hadamard test~\cite{AJL09}, as stated in~\cite{GP22}:

\begin{lemma} [Hadamard test for block-encodings, adapted from {\cite[Lemma 9]{GP22}}] \label{lemma:hadamard-test}
For a unitary $U$ over $m + a$ qubits, one can implement a quantum circuit that, given an input $m$-qubit quantum state $\rho$, outputs $0$ with probability $\frac{1}{2}+\frac{1}{2}\Real\sbra{\Tr\rbra{\bra{0}^{\otimes a}U\ket{0}^{\otimes a}\rho}}$, using $1$ query to controlled-$U$ and $O\rbra{1}$ one- and two-qubit quantum gates.

In particular, for every $\delta \geq 0$, if $U$ is a $\rbra{\alpha, a, \delta}$-block-encoding of an $m$-qubit linear operator $A$, then the above quantum circuit outputs $0$ with probability $\frac{1}{2}+\frac{1}{2\alpha}p$ where $\abs{\Real\sbra{\Tr\rbra{A\rho}} - p} \leq \delta$. 
\end{lemma}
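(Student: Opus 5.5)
The statement to prove is the Hadamard test lemma: given a unitary $U$ on $m+a$ qubits, there is a circuit that, on input $\rho$, outputs $0$ with probability $\frac12+\frac12\Real\Tr(\bra{0}^{\otimes a}U\ket{0}^{\otimes a}\rho)$; together with the consequence for $(\alpha,a,\delta)$-block-encodings.

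\textbf{The circuit.} I would use the standard Hadamard test with a single control qubit $\sfC$:
\begin{enumerate}[label={\upshape(\roman*)}]
    \item prepare $\ket{0}_{\sfC}\otimes\ket{0}^{\otimes a}\otimes\rho$;
    \item apply $H$ to $\sfC$;
    \item apply controlled-$U$, controlled on $\sfC$, to the ancilla and input registers;
    \item apply $H$ to $\sfC$ again and measure $\sfC$ in the computational basis.
\end{enumerate}
This uses one query to controlled-$U$ and $O(1)$ extra gates, namely two Hadamards and a measurement.

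\textbf{Correctness for pure inputs.} By linearity in $\rho$, it suffices to take $\rho=\ketbra{\phi}{\phi}$ and write $\ket{\Psi}=\ket{0}^{\otimes a}\ket{\phi}$. After the final Hadamard the state is
\[
\tfrac12\ket{0}_{\sfC}(\ket{\Psi}+U\ket{\Psi})+\tfrac12\ket{1}_{\sfC}(\ket{\Psi}-U\ket{\Psi}).
\]
The probability of outcome $0$ is therefore $\tfrac14\norm{\ket{\Psi}+U\ket{\Psi}}^2=\tfrac12+\tfrac12\Real\bra{\Psi}U\ket{\Psi}$. Since
\[
\bra{\Psi}U\ket{\Psi}=\Tr\rbra*{\bra{0}^{\otimes a}U\ket{0}^{\otimes a}\ketbra{\phi}{\phi}},
\]
this gives the claimed formula.

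\textbf{Mixed inputs.} A mixed $\rho$ is a convex combination of pure states. Both the acceptance probability and $\Real\Tr(\cdot\,\rho)$ are affine in $\rho$, so the formula extends directly.

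\textbf{Block-encoding consequence.} Let $B\coloneqq\alpha\bra{0}^{\otimes a}U\ket{0}^{\otimes a}$, so that by definition $\norm{A-B}\le\delta$. The output probability is $\tfrac12+\tfrac1{2\alpha}p$ with $p\coloneqq\Real\Tr(B\rho)$. Then
\[
\abs{\Real\Tr(A\rho)-p}\le\abs{\Tr((A-B)\rho)}\le\norm{A-B}\,\norm{\rho}_1\le\delta,
\]
where the middle step is the matrix H\"older inequality (\Cref{lemma:matrix-Holder}) with $p=1$, $q=\infty$.

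No step here is a real obstacle. The only care needed is the bookkeeping that ties the scaling by $\alpha$ to the definition of $p$.
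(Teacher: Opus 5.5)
Your proof is correct and is the standard Hadamard-test circuit and computation, which is also the argument behind this lemma: the paper does not reprove it but imports it from \cite[Lemma~9]{GP22}, and your block-encoding step via $\abs{\Tr((A-B)\rho)}\leq\norm{A-B}\,\norm{\rho}_1$ matches that route. One cosmetic point: in the H\"older step you reuse $p$ for the Schatten exponent, while $p$ also denotes $\Real\Tr(B\rho)$, so rename one of them.
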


Let $\rho_\Phi$ be the $m$-qubit maximally mixed state, with purification $\ket{\Phi_{2^m}}$, which can be prepared by a $2m$-qubit unitary operator $U_{\Phi,2^m}$. One can estimate the normalized trace $\Tr(A)/2^m$ by combining \Cref{lemma:hadamard-test} with \Cref{lemma:amplitude-estimation}:
\begin{corollary}[Normalized trace estimation for a normalized block-encoding]
    \label{corr:normalized-trace-estimation}    
    Let $A$ be a linear operator on an $m$-qubit register, and let $U$ be a $(\alpha,a,\delta)$-block-encoding of $A$ where $\delta \geq 0$. Then
    \[\abs*{ \alpha\bra{\Phi_{2^m}} \rbra[\big]{ \bra{0}^{\otimes a} U \ket{0}^{\otimes a} \otimes I } \ket{\Phi_{2^m}} - \frac{\Real \sbra*{\Tr(A)}}{2^m}} \leq \delta\enspace. \]
    Hence, by applying the Hadamard test (\Cref{lemma:hadamard-test}) to $\rho_\Phi$ and $U$ and combining it with amplitude estimation (\Cref{lemma:amplitude-estimation}), one can estimate $\Real \sbra*{\Tr(A)}/2^m$ to within additive error $\epsilon + \delta$ with success probability at least $8/\pi^2$, using $O\rbra{\alpha/\epsilon}$ queries to each of $U$ and $U_{\Phi,2^m}$ and $O\rbra{\alpha\rbra{m+a}/\epsilon}$ additional one- and two-qubit quantum gates. 

    In particular, if $A$ is a Hermitian, then the algorithm provides an estimate of $\Tr(A)/2^m$.
\end{corollary}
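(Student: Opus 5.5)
The identity to prove is
\[
\alpha\bra{\Phi_{2^m}}\rbra[\big]{\bra{0}^{\otimes a}U\ket{0}^{\otimes a}\otimes I}\ket{\Phi_{2^m}} \approx \frac{\Real\sbra{\Tr(A)}}{2^m}.
\]
The plan is to reduce it to the standard fact that the maximally entangled state computes normalized traces, and then combine the Hadamard test with amplitude estimation.

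First, write $\widetilde A\coloneqq\alpha\,\bra{0}^{\otimes a}U\ket{0}^{\otimes a}$. By \Cref{def:block-encoding}, $\norm{\widetilde A-A}\leq\delta$. Next, use the identity $\bra{\Phi_{2^m}}(B\otimes I)\ket{\Phi_{2^m}}=\Tr(B)/2^m=\Tr(B\rho_\Phi)$, which follows directly from $\ket{\Phi_{2^m}}=2^{-m/2}\sum_j\ket{j}\ket{j}$. This turns the left-hand side into $\Tr(\widetilde A)/2^m$.

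Then bound $\abs{\Tr(\widetilde A-A)}/2^m=\abs{\Tr((\widetilde A-A)\rho_\Phi)}\leq\norm{\widetilde A-A}\,\norm{\rho_\Phi}_1\leq\delta$ using \Cref{lemma:matrix-Holder} with $p=\infty$, $q=1$. Taking real parts gives the displayed inequality. Strictly, the left-hand side should be read as its real part, which is the quantity the Hadamard test accesses.

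For the algorithm, feed $\rho_\Phi$ to \Cref{lemma:hadamard-test}, realized by preparing $\ket{\Phi_{2^m}}$ with $U_{\Phi,2^m}$ and acting only on the first half. The output $0$ then occurs with probability $p_0=\tfrac12+\tfrac{1}{2\alpha}p$, where $\abs{p-\Real\Tr(A)/2^m}\leq\delta$. The combined circuit (state preparation, then the controlled-$U$ Hadamard test) is a unitary of the form required by \Cref{lemma:amplitude-estimation}, with a flag qubit marking the output.

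Run amplitude estimation with $M=\Theta(\alpha/\epsilon)$ queries. Since $\sqrt{p_0(1-p_0)}\leq 1/2$, this yields $\tilde p_0$ with $\abs{\tilde p_0-p_0}\leq \pi/M+\pi^2/M^2\leq\epsilon/(2\alpha)$ with probability at least $8/\pi^2$. Outputting $2\alpha(\tilde p_0-\tfrac12)$ then gives error at most $\epsilon$ relative to $p$, hence at most $\epsilon+\delta$ overall.

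The gate count is $O(M)$ applications of the combined circuit. Each application costs one (controlled) $U$, one $U_{\Phi,2^m}$, and $O(1)$ extra gates, while amplitude estimation adds $O(M(m+a))$ gates on the $O(m+a)$ qubits. This gives the stated $O(\alpha/\epsilon)$ queries and $O(\alpha(m+a)/\epsilon)$ additional gates. For Hermitian $A$, $\Tr(A)$ is real, so the estimate is of $\Tr(A)/2^m$ itself.

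Nothing here is a real obstacle. The only points needing care are the real-part convention on the left-hand side and checking that the Hadamard-test circuit, run on a purification, meets the form required by \Cref{lemma:amplitude-estimation}.
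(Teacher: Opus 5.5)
Your proposal is correct and takes the same approach the paper indicates: the paper gives no separate proof, only the remark that one runs the Hadamard test on $\rho_\Phi$ (via its purification $\ket{\Phi_{2^m}}$) and then applies amplitude estimation, and your choice $M=\Theta(\alpha/\epsilon)$ with output $2\alpha(\tilde p_0-\tfrac12)$ yields exactly the stated error, success probability, and query and gate bounds. Your real-part caveat is also warranted: as literally displayed, the first inequality fails when $\Tr(A)$ has nonzero imaginary part (e.g.\ $U=iI$, $\alpha=1$, $\delta=0$ gives left-hand side $1$), while your argument correctly proves $\abs{\Tr(\widetilde A)-\Tr(A)}/2^m\leq\delta$ and hence the real-part version, which is all that the Hadamard test and the later applications need.
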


It is noteworthy that this Hadamard-test-based observation appears in~\cite[Lemma 1]{Shepherd06} in the context of \textsf{DQC1} and also in~\cite{SJ08}, which gives a self-contained proof. 

\section{\QIPtwo{} with a laconic prover: the landscape ``just above'' \QSZK{}}

We begin by defining the promise problem $\MultiQSD{}$ in \Cref{def:MultiQSD}, which formalizes \emph{finite-ensemble quantum state discrimination}, as considered in, e.g.,~\cite[Scenario~3.8]{Watrous18}:  
Given a state $\rho_s$ from an ensemble $\cbra{\rho_s}_{s\in\binset^\ell}$, where the label $s$ is chosen uniformly at random, the receiver aims to identify the hidden label $s$. This task admits an SDP formulation~\cite[Section~II]{EMV03} (see also~\cite[Section~3.1.2]{Watrous18} and~\cite[Section~4.2]{vAG19}), and the resulting promise problem naturally generalizes \QSD{} introduced in~\cite{Watrous02} (see \Cref{def:QSD}). 

\begin{definition}[\textsc{Multi-State Distinguishability}, \MultiQSD{}]
    \label{def:MultiQSD}
    Let $\ell$ be an efficiently computable function such that $\ell = O(\log n)$. Let $Q$ be a polynomial-size quantum circuit with description length $n$ such that $Q$ acts on $\ell + m$ qubits with $k$ specified output qubits, where $m(n)$ and $k(n)$ are polynomial in $n$. Let $\rho_s$ be the output state obtained by running $Q$ on $\ket{s} \ket{0}^{\otimes m}$ for an $\ell$-bit string $s \in \binset^{\ell}$. Define \[\MSucc{Q} \coloneq \max_{\{\Pi_s\}_{s \in \binset^\ell}}\frac{1}{2^\ell} \sum_{s \in \binset^\ell}\Tr(\Pi_s \rho_s)\enspace,\]
    where the maximum is over all POVMs $\{\Pi_s\}_{s \in \binset^\ell}$.
    
    \noindent Let $a(n)$ and $b(n)$ be efficiently computable functions such that $0 < b(n) \leq a(n) \leq 1$. The promise problem $\MultiQSD\sbra*{\ell(n), a(n),b(n)}$ asks to distinguish between the following two cases:
    \begin{itemize}
        \item \emph{Yes:} The quantum circuit $Q$ satisfies $\MSucc{Q} \geq a(n)$;
        \item \emph{No:} The quantum circuit $Q$ satisfies $\MSucc{Q} \leq b(n)$. 
    \end{itemize}

\noindent Furthermore, we use the notation $\MultiQSD_{2^\ell}$ when emphasizing that the number of states to be discriminated is $2^\ell$.
\end{definition}

Our main result in this section is a natural complete characterization of two-message quantum interactive proof systems with a laconic prover, who sends only $\ell(n)$ bits, where $\ell(n) = O(\log n)$: 

\begin{theorem}[\MultiQSD{} is $\QIP_{\log\text{-}\bit}$-complete]
\label{thm:MultiQSD-is-QIP-ell-bit-Complete}
For positive integer-valued function $\ell(n) = O(\log n)$,  $\MultiQSD_{2^\ell}$ is $\QIP_{\ell\text{-}\bit}$-complete. 
In particular, let $a(n)$ and $b(n)$ be efficiently computable functions such that $0 \leq b(n) < a(n) \leq 1$, the following statement hold:
\begin{enumerate}[label={\upshape(\arabic*)}]
    \item\label{thmitem:MultiQSD-containment} \textbf{\emph{Containment}.} $\displaystyle \MultiQSD{\sbra*{\ell, a, b}} \in \QIP_{\ell\text{-}\bit}{\sbra*{2, a, b}}\enspace.$
    \item\label{thmitem:MultiQSD-hardness} \textbf{\emph{Hardness}.} $\displaystyle \MultiQSD{\sbra*{\ell, \frac{1}{2^{\ell}} + a \cdot \frac{2^{\ell} - 1}{2^{2\ell}}, \frac{1}{2^{\ell}} + b \cdot \frac{2^{\ell} - 1}{2^{2\ell}}}}$ is $\displaystyle \QIP_{\ell\text{-}\bit}{[2, a, b]}$-hard. 
\end{enumerate}
\end{theorem}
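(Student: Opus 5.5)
For containment, I will use the natural protocol. Given the circuit $Q$, the verifier samples a uniformly random $s\in\binset^\ell$, which it keeps by preparing $\ket{s}$ in a private register (for example via Hadamards and CNOT copies). It then runs $Q$ on $\ket{s}\ket{0}^{\otimes m}$ and sends the $k$ output qubits to the prover in $\sfM$, keeping the rest in $\sfW$. The prover answers with $r\in\binset^\ell$, and the verifier accepts iff $r=s$.

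An arbitrary prover strategy amounts to a POVM $\cbra{\Pi_r}$ on $\sfM$. Its acceptance probability is exactly $2^{-\ell}\sum_s\Tr(\Pi_s\rho_s)$, so the optimal value equals $\MSucc{Q}$. This also follows from \Cref{lemma:optimal-acceptance-classical-response}, with $X_{x,r}=2^{-\ell}\rho_r$. Completeness $a$ and soundness $b$ therefore transfer directly, and the verifier is efficient because $\ell=O(\log n)$.

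For hardness, I start from a $\QIP_{\ell\text{-}\bit}[2,a,b]$ verifier $V=(V_1,V_2)$. I write $\omega(V(x))=\max_{\{M_r\}}\sum_r\Tr(M_rX_{x,r})$ as in \Cref{eqn:maximum-acceptance-prob-intro}, where $X_{x,r}$ is the subnormalized accepted state on $\sfM$ when the response is fixed to $r$, and I set $p_{x,r}=\Tr X_{x,r}$. I then construct the circuit $Q$ so that on input $\ket{r}$ it outputs
\[
\rho_r=\tfrac{2^\ell-1}{2^\ell}X_{x,r}\otimes\ketbra00+\mu_{x,r}\otimes\ketbra11,\qquad \mu_{x,r}=\tfrac1{2^\ell}\ketbra rr+\sum_{r'\neq r}\tfrac{1-p_{x,r}}{2^\ell}\ketbra{r'}{r'},
\]
with $\ket r$ padded into $\sfM$'s dimension. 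The preparation procedure is as follows. Sample $r'$ uniformly. If $r'=r$, output $(\ket{r},1)$. Otherwise run $V_1$, then $V_2$ with response $r$, and measure the output qubit. On acceptance, output $\sfM$ with flag $0$; on rejection, output $\ket{r'}$ with flag $1$ and discard $\sfM$ (swapping in a fresh zero register). Because $\ell=O(\log n)$ the sampling is efficient. The trace of the output is $(1-2^{-\ell})(p_{x,r}+1-p_{x,r})+2^{-\ell}=1$, and the flag-$0$ part equals $(1-2^{-\ell})X_{x,r}$. The flag-$1$ part is classical with weight $2^{-\ell}$ on $r$, and for each $r'\neq r$ it has weight $2^{-\ell}(1-p_{x,r})$. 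I will verify this carefully by writing the deferred-measurement circuit.

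Next I analyze $\MSucc{Q}$. Because the flag is classical, an optimal POVM can be taken block diagonal: any POVM $\{N_r\}$ can be replaced by its pinching $\sum_f(\ketbra ff\otimes I)N_r(\ketbra ff\otimes I)$ without changing the value. The flag-$0$ block then contributes $\frac{2^\ell-1}{2^\ell}\max_M\sum_r\Tr(M_rX_{x,r})=\frac{2^\ell-1}{2^\ell}\omega(V(x))$.

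The flag-$1$ block is a classical guessing problem over labels $z$, where the weight of $z$ under $\mu_{x,r}$ is $2^{-\ell}$ if $z=r$ and $2^{-\ell}(1-p_{x,r})\leq 2^{-\ell}$ otherwise. Its optimal value is $\sum_z\max_r\mu_{x,r}(z)=\sum_z 2^{-\ell}=1$, attained by guessing $r=z$. I will also check the contribution of any basis vectors outside the labels: these carry zero weight. Summing and dividing by $2^\ell$ gives
\[
\MSucc{Q}=2^{-\ell}+\tfrac{2^\ell-1}{2^{2\ell}}\omega(V(x)),
\]
which is monotone in $\omega(V(x))$. This yields the thresholds stated in item~\ref{thmitem:MultiQSD-hardness}.

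The main obstacle I expect is justifying the block-diagonal decomposition together with the exact value of the padding block. In particular, the maximum of the sum must split into a sum of maxima. It does, because the blocks are orthogonal, so the POVM can be optimized independently on each block. The key point is that in the padding block the maximum per label equals $2^{-\ell}$ regardless of $p_{x,r}$.

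Combining containment and hardness with a polynomial gap, since $(a-b)(2^\ell-1)/2^{2\ell}\geq 1/\poly$ when $\ell=O(\log n)$, gives $\QIP_{\ell\text{-}\bit}$-completeness of $\MultiQSD_{2^\ell}$.
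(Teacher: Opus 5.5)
Your proposal is correct and follows the paper's proof essentially step for step. Containment uses the same guess-the-label protocol, and hardness uses the same padded states $\rho_r=\frac{2^\ell-1}{2^\ell}X_{x,r}\otimes\ketbra{0}{0}+\mu_{x,r}\otimes\ketbra{1}{1}$, the same sampling circuit, and the same identity $\MSucc{Q}=2^{-\ell}+\frac{2^\ell-1}{2^{2\ell}}\,\omega(V(x))$. The differences are cosmetic: the paper gets the padding block's value from $\mu_{x,r}\preceq I/2^\ell$ plus an explicit optimal POVM instead of pinching and the classical guessing value, keeps $\mu_{x,r}$ in separate registers rather than padding into $\sfM$, and tracks the constant shift in description length.
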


\begin{remark}[Another $\QIP_{\log\text{-}\bit}$ complete problem via conditional min-entropy]
    \Cref{thm:MultiQSD-is-QIP-ell-bit-Complete} also gives a complete problem formulated in terms of conditional min-entropy.
    Given a circuit $Q$ as in \Cref{def:MultiQSD}, consider the classical-quantum state
    $\rho_{\sfJ\sfM}\coloneq 2^{-\ell}\sum_{s\in\binset^\ell} \ketbra{s}{s}_{\sfJ}\otimes(\rho_s)_{\sfM}$.
    The conditional min-entropy of $\sfJ$ conditioned on $\sfM$ for the quantum state $\rho$ is defined by
    \[
        \HminCond{\sfJ}{\sfM}_{\rho}
        \coloneq
        -\log\inf\set*{\Tr(Y)}
        {Y\in\Pos(\sfM),\,I_{\sfJ}\otimes Y\succeq\rho_{\sfJ\sfM}}
        \enspace.
    \]
    By the guessing-probability interpretation of conditional min-entropy~\cite[Theorem~1]{KRS09}, $2^{-\HminCond{\sfJ}{\sfM}_{\rho}}$ equals  the optimal probability of guessing the classical label $s$ from the quantum state $\rho_s$, which is exactly $\MSucc{Q}$.
    Consequently, distinguishing low from high conditional min-entropy for these classical-quantum states, with suitable promise thresholds, gives another $\QIP_{\ell\text{-}\bit}$-complete problem for $\ell(n)=O(\log n)$.
\end{remark}

\Cref{thm:MultiQSD-is-QIP-ell-bit-Complete} immediately implies that \QSD{} is \QIPbit{}-complete, which complements the observation that \QSD{} (and thus \QSZK{}) is in \QIPbit{}~\cite[Section 5]{BSW11}. Therefore, our $\QIP_{\log\text{-}\bit}$-complete problem in \Cref{thm:MultiQSD-is-QIP-ell-bit-Complete} \emph{places $\QIP_{\ell\text{-}\bit}$ for $\ell\geq 2$ in the landscape ``just above'' $\QSZK$.}

\vspace{1em}
However, \Cref{thm:MultiQSD-is-QIP-ell-bit-Complete}\ref{thmitem:MultiQSD-hardness} yields only that $\QSD{[a/2,b/2]}$ is $\QIPbit[a,b]$-hard. To further improve the $\QIPbit[a,b]$-hard regime, we need a somewhat different construction: 
\begin{theorem}[$\QSD{}$ is $\QIPbit{}$-complete]
\label{thm:QSD-is-QIPbit-Complete}
Let $a(n)$ and $b(n)$ be efficiently computable functions such that $0 \leq b(n) < a(n) \leq 1$. Then, the following statements hold: 
\begin{enumerate}[label={\upshape(\arabic*)}]
    \item\label{thmitem:QSD-containment} \textbf{\emph{Containment}.} $\QSD{[a, b]} \in \QIPbit\sbra*{\frac{a + 1}{2}, \frac{b + 1}{2}}$.
    \item\label{thmitem:QSD-hardness} \textbf{\emph{Hardness}.} $\QSD{[a, b]}$ is $\QIPbit{[a, b]}$-hard.
\end{enumerate}
\end{theorem}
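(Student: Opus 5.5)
For containment I will use the standard Holevo--Helstrom protocol. The verifier picks a uniformly random bit $b$. It prepares $\rho_b$ by running $Q_b$ on $\ket{\bar 0}$ and sends the output qubits as the question. The prover answers with a single bit $b'$, and the verifier accepts iff $b'=b$. By the discussion after \Cref{def:MultiQSD} (the $\ell=1$ case of \Cref{eq:MultiQSD-pacc}), any prover strategy amounts to a POVM $\{\Pi,I-\Pi\}$ on the message register. The optimal acceptance probability is therefore $\max_\Pi \frac12\rbra*{\Tr(\Pi\rho_0)+\Tr((I-\Pi)\rho_1)}=\frac{1+\TD(\rho_0,\rho_1)}{2}$. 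Yes instances give value at least $(1+a)/2$, and no instances give value at most $(1+b)/2$. This is exactly \Cref{thmitem:QSD-containment}; alternatively it follows from \Cref{thm:MultiQSD-is-QIP-ell-bit-Complete}\ref{thmitem:MultiQSD-containment} with $\ell=1$.

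For hardness, fix a $\QIPbit[a,b]$ verifier $V=(V_1,V_2)$. By \Cref{lemma:optimal-acceptance-classical-response}, \Cref{eqn:maximum-acceptance-prob-intro} holds with subnormalized $X_{x,0},X_{x,1}$ on $\sfM$. Set $\tau_x\coloneqq\Tr_\sfW\ketbra{\psi_x}{\psi_x}$ and note that $X_{x,r}\preceq\tau_x$, since $\tau_x-X_{x,r}$ is the rejection part. I will use the closed form \Cref{eqn:omega(V)-intro} from the binary generalized Helstrom bound~\cite{AM14}. To keep the proof self-contained, I would re-derive it by writing $\sum_r\Tr(M_rX_{x,r})=\Tr X_{x,1}+\Tr(M_0(X_{x,0}-X_{x,1}))$ and maximizing over $0\preceq M_0\preceq I$, which gives $\Tr X_{x,1}+\Tr(X_{x,0}-X_{x,1})_+$.

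I then define $\rho_0\coloneqq X_{x,0}\otimes\ketbra00+(\tau_x-X_{x,0})\otimes\ketbra11$ and $\rho_1\coloneqq X_{x,1}\otimes\ketbra22+(\tau_x-X_{x,1})\otimes\ketbra11$, using a two-qubit flag. The circuit $Q_r$ runs $V_1$ and then runs $V_2$ with the response register fixed to $\ket r$. It measures the output qubit $\sfO$ coherently and, conditioned on acceptance, writes flag $0$ (for $r=0$) or $2$ (for $r=1$); otherwise it writes flag $1$. It outputs $\sfM$ together with the flag. Since $V_2$ does not act on $\sfM$, the accept and reject branches leave $\sfM$ in the states $X_{x,r}$ and $\tau_x-X_{x,r}$ respectively. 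This is the one point to verify carefully: tracing out $\sfM'\sfW$ must be consistent, and it is, because $V_2$ acts only on $\sfM'\sfW$. Hence both circuits have polynomial size.

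Since the flag blocks are orthogonal, the trace norm splits block by block: $\norm{\rho_0-\rho_1}_1=\Tr X_{x,0}+\Tr X_{x,1}+\norm{X_{x,0}-X_{x,1}}_1$, where the flag-$1$ block contributes $\norm{X_{x,1}-X_{x,0}}_1$. Therefore $\TD(\rho_0,\rho_1)=\omega(V(x))$ by \Cref{eqn:omega(V)-intro}. Yes instances map to $\TD\geq a$ and no instances to $\TD\leq b$, which is exactly the desired $\QSD[a,b]$ instance with no parameter loss.

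The main obstacle is checking that the rejection branch really leaves $\tau_x-X_{x,r}$ on $\sfM$, so that the normalization holds. Once that is settled, the block decomposition of the trace norm is routine.
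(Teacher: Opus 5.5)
Your proposal is correct and follows the paper's proof essentially verbatim. Containment is Watrous's random-bit Helstrom protocol. Hardness uses the same flagged states $\rho_0,\rho_1$ built from the accept branch $X_{x,r}$ and the reject branch $\tau_x - X_{x,r}$, with $\TD(\rho_0,\rho_1)$ equal to the binary closed form of $\omega(V(x))$; you re-derive that closed form via $\Tr X_{x,1}+\Tr(X_{x,0}-X_{x,1})_+$ where the paper cites it.

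The only detail you leave implicit is the paper's bookkeeping that the output circuit pair has description length $n+c$. Formally, this shifts the thresholds to $a(n-c)$ and $b(n-c)$.
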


\vspace{1em}
In the remainder of this section, we first prove \Cref{thm:QSD-is-QIPbit-Complete} as a warm-up in \Cref{sec:QSD-is-QIPbit-complete}, and then establish \Cref{thm:MultiQSD-is-QIP-ell-bit-Complete} in \Cref{sec:MultiQSD-is-QIPellbit-complete}. 

To this end, we characterize the maximum acceptance probability of $\QIP_{\ell\text{-}\bit}(2)$ proof systems, specializing the formulation in~\cite[Section~4]{JUW09} by restricting the prover's action to a \textit{quantum-to-classical channel} (see, e.g.,~\cite[Definition~2.35]{Watrous18}): 

\begin{lemma}[Maximum acceptance probability of $\QIP_{\ell\text{-}\bit}(2)$ proof systems]
\label{lemma:optimal-acceptance-classical-response}
Let $V(x)=(V_1(x),V_2(x))$ be the verifier's actions in a $\QIP_{\ell\text{-}\bit}(2)$ proof system $\protocol{P}{V}$.
Let $\ket{\psi_x}\coloneq V_1(x)\ket{\bar{0}}_{\sfM\sfW}$ be the state after the verifier's first action, where $\sfM$ is sent to the prover and $\sfW$ is the verifier's private register.
Let $\sfM'$ denote the register containing the prover's \emph{classical} response.
The verifier applies $V_2(x)$ to $(\sfM', \sfW)$ and accepts if measuring its output qubit $\sfO$ in the computational basis gives $1$.
For each $s\in\binset^\ell$, let the subnormalized state on $\sfM$ be
\begin{align}
\label{eqn:subnormalized-state-X_xs}
    X_{x,s}\coloneq\Tr_{\sfM'\sfW}\rbra*{\ketbra{1}{1}_{\sfO}V_2(x)_{\sfM'\sfW}\rbra*{\ketbra{s}{s}_{\sfM'}\otimes \ketbra{\psi_x}{\psi_x}_{\sfM\sfW}}V_2(x)^\dagger_{\sfM'\sfW}}\enspace.
\end{align}

Then the maximum acceptance probability over every cheating prover $P^*$ satisfies
\begin{align}
\label{eqn:winning-prob-MultiQSD-hardness-2}
    \max_{P^*}\prob{(\protocol{P^*}{V})(x)\text{ accepts}} =\max_{\textnormal{POVM }\{M_s\}_{s\in\binset^\ell}}\sum_{s\in\binset^\ell}\Tr\rbra*{M_sX_{x,s}}\enspace.
\end{align}
Here, the maximum on the right-hand side is over all POVMs on $\sfM$.
\end{lemma}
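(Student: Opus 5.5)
We prove the equality by showing two inequalities, after first putting an arbitrary prover into a canonical form.

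\emph{Canonical form of the prover.} A general prover $P^*$ receives $\sfM$, holds a private register $\sfQ$ in some initial state $\ket{\phi}_{\sfQ}$, applies a unitary to $(\sfQ,\sfM)$, and returns a register $\sfM'$. Because the response is restricted to a computational-basis string, the verifier may measure $\sfM'$ in the computational basis before applying $V_2(x)$ without changing the acceptance probability. Hence we may assume the prover applies a quantum-to-classical channel $\Lambda(Y)=\sum_s \Tr(M_s Y)\ketbra{s}{s}_{\sfM'}$ to $\sfM$. Here $M_s$ is defined by
\[
M_s \coloneqq \Tr_{\sfQ}\rbra*{(I_{\sfM}\otimes\ketbra{\phi}{\phi}_{\sfQ})\,U^\dagger(\ketbra{s}{s}\otimes I)U}.
\]
Taking partial inner products with the fixed state $\ket{\phi}_{\sfQ}$ shows that $\{M_s\}$ is a POVM on $\sfM$. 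Conversely, Naimark dilation shows that every POVM $\{M_s\}$ on $\sfM$ is realized by some honest-form prover that measures $\sfM$ and sends the outcome $s$. So the set of achievable prover strategies is exactly the set of POVMs on $\sfM$.

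\emph{Computing the acceptance probability.} Fix a POVM $\{M_s\}$. The joint state on $(\sfM',\sfW)$ that the verifier receives is
\[
\sum_s \ketbra{s}{s}_{\sfM'}\otimes \Tr_{\sfM}\rbra*{(M_s\otimes I_{\sfW})\ketbra{\psi_x}{\psi_x}}.
\]
The acceptance probability is therefore
\[
\sum_s \Tr\rbra*{\ketbra{1}{1}_{\sfO}\,V_2(x)\rbra*{\ketbra{s}{s}\otimes \Tr_{\sfM}\rbra*{(M_s\otimes I)\ketbra{\psi_x}{\psi_x}}}V_2(x)^\dagger}.
\]
Since $M_s$ acts only on $\sfM$ and $V_2(x)$ acts on $(\sfM',\sfW)$, these operators commute. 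We can therefore pull $M_s$ outside and rewrite each summand as $\Tr(M_s X_{x,s})$ by the definition of $X_{x,s}$ in \eqref{eqn:subnormalized-state-X_xs}. Note that $X_{x,s}$ is a partial trace over $\sfM'\sfW$ of the full operator on $\sfM\sfM'\sfW$, where $\sfM$ is left untouched by $V_2(x)$.

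\emph{Conclusion.} The canonical-form step gives the inequality $\le$, and the Naimark construction gives $\ge$. Moreover, $X_{x,s}\succeq 0$ and $\sum_s \Tr(X_{x,s})$ is bounded, so the supremum over the compact set of POVMs is attained and is a maximum.

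\emph{Main obstacle.} The only delicate point is justifying that the response may be taken as a computational-basis state. Definition~\ref{def:QIPtwo} does not spell out how a prover that sends a superposition is handled, so we would adopt the convention that the verifier measures $\sfM'$ in the computational basis first. The rest is bookkeeping: the partial trace and the commutation of $M_s$ with $V_2(x)$, which holds because they act on disjoint registers.
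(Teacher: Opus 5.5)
Your proposal is correct and follows essentially the same route as the paper, which identifies the prover's strategy with a quantum-to-classical channel, hence a POVM $\{M_s\}$ on $\sfM$ (citing Watrous, Theorem~2.37), and then evaluates the acceptance probability as $\sum_s \Tr(M_s X_{x,s})$. You simply spell out the dilation, partial-trace and commutation bookkeeping that the paper delegates to that citation.
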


\begin{proof}
By~\cite[Theorem~2.37]{Watrous18}, the prover's effective quantum-to-classical channel corresponds to a POVM $\cbra{M_s}_{s\in\binset^\ell}$ on $\sfM$, and every such POVM defines a valid prover strategy.
Using~\cite[Equation~(2.224)]{Watrous18} and the definition of $X_{x,s}$ in \Cref{eqn:subnormalized-state-X_xs}, the acceptance probability of the corresponding strategy is given by
\[ \prob{(\protocol{P^*}{V})(x)\text{ accepts}} = \sum_{s\in\binset^\ell}\Tr(M_sX_{x,s}). \] 
Consequently, maximizing over all POVMs establishes \Cref{eqn:subnormalized-state-X_xs}. 
Finally, $X_{x,s}$ is the subnormalized reduced state on $\sfM$ associated with acceptance when the verifier's second action is applied to $\ket{s}_{\sfM'}\otimes\ket{\psi_x}_{\sfM\sfW}$.
Thus, $X_{x,s}\succeq0$ and $\Tr(X_{x,s})\leq 1$.
\end{proof}

\subsection{Warm-up: \QSD{} is \texorpdfstring{\QIPbit{}}{}-complete}
\label{sec:QSD-is-QIPbit-complete}

To prove \Cref{thm:QSD-is-QIPbit-Complete}, the \QIPbit{} containment in \Cref{thm:QSD-is-QIPbit-Complete}\ref{thmitem:QSD-containment} follows directly from~\cite[Section~4.2 and Figure~2]{Watrous02} (see also~\cite[Theorem~17]{LGLW26} for a detailed analysis), with the underlying proof system recovered as the special case $\ell=1$ of \Cref{protocol:MultiQSD-standard-protocol}:

\begin{lemma}[\QSD{} is in \QIPbit{}, adapted from~{\cite[Theorem 10]{Watrous02}}]
\label{lemma:QSD-is-in-QIPbit}
Let $a(n)$ and $b(n)$ be efficiently computable functions such that $0 \leq b(n) < a(n) \leq 1$ for every integer $n$. 
Then 
\[ \QSD{[a, b]} \in \QIPbit\sbra*{\frac{a + 1}{2}, \frac{b + 1}{2}}\enspace. \]
\end{lemma}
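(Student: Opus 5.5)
The plan is to use Watrous's standard state-distinguishing protocol, which is the case $\ell=1$ of the multi-state protocol. On input $(Q_0,Q_1)$, the verifier picks a uniformly random bit $s\in\binset$. It runs $Q_s$ on $\ket{0}^{\otimes m}$ and keeps the non-output qubits in its private register $\sfW$. It sends the $k$ output qubits, whose reduced state is $\rho_s$, to the prover in $\sfM$. The prover answers with one classical bit $s'$, and the verifier accepts iff $s'=s$.

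To make this fit \Cref{def:QIPtwo}, I would implement the random choice coherently. Let $V_1$ prepare $\frac{1}{\sqrt2}\sum_s\ket{s}_{\sfS}\otimes Q_s\ket{\bar 0}$, using controlled-$Q_0$ and controlled-$Q_1$, with $\sfS$ placed inside $\sfW$. Then $V_2$ computes the equality test of $\sfS$ and $\sfM'$ into the output qubit $\sfO$. Both steps are polynomial size.

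For the analysis I would apply \Cref{lemma:optimal-acceptance-classical-response} with $\ell=1$. The reduced state on $\sfM$ conditioned on $\sfS=s$ is $\rho_s$, and acceptance on answer $s'$ occurs exactly when $s=s'$. So $X_{x,s}=\rho_s/2$, and the maximum acceptance probability equals $\max_{\{M_0,M_1\}}\frac12\rbra*{\Tr(M_0\rho_0)+\Tr(M_1\rho_1)}$. By the Holevo--Helstrom theorem this is $(1+\TD(\rho_0,\rho_1))/2$. Equivalently, write $M_0=\Pi$ and $M_1=I-\Pi$; the value is $\frac12+\frac12\max_{0\preceq\Pi\preceq I}\Tr(\Pi(\rho_0-\rho_1))$, and the maximum is attained at the projector onto the positive eigenspace of $\rho_0-\rho_1$, giving $\frac12\norm{\rho_0-\rho_1}_1$.

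Completeness and soundness then follow directly. On yes instances the honest prover performs the Helstrom measurement and wins with probability at least $(1+a)/2$. On no instances every prover wins with probability at most $(1+b)/2$. The honest prover need not be efficient, so nothing more is required. I do not expect a real obstacle. The only care needed is bookkeeping: checking that the coherent implementation gives exactly $X_{x,s}=\rho_s/2$ when the non-output qubits of $Q_s$ stay in $\sfW$, since those qubits are traced out in the lemma.
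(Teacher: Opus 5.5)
Your proposal is correct and takes the same route as the paper. The paper obtains this lemma as the $\ell=1$ case of \Cref{protocol:MultiQSD-standard-protocol} (Watrous's protocol): any prover strategy is a POVM, and for $\ell=1$ the optimal success probability $\frac12\max_{\{M_0,M_1\}}\rbra*{\Tr(M_0\rho_0)+\Tr(M_1\rho_1)}$ equals $(1+\TD(\rho_0,\rho_1))/2$ by the Holevo--Helstrom theorem. Your coherent implementation of the random bit, and your check via \Cref{lemma:optimal-acceptance-classical-response} that $X_{x,s}=\rho_s/2$, are correct bookkeeping that the paper leaves implicit.
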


The challenging direction is the \QIPbit{} hardness result in \Cref{thm:QSD-is-QIPbit-Complete}\ref{thmitem:QSD-hardness}:\footnote{An instance $x$ of a $\QIPbit{}$ proof system has length $n=|x|$, while as shown in the proof of \Cref{thm:QSD-is-QIPbit-hard}, the reduction maps $x$ to a $\QSD{}$ instance consisting of two circuits with description length $n'=n+c$, for some constant $c$. The shifts in $a$ and $b$ account for this constant difference in input lengths between the two problems.\label{footnote:shifted-number}} 

\begin{theorem}[\QSD{} is \QIPbit{}-hard]\label{thm:QSD-is-QIPbit-hard}
Let $a, b \colon \Naturals \to [0, 1]$ be efficiently computable functions such that $0 \leq b(n) < a(n) \leq 1$ for every $n \in \Naturals$.
Then for every promise problem $\PromiseProblemI \in \QIPbit{[a, b]}$, there exists a constant $c$, such that $\PromiseProblemI$ can be polynomial-time many-one reduced to
$\QSD{[a', b']}$, where $a'$ and $b'$ are efficiently computable functions such that $a'(n) = a(n - c)$ and $b'(n) = b(n - c)$.
\end{theorem}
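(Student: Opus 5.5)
Given $\PromiseProblemI\in\QIPbit[a,b]$ with verifier $V(x)=(V_1(x),V_2(x))$, we apply \Cref{lemma:optimal-acceptance-classical-response} with $\ell=1$. This gives
\[
\omega(V(x))=\max_{0\preceq M\preceq I}\bigl(\Tr(MX_{x,0})+\Tr((I-M)X_{x,1})\bigr),
\]
where $X_{x,0},X_{x,1}$ are the subnormalized accepting states on $\sfM$. Write $X_{x,1}-X_{x,0}=P-N$ with $P,N\succeq 0$ on orthogonal supports. The maximum of $\Tr(X_{x,1})+\Tr(M(X_{x,0}-X_{x,1}))$ is attained at $M=\mathbf{1}_{X_{x,0}-X_{x,1}\succ 0}$. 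This yields the generalized Holevo--Helstrom identity
\[
\omega(V(x))=\tfrac12\bigl(\Tr X_{x,0}+\Tr X_{x,1}+\norm{X_{x,0}-X_{x,1}}_1\bigr).
\]
The goal is therefore to build two efficiently preparable normalized states whose trace distance equals exactly this quantity.

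Let $\tau_x\coloneqq\Tr_{\sfW}(\ketbra{\psi_x}{\psi_x})$. Since $V_2$ does not act on $\sfM$, we have $\tau_x-X_{x,s}=\Tr_{\sfM'\sfW}(\ketbra{0}{0}_{\sfO}V_2(\cdots)V_2^\dagger)\succeq 0$, which is the subnormalized rejection state for the fixed response $s$. Using a three-valued flag register $\sfF$ (two qubits), we define
\[
\rho_0\coloneqq X_{x,0}\otimes\ketbra{0}{0}_{\sfF}+(\tau_x-X_{x,0})\otimes\ketbra{1}{1}_{\sfF},\qquad
\rho_1\coloneqq X_{x,1}\otimes\ketbra{2}{2}_{\sfF}+(\tau_x-X_{x,1})\otimes\ketbra{1}{1}_{\sfF}.
\]
The circuit $Q_s$ runs $V_1(x)$, writes $s$ into $\sfM'$, and applies $V_2(x)$. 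It then coherently copies $\sfO$ into the flag, mapping acceptance to flag $0$ for $s=0$ or flag $2$ for $s=1$, and rejection to flag $1$. The output register is $(\sfM,\sfF)$, and everything else is traced out. Both circuits have the same number of qubits and output qubits; this is done by padding with $\ket{0}$ ancillas.

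Because the blocks are orthogonal,
\[
\norm{\rho_0-\rho_1}_1=\Tr X_{x,0}+\Tr X_{x,1}+\norm{X_{x,1}-X_{x,0}}_1.
\]
The flag-$1$ block contributes $\norm{X_{x,1}-X_{x,0}}_1$ because the $\tau_x$ terms cancel. Hence $\TD(\rho_0,\rho_1)=\omega(V(x))$ exactly, so yes instances map to $\TD\geq a$ and no instances map to $\TD\leq b$. The reduction is clearly polynomial time.

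The main subtlety is the input-length bookkeeping. The $\QSD$ instance $(Q_0,Q_1)$ is essentially the circuit for $V$ with hard-coded $x$ plus a constant-size gadget. I would encode instances so that the output description length is $n'=n+c$ for a fixed constant $c$, padding if necessary. The thresholds are then $a'(n')=a(n'-c)=a(n)$ and $b'(n')=b(n)$, which are efficiently computable.

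The step I expect to be most delicate is verifying the closed-form identity and the positivity of $\tau_x-X_{x,s}$; both reduce to the fact that $V_2$ acts only on $(\sfM',\sfW)$, which is a short computation.
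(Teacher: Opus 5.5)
Your proposal is correct and follows the paper's proof essentially step for step: the same flagged states $\rho_0,\rho_1$ built from $X_{x,0}$, $X_{x,1}$ and $\tau_x$, the same circuits, the same block-orthogonality computation giving $\TD(\rho_0,\rho_1)=\omega(V(x))$, and the same constant-shift bookkeeping for $a'$, $b'$. The differences are cosmetic. You derive the binary generalized Holevo--Helstrom identity directly via the maximizer $\mathbf{1}_{X_{x,0}-X_{x,1}\succ 0}$ rather than citing it, and your justification that $\tau_x-X_{x,s}\succeq 0$ is the rejection branch for the fixed response $s$ is cleaner than the paper's.
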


\begin{proof}
Consider a promise problem $\calI = (\calI_{\yes}, \calI_{\no}) \in \QIPbit{[a, b]}$. Let $V(x) = (V_1(x), V_2(x))$ be the quantum verifier for the $\QIPbit{[a, b]}$ protocol. 
By \Cref{lemma:optimal-acceptance-classical-response}, we obtain
\begin{align}\label{eqn:optimal-acceptance-prob-binary}
    \max_{P^*}\prob{(\protocol{P^*}{V})(x)\text{ accepts}} =\max_{\textnormal{POVM }\{M_r\}_{r\in\binset}}\sum_{r\in\binset}\Tr\rbra*{M_rX_{x,r}}\enspace,
\end{align}
where $X_{x,r}\coloneq\Tr_{\sfM'\sfW}\rbra*{\ketbra{1}{1}_{\sfO}V_2(x)_{\sfM'\sfW}\rbra*{\ketbra{r}{r}_{\sfM'}\otimes \ketbra{\psi_x}{\psi_x}_{\sfM\sfW}}V_2(x)^\dagger_{\sfM'\sfW}}$ is a subnormalized state on $\sfM$ as defined in \Cref{eqn:subnormalized-state-X_xs} for $r \in \binset$ and the state $\ket{\psi_x} \coloneq V_1(x)\ket{\bar{0}}_{\sfM\sfW}$ is the joint state of $V$'s private register and what $V$ sends.

We need the following closed-form expression for \Cref{eqn:optimal-acceptance-prob-binary}, which is a direct consequence of the Holevo--Helstrom formula for \emph{binary} generalized state discrimination~\cite[Section~II.C]{AM14}:\footnote{See the equality $P^*_s(A_1,A_2) = \frac{1}{2} \Tr\rbra*{A_1+A_2} + \frac{1}{2} \norm{A_1-A_2}_1$ in~\cite[Page~7]{AM14}.\label{footnote:binary-generalized-state-discrimination}}
\begin{align}\label{eqn:optimal-acceptance-prob-binary-closed-form}
    \max_{P^*}\prob{(\protocol{P^*}{V})(x)\text{ accepts}} =\frac{1}{2}\Tr\rbra*{X_{x, 0} + X_{x, 1}} + \frac{1}{2}\norm{X_{x, 0} - X_{x, 1}}_1\enspace,
\end{align}

Next we show \Cref{eqn:optimal-acceptance-prob-binary-closed-form} can also be written as the trace distance between two states that can be prepared by quantum polynomial-size circuits. Let's fix a instance $x$ of size $n$, and then define state-preparation circuits $Q_0$ and $Q_1$:
\begin{description}
    \item[Circuit $Q_0$.] The circuit $Q_0$ runs $V_1(x)$ to obtain $\ket{\psi_x}_{\sfM\sfW}$, prepares $\ket{0}$ on the register $\sfM'$, and then runs $V_2(x)$ with the prover's message in $\sfM'$. If $V_2$ accepts, the circuit $Q_0$ prepares $\ket{0}$ on a fresh register $\sfF$, and otherwise prepares $\ket{1}$ on $\sfF$. The circuit $Q_0$ outputs the state on registers $\sfF$, and $\sfM$.
    \item[Circuit $Q_1$.] The circuit $Q_1$ runs $V_1(x)$ to obtain $\ket{\psi_x}_{\sfM\sfW}$, prepares $\ket{1}$ on the register $\sfM'$, and then runs $V_2(x)$ with the prover's message in $\sfM'$. If $V_2$ accepts, the circuit $Q_1$ prepares $\ket{2}$ on a fresh register $\sfF$, and otherwise prepares $\ket{1}$ on $\sfF$. The circuit outputs the state on registers $\sfF$, and $\sfM$.
\end{description}

By construction, the total description length of the pair $(Q_0, Q_1)$ is $n + O(1)$ because we consider uniform verifiers of constant description length, and both $Q_0$ and $Q_1$ are polynomial-size quantum circuits because $V$ runs in polynomial time.

Notice that $X_{x, r} \preceq \Tr_{\sfM'\sfW}\rbra*{\ketbra{1}{1}_{\sfO}V_2(x)_{\sfM'\sfW}\rbra*{I_{\sfM'}\otimes \ketbra{\psi_x}{\psi_x}_{\sfM\sfW}}V_2(x)^\dagger_{\sfM'\sfW}} = \Tr_{\sfW}(\ketbra{\psi_x}{\psi_x})$. Thus the following $\rho_0$ and $\rho_1$ are positive semidefinite:
\begin{align*}
    \rho_0 &\coloneq \ketbra{0}{0}_{\sfF} \otimes X_{x, 0} + \ketbra{1}{1}_{\sfF} \otimes \left(\Tr_{\sfW}(\ketbra{\psi_x}{\psi_x}) - X_{x, 0}\right)\enspace,\\
    \rho_1 &\coloneq \ketbra{2}{2}_{\sfF} \otimes X_{x, 1} + \ketbra{1}{1}_{\sfF} \otimes \left(\Tr_{\sfW}(\ketbra{\psi_x}{\psi_x}) - X_{x, 1}\right)\enspace.
\end{align*}
Moreover, by direct calculation, the circuit $Q_1$ outputs
\begin{align*}
    \ketbra{0}{0}_{\sfF} \otimes X_{x, 0} + \ketbra{1}{1}_{\sfF} \otimes \left(\Tr_{\sfM'\sfW}(V_2(x)_{\sfM'\sfW}\ketbra{\psi_x}{\psi_x} \otimes \ketbra{0}{0}_{\sfM'}V_2(x)^\dagger_{\sfM'\sfW}) - X_{x, 0}\right) = \rho_0
\end{align*}
and similarly, the circuit $Q_1$ outputs the state $\rho_1$.

The trace distance between $\rho_0$ and $\rho_1$ exactly captures the maximal winning probability of a cheating prover:
\begin{align*}
    \TD(\rho_0, \rho_1) &= \frac{1}{2}\|\rho_0 - \rho_1\|_1\\
    &= \frac{1}{2}\left(\|X_{x, 0}\|_1 + \|X_{x, 1}\|_1 + \|X_{x, 0} - X_{x, 1}\|_1\right)\\
    &= \max_{P^*} \prob{(\protocol{P^*}{V})(x)\text{ accepts}}\enspace. \tag{By \Cref{eqn:optimal-acceptance-prob-binary-closed-form}}
\end{align*}

Note that the pair of quantum circuits that generates $\rho_0$ and $\rho_1$ has description length $\abs{x} + c$ for some constant $c$. As a result, a \emph{yes} instance $x \in \calI_{\yes}$ is mapped to a \emph{yes} instance $(Q_0, Q_1)$ for $\QSD{[a', b']}$, and a \emph{no} instance $x \in \calI_{\no}$ is mapped to a \emph{no} instance $(Q_0, Q_1)$ for $\QSD{[a', b']}$, which concludes the proof.
\end{proof}

\subsection{A complete problem for \texorpdfstring{$\QIP_{\log\text{-}\bit}$ via distinguishing multiple states}{}}
\label{sec:MultiQSD-is-QIPellbit-complete}

To establish \Cref{thm:MultiQSD-is-QIP-ell-bit-Complete}, we start with the $\QIP_{\ell\text{-}\bit}(2)$ containment in \Cref{thm:MultiQSD-is-QIP-ell-bit-Complete}\ref{thmitem:MultiQSD-containment} for $\ell(n)=O(\log{n})$, which follows from a straightforward generalization of the proof system for Graph Non-isomorphism in~\cite{GMW91} (and thus for \SD{}~\cite{SV97} and \QSD{}~\cite{Watrous02}):
\begin{lemma}[$\MultiQSD_{2^\ell}$ is in $\QIP_{\ell\text{-}\bit}$]\label{lemma:MultiQSD-is-in-QIP-ell-bit}
Let $a(n)$ and $b(n)$ be efficiently computable functions such that $0 \leq b(n) < a(n) \leq 1$ for every integer $n$. Let $\ell(n)$ be an efficiently computable function such that $\ell(n) = O(\log n)$.
Then $\MultiQSD{\sbra*{\ell, a, b}} \in \QIP_{\ell\text{-}\bit}{\sbra*{2, a, b}}$.
\end{lemma}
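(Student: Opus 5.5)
The plan is to give the natural ``guess the hidden label'' protocol, generalizing the Graph Non-isomorphism protocol of~\cite{GMW91}. Given the instance $Q$ of $\MultiQSD[\ell,a,b]$, the verifier samples $s\in\binset^\ell$ uniformly at random. It does so by applying $H^{\otimes\ell}$ to a fresh $\ell$-qubit register and copying it with \CNOT{} gates into a private register $\sfW_0$. It then runs $Q$ on $\ket{s}\ket{0}^{\otimes m}$, keeps the non-output qubits in its private register, and sends the $k$ output qubits to the prover as $\sfM$. The prover replies with an $\ell$-bit classical string $s'$ in $\sfM'$. The verifier's second action $V_2$ compares $\sfM'$ with the stored copy of $s$ and accepts iff $s'=s$. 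Both actions are polynomial-size because $\ell=O(\log n)$ and $Q$ is polynomial-size, so this is a $\QIP_{\ell\text{-}\bit}(2)$ verifier.

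For the analysis I will apply \Cref{lemma:optimal-acceptance-classical-response} and compute $X_{x,s'}$. The reduced state of $\sfM$ together with the label copy is the classical-quantum state $2^{-\ell}\sum_{s}\ketbra{s}{s}\otimes\rho_s$. Projecting onto acceptance with response $s'$ keeps only the $s=s'$ branch, so $X_{x,s'}=2^{-\ell}\rho_{s'}$. The main point to check is that the purifying registers of $Q$ traced out with $\sfW$ do not interfere. They do not, because the label copy in $\sfW_0$ is in the computational basis and decoheres the branches.

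Substituting into \Cref{eqn:winning-prob-MultiQSD-hardness-2} gives that the maximum acceptance probability equals $\max_{\{M_s\}}2^{-\ell}\sum_s\Tr(M_s\rho_s)=\MSucc{Q}$. Hence yes instances are accepted with probability at least $a$ by the prover performing the optimal discrimination POVM, and every prover succeeds with probability at most $b$ on no instances. No real obstacle is expected; the only care is the bookkeeping of registers in computing $X_{x,s'}$.
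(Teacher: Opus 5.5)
Your proposal is correct and follows the paper's proof: it uses the same guess-the-label protocol, and your computation $X_{x,s'}=2^{-\ell}\rho_{s'}$ via \Cref{lemma:optimal-acceptance-classical-response} is a formal version of the paper's observation that any prover induces a POVM $\{M^*_s\}$ with acceptance probability $2^{-\ell}\sum_s\Tr(M^*_s\rho_s)$. One small correction: the joint reduced state on $(\sfW_0,\sfM)$ need not be literally classical-quantum (for example, if $Q$ simply outputs its label register, it is maximally entangled), but this does not affect your conclusion, because $V_2$'s acceptance test is diagonal in the computational basis of $\sfW_0$, so projecting onto $\ket{s'}_{\sfW_0}$ isolates the $s=s'$ branch exactly as you claim.
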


\begin{proof}
We start by presenting a protocol for \MultiQSD{}, as presented in \Cref{protocol:MultiQSD-standard-protocol}.

\begingroup
\LinesNotNumbered
\begin{algorithm}[!ht]
    \SetAlgorithmName{Protocol}{protocol}{List of Protocols}
    \caption{Standard protocol for $\MultiQSD{[a, b]}$.}
	\label{protocol:MultiQSD-standard-protocol}
    \SetEndCharOfAlgoLine{.}
    \SetKwFor{While}{}{:}{}
    \SetKwFor{For}{For}{:}{}
    \SetKwIF{If}{ElseIf}{Else}{If}{:}{elif}{Else:}{}%
    \SetKwComment{Comment}{// }{}
    \SetKwInOut{Input}{Input}
    \SetKwInOut{Output}{Output}

    \Input{Quantum circuit $Q$ of description size $n$.}
    \Output{ACCEPT or REJECT.}
    \medskip

    \textbf{1.} $V$ samples $s \gets \{0,1\}^{\ell(n)}$, runs $Q$ on $\ket{s}_{\sfI}\ket{0}^{\otimes m}_{\sfA}$ to get an output state $\rho_s$, and sends $\rho_s$ to $P$, where \[\rho_s \coloneq \Tr_{\setcomplement{\sfO}} (Q \ketbra{s}{s}_{\sfI} \ketbra{\bar{0}}{\bar{0}}_{\sfA} Q^\dagger)\enspace,\]
    and $\sfO$ is the output register. 
    \medskip
    
    \textbf{2.} $V$ receives a string $s' \in \binset^{\ell(n)}$ from $P$, where $s'$ is supposed to be the outcome obtained by measuring $V$'s message using the optimal POVM $\{M_s\}_{s \in \binset^\ell}$ for distinguishing $\{\rho_s\}_{s \in \binset^\ell}$.
    \medskip
    
    \textbf{3.} $V$ accepts if and only if $s = s'$.
\end{algorithm}
\endgroup

By construction, the honest prover in the protocol of \Cref{protocol:MultiQSD-standard-protocol} sends only an $\ell(n)$-bit string, and the verifier is uniform and runs in quantum polynomial time since $\ell$ is already included in the description of $Q$. It remains to show that \Cref{protocol:MultiQSD-standard-protocol} has the desired completeness and soundness.

\paragraph{Completeness.}
For a \emph{yes} instance $Q$ and the honest prover $P$,
\begin{align*}
&\prob{(\protocol{P}{V})(Q)\text{ accepts}}
= \frac{1}{2^\ell}\sum_{s \in \binset^\ell}\Tr(M_s \rho_s)\enspace.
\end{align*}

By the optimality of the POVM $\{M_s\}_{s \in \binset^\ell}$, $\frac{1}{2^\ell}\sum_{s \in \binset^\ell}\Tr(M_s \rho_s) = \MSucc{Q}$, which is at least $a(n)$ for the \emph{yes} instance $Q$.
That is to say, $\prob{(\protocol{P}{V})(Q)\text{ accepts}} \geq a(n)$.

\paragraph{Soundness.}
Because the prover sends only one $\ell$-bit classical message, any prover strategy induces a POVM $\{M_{s}^*\}_{s \in \binset^\ell}$ on its received state.
Then for a cheating prover $P^*$ with strategy $\{M_{s}^*\}_{s \in \binset^\ell}$, $V$ is convinced with probability
\begin{align*}
&\prob{(\protocol{P^*}{V})(Q)\text{ accepts}}
= \frac{1}{2^\ell}\sum_{s \in \binset^\ell}\Tr(M^*_s \rho_s)\enspace.
\end{align*}

For a \emph{no} instance with $\MSucc{Q} \leq b(n)$, by the definition of $\MSucc{Q}$, for every POVM $\{M_{s}^*\}_{s \in \binset^\ell}$, we have that $\prob{(\protocol{P^*}{V})(Q)\text{ accepts}} \leq b(n)$.
\end{proof}

The challenging direction is also the $\QIP_{\ell\text{-}\bit}(2)$-hardness result in \Cref{thm:MultiQSD-is-QIP-ell-bit-Complete}\ref{thmitem:MultiQSD-hardness}, where $\ell(n)=O(\log{n})$, for which we reduce the promise problem $\PromiseProblemI \in \QIP_{\ell\text{-}\bit}{[2, a, b]}$ to $\MultiQSD$ with inverse-polynomial gap:\footnote{$a$, $b$, and $\ell$ are shifted for the same reason explained in \Cref{footnote:shifted-number}.}

\begin{theorem}[$\MultiQSD_{2^\ell}$ is $\QIP_{\ell\text{-}\bit}$-hard]\label{thm:MultiQSD-is-QIP-ell-bit-hard}
Let $a, b \colon \Naturals \to [0, 1]$ be efficiently computable functions such that $0 \leq b(n) < a(n) \leq 1$ for every $n \in \Naturals$.
Let $\ell(n)$ be an efficiently computable function such that $\ell(n) = O(\log n)$.
Then for every promise problem $\PromiseProblemI \in \QIP_{\ell\text{-}\bit}{[2, a, b]}$, there exists a constant $c$, such that $\PromiseProblemI$ can be polynomial-time many-one reduced to 
\[ \MultiQSD{\sbra*{\ell', \frac{1}{2^{\ell'}} + a' \cdot \frac{2^{\ell'} - 1}{2^{2\ell'}}, \frac{1}{2^{\ell'}} + b' \cdot \frac{2^{\ell'} - 1}{2^{2\ell'}}}}\enspace, \] 
where $a', b'$ and $\ell'$ are efficiently computable functions such that $a'(n) = a(n - c)$, $b'(n) = b(n - c)$, and $\ell'(n) = \ell(n - c)$.
\end{theorem}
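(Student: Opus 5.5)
The plan is to follow the padding construction sketched in the introduction. Fix an instance $x$ of $\PromiseProblemI\in\QIP_{\ell\text{-}\bit}[2,a,b]$ with verifier $V(x)=(V_1(x),V_2(x))$. By \Cref{lemma:optimal-acceptance-classical-response}, $\omega(V(x))=\max_{\{M_r\}}\sum_r\Tr(M_rX_{x,r})$, where the $X_{x,r}$ are as in \Cref{eqn:subnormalized-state-X_xs}. Set $p_{x,r}\coloneq\Tr(X_{x,r})$, $L\coloneq 2^\ell$, and define for each $r\in\binset^\ell$
\[
\rho_r\coloneq \frac{L-1}{L}\,\ketbra{0}{0}_{\sfF}\otimes X_{x,r}\otimes\ketbra{\bar 0}{\bar 0}+\ketbra{1}{1}_{\sfF}\otimes\ketbra{\bar 0}{\bar 0}_{\sfM}\otimes\mu_{x,r},
\qquad
\mu_{x,r}\coloneq\frac{1}{L}\ketbra{r}{r}+\sum_{r'\neq r}\frac{1-p_{x,r}}{L}\ketbra{r'}{r'}.
\]
Here the label $r'$ lives on an $\ell$-qubit register, and the registers are padded with $\ket{\bar 0}$ so that both blocks have the same shape. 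A direct check gives $\Tr\rho_r=\frac{L-1}{L}p_{x,r}+\frac1L+\frac{(L-1)(1-p_{x,r})}{L}=1$.

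\textbf{Circuit $Q$.} On input $\ket{r}$, $Q$ samples $r'\in\binset^\ell$ uniformly by applying Hadamards to fresh qubits. If $r'=r$, it outputs flag $1$ with label $r$. Otherwise, it runs $V_1(x)$, writes $r$ into $\sfM'$ (copied by CNOT from the input register), and runs $V_2(x)$. On acceptance it outputs flag $0$ together with $\sfM$; on rejection it outputs flag $1$ together with label $r'$. Each branch's output is made coherent-classical, with all other registers zeroed or traced out. The branch $r'\neq r$ has probability $(L-1)/L$. Acceptance there yields $\frac{L-1}{L}X_{x,r}$ on $\sfM$, and rejection yields weight $(1-p_{x,r})/L$ on each $r'\neq r$, which matches $\rho_r$. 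The circuit has polynomial size because $\ell=O(\log n)$, and its description length is $|x|+c$ for a constant $c$, which accounts for the shifted parameters $a',b',\ell'$.

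\textbf{Optimal discrimination.} The key step is to show
\[
\max_{\{N_r\}}\sum_r\Tr(N_r\rho_r)=\frac{L-1}{L}\,\omega(V(x))+1 .
\]
For the upper bound, let $P_0,P_1$ be the projectors onto the two flag blocks. Since every $\rho_r$ is block diagonal, replacing $N_r$ by $P_0N_rP_0+P_1N_rP_1$ leaves the objective unchanged and is still a POVM. The flag-$0$ block then contributes at most $\frac{L-1}{L}\omega(V(x))$ by \Cref{lemma:optimal-acceptance-classical-response}. In the flag-$1$ block all $\mu_{x,r}$ are diagonal in the label basis, so the contribution is at most $\sum_{r'}\max_r\langle r'|\mu_{x,r}|r'\rangle=\sum_{r'}\frac1L=1$. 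This uses that the maximum over $r$ of the $r'$-th diagonal entry is exactly $1/L$, attained at $r=r'$, because $(1-p_{x,r})/L\le 1/L$. For the lower bound, use the optimal $M_r$ on the flag-$0$ block and the projector onto label $r$ on the flag-$1$ block; this achieves equality.

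Dividing by $L$ gives $\MSucc{Q}=\frac1L+\omega(V(x))\cdot\frac{L-1}{L^2}$. Yes instances ($\omega\ge a$) therefore map to $\MSucc{Q}\ge \frac1{2^\ell}+a\frac{2^\ell-1}{2^{2\ell}}$, and no instances map to $\MSucc{Q}\le\frac1{2^\ell}+b\frac{2^\ell-1}{2^{2\ell}}$.

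The main obstacle is the rigorous upper bound in the flag-$1$ block: one must justify that the POVM can be taken block diagonal both with respect to the flag and with respect to the label basis. Pinching achieves this, since the objective is linear and the states commute with the pinching. A secondary concern is implementing the uniform sampling and the conditional branching exactly, with dyadic probabilities, which \Cref{lemma:dyadic-linear-combi-states} handles.
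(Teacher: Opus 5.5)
Your proposal is correct and takes essentially the same approach as the paper. It uses the same padded states $\rho_r$ and an equivalent sampling circuit: the paper selects the padding branch by $\sfF=0^\ell$ and obtains a uniform label $r'\neq r$ by XORing $s$ into $\sfF$, which matches your $r'=r$ test, and your diagonal-entry bound $\max_r\langle r'|\mu_{x,r}|r'\rangle=1/2^\ell$ is the paper's $\mu_{x,s}\preceq I/2^\ell$. Two details to tidy: uniform sampling needs only Hadamards (no appeal to \Cref{lemma:dyadic-linear-combi-states}), and in the lower bound you must complete your measurement to a full POVM by assigning the remaining subspace to some outcome, as the paper does.
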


\begin{proof}
Consider a promise problem $\calI = (\calI_{\yes}, \calI_{\no}) \in \QIP_{\ell\text{-}\bit}{[2, a, b]}$. Let $V(x) = (V_1(x), V_2(x))$ be the verifier actions in the $\QIP_{\ell\text{-}\bit}{[2, a, b]}$ protocol.

By \Cref{lemma:optimal-acceptance-classical-response}, we obtain
\begin{align}\label{eqn:optimal-acceptance-prob-multiple}
    \max_{P^*}\prob{(\protocol{P^*}{V})(x)\text{ accepts}} =\max_{\textnormal{POVM }\{M_s\}_{s\in\binset^\ell}}\sum_{s\in\binset^\ell}\Tr\rbra*{M_sX_{x,s}}\enspace,
\end{align}
where $X_{x,s}\coloneq\Tr_{\sfM'\sfW}\rbra*{\ketbra{1}{1}_{\sfO}V_2(x)_{\sfM'\sfW}\rbra*{\ketbra{s}{s}_{\sfM'}\otimes \ketbra{\psi_x}{\psi_x}_{\sfM\sfW}}V_2(x)^\dagger_{\sfM'\sfW}}$ is a subnormalized state on $\sfM$ as defined in \Cref{eqn:subnormalized-state-X_xs} for $s \in \binset^\ell$ and the state $\ket{\psi_x} \coloneq V_1(x)\ket{\bar{0}}_{\sfM\sfW}$ is the joint state of $V$'s private register and what $V$ sends. We set $p_{x, s} \coloneq \Tr(X_{x, s})$.

\paragraph{Constructing the state-preparation circuit $Q$.}
Next we show the right-hand side of \Cref{eqn:optimal-acceptance-prob-multiple} is related to $\MSucc{Q}$ for a circuit $Q$.
We first define a family of states $\{\rho_s\}_{s \in \binset^\ell}$ before showing they can be generated by $Q$.
For each $s \in \binset^\ell$, define
\begin{align*}
    \rho_s &\coloneq \frac{2^\ell - 1}{2^\ell} (X_{x, s})_{\sfA} \otimes \ketbra{0}{0}_{\sfB} \otimes \ketbra{\bar{0}}{\bar{0}}_{\sfC}\\ 
    &\qquad + \ketbra{\bar{0}}{\bar{0}}_{\sfA} \otimes \ketbra{1}{1}_{\sfB} \otimes \rbra*{\frac{1}{2^\ell} (1 - p_{x, s}) \sum_{s' \in \binset^\ell \text{ s.t. } s'\ne s}\ketbra{s'}{s'}_{\sfC} + \frac{1}{2^\ell}\ketbra{s}{s}_{\sfC}}\enspace.
\end{align*}
Recall that $p_{x, s} = \Tr(X_{x, s}) \in [0, 1]$ and $0 \preceq X_{x, s}$. We obtain that \[\Tr(\rho_s) = \frac{2^\ell - 1}{2^\ell} p_{x, s} + \frac{2^\ell - 1}{2^\ell} (1 - p_{x, s}) + \frac{1}{2^\ell} = 1\] and $\rho_s \succeq 0$ and thus $\rho_s$ is indeed a normalized quantum state.

\begingroup
\LinesNumbered
\begin{algorithm}[!ht]
    \SetAlgorithmName{Circuit}{Algorithm}{List of Algorithms}
    \caption{Circuit $Q$ for preparing $\{\rho_s\}_{s \in \binset^\ell}$}
	\label{circuit:hardness-of-MultiQSD}
    \SetEndCharOfAlgoLine{.}
    \setlength{\parskip}{5pt}
    \SetKwFor{While}{}{:}{}
    \SetKwFor{For}{For}{:}{}
    \SetKwIF{If}{ElseIf}{Else}{If}{:}{elif}{Else:}{}%
    \SetKwComment{Comment}{// }{}
    \SetKwInOut{Input}{Input}
    \SetKwInOut{Output}{Output}

    \Input{A string $s \in \binset^\ell$.}
    \Output{The quantum state $\rho_s$.}

    Prepare $\frac{1}{\sqrt{2^\ell}}\sum_{i \in \binset^\ell}\ket{i}$ on register $\sfF$.

    \medskip
    {\color{gray}\Comment{Generating the branch $\ketbra{\bar{0}}{\bar{0}}_{\sfA} \otimes \ketbra{1}{1}_{\sfB} \otimes \frac{1}{2^\ell}\ketbra{s}{s}_{\sfC}$}}

    Controlled on $\sfF = 0^\ell$, prepare $\ket{\bar{0}, 1, s}$ on the registers $(\sfA, \sfB, \sfC)$, and output the state on the registers $(\sfA, \sfB, \sfC)$.

    \medskip
    {\color{gray}\Comment{Deciding whether to generate the branch $\frac{2^\ell - 1}{2^\ell} (X_{x, s})_{\sfA} \otimes \ketbra{0}{0}_{\sfB} \otimes \ketbra{\bar{0}}{\bar{0}}_{\sfC}$ or the branch $\frac{1}{2^\ell} (1 - p_{x, s}) \ketbra{\bar{0}}{\bar{0}}_{\sfA} \otimes \ketbra{1}{1}_{\sfB} \otimes \sum_{s' \ne s}\ketbra{s'}{s'}_{\sfC}$}}

    Controlled on $\sfF \ne 0^\ell$, run $V_1(x)$ on the ancillary qubits to produce $\ket{\psi_x}$, where the verifier's message is stored in register $\sfM$, and then run $V_2(x)$ where the prover's message is set to be $s$. Let the output register of $V_2(x)$ be $\sfO$.

    \medskip
    {\color{gray}\Comment{Generating the branch $\frac{1}{2^\ell} (1 - p_{x, s}) \ketbra{\bar{0}}{\bar{0}}_{\sfA} \otimes \ketbra{1}{1}_{\sfB} \otimes \sum_{s' \ne s}\ketbra{s'}{s'}_{\sfC}$}}

    Controlled on $\sfO = 0$, prepare $\ket{\bar{0}, 1}$ on the registers $(\sfA, \sfB)$, prepare the state $\frac{1}{2^\ell - 1}\sum_{s' \in \binset^\ell \text{ s.t. } s' \ne s}\ketbra{s'}{s'}$ on the register $\sfC$, and output the state on the registers $(\sfA, \sfB, \sfC)$.\footnotemark

    \medskip
    {\color{gray}\Comment{Generating the branch $\frac{2^\ell - 1}{2^\ell} (X_{x, s})_{\sfA} \otimes \ketbra{0}{0}_{\sfB} \otimes \ketbra{\bar{0}}{\bar{0}}_{\sfC}$}}

    Controlled on $\sfO = 1$, apply a swap unitary on the registers $\sfA$ and $\sfM$, prepare $\ket{0, \bar{0}}$ on the registers $(\sfB, \sfC)$, and output the state on the registers $(\sfA, \sfB, \sfC)$.
\end{algorithm}
\footnotetext{We can prepare the state $\frac{1}{2^\ell - 1}\sum_{s' \in \binset^\ell \text{ s.t. } s' \ne s}\ketbra{s'}{s'}$ on the register $\sfC$ by bitwise adding $s$ to the register $\sfF$ and then swapping the register $\sfF$ and the register $\sfC$.}
\endgroup

By direct calculation, the circuit $Q$ defined in \circuitref{circuit:hardness-of-MultiQSD} outputs $\rho_s$ given input $s$, and it has polynomially many gates because $V$ runs in polynomial time.

\paragraph{Correctness of the construction.}
It remains to show that
\begin{align}\label{eqn:relation-between-N_s-M_s}
    \max_{\{N_s\}_{s \in \binset^\ell}} \sum_{s \in \binset^\ell}\Tr\rbra*{N_s \rho_s} = \frac{2^\ell - 1}{2^\ell}\max_{\{M_s\}_{s \in \binset^\ell}} \sum_{s \in \binset^\ell}\Tr\rbra*{M_s X_{x, s}} + 1 \enspace,
\end{align}
where the first maximum is over all POVMs $\{N_s\}_{s \in \binset^\ell}$ on registers $(\sfA, \sfB, \sfC)$.

On one hand, the left-hand side of \Cref{eqn:relation-between-N_s-M_s} is upper bounded by maximizing the distinguishing probability for two orthogonal blocks of $\rho_s$ separately, 
\begin{align*}
&\max_{\{N_s\}_{s \in \binset^\ell}} \sum_{s \in \binset^\ell}\Tr\rbra*{N_s \rho_s}\\
=&\max_{\{N_s\}_{s \in \binset^\ell}} \sum_{s \in \binset^\ell}\Tr\rbra*{N_s \rbra*{\frac{2^\ell - 1}{2^\ell}X_{x, s} \otimes \ketbra{0}{0} \otimes \ketbra{\bar{0}}{\bar{0}} + \ketbra{\bar{0}}{\bar{0}} \otimes \ketbra{1}{1} \otimes \mu_{x, s}}}\\
\leq &\max_{\{N_s\}_{s \in \binset^\ell}} \sum_{s \in \binset^\ell}\Tr\rbra*{N_s \rbra*{\frac{2^\ell - 1}{2^\ell}X_{x, s} \otimes \ketbra{0}{0} \otimes \ketbra{\bar{0}}{\bar{0}} + \frac{1}{2^\ell}\ketbra{\bar{0}}{\bar{0}} \otimes \ketbra{1}{1} \otimes I}}\\
= & \frac{2^\ell - 1}{2^\ell}\max_{\{N_s\}_{s \in \binset^\ell}} \sum_{s \in \binset^\ell}\Tr\rbra*{\bra{0, \bar{0}}_{\sfB\sfC}N_s \ket{0, \bar{0}}_{\sfB\sfC} X_{x, s}} + \frac{1}{2^\ell} \cdot 2^{\ell}\\
= & \frac{2^\ell - 1}{2^\ell}\max_{\{M_s\}_{s \in \binset^\ell}} \sum_{s \in \binset^\ell}\Tr\rbra*{M_s X_{x, s}} + 1\enspace. 
\end{align*}
Here, we define $\ell$-qubit subnormalized state $\mu_{x, s} \coloneq \frac{1}{2^\ell} (1 - p_{x, s}) \sum_{s' \in \binset^\ell \text{ s.t. } s'\ne s}\ketbra{s'}{s'} + \frac{1}{2^\ell}\ketbra{s}{s}$ in the second line, the third line uses $0 \leq p_{x,s} \leq 1$ and thus $\mu_{x, s} \preceq I/2^\ell$, and the last line follows from the fact that $\{\bra{0, \bar{0}}_{\sfB\sfC}N_s \ket{0, \bar{0}}_{\sfB\sfC}\}_{s \in \binset^\ell}$ ranges over all POVMs on $\sfA$ when $\{N_s\}_{s \in \binset^\ell}$ ranges over all POVMs on $(\sfA,\sfB,\sfC)$. 

On the other hand, let $\{M_s^*\}_{s \in \binset^\ell}$ be the POVM over register $\sfA$ that maximizes the right-hand side of \Cref{eqn:relation-between-N_s-M_s}. We show there exists $\{N_s^*\}_{s \in \binset^\ell}$ such that 
\begin{align*}
   \sum_{s \in \binset^\ell}\Tr\rbra*{N^*_s \rho_s} = \frac{2^\ell - 1}{2^\ell}\sum_{s \in \binset^\ell}\Tr\rbra*{M^*_s X_{x, s}} + 1 \enspace.
\end{align*}

Consider the following $\{N^*_s\}_{s \in \binset^\ell}$, which forms a POVM over $(\sfA, \sfB, \sfC)$:
\begin{itemize}
    \item For $s \ne 0^\ell$, $N^*_s \coloneq M_s^* \otimes \ketbra{0}{0}_{\sfB} \otimes \ketbra{\bar{0}}{\bar{0}}_{\sfC} + I_{\sfA}\otimes\ketbra{1}{1}_{\sfB} \otimes \ketbra{s}{s}_{\sfC}\enspace$;
    \item For $s = 0^\ell$, $N^*_s \coloneq M_s^* \otimes \ketbra{0}{0}_{\sfB} \otimes \ketbra{\bar{0}}{\bar{0}}_{\sfC} + I_{\sfA}\otimes\ketbra{1}{1}_{\sfB} \otimes \ketbra{s}{s}_{\sfC} + I_{\sfA}\otimes\ketbra{0}{0}_{\sfB} \otimes (I - \ketbra{\bar{0}}{\bar{0}})_{\sfC}\enspace$.
\end{itemize}

By direct calculation, for this POVM, 
\begin{align*}
&\sum_{s \in \binset^\ell}\Tr\rbra*{N^*_s \rho_s}\\
=&\sum_{s \in \binset^\ell}\Tr\rbra*{N^*_s \rbra*{\frac{2^\ell - 1}{2^\ell}X_{x, s} \otimes \ketbra{0}{0} \otimes \ketbra{\bar{0}}{\bar{0}} + \ketbra{\bar{0}}{\bar{0}} \otimes \ketbra{1}{1} \otimes \mu_{x, s}}}\\
=&\frac{2^\ell - 1}{2^\ell} \sum_{s \in \binset^\ell}\Tr\rbra*{M^*_s X_{x, s}} + \sum_{s \in \binset^\ell}\Tr\rbra*{N^*_s \ketbra{\bar{0}}{\bar{0}} \otimes \ketbra{1}{1} \otimes \mu_{x, s}}\\
=&\frac{2^\ell - 1}{2^\ell} \sum_{s \in \binset^\ell}\Tr\rbra*{M^*_s X_{x, s}} + 1\enspace,
\end{align*}
where in the last line, we notice that the only term of $N^*_s$ that contributes in $\Tr\rbra*{N^*_s \ketbra{\bar{0}}{\bar{0}} \otimes \ketbra{1}{1} \otimes \mu_{x, s}}$ is the term $I_{\sfA}\otimes\ketbra{1}{1}_{\sfB} \otimes \ketbra{s}{s}_{\sfC}$, and we use the fact that $\Tr\rbra*{\ketbra{s}{s}\mu_{x, s}} = \frac{1}{2^\ell}$.
Therefore, this argument establishes \Cref{eqn:relation-between-N_s-M_s}. Combining \Cref{eqn:relation-between-N_s-M_s,eqn:optimal-acceptance-prob-multiple} yields
\begin{align*}
    \max_{\{N_s\}_{s \in \binset^\ell}} \sum_{s \in \binset^\ell}\Tr\rbra*{N_s \rho_s} = \frac{2^\ell - 1}{2^\ell}\max_{P^*}\prob{(\protocol{P^*}{V})(x)\text{ accepts}} + 1 \enspace,
\end{align*}
which implies
\begin{align*}
    \MSucc{Q} = \frac{2^\ell - 1}{2^{2\ell}}\max_{P^*}\prob{(\protocol{P^*}{V})(x)\text{ accepts}} + \frac{1}{2^\ell}\enspace.
\end{align*}

Notice that the description length of $Q$ is $\abs{x} + c$ for some constant $c$, because we consider uniform verifiers for the $\QIP_{\ell\text{-}\bit}$ protocol and the prover's message length $\ell$ can be easily derived from the description of the verifier and $\abs{x}$. As a result, a \emph{yes} instance $x \in \calI_{\yes}$ is mapped to a \emph{yes} instance $Q$ of size $\abs{x} + c$ for $\MultiQSD{\sbra*{\ell', \frac{1}{2^{\ell'}} + a' \cdot \frac{2^{\ell'} - 1}{2^{2\ell'}}, \frac{1}{2^{\ell'}} + b' \cdot \frac{2^{\ell'} - 1}{2^{2\ell'}}}}$, and a \emph{no} instance $x \in \calI_{\no}$ is mapped to a \emph{no} instance $Q$ of size $\abs{x} + c$ for $\MultiQSD{\sbra*{\ell', \frac{1}{2^{\ell'}} + a' \cdot \frac{2^{\ell'} - 1}{2^{2\ell'}}, \frac{1}{2^{\ell'}} + b' \cdot \frac{2^{\ell'} - 1}{2^{2\ell'}}}}$,  where $a'(n) = a(n - c)$, $b'(n) = b(n - c)$, and $\ell'(n) = \ell(n - c)$.
\end{proof}

\section{Easy regimes for \texorpdfstring{$\QIP_{\ell\text{-}\bit}(2)$}{} that collapse to \texorpdfstring{$\QSZK$}{}}

In this section, we identify two easy regimes for \texorpdfstring{$\QIP_{\ell\text{-}\bit}(2)$}{} that collapse to \texorpdfstring{$\QSZK$}{}. Our first main result follows from combining the quantum polarization in the natural regime (\Cref{thm:QSD-in-QSZK-natural-regime}) with the \QIPbit{}-hardness of \QSD{} (\Cref{thm:QSD-is-QIPbit-Complete}\ref{thmitem:QSD-hardness}):

\begin{theorem}[\QIPbit{} is in \QSZK{}]
\label{thm:QIPbit-in-QSZK}
Let $c(n)$ and $s(n)$ be efficiently computable functions such that $0 \leq s(n) < c(n) \leq 1$. For all sufficiently large $n$, the following statement holds:
\[ \text{If } c(n)-s(n) \geq 1/O(\log{n}), \quad \QIPbit[c,s] \subseteq \QSZK. \]
\end{theorem}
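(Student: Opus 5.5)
The plan is to compose two results: the \QIPbit{}-hardness of \QSD{} from \Cref{thm:QSD-is-QIPbit-Complete}\ref{thmitem:QSD-hardness} (proved as \Cref{thm:QSD-is-QIPbit-hard}), and the quantum polarization in the natural regime, \Cref{thm:QSD-in-QSZK-natural-regime} (the formal version of \Cref{thm:quantum-polarization-informal}). The latter says that $\QSD[a,b]\in\QSZK$ whenever $a(n)-b(n)\geq 1/O(\log n)$ for all sufficiently large $n$. Since \QSZK{} is closed under polynomial-time many-one (Karp) reductions, it suffices to map any $\calI\in\QIPbit[c,s]$ to a \QSD{} instance whose gap is still in the natural regime.

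Fix $\calI\in\QIPbit[c,s]$ with verifier $V$. By \Cref{thm:QSD-is-QIPbit-hard} there is a constant $k$ and a polynomial-time map $x\mapsto(Q_0,Q_1)$ with the following property. The output description length is $n'=|x|+k$, and
\[ \TD(\rho_0,\rho_1)=\max_{P^*}\prob{(\protocol{P^*}{V})(x)\text{ accepts}}. \]
Hence \emph{yes} instances map into $\QSD[a',b']$ \emph{yes} instances and \emph{no} instances into \emph{no} instances, where $a'(n')=c(n'-k)$ and $b'(n')=s(n'-k)$. Both thresholds are efficiently computable, and $0\leq b'<a'\leq 1$.

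The only real check is the gap condition in the shifted length. We have $a'(n')-b'(n')=c(n)-s(n)\geq 1/(C\log n)$ for some constant $C$ and all large $n$. Because $n=n'-k\leq n'$, this gives $a'(n')-b'(n')\geq 1/(C\log n')$ for all sufficiently large $n'$. So $\QSD[a',b']$ satisfies the hypothesis of \Cref{thm:QSD-in-QSZK-natural-regime} and lies in \QSZK{}.

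There is one technicality: \QSD{} instances whose lengths are not of the form $|x|+k$ are irrelevant, since the promise is only used on the image of the reduction. Composing the reduction with the \QSZK{} protocol for $\QSD[a',b']$ places $\calI$ in \QSZK{}.

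None of these steps is the main obstacle. All the difficulty sits inside \Cref{thm:QSD-in-QSZK-natural-regime}, namely the signed-\QJS{} approximation of trace distance with coefficient norm $2^{O(1/\Delta)}$, followed by the reduction to \QED{}. That theorem is assumed here. The only thing needing care in this proof is that the additive constant shift in input length keeps the gap at $1/O(\log n)$, which holds as shown above.
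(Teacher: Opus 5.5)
Your proposal is correct and is exactly the paper's argument: the paper obtains this theorem by composing the reduction of \Cref{thm:QSD-is-QIPbit-hard}, which maps $x$ to a circuit pair whose trace distance equals the optimal acceptance probability, with the natural-regime polarization \Cref{thm:QSD-in-QSZK-natural-regime}. Your explicit check that the constant length shift $n'=|x|+k$ keeps the gap at $1/O(\log n')$ is the one piece of bookkeeping the paper leaves implicit, and you handle it correctly.
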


Consequently, since \QSD{} is \QSZK{}-hard, \QSD{} is in \QIPbit{}, and $\QSZK$ is closed under complement, all as established in~\cite{Watrous02,Wat09}, we obtain:

\begin{corollary}
    $\QIPbit = \mathsf{co}\text{-}\QIPbit$. 
\end{corollary}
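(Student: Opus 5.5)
The corollary $\QIPbit = \mathsf{co}\text{-}\QIPbit$ should follow by routing through \QSD{} and \QSZK{}. Here $\QIPbit$ is the union of $\QIPbit[c,s]$ over $c-s\geq 1/\poly(n)$. The plan is to show $\QIPbit = \QSZK$ as promise classes and then invoke closure of \QSZK{} under complement~\cite{Wat09}. This only needs a mild regime for the promise gap, so the heavy polarization in the natural regime is not required.

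\textbf{Step 1: $\QSZK\subseteq\QIPbit$.} By Watrous's \QSZK{}-completeness of \QSD{} with constant parameters~\cite{Watrous02,Wat09}, every $\calI\in\QSZK$ reduces to $\QSD[a,b]$ for suitable constants $a>b$, e.g.\ $a=2/3$ and $b=1/3$ after polarization. \Cref{lemma:QSD-is-in-QIPbit} gives $\QSD[a,b]\in\QIPbit[(1+a)/2,(1+b)/2]$, which has a constant gap. Composing the many-one reduction with this proof system puts $\calI$ in \QIPbit{}.

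\textbf{Step 2: $\QIPbit\subseteq\QSZK$.} This is the step I expect to be the main obstacle. Take $\calI\in\QIPbit[c,s]$ with $c-s\geq 1/\poly(n)$. \Cref{thm:QSD-is-QIPbit-hard} reduces $\calI$ to $\QSD[c',s']$ with the same gap, but \Cref{thm:QIPbit-in-QSZK} only handles gaps of $1/O(\log n)$. I would first amplify the \QIPbit{} gap while keeping a single-bit answer. One attempt is parallel repetition with the verifier accepting on a threshold, but that lengthens the prover's answer. A better attempt is to work at the \QSD{} level: take $k$-fold tensor powers $\rho_0^{\otimes k}$ versus $\rho_1^{\otimes k}$, or use the polarization warm-up (\Cref{thm:QSD-in-QSZK-improved}), which handles $a^2-b\geq 1/\poly(n)$.

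If neither amplification route works, I would read the corollary as concerning the regime $c-s\geq 1/O(\log n)$, exactly as in \Cref{thm:QIPbit-in-QSZK}. Its statement suggests that the informal equality $\QIPbit=\QSZK$ is meant in precisely that regime. In that case Step 2 is immediate from \Cref{thm:QIPbit-in-QSZK}.

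\textbf{Step 3: complements.} Given $\calI\in\QIPbit$, Step 2 gives $\calI\in\QSZK$. Closure under complement~\cite{Wat09} gives $\overline{\calI}\in\QSZK$, and Step 1 then gives $\overline{\calI}\in\QIPbit$. The reverse inclusion $\mathsf{co}\text{-}\QIPbit\subseteq\QIPbit$ is symmetric. Throughout, the constant input-length shifts from \Cref{footnote:shifted-number} are harmless, since they preserve the gap regimes.
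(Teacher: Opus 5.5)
Your fallback argument is the paper's own derivation: \Cref{thm:QIPbit-in-QSZK} gives $\QIPbit\subseteq\QSZK$ for gaps $c-s\geq 1/O(\log n)$. Since \QSD{} is \QSZK{}-hard and lies in \QIPbit{}, the reverse inclusion follows, and closure of \QSZK{} under complement finishes the proof.

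The amplification routes you float for Step 2 cannot work. Tensor powers and \Cref{thm:QSD-in-QSZK-improved} only separate instances when $a^2-b\geq 1/\poly(n)$. Moreover, \QSD{} is \QIPbit{}-complete with at most a factor-$2$ change in the gap. Hence $\QIPbit[c,s]\subseteq\QSZK$ for all $1/\poly(n)$ gaps is equivalent to the quantum half of the paper's open Question~\ref{probitem:polarization}. So the $1/O(\log n)$ regime is not a fallback; it is the only regime in which the corollary is actually established.

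Your opening remark that the natural-regime polarization is not needed is also wrong. Step 2 is exactly \Cref{thm:QIPbit-in-QSZK}, which rests on \Cref{thm:QSD-in-QSZK-natural-regime}. Without it, even a constant-gap class such as $\QIPbit[0.6,0.5]$ is not covered by Watrous's polarization, because \Cref{thm:QSD-is-QIPbit-hard} maps it to $\QSD[0.6,0.5]$, where $a^2<b$.
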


Our second main result identifies an easy regime for $\QIP_{\ell\text{-}\bit}$ by compressing the prover's response from $\ell$ bits to a single bit whenever the completeness $c$ and soundness $s$ are sufficiently separated; a related answer-compression result for classical interactive proof systems with a laconic prover was proved in~\cite[Theorem 3.4]{GVW02}:

\begin{theorem}[Answer compression for $\QIP_{\ell\text{-}\bit}$]
\label{thm:QIPlbit-answer-compression}
Let $\ell=O(\log{n})$ be a positive integer-valued function. 
Let $c(n)$ and $s(n)$ be efficiently computable functions such that $0 \leq s(n) < c(n) \leq 1$. 
\[ \text{If } c>\frac{\rbra*{ 1+2^{\ell/2} }}{2}\cdot s, \quad 
\QIP_{\ell\text{-}\bit}[2,c,s] \subseteq \QIPbit\sbra[\bigg]{ \frac{1}{2}+\frac{c}{2^{\ell+1}}, \frac{1}{2}+\frac{(1+2^{\ell/2})s}{2^{\ell+2}}} \enspace. \]
\end{theorem}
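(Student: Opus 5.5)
Starting from \Cref{lemma:optimal-acceptance-classical-response}, we write $\omega(V(x))=\max_{\{M_r\}}\sum_{r}\Tr(M_rX_{x,r})$ with subnormalized $X_{x,r}$ on $\sfM$ and $p_{x,r}\coloneq\Tr(X_{x,r})$. The plan is to reduce $x$ to a \QSD{} instance $(Q_0,Q_1)$ and then invoke the \QIPbit{} containment of \Cref{lemma:QSD-is-in-QIPbit}. That containment maps trace distance $t$ to acceptance probability $(1+t)/2$. Hence it suffices to build states with
\[
    \frac{\omega(V(x))}{2^\ell}\leq\TD(\rho_0,\rho_1)\leq\frac{1+2^{\ell/2}}{2^{\ell+1}}\,\omega(V(x)).
\]
With this sandwich, yes instances give $\TD\geq c/2^\ell$ and no instances give $\TD\leq(1+2^{\ell/2})s/2^{\ell+1}$, which yields the stated parameters. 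The hypothesis $c>(1+2^{\ell/2})s/2$ is exactly what makes this a valid gap.

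\textbf{Construction.} Set $p_x\coloneq2^{-\ell}\sum_r p_{x,r}$ and $h_{u,v}(r)=\innerprodF{u}{r}\oplus v$. For $b\in\binset$, define
\[
    \rho_b\coloneq\frac{1}{2^{2\ell}}\ketbra{0}{0}\otimes\sum_{(u,v,r):\,h_{u,v}(r)=b}\ketbra{u,v}{u,v}\otimes X_{x,r}+(1-p_x)\ketbra{1}{1}\otimes\ketbra{\bar0}{\bar0}.
\]
For fixed $(u,r)$ exactly one $v$ satisfies $h_{u,v}(r)=b$, so the first block has trace $2^{-2\ell}\cdot2^\ell\sum_r p_{x,r}=p_x$, and $\rho_b$ is normalized. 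To prepare $\rho_b$, sample $u,r$ uniformly and set $v=\innerprodF{u}{r}\oplus b$. Run $V_1,V_2$ with response $r$. On acceptance, output flag $0$, $(u,v)$, and $\sfM$. On rejection, output flag $1$ and zeros. The rejection weight is $1-p_x$ and is independent of $b$, as required.

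\textbf{Trace distance.} The flag-$1$ blocks cancel. Block-diagonalizing in $(u,v)$ gives
\[
    \TD(\rho_0,\rho_1)=\frac{1}{2^{2\ell+1}}\sum_{u,v}\norm*{Y_{u,v}^{0}-Y_{u,v}^{1}}_1,\qquad Y_{u,v}^b\coloneq\sum_{r:\,h_{u,v}(r)=b}X_{x,r}.
\]
Changing $v$ swaps $Y^0$ and $Y^1$, so the sum equals $2\sum_u\norm{Y_u^0-Y_u^1}_1$, where $Y_u^b=\sum_{r:\innerprodF{u}{r}=b}X_{x,r}$.

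\textbf{Lower bound.} Fix an optimal POVM $\{M_r\}$ and, for each $u$, use $\Pi_u=\sum_{r:\innerprodF{u}{r}=0}M_r$. By the variational form of the trace norm,
\[
    \norm{Y_u^0-Y_u^1}_1\geq\Tr\bigl(\Pi_uY_u^0+(I-\Pi_u)Y_u^1\bigr)-\Tr\bigl((I-\Pi_u)Y_u^0+\Pi_uY_u^1\bigr).
\]
Summing over $u$, the "correct-hash" weight includes every correct guess, so it contributes $2^\ell\omega$ from $r'=r$. The pairs $r'\neq r$ split evenly between agreeing and disagreeing hashes as $u$ ranges, so their contributions cancel. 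This gives $\sum_u\norm{Y_u^0-Y_u^1}_1\geq2^\ell\omega$, hence $\TD\geq\omega/2^\ell$. The case $u=0$ needs separate handling: there $Y_0^1=0$, so it contributes $\sum_r p_{x,r}$. I will verify the bookkeeping so that this case also satisfies the bound.

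\textbf{Upper bound (main obstacle).} We need $\sum_u\norm{Y_u^0-Y_u^1}_1\leq2^{\ell-1}(1+2^{\ell/2})\,\omega$. Write $\norm{Y_u^0-Y_u^1}_1=\Tr(Z\,\Gamma_u)$ with $\Gamma_u=\sum_r(-1)^{\innerprodF{u}{r}}X_{x,r}$ and $Z$ a contraction. The $u=0$ term gives $\sum_rp_{x,r}$.

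For $u\neq0$, follow the leftover-hash argument of \cite[Theorem~6]{TSSR11}. Choose $\sigma$ attaining the min-entropy dual, i.e.\ $X_{x,r}\preceq\sigma$ for all $r$ with $\Tr\sigma=\omega$, which is possible by the SDP duality behind \cite{KRS09}. Apply Cauchy--Schwarz with weight $\sigma^{-1/2}$:
\[
    \sum_{u\neq0}\norm{\Gamma_u}_1\leq\sqrt{(2^\ell-1)\,\Tr\sigma\cdot\sum_{u}\Tr\bigl(\Gamma_u\sigma^{-1/2}\Gamma_u\sigma^{-1/2}\bigr)}.
\]
Parseval over $u$ gives $\sum_u\Tr(\Gamma_u\sigma^{-1/2}\Gamma_u\sigma^{-1/2})=2^\ell\sum_r\Tr(X_{x,r}\sigma^{-1/2}X_{x,r}\sigma^{-1/2})$. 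Using $X_{x,r}\preceq\sigma$, this is at most $2^\ell\sum_rp_{x,r}$.

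The $u=0$ term requires $\sum_rp_{x,r}\leq\omega\cdot$ something, and indeed $\sum_rp_{x,r}\leq2^\ell\omega$ always holds. This step is where the constants must be tuned. I expect the weighting to need a slightly different form, namely splitting the $u=0$ contribution and bounding $\sum_rp_{x,r}$ against $2^{\ell}\omega$ only partially, so that the result combines into $2^{\ell-1}\omega+2^{3\ell/2-1}\omega$. Getting exactly the factor $(1+2^{\ell/2})/2^{\ell+1}$ is the delicate step.
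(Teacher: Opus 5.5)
Your construction of $\rho_b$, the identity $\TD(\rho_0,\rho_1)=2^{-2\ell}\sum_u\norm{\Gamma_u}_1$, the reduction through \Cref{lemma:QSD-is-in-QIPbit}, and the lower bound $\TD\geq\omega/2^\ell$ are all correct and follow the paper. The paper builds one global two-outcome POVM from $\{\Pi^*_s\}$, which is the same computation as your per-$u$ variational bound. The $u=0$ term needs no special treatment there: $\Pi_0=I$, and $\sum_u(-1)^{\innerprodF{u}{r\oplus r'}}=2^\ell\delta_{r,r'}$ already includes it.

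The gap is the upper bound. You leave it open, and the argument you sketch does not give the stated constant. You bound the $u=0$ term separately by $\sum_r p_{x,r}=2^\ell p_x$, and you also include $u=0$ again in the Parseval sum. This gives $\sum_u\norm{\Gamma_u}_1\leq 2^\ell p_x+2^\ell\sqrt{(2^\ell-1)\,\omega\,p_x}$. That expression is increasing in $p_x$, so it only yields $\TD\leq\frac{1+\sqrt{2^\ell-1}}{2^\ell}\,\omega$. This is roughly twice the required $\frac{1+2^{\ell/2}}{2^{\ell+1}}\,\omega$. It would give compression only when $c>(1+\sqrt{2^\ell-1})s$, with a worse soundness parameter, so it does not prove the theorem.

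The missing idea is a trade-off between the zero frequency and the nonzero ones: the $u=0$ mode must be \emph{subtracted} from the Parseval total, not added on top of it. With your weight $\sigma$, the Cauchy--Schwarz inequality you already use gives $\Tr(\Gamma_0\sigma^{-1/2}\Gamma_0\sigma^{-1/2})\geq(\Tr\Gamma_0)^2/\Tr\sigma=2^{2\ell}p_x^2/\omega$. Hence
\[
\sum_{u\neq0}\Tr\bigl(\Gamma_u\sigma^{-1/2}\Gamma_u\sigma^{-1/2}\bigr)\leq 2^{2\ell}p_x-\frac{2^{2\ell}p_x^2}{\omega}=\frac{2^{2\ell}p_x(\omega-p_x)}{\omega},
\]
and therefore $\sum_{u\neq0}\norm{\Gamma_u}_1\leq 2^\ell\sqrt{(2^\ell-1)p_x(\omega-p_x)}$.

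The paper reaches the same bound with the weight $\overline X_x=2^{-\ell}\sum_rX_{x,r}$ instead. Then the $u=0$ term is exactly $2^{2\ell}p_x$. The full Parseval sum is $2^{2\ell}$ times the success probability of the pretty-good measurement $N_r\propto\overline X_x^{-1/2}X_{x,r}\overline X_x^{-1/2}$, hence at most $2^{2\ell}\omega$.

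Either way, use $p_x\leq\omega$ (uniform guessing) and set $y=p_x/\omega\in[0,1]$. This gives
\[
\sum_u\norm{\Gamma_u}_1\leq 2^\ell\omega\max_{y\in[0,1]}\Bigl(y+\sqrt{(2^\ell-1)y(1-y)}\Bigr)=2^{\ell-1}\bigl(1+2^{\ell/2}\bigr)\omega,
\]
which is exactly the required bound.
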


In the remainder of this section, we establish the improved quantum polarization in \Cref{subsec:quantum-polarization}. Next, we provide answer compression for $\QIP_{\ell\text{-}\bit}$ via \QSD{} in \Cref{subsec:answer-compression-QIPlbit}. 

\subsection{Polarizing the trace distance in the natural regime}
\label{subsec:quantum-polarization}
We begin by strengthening the known \QSZK{} containment of \QSD{} to cover the natural regime: 

\begin{theorem}[\QSD{} is in \QSZK{} with the natural regime]
    \label{thm:QSD-in-QSZK-natural-regime}
    Let $a, b \colon \Naturals \to [0, 1]$ be efficiently computable functions such that $0 \leq b(n) < a(n) \leq 1$ for every $n\in\Naturals$. If \[a(n)-b(n) \geq 1/O(\log{n})\] for all sufficiently large $n$, then $\QSD[a,b] \in \QSZK$. 
\end{theorem}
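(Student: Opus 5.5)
The plan is to reduce $\QSD[a,b]$ to the \textsc{Quantum Entropy Difference Problem} \QED{}, which is in \QSZK{} for inverse-polynomial gap~\cite{BASTS10}. The reduction is built from a signed combination of quantum Jensen--Shannon divergences of the interpolated states $\rho_{z,\lambda}$ from \Cref{eq:state-interpolate}.

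Fix $\Delta\coloneqq a-b\geq 1/O(\log n)$ and choose a dyadic accuracy $\varepsilon=\Theta(\Delta)$, say $\varepsilon\leq\Delta/8$. The argument then has three steps.

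\textbf{Step 1 (scalar reduction).} Let $\Phi(x)\coloneqq 1-\H_2((1+x)/2)$. I will show that for any reals $c_j$ and $\lambda_j\in[0,1/2]$,
\[
\abs[\Big]{\sum_j c_j\QJS_2(\rho_{0,\lambda_j},\rho_{1,\lambda_j})-\TD(\rho_0,\rho_1)}\leq\sup_{\abs{x}\leq1}\abs[\Big]{\sum_j c_j\Phi(\lambda_j x)-\abs{x}}.
\]
The tool is \Cref{lem:smoothed-integral}. It gives
\[
\sum_j c_j\QJS_2(\rho_{0,\lambda_j},\rho_{1,\lambda_j})=\int_0^1 g(t)\,K(t;\rho_0,\rho_1)\,\dd t,\qquad g(t)\coloneqq\frac{1}{\ln 2}\sum_j\frac{c_j\lambda_j^2}{1-\lambda_j^2t^2}.
\]
Next I diagonalize the problem. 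Using the right derivative in \Cref{lem:hockey-derivative}, $K(\cdot)$ is convex and nonincreasing in $t$. Writing $K(t)=\sum_i\sigma_i(\abs{\mu_i}-t)_+$ in a suitable "spectral" sense would make the functional linear in point masses. I expect to justify this by integrating by parts twice. Set $G(t)\coloneqq\int_0^t\int_0^s g$, and let $-\dd K'$ be the nonnegative measure on $[0,1]$ given by the second derivative of $K$. Then the combination equals $\int G(u)\,(-\dd K'(u))$, up to boundary terms.

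For the two-dimensional commuting case $\rho_0,\rho_1$ diagonal on one point, this identity reproduces exactly $\Phi(\lambda x)$ times weights. For general states, $-\dd K'$ is a measure of total mass $\Tr\rho_+=1$ after the normalization $K(0)=2\TD$. I will also check that $\TD=\int u\cdot(\ldots)$ in the same representation.

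Hence both sides are averages, against the same probability-type measure, of $\sum_j c_j\Phi(\lambda_j u)$ and $\abs{u}$ respectively. The scalar bound then follows. Verifying this measure identity, including the boundary terms at $t=0$ and $t=1$, is where I expect the most care to be needed.

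\textbf{Step 2 (approximation).} Take the even polynomial $P$ from \Cref{lemma:abs-polynomial-approx} with $\eta=\varepsilon/4$ and degree $d=O(1/\varepsilon)$. Replace $P(0)$ by $0$; this costs at most $\eta$. Then expand $\Phi(y)=\sum_{k\geq1}y^{2k}/((2k-1)2k\ln2)$.

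With $J=d/2$ and $\lambda_j=j/(16J)$, I match the coefficients of $x^{2k}$ for $k\leq J$. This is a Vandermonde system in the nodes $\lambda_j^2$. Cramer's rule and the product formula bound $\norm{\bfc}_1\leq 2^{O(J)}$.

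The tail, coming from degrees greater than $2J$, is bounded by $\norm{\bfc}_1\,(1/16)^{2J}$. Since the nodes are at most $1/16$, this geometric decay beats $2^{O(J)}$ once the constants are tuned. Finally, round the $c_j$ to dyadics with $\poly$ bits, losing an extra $\varepsilon/4$.

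This step is the main obstacle: I must balance the Vandermonde growth against the tail decay by choosing the constant $16$ appropriately.

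\textbf{Step 3 (encoding into \QED{}).} Split the indices into $S_\pm=\{j: \pm c_j>0\}$. Let $\Upsilon$ be the smallest power of two that is at least $\norm{\bfc}_1+\tau$, where $\tau$ is a dyadic approximation of $(a+b)/2$. Using \Cref{lemma:dyadic-linear-combi-states} and the joint entropy theorem (\Cref{lemma:joint-entropy}), build two states:

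\begin{itemize}
\item $\rho'_0$: the mixture that, with weight $\abs{c_j}/\Upsilon$, is the flagged product $\rho_{0,\lambda_j}\otimes\rho_{1,\lambda_j}$ for $j\in S_-$ or $\rho_{+}^{\otimes2}$ for $j\in S_+$;
\item $\rho'_1$: the symmetric counterpart, together with a padding block whose entropy accounts for $\tau$ via a coin of dyadic bias.
\end{itemize}

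The flags are arranged so that the $\H_2(\mu)$ terms cancel. By \Cref{eq:QJS-def} this yields \Cref{eq:TD-QJS-QED}, with all states efficiently preparable from $Q_0,Q_1$.

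Now compare the two cases. On yes instances the entropy difference is at least $(a-\tau-\varepsilon)/\Upsilon$; on no instances it is at most $(b-\tau+\varepsilon)/\Upsilon$. The resulting gap is $\Omega(\Delta)2^{-O(1/\Delta)}\geq1/\poly(n)$, since $\Delta\geq1/O(\log n)$. Membership in \QSZK{} then follows from~\cite{BASTS10}.
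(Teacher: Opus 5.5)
The gap is in Step~2, and tuning the constant $16$ cannot close it. With $J=d/2$ nodes you match \emph{all} $J$ nonzero moments of $P$. So the $k=J$ equation reads $\sum_{j}c_j\lambda_j^{2J}=d_J/a_J$, where $d_J$ is the leading coefficient of $P$. Every node satisfies $\lambda_j\leq 1/\kappa$ (your choice is $\kappa=16$), so this forces $\norm{\mathbf{c}}_1\geq \kappa^{2J}\abs{d_J}/a_J$. Your tail bound $\norm{\mathbf{c}}_1\kappa^{-2J}$ is therefore at least $\abs{d_J}/a_J$. That is exponentially large for any reasonable degree-$2J$ approximation of $\abs{x}$; for the Chebyshev truncation, $\abs{d_J}$ is of order $4^J/J^2$. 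The factors $\kappa^{2J}$ and $\kappa^{-2J}$ cancel exactly, so no constant scale lets the geometric decay beat the coefficient growth. This is not just a loose estimate. For Chebyshev-type $P$ one has $\operatorname{sgn} d_k=(-1)^{k+1}$. Define $\mu_k\coloneqq\sum_jc_j\lambda_j^{2k}$; then the exact recurrence $\mu_{J+1}=\sum_{i=1}^{J}(-1)^{i+1}e_i(\lambda_1^2,\ldots,\lambda_J^2)\,\mu_{J+1-i}$ has all terms of the same sign. Hence the $x^{2J+2}$ coefficient of $\sum_jc_j\Phi(\lambda_jx)-P(x)$ already has magnitude at least of order $J\abs{d_J}/\kappa^2$.

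The missing idea is to decouple the number of nodes from the degree of $P$. The paper takes $\deg P\leq 2m$ but $J=8m$, with nodes $\lambda_j=j/(16J)$ for $0\leq j\leq J$. It prescribes $\sum_{j}c_j(j/J)^{2k}=256^kd_k/a_k$ for $1\leq k\leq m$, and $0$ for $k=0$ and for $m<k\leq J$. The scale then enters the coefficients only through $256^m$. The inverse Vandermonde matrix in the nodes $(j/J)^2$ has absolute row sums at most $2^{5J}$, and $\abs{d_k}<2^{4m}$, so $\norm{\mathbf{c}}_1\leq 2^{62m}$. Meanwhile the error now starts at degree $2J+2$ and is at most $2^{62m}/(255\cdot 256^{8m})\leq 1/m$.

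The rest of your plan is fine. Step~3 agrees with the paper up to bookkeeping: there is a factor $2$ from comparing $\rho_+^{\otimes 2}$ with $\rho_{0,\lambda}\otimes\rho_{1,\lambda}$, and the offset is most simply a maximally mixed qubit of dyadic weight against a pure one, rather than a biased coin.

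Step~1 is a valid alternative to the paper's integration by parts plus Bonnet's second mean-value theorem. Let $\nu$ be the Stieltjes measure of the right derivative $K'_+$; it is nonnegative by convexity, so it is $\dd K'$, not $-\dd K'$, and its total mass is $-K'_+(0)\leq 1$. Then $K(t)=\int_{(0,1]}(u-t)_+\,\dd\nu(u)$ on $[0,1]$. Let $f(t)\coloneqq\sum_jc_j\Phi(\lambda_jt)$. Fubini with Taylor's formula gives $\sum_jc_j\QJS_2(\rho_{0,\lambda_j},\rho_{1,\lambda_j})=\int f\,\dd\nu$ and $\TD(\rho_0,\rho_1)=K(0)=\int u\,\dd\nu(u)$. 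Two corrections: $K(0)=\TD$, not $2\TD$; and a literal finite spectral sum of hinge functions is not available when $\rho_\pm$ do not commute, so you need the measure.
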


To establish \Cref{thm:QSD-in-QSZK-natural-regime}, the main idea is to reduce $\QSD[a,b]$  to \textsc{Quantum Entropy Difference} (\QED{}) using a signed linear combination of quantum Jensen--Shannon divergences between $\rho_{0,\lambda_i}$ and $\rho_{1,\lambda_i}$ that closely approximates the trace distance between $\rho_0$ and $\rho_1$, where the coefficients $\cbra{c_i}$ are obtained from an \emph{efficiently computable} uniform polynomial approximation of the absolute value function in~\cite[Lemma 3.1]{LW25}. Our construction consists of two steps:
\begin{enumerate}[label={\upshape(\arabic*)}]
    \item \label{algoitem:td-approx}Bound the approximation error between $\sum_i c_i \QJS_2(\rho_{0,\lambda_i},\rho_{1,\lambda_i})$ and $\TD(\rho_0,\rho_1)$, while also establishing an upper bound on $\sum_i \abs{c_i}$.  
    \item \label{algoitem:td-implement}Efficiently implement a \QED{} instance encoding a suitably shifted and normalized version of $\sum_i c_i \QJS_2(\rho_{0,\lambda_i},\rho_{1,\lambda_i})$.
\end{enumerate}

\vspace{1em}
Remarkably, the proof strategy underlying \Cref{thm:QSD-in-QSZK-natural-regime} also applies to polarizing the total variation distance, thereby resolving the first open problem posed in~\cite[Section 6]{SV97}:

\begin{restatable}[\SD{} is in \SZK{} in the natural regime]{theorem}{SDinSZKnaturalRegime}
    \label{thm:SD-in-SZK-natural-regime}
    Let $a,b\colon\Naturals\to[0,1]$ be efficiently computable functions such that $0\leq b(n)<a(n)\leq1$ for every $n\in\Naturals$. If \[a(n)-b(n)\geq 1/O(\log n)\] for all sufficiently large $n$, then $\SD[a,b]\in\SZK$.
\end{restatable}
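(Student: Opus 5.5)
We would prove \Cref{thm:SD-in-SZK-natural-regime} by a classical Karp reduction from $\SD[a,b]$ to the \textsc{Entropy Difference} problem (\ED{}), which is \SZK{}-complete with inverse-polynomial gap~\cite{GSV98}. This mirrors the quantum route sketched for \Cref{thm:QSD-in-QSZK-natural-regime}. Take samplers for $p_0,p_1$ and set $\Delta\coloneqq a-b\geq 1/O(\log n)$. We would fix a dyadic accuracy $\varepsilon=\Theta(\Delta)$, say $\varepsilon\leq\Delta/8$. With $J=O(1/\varepsilon)$ and $\lambda_j=j/(16J)$, we form the smoothed distributions $p_{z,\lambda_j}\coloneqq\frac{1+\lambda_j}{2}p_z+\frac{1-\lambda_j}{2}p_{1-z}$. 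These are samplable by flipping a dyadic-biased coin, using the classical (diagonal) case of \Cref{lemma:dyadic-linear-combi-states}.

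\textbf{Step 1: scalar reduction.} We first show
\[
\abs*{\sum_{j}c_j\JS_2(p_{0,\lambda_j},p_{1,\lambda_j})-\TV(p_0,p_1)}\leq\sup_{\abs{x}\leq1}\abs*{\sum_jc_j\Phi(\lambda_jx)-\abs{x}}.
\]
In the classical case this is simpler than the quantum one, because everything is pointwise. Write $p_\pm=(p_0\pm p_1)/2$ and $x(\omega)=p_-(\omega)/p_+(\omega)\in[-1,1]$. A direct computation gives $\JS_2(p_{0,\lambda},p_{1,\lambda})=\sum_\omega p_+(\omega)\Phi(\lambda x(\omega))$, since on each atom the two smoothed distributions are $p_+(1\pm\lambda x)$. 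We also have $\TV=\sum_\omega p_+(\omega)\abs{x(\omega)}$. Since $p_+$ is a probability distribution, the inequality follows. (Alternatively, we can specialize \Cref{lem:smoothed-integral} to diagonal states.)

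\textbf{Step 2: signed approximation of $\abs{x}$.} Take $P=2P_\eta$ from \Cref{lemma:abs-polynomial-approx}, an even polynomial of degree $O(1/\varepsilon)$, and shift it so that $P(0)=0$. We then match the coefficients of $x^{2k}$, $k=1,\dots,J$, against the Taylor series $\Phi(x)=\sum_k x^{2k}/((2k-1)2k\ln2)$. This gives a Vandermonde system in the $\lambda_j^2$. Its solution $\bfc$ must be bounded by $\norm{\bfc}_1\leq 2^{O(1/\varepsilon)}$, after which we round to dyadics. The tail terms of degree greater than $2J$ are controlled by $\lambda_j\leq 1/16$, which yields a geometric decay $16^{-2k}$ that beats the $2^{O(J)}$ coefficient growth. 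This step is the main obstacle. It requires bounding the inverse Vandermonde entries at nodes $(j/16J)^2$, and also bounding the coefficients of $P$ in the monomial basis, which is another $2^{O(1/\varepsilon)}$ factor. We would first try the explicit Lagrange-form inverse, with product bounds of the type $\prod_{i\neq j}\abs{\lambda_i^2-\lambda_j^2}^{-1}$, and absorb the $\lambda_j^{-2k}$ powers into the $16^{-2k}$ slack.

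\textbf{Step 3: entropy encoding.} Split the indices into $J_+=\{j:c_j>0\}$ and $J_-$. Let $\Upsilon$ be the smallest power of two with $\Upsilon\geq\norm{\bfc}_1+\tau$, where $\tau$ is a dyadic approximation of $(a+b)/2$. Using \Cref{lemma:joint-entropy}, we build the distribution $X$ as the mixture, with label register and weights $\abs{c_j}/\Upsilon$, of $(p_{0,\lambda_j}+p_{1,\lambda_j})/2$ over $j\in J_+$ and of $p_{0,\lambda_j}\otimes p_{1,\lambda_j}$ (halved) over $j\in J_-$. We build $Y$ symmetrically, and we pad both with a fixed-entropy component so that the weights match and the $\tau$ shift appears. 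The labels contribute the same Shannon entropy to both, so $\H_2(X)-\H_2(Y)=\Upsilon^{-1}\rbra*{\sum_jc_j\JS_2-\tau}$, exactly as in \Cref{eq:TD-QJS-QED}. On yes instances this is at least $(a-\tau-\varepsilon)/\Upsilon\geq\varepsilon/\Upsilon$, and on no instances it is at most $-\varepsilon/\Upsilon$. The gap $\Delta2^{-O(1/\Delta)}$ is $1/\poly(n)$ precisely when $\Delta\geq1/O(\log n)$. The samplers have size $\poly(n)\cdot J$, which completes the reduction to \ED{} and hence $\SD[a,b]\in\SZK$.
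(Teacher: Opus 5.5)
Your route is the paper's. You use the pointwise identity $\JS_2(p_{0,\lambda},p_{1,\lambda})=\sum_\omega p_+(\omega)\Phi(\lambda x(\omega))$ for the scalar bound, Vandermonde matching with $\lambda_j=j/(16J)$ and the $256^{-k}$ tail for the signed approximation, and a labelled dyadic mixture reducing to \ED{}. For Step~2, the paper carries out your Lagrange-form plan with three ingredients: the row-sum identity $\sum_k\abs{(V^{-1})_{j,k}}=\prod_{h\neq j}(1+x_h)/\abs{x_j-x_h}\leq 2^{5J}$ at the nodes $x_h=(h/J)^2$; the Chebyshev-coefficient bound $\abs{d_k}<2^{4m}$; and the choice $J=8m$ with $\deg P\leq 2m$, so that $\norm{\bfc}_1\leq 2^{62m}$ is beaten by $256^{-J}=2^{-64m}$.

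The step that fails as written is the choice of components in Step~3. For the label entropies to cancel, the $j$-th components of $X$ and $Y$ must carry the same weight. With a single copy of $p_+$ on one side and $p_{0,\lambda_j}\otimes p_{1,\lambda_j}$ on the other, the $j$-th contribution is $\frac{c_j}{\Upsilon}\rbra*{\H_2(p_+)-\H_2(p_{0,\lambda_j})-\H_2(p_{1,\lambda_j})}$. This misses the target $\frac{c_j}{\Upsilon}\JS_2(p_{0,\lambda_j},p_{1,\lambda_j})$ by $\frac{c_j}{2\Upsilon}\rbra*{\H_2(p_{0,\lambda_j})+\H_2(p_{1,\lambda_j})}$. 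That error can be as large as the output length and is not controlled by $\TV$. You cannot ``halve'' the entropy of one side alone without also changing its label distribution.

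The repair is local; there are two options.
\begin{itemize}
\item Use the paper's pair: $U_1\otimes p_+$ versus the flagged mixture $\frac12\sum_{z\in\binset}\delta_z\otimes p_{z,\lambda_j}$. Their entropies are $1+\H_2(p_+)$ and $1+\frac12\sum_z\H_2(p_{z,\lambda_j})$, which differ by exactly $\JS_2(p_{0,\lambda_j},p_{1,\lambda_j})$. The fair bit on the $p_+$ side is needed to cancel the flag bit.
\item Use $p_+\otimes p_+$ versus $p_{0,\lambda_j}\otimes p_{1,\lambda_j}$, with the common weight $\abs{c_j}/(2\Upsilon)$ on both sides.
\end{itemize}
The $\tau$ shift likewise needs threshold components whose entropies differ by exactly one bit: a point mass in $X$ versus a fair bit in $Y$, both with weight $\tau/\Upsilon$. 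The remainder component must be identical on both sides. With these corrections, your gap computation and efficiency accounting go through as you state them.
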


The proof of \Cref{thm:SD-in-SZK-natural-regime} is deferred to \Cref{sec:SD-in-SZK} and also uses \Cref{lemma:td-efficient-signed-approx-dyadic}. 

\subsubsection{Approximating the trace distance}
We begin by reducing the approximation error between $\sum_i c_i \QJS_2(\rho_{0,\lambda_i},\rho_{1,\lambda_i})$ and $\TD(\rho_0,\rho_1)$ to a scalar upper bound: 
\begin{lemma}[Scalar error bound for trace distance approximation]
    \label{lemma:td-scalar-error-bound}
    Let $\rho_0$ and $\rho_1$ be quantum states of the same dimension. For every positive integer $J$, let $\rho_{0,\lambda_j}$ and $\rho_{1,\lambda_j}$ be defined as in \Cref{eq:state-interpolate}, and let $c_j$ be real coefficients with $\lambda_j\in[0,1/2]$ for each $j\in [J]$. Then,
    \[ \abs*{\sum_{j\in[J]} c_j\QJS_2(\rho_{0,\lambda_j},\rho_{1,\lambda_j})-\TD(\rho_0,\rho_1)}
    \leq\sup_{|t|\leq1}\abs*{\sum_{j\in[J]} c_j\Phi(\lambda_j t)-\abs{t}} \enspace. \]
    Here, the function $\Phi(x)\coloneqq 1-\H_2\rbra*{(1+x)/2}$ for $\abs{x}<1$. 
\end{lemma}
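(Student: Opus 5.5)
Use the smoothed integral representation in \Cref{lem:smoothed-integral} to write every term, including $\TD(\rho_0,\rho_1)$ and the scalar function $\Phi$, as an integral against one common kernel coming from the spectrum of a single operator. Then the matrix error bound becomes a scalar error bound.

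First, diagonalize the symmetrized hockey-stick quantity. Since $-\rho_+\preceq\rho_-\preceq\rho_+$, consider the generalized eigenvalue problem of the pencil $(\rho_-,\rho_+)$. Restrict to the support of $\rho_+$ and set $A\coloneqq\rho_+^{-1/2}\rho_-\rho_+^{-1/2}$, which is Hermitian with spectrum in $[-1,1]$. Rather than work with $A$ directly, I would avoid noncommutativity by representing $t\mapsto K(t;\rho_0,\rho_1)$ through the convex function $\gamma\mapsto E_\gamma$ of \Cref{lem:hockey-derivative}. The function $g(t)\coloneqq K(t;\rho_0,\rho_1)$ is convex and nonincreasing on $[0,1]$, and it satisfies $g(1)=0$ and $g(0)=\Tr(\rho_-)_++\Tr(-\rho_-)_+=2\TD(\rho_0,\rho_1)$. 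By the right-derivative formula, $-g'_+(t)=\Tr(\rho_+\mathbf1_{\rho_--t\rho_+>0})+\Tr(\rho_+\mathbf1_{-\rho_--t\rho_+>0})$. Hence $-\dd g'_+=:\mu$ defines a nonnegative measure on $[0,1]$ (allowing an atom at $t=1$ if needed). Integrating by parts twice gives $g(t)=\int (s-t)_+\,\dd\mu(s)$, with total mass $\int\dd\mu\le -g'_+(0)\le 2$.

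Second, substitute this into \Cref{lem:smoothed-integral} and apply Fubini:
\[
\QJS(\rho_{0,\lambda},\rho_{1,\lambda})=\int_{[0,1]}\Bigl(\lambda^2\int_0^s\frac{s-t}{1-\lambda^2t^2}\,\dd t\Bigr)\dd\mu(s).
\]
The inner function is determined by the classical case. Take $\rho_0,\rho_1$ to be the commuting two-point distributions $((1+s)/2,(1-s)/2)$ and its swap. Then $\rho_+$ is uniform, $K(t)=(s-t)_+$, $\mu=\delta_s$, and the left side is $\QJS$ of two binary distributions, which equals $\ln 2\cdot\Phi(\lambda s)$. Therefore the inner integral equals $\ln2\,\Phi(\lambda s)$, and I expect to verify this directly from $\Phi''$ as a sanity check. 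This gives $\QJS_2(\rho_{0,\lambda},\rho_{1,\lambda})=\int\Phi(\lambda s)\,\dd\mu(s)$. In the same way, $\TD(\rho_0,\rho_1)=g(0)/2=\tfrac12\int s\,\dd\mu(s)$.

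Third, combine the two representations:
\[
\sum_j c_j\QJS_2(\rho_{0,\lambda_j},\rho_{1,\lambda_j})-\TD=\int\Bigl(\sum_jc_j\Phi(\lambda_js)-s\Bigr)\dd\mu(s)\,\cdot\,(\text{normalization}),
\]
and bound the integrand by the supremum times the total mass of $\mu$. The main obstacle is the constant. Because $\TD=\tfrac12\int s\,\dd\mu$ while $\QJS_2=\int\Phi\,\dd\mu$, the normalizations must match exactly, and the mass bound $2$ together with the factor $\tfrac12$ must cancel to give constant $1$. If they do not match as stated, I would instead split $\mu$ into its two pieces, coming from $(\rho_--t\rho_+)_+$ and $(-\rho_--t\rho_+)_+$. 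I would then pair $s$ with $t=\pm s$ so that each piece has mass at most $1$ and $\TD$ is the average of the two first moments. Finally, use that $\Phi$ and $|\cdot|$ are even, so the supremum over $|t|\le1$ controls both pieces.
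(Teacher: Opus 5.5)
Your route is sound in structure, but the constant bookkeeping is wrong, and as written the final step does not give constant $1$. Since $\rho_-=(\rho_0-\rho_1)/2$, you have $K(0;\rho_0,\rho_1)=\Tr(\rho_-)_++\Tr(-\rho_-)_+=\norm{\rho_-}_1=\TD(\rho_0,\rho_1)$, not $2\TD(\rho_0,\rho_1)$. Your own binary sanity check already shows this: there $K(t)=(s-t)_+$, $\mu=\delta_s$ and $\TD=s$.

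More importantly, you bound the total mass of $\mu$ only by $2$, and your fallback bounds each of the two pieces by $1$ separately. Either way you end up with constant $2$. The fallback's claim that $\TD$ is the average of the two first moments is also false: each first moment is $\Tr(\pm\rho_-)_+=\TD/2$, so $\TD$ is their sum.

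The missing idea is that the two pieces have \emph{joint} mass at most $1$. Let $G(t)\coloneqq-\frac{\dd^+}{\dd t}K(t;\rho_0,\rho_1)$. The total mass of $\mu$ is exactly $G(0)=\Tr\bigl(\rho_+(\mathbf{1}_{\rho_->0}+\mathbf{1}_{-\rho_->0})\bigr)$. The positive and negative spectral projectors of $\rho_-$ are mutually orthogonal, so their sum is a projector, and hence $G(0)\le\Tr\rho_+=1$. Also a small sign slip: since $K$ is convex, the nonnegative measure is $\dd K'_+=-\dd G$, not $-\dd K'_+$.

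With these corrections your argument closes exactly. You get $K(t)=\int_{(0,1]}(s-t)_+\,\dd\mu(s)$ with $\mu((0,1])\le 1$. Your Taylor-remainder identity $\lambda^2\int_0^s\frac{s-t}{1-\lambda^2t^2}\,\dd t=\ln 2\cdot\Phi(\lambda s)$ is correct. Hence $\QJS_2(\rho_{0,\lambda},\rho_{1,\lambda})=\int\Phi(\lambda s)\,\dd\mu(s)$ and $\TD(\rho_0,\rho_1)=\int s\,\dd\mu(s)$. The error is then $\int\bigl(\sum_j c_j\Phi(\lambda_j s)-s\bigr)\,\dd\mu(s)$, which is at most the supremum times $1$.

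The paper instead integrates by parts only once. It writes the error as $\int_0^1 G(t)(f'(t)-1)\,\dd t$ with the nonincreasing weight $G$, and then applies Bonnet's second mean-value theorem together with $G(0)\le 1$. Your extra Fubini step against $\mu=-\dd G$ is precisely the layer-cake proof of that Bonnet bound, so the two arguments are equivalent in substance. Yours avoids citing Bonnet. It also gives an exact representation of $\QJS_2(\rho_{0,\lambda},\rho_{1,\lambda})$ and $\TD$ as averages of the binary quantities $\Phi(\lambda s)$ and $s$ under one common measure. That makes the quantum proof a direct analogue of the classical argument in \Cref{lem:classical-transfer}.
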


Our proof of \Cref{lemma:td-scalar-error-bound} uses a standard result in real analysis:
\begin{lemma}[Bonnet's second mean-value theorem, adapted from~{\cite[Theorem~7.37]{Apostol74}}]
\label{lemma:bonnet-mean-value}
Let $g$ be a nonnegative nonincreasing function on $[0,1]$, and let $h$ be a continuous real-valued function on $[0,1]$. Then, 
\[
    \abs*{\int_0^1 g(t)h(t)\,\dd t}
    \leq g(0)\sup_{0\leq u\leq 1}\abs*{\int_0^u h(t)\,\dd t} \enspace.
\]
\end{lemma}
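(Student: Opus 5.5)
The plan is to prove this by a discrete Abel summation on Riemann sums, which avoids any Stieltjes-integral subtleties arising from possible jumps of $g$. Set $H(u)\coloneqq\int_0^u h(t)\,\dd t$, so $H$ is continuously differentiable with $H(0)=0$, and set $M\coloneqq\sup_{0\leq u\leq1}\abs{H(u)}$. Since $g$ is monotone it is Riemann integrable, so $gh$ is Riemann integrable and $\int_0^1 g(t)h(t)\,\dd t$ is well defined.

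First, fix a uniform partition $0=t_0<t_1<\cdots<t_N=1$ with mesh $1/N$. I would compare $\int_0^1 gh$ with the sum
\[ S_N\coloneqq\sum_{k=1}^N g(t_{k-1})\rbra*{H(t_k)-H(t_{k-1})}=\sum_{k=1}^N\int_{t_{k-1}}^{t_k} g(t_{k-1})h(t)\,\dd t\enspace. \]
On $[t_{k-1},t_k]$ monotonicity gives $0\leq g(t_{k-1})-g(t)\leq g(t_{k-1})-g(t_k)$. Therefore
\[ \abs*{\int_0^1 gh - S_N}\leq \norm{h}_\infty\cdot\frac1N\sum_{k=1}^N\rbra*{g(t_{k-1})-g(t_k)}=\frac{\norm{h}_\infty\rbra*{g(0)-g(1)}}{N}\enspace, \]
which tends to $0$ as $N\to\infty$ because $h$ is continuous on a compact interval and hence bounded.

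Second, apply summation by parts to $S_N$, using $H(t_0)=0$:
\[ S_N = g(t_{N-1})H(t_N)+\sum_{k=1}^{N-1}\rbra*{g(t_{k-1})-g(t_k)}H(t_k)\enspace. \]
All the weights $g(t_{N-1})$ and $g(t_{k-1})-g(t_k)$ are nonnegative, because $g$ is nonnegative and nonincreasing. Their total telescopes to $g(t_{N-1})+g(0)-g(t_{N-1})=g(0)$. Hence $\abs{S_N}\leq g(0)M$, and letting $N\to\infty$ gives the claim.

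The only mildly delicate step is the first one, namely controlling the Riemann-sum error uniformly without assuming continuity of $g$. The telescoping bound handles it, since it uses only the monotonicity of $g$.
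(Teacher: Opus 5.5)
Your proof is correct, and it takes a different route from the paper. The paper gives no argument of its own and simply cites Apostol's Theorem~7.37, the classical Bonnet identity.

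That identity says: for $g\geq 0$ nonincreasing and $h$ continuous there is $\xi\in[0,1]$ with $\int_0^1 g h = g(0)\int_0^{\xi} h$ (one may even use $g(0^+)$ in place of $g(0)$). The stated inequality then follows at once, since $|\int_0^\xi h|\leq\sup_u|\int_0^u h|$. The standard proof of the identity goes through Riemann--Stieltjes integration by parts against $H(u)=\int_0^u h$ plus the first mean-value theorem for Stieltjes integrals.

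You instead prove only the inequality, and you do so directly. You use Abel summation on a Riemann sum whose weights $g(t_{k-1})-g(t_k)$ and $g(t_{N-1})$ are nonnegative and telescope to $g(0)$. The telescoping bound $\|h\|_\infty(g(0)-g(1))/N$ controls the discretization error using only monotonicity, not continuity of $g$.

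What your approach buys is a self-contained, elementary argument. It delivers exactly what the paper needs in the proof of the scalar error bound, where $g=G(\cdot;\rho_0,\rho_1)$ may have jumps. The cited theorem gives the stronger exact representation with an intermediate point $\xi$, which the paper never uses.
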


\begin{proof}[Proof of \Cref{lemma:td-scalar-error-bound}]
We begin by expressing both $\sum_{j\in[J]} c_j\QJS_2(\rho_{0,\lambda_j},\rho_{1,\lambda_j})$ and $\TD(\rho_0,\rho_1)$ as integral representations that share the same nonnegative, nonincreasing multiplicative factor in the integrand, which enable us to apply \Cref{lemma:bonnet-mean-value} and obtain the desired scalar upper bound. 
We consider the following function:
\[ \forall t\geq 0,\quad G(t;\rho_0,\rho_1)\coloneqq -\frac{\dd^+}{\dd t}K(t;\rho_0,\rho_1)\enspace. \]
Here, the symmetrized divergence $K(t;\rho_0,\rho_1)$ is defined in \Cref{lem:smoothed-integral} and satisfies 
\begin{equation}
    \label{eq:symmetrized-hockey-boundaries}
    K(0;\rho_0,\rho_1)=\TD(\rho_0,\rho_1) \quad\text{and}\quad K(1;\rho_0,\rho_1)=0 \enspace.
\end{equation}

Since $t\mapsto K(t;\rho_0,\rho_1)$ is convex and nonincreasing, as guaranteed by \Cref{lem:hockey-derivative}\ref{thmitem:hockey-stick-convexity}, it follows that the function $t\mapsto G(t;\rho_0,\rho_1)$ is nonnegative and nonincreasing. Using the right derivative formula in \Cref{lem:hockey-derivative}\ref{thmitem:hockey-stick-right-derivative}, we obtain
\begin{equation}
    \label{eq:right-derivative-bound}
    G(0;\rho_0,\rho_1)
    =\frac{\dd^+}{\dt} \rbra*{E_{1}(\rho_0\|\rho_+)+E_{1}(\rho_1\|\rho_+)}
    =\Tr\rbra*{\rho_+\rbra*{{\bf1}_{\rho_-\succ 0}+{\bf1}_{-\rho_-\succ 0}}}\leq 1 \enspace.
\end{equation}
Hence, $0\leq G(t;\rho_0,\rho_1)\leq1$ for $t\in[0,1]$, and
$t\mapsto G(t;\rho_0,\rho_1)$ is Riemann integrable on $[0,1]$.
Applying the standard integral formula for a convex function~\cite[Corollary~24.2.1]{Rockafellar70} gives
\begin{equation}
    \label{eq:profile-primitive}
    \forall t\in[0,1], \quad K(t;\rho_0,\rho_1)=\int_t^1G(s;\rho_0,\rho_1)\,\dd s \enspace.
\end{equation}
Combining \Cref{eq:symmetrized-hockey-boundaries,eq:profile-primitive} yields the desired integral representation of $\TD(\rho_0,\rho_1)$. 

\vspace{1em}
We now turn to $\sum_{j\in[J]} c_j\QJS_2(\rho_{0,\lambda_j},\rho_{1,\lambda_j})$.  Let $f(t)\coloneqq\sum_{j\in[J]}c_j\Phi(\lambda_jt)$ for $t\in[-1,1]$.
Using the smoothed integral representation of \QJS{} (\Cref{lem:smoothed-integral}), we obtain
\begin{subequations}
\label{eq:QJS-dyadic-sum}
\begin{align}
    \sum_{j\in[J]}c_j\QJS_2(\rho_{0,\lambda_j},\rho_{1,\lambda_j})
    &= \sum_{j\in[J]} \frac{c_j\lambda_j^2}{\ln2}\int_0^1 \frac{K(t;\rho_0,\rho_1)}{1-\lambda_j^2t^2}\,\dd t\\
    &= \sum_{j\in[J]} c_j\lambda_j^2 \int_0^1 K(t;\rho_0,\rho_1) \Phi''(\lambda_j t) \dd t\\
    &=\int_0^1K(t;\rho_0,\rho_1) \, \dd f'(t)\\
    &=\rbra[\big]{K(t;\rho_0,\rho_1)f'(t)}\Big|_0^1
      +\int_0^1G(t;\rho_0,\rho_1)f'(t)\,\dd t\\
    &=\int_0^1G(t;\rho_0,\rho_1)f'(t)\,\dd t\enspace.
\end{align}
\end{subequations}
Here, the second line uses $\Phi''(x)=\frac1{(1-x^2)\ln2}$ for all $x\in(-1,1)$, the third line follows from $f''(t)=\frac1{\ln2}\sum_{j\in[J]} \frac{c_j\lambda_j^2}{1-\lambda_j^2t^2}$, the fourth line applies integration by parts using \Cref{eq:profile-primitive}, and the last line follows from $f'(0)=\Phi'(0)=0$ and $K(1;\rho_0,\rho_1)=0$. 

Putting \Cref{eq:symmetrized-hockey-boundaries,eq:profile-primitive,eq:QJS-dyadic-sum} together, we obtain
\begin{align*}
    \abs*{\sum_{j\in[J]}c_j\QJS_2(\rho_{0,\lambda_j},\rho_{1,\lambda_j})-\TD(\rho_0,\rho_1)}
    &=\abs*{\int_0^1G(t;\rho_0,\rho_1)(f'(t)-1)\,\dd t}\\
    &\leq G(0;\rho_0,\rho_1)\sup_{0\leq u\leq1}\abs*{\int_0^u(f'(t)-1)\,\dd t}\\
    &=G(0;\rho_0,\rho_1)\sup_{0\leq u\leq1}\abs*{f(u)-u}\\
    &\leq\sup_{\abs{t}\leq1}\abs*{\sum_{j\in[J]}c_j\Phi(\lambda_jt)-\abs{t}}\enspace.
\end{align*}
Here, the second line follows from \Cref{lemma:bonnet-mean-value}, the third line uses $f(0)=0$, and the last line follows from $G(0;\rho_0,\rho_1) \leq 1$ by \Cref{eq:right-derivative-bound} and the fact that $f$ is even. 
\end{proof}

\begin{lemma}[Efficient signed approximation of the absolute value function]
    \label{lemma:td-efficient-signed-approx-dyadic}
    For every dyadic $\varepsilon \in (0,1]$, one can compute a positive integer $J\leq\ceil*{64\beta/\varepsilon}$ and dyadic numbers $c_j$ and $\lambda_j\in[0,1/2]$ for $j\in[J]$, each represented using $O(1/\varepsilon)$ bits, such that
    \begin{equation}
        \sup_{|z|\leq1}\abs*{\sum_{j=1}^J c_j\Phi(\lambda_j z)-|z|}\leq\varepsilon
        \quad\text{and}\quad
        \sum_{j=1}^J|c_j|\leq2^{\ceil*{496\beta/\varepsilon}}+1\enspace.
    \end{equation}
    The running time is polynomial in $1/\varepsilon$ and the binary encoding length of $\varepsilon$.
\end{lemma}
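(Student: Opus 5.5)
Start from the even polynomial of \Cref{lemma:abs-polynomial-approx}. Take $\eta=\varepsilon/8$ and set $P(x)\coloneqq 2P_\eta(x)$, which is even of degree $d\leq\ceil*{8\beta/\varepsilon}$ and satisfies $\abs{P(x)-\abs{x}}\leq\varepsilon/4$ on $[-1,1]$. Replacing $P$ by $P-P(0)$ costs at most another $\varepsilon/4$ and gives $P(0)=0$. Write $P(x)=\sum_{k=1}^{K}a_kx^{2k}$ with $K=d/2$.

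Use the expansion $\Phi(x)=\sum_{k\geq1}\phi_k x^{2k}$, where $\phi_k=\frac{1}{(2k-1)(2k)\ln2}$. With nodes $\lambda_j=j/(16J)$ and $J=K$, choose $c\in\bbR^J$ to solve the Vandermonde system
\[
\sum_{j=1}^J c_j\lambda_j^{2k}=a_k/\phi_k, \qquad k\in[K].
\]
Then $\sum_j c_j\Phi(\lambda_jz)$ agrees with $P(z)$ in every coefficient of degree at most $2K$. The tail is $\sum_{k>K}\phi_k z^{2k}\sum_j c_j\lambda_j^{2k}$. Since $\abs{\lambda_j z}\leq1/16$, this tail is bounded by $\norm{\bfc}_1\sum_{k>K}16^{-2k}\leq\norm{\bfc}_1 2^{-8K}$.

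The main obstacle is bounding $\norm{\bfc}_1$, which requires three ingredients.

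First, we need the size of the coefficients $a_k$ of the approximating polynomial. A uniformly bounded polynomial on $[-1,1]$ of degree $d$ has monomial coefficients at most $2^{O(d)}$, by Chebyshev/Markov coefficient bounds. So $\abs{a_k}/\phi_k\leq 2^{O(d)}$.

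Second, we invert the Vandermonde matrix in the variables $y_j=\lambda_j^2$, where $c_j$ appear with the rows $y_j^k$. The inverse entries are controlled by Lagrange basis coefficients:
\[
\frac{\prod_{i\neq j}(1+y_i)}{y_j\prod_{i\neq j}\abs{y_j-y_i}}.
\]
Here $\abs{y_j-y_i}=\abs{j^2-i^2}/(16J)^2$, so the factor $(16J)^{2J}$ is cancelled against $\prod_{i\ne j}\abs{j-i}(j+i)\geq (j-1)!\,(J-j)!\,J^{J-1}$ up to $2^{O(J)}$. The claim is that the whole quantity comes out to $2^{O(J)}$, and the explicit constant $496$ will come from tracking these exponents carefully. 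I expect this bookkeeping, which needs Stirling-type estimates, to be the delicate step.

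Third, since $K\approx 4\beta/\varepsilon$, the tail $\norm{\bfc}_1 2^{-8K}$ is at most $\varepsilon/4$ provided the exponent in $\norm{\bfc}_1$ is smaller than $8K$. If it is not, enlarge $J$ to a constant multiple of $K$, which is allowed since $J\leq\ceil*{64\beta/\varepsilon}$. The surplus nodes are used by solving the system for more moments with zero targets, which makes the tail start later. Alternatively, shrink the node scale, since $\lambda_j\leq 1/16$ leaves room.

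Finally, round each $c_j$ to a dyadic number with $O(1/\varepsilon)$ bits and error $\varepsilon 2^{-1}/J$. Because $\abs{\Phi(\lambda_j z)}\leq1$, this changes the sum by at most $\varepsilon/4$, and the $\ell_1$ bound grows by at most $1$, which accounts for the $+1$. The $\lambda_j$ are dyadic once $J$ is a power of two; otherwise use $\lambda_j=j/2^{\ceil{\log(16J)}}$.

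All four error contributions total at most $\varepsilon$. Computing the $a_k$ (efficient by \Cref{lemma:abs-polynomial-approx}) and solving the $J\times J$ rational linear system with $O(1/\varepsilon)$-bit precision takes time $\poly(1/\varepsilon)$.
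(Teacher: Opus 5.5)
Your plan matches the paper's: match an even polynomial approximation of $\abs{z}$ with zero constant term against truncated series of $\Phi$ at dilated nodes $\lambda_j=j/(16J)$, pad with zero-target moment equations, and bound the tail by $\norm{\bfc}_1\sum_{k>J}256^{-k}$. However, the step you defer, bounding $\norm{\bfc}_1$, does not close with the estimate you propose, and neither fallback rescues it.

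Your Lagrange expression can be evaluated exactly. Since $\prod_{i\ne j}\abs{j^2-i^2}=(J-j)!\,(J+j)!/(2j^2)$,
\[
\frac{\prod_{i\ne j}(1+y_i)}{y_j\prod_{i\ne j}\abs{y_j-y_i}}\;\geq\;\frac{2\,(16J)^{2J}}{(J-j)!\,(J+j)!}\enspace.
\]
At $j=1$ this is roughly $256^J e^{2J}/(\pi J)$, i.e.\ about $2^{10.9J}$. The factorials cancel only the $J^{2J}$ part. The factor $16^{2J}=256^J$ survives, and it exactly cancels the tail gain $\sum_{k>J}256^{-k}\approx 256^{-J}$. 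So the tail bound you would obtain grows like $e^{2J}/J$ times $\max_k\abs{a_k/\phi_k}$, which is useless for every $J$.

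The two fallbacks fail for the same reason:
\begin{itemize}
\item Enlarging $J$ makes the bound worse. In particular, $J=K$ can never work.
\item Rescaling the nodes to $j/(sJ)$ multiplies the row sums by $(s/16)^{2J}$ and the tail factor by $(16/s)^{2J}$, so the product is unchanged.
\end{itemize}
The underlying problem is that a full-row Lagrange bound charges the dilation to every column, including the columns $K<k\le J$ that multiply zero targets.

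The missing idea is to decouple the dilation from the interpolation. Write $\lambda_j^{2k}=256^{-k}x_j^k$ with $x_j=(j/J)^2\in[0,1]$. The system then reads $\sum_j c_jx_j^k=256^k a_k/\phi_k$ for $k\le K$ and $=0$ for $K<k\le J$. Now the factor $256^k\le 2^{8K}$ enters only through the $K$ nonzero targets. Meanwhile, the inverse of the undilated Vandermonde has row sums at most $2(2e^2)^J\le 2^{5J}$: your own formula with $x_j$ in place of $y_j$ gives $\prod_{i\ne j}(1+x_i)\cdot 2J^{2J}/((J-j)!\,(J+j)!)\le (2e^2)^J$. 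This per-node rate is strictly below $8$. Hence $\norm{\bfc}_1\le 2^{O(K)+5J}$ against a tail factor $2^{-8J}$, and taking $J$ a large enough constant multiple of $K$ makes the tail $2^{-\Omega(K)}$.

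This is exactly what the paper does. It adds the node $\lambda_0=0$ and the $k=0$ equation so that $V_{k,j}=x_j^k$ is a standard Vandermonde, and takes $J=8m$. It bounds the row sums by evaluating the Lagrange basis at $-1$, obtaining $\norm{\bfc}_1\le 2^{62m}$ and a tail of at most $2^{62m}/(255\cdot 256^{8m})\le 1/m$.

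Separately, rounding each $c_j$ to within $\varepsilon/(2J)$ costs up to $\varepsilon/2$, not $\varepsilon/4$. Use $\varepsilon/(4J)$ instead, or the paper's $1/J^2$.
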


\begin{proof}
Our proof strategy begins with an even polynomial approximation $P$ of the absolute value function, with a zero constant term and efficiently computable coefficients. We then match $P$ to a truncation of $\sum_j c_j\Phi(\lambda_j z)$ and bound the higher-order terms. Finally, we round the coefficients and compute the dyadic truncation errors. 

\paragraph{Constructing the signed approximation.}
We construct the signed approximation by matching the coefficients of an even polynomial approximation of $\abs{z}$ with those of a linear combination of truncated power series of $\Phi$.
Let $P_\eta$ be the polynomial approximation of $\abs{x}/2$ given by \Cref{lemma:abs-polynomial-approx} with $\eta = \beta/(2m)$, where $m=2^{\ceil*{\log(4\beta/\varepsilon)}} \geq 4$. Consequently, $4\beta/\varepsilon \leq  m < 8\beta/\varepsilon$, so $m\geq 4$ and $0 < \eta \leq \varepsilon/8<1/2$. 
Since $P_\eta$ is even and $\deg P_\eta\leq\ceil*{\beta/\eta}=2m$, the polynomial
\[ P(z) \coloneqq 2(P_\eta(z)-P_\eta(0)) = \sum_{k\in[m]} d_kz^{2k} \]
has zero constant term and efficiently computable coefficients. By the triangle inequality and the approximation guarantee in \Cref{lemma:abs-polynomial-approx},
\begin{equation}
    \label{eq:td-polynomial-error}
    \sup_{|z|\leq 1}\abs{P(z)-|z|} 
    \leq 2\sup_{|z|\leq 1} \abs{P_\eta(z)-|z|/2} + 2\abs{P_\eta(0)}
    \leq 4\eta =\frac{2\beta}{m}\enspace.
\end{equation}

We now turn to the infinite power series of $\Phi$ for $\abs{x}<1$, 
\begin{equation}
    \label{eq:td-series}
    \Phi(x)=\sum_{k=1}^{\infty} a_k x^{2k}, \quad\text{where } a_k=\frac{1}{(2k-1)(2k)\ln2}\in (0,1]\enspace.
\end{equation}

Set $J=8m$, so $J<64\beta/\varepsilon\leq\ceil*{64\beta/\varepsilon}$.
Since $J$ is a power of two, the parameters $\lambda_j=j/(16J)$ for $0\leq j\leq J$ are dyadic and lie in $[0,1/16]$.
We then aim to find real coefficients $c_0,\ldots,c_J$ such that the truncation at degree $2J$ agrees with $P$:
\begin{equation}
    \label{eq:td-truncation}
    \sum_{j=0}^J c_j\sum_{k=1}^J a_k(\lambda_jz)^{2k}
    =\sum_{k=1}^J \frac{a_k}{256^k} \sum_{j=0}^J c_j\rbra*{\frac{j}{J}}^{2k} z^{2k}
    =\sum_{k=1}^m d_kz^{2k}\enspace.
\end{equation}
Since $\lambda_0=0$ and $\Phi(0)=0$, the coefficient $c_0$ does not affect the signed approximation or its truncation. Matching coefficients in \Cref{eq:td-truncation} and choosing $c_0$ so that $\sum_{j=0}^J c_j=0$ give 
\begin{equation}
    \label{eq:moments}
    \sum_{j=0}^J c_j(j/J)^{2k}
    =\begin{cases}
        256^kd_k/a_k,&1\leq k\leq m,\\
        0,&k=0\text{ or }m<k\leq J\enspace.
    \end{cases}
\end{equation}
Setting $x_j\coloneqq(j/J)^2$, the coefficient matrix becomes the
Vandermonde matrix $V_{k,j}=x_j^k$ for $0\leq k,j\leq J$, so \Cref{eq:moments} corresponds to a linear system with variable vector $(c_0,\ldots,c_J)^T$. Since the values $\cbra{x_j}$ are distinct, $V$ is invertible and the coefficients $c_j$ are uniquely determined. 

\paragraph{Bounding the coefficients and the tail.}
We first bound the $\ell_1$-norm of the coefficients $\sum_{j=0}^J \abs{c_j}$ and then use this bound to control the higher-order terms. 
For each $j\in\cbra{0}\cup[J]$, by solving the linear system in \Cref{eq:moments} using $V^{-1}$ and the triangle inequality, we obtain
\begin{equation}
    \label{eq:td-linear-system}
    \abs{c_j}=\abs*{\sum_{k=1}^m(V^{-1})_{j,k}\frac{256^kd_k}{a_k}}
    \leq \sum_{k=1}^m 256^k \, \abs*{(V^{-1})_{j,k}\frac{d_k}{a_k}}
    \leq 2^{10m} \sum_{k=1}^m \abs*{(V^{-1})_{j,k}}\abs{d_k}\enspace.
\end{equation}
Here, the last inequality uses \Cref{eq:td-series} and the bound $4k^2 \leq 4m^2 \leq 4^m$ for any $k\in[m]$.

Let $\norm{P}_1\coloneqq\sum_{k\in[m]}\abs{d_k}$ denote the
$\ell_1$-norm of the coefficient vector of $P$,
and similarly for other polynomials.
Since $P=2(P_\eta-P_\eta(0))$ and $P_\eta$ is even, for each $k\in[m]$, we have the bound
\begin{equation}
    \label{eq:td-poly-coefficient}
    \abs{d_k} \leq \norm{P}_1
    \leq 4\sum_{\ell\in[m]}\norm{T_{2\ell} - T_{2\ell}(0)}_1
    \leq 4\sum_{\ell\in[m]}\norm{T_{2\ell}}_1
    \leq 4\sum_{\ell\in[m]}3^{2\ell}
    = \frac{9}{2}(9^m-1)
    < 2^{4m}\enspace.
\end{equation}
Here, the second inequality uses $\norm{P_\eta-P_\eta(0)}_1 \leq 2\sum_{\ell\in[m]} \norm{T_{2\ell}-T_{2\ell}(0)}_1$ since each nonconstant Chebyshev coefficient of $P_\eta$ has absolute value at most $2$, the third inequality follows because removing the constant term cannot increase the $\ell_1$ norm, the fourth inequality uses $\norm{T_r}_1\leq3^r$,\footnote{This bound follows by induction from $T_0(z)=1$, $T_1(z)=z$, and the recurrence $T_{r+1}(z)=2zT_r(z)-T_{r-1}(z)$.} and the final inequality holds for $m\geq 4$.

For every $j\in\cbra{0}\cup[J]$, using the equality obtained from~\cite[Equation~(22.1)]{Higham02} by evaluating the Lagrange basis polynomial at $x=-1$, we obtain\footnote{This equality holds because all nodes $x_h$ are nonnegative and the nonzero monomial
coefficients alternate in sign, so the absolute value at $x=-1$
equals the sum of the absolute values of the coefficients.}
\begin{equation}
    \label{eq:td-inverse-row-sum}
    \sum_{k=0}^J\abs{(V^{-1})_{j,k}}
    = \prod_{\substack{0\leq h\leq J\\h\ne j}}\frac{1+x_h}{\abs{x_j-x_h}}
    \leq 2^J \prod_{\substack{0\leq h\leq J\\h\ne j}}\frac{1}{\abs{x_j-x_h}}
    \leq 2^{J+1}\frac{J^{2J}}{(J!)^2}
    \leq 2(2e^2)^J
    \leq 2^{5J}\enspace.
\end{equation}
Here, the first inequality follows from $x_h=(h/J)^2\in[0,1]$, the second inequality uses the bound $\prod_{\substack{0\leq h\leq J\\h\ne j}}\abs{x_j-x_h} \geq \frac{(J!)^2}{2J^{2J}}$,\footnote{$\prod\limits_{0\leq h\leq J, h\ne j}\abs{x_j-x_h}=J^{-2J}
    \begin{cases}
        (J!)^2,&j=0,\\
        (J-j)!(J+j)!/2,&1\leq j\leq J,
    \end{cases}
    \geq\frac{(J!)^2}{2J^{2J}}$, where $(J-j)!(J+j)!\geq(J!)^2$.}
the third inequality uses $J!\geq(J/e)^J$, and the last inequality follows from $2e^2<16$ and $J \geq 1$. 

\vspace{1em}
By combining \Cref{eq:td-linear-system,eq:td-poly-coefficient}, a direct calculation gives
\begin{equation}
    \label{eq:coefficient-mass}
    \sum_{j=0}^J\abs{c_j}
    \leq 2^{14m}\sum_{j=0}^J\sum_{k=1}^m\abs*{\rbra*{V^{-1}}_{j,k}}
    \leq 2^{14m}\sum_{j=0}^J\sum_{k=0}^J\abs*{\rbra*{V^{-1}}_{j,k}}
    \leq (J+1)2^{14m+5J}
    \leq 2^{62m}\enspace.
\end{equation}
Here, the second inequality enlarges the inner summation range using $m\leq J$ and the nonnegativity of every summand, the third inequality applies \Cref{eq:td-inverse-row-sum} to each of the $J+1$ rows, and the last inequality uses the facts that $J+1\leq 2^J$ for $J\geq 1$ and $J=8m$.

We now bound the higher-order terms using the coefficient bound in \Cref{eq:coefficient-mass}. By \Cref{eq:td-truncation}, subtracting $P(z)$ from $\sum_{j=1}^J c_j\Phi(\lambda_jz)$ cancels all terms of degree at most $2J$, leaving only the higher-order terms:
\begin{subequations}
\label{eq:synthesis-error}
\begin{align}
    \sup_{|z|\leq 1}\abs*{\sum_{j=1}^Jc_j\Phi(\lambda_jz)-P(z)}
    =\sup_{|z|\leq 1}\abs*{\sum_{j=1}^Jc_j\sum_{k>J}a_k(\lambda_jz)^{2k}}
    &\leq\sum_{j=1}^J\abs{c_j}\sum_{k>J}256^{-k}\\
    &\leq\frac{2^{62m}}{255\cdot256^J}
    \leq\frac{1}{m}\enspace.
\end{align}
\end{subequations}
Here, the first inequality uses the triangle inequality, $0<a_k\leq1$,
and $\abs{\lambda_jz}\leq1/16$; the second inequality uses \Cref{eq:coefficient-mass} and the geometric-series identity
$\sum_{k>J}256^{-k}=1/(255\cdot 256^J)$; the last inequality follows from $J=8m$ and $m\leq2^{2m}$.

\paragraph{Making coefficients dyadic.}
For each $1\leq k\leq m$, compute a dyadic approximation to $256^kd_k/a_k$ with absolute error at most $2^{-5J}/(2J^2)$. Substitute these approximations into the right-hand side of \Cref{eq:moments}, leaving all zero entries unchanged.
Solve the resulting rational linear system exactly to obtain
rational numbers $\widetilde{c}_0,\ldots,\widetilde{c}_J$.
For each $j\in[J]$, define the truncation $\bar{c}_j \coloneqq \floor*{2J^2\widetilde c_j}/(2J^2)$, which is dyadic since $J$ is a power of two. 

By the triangle inequality, a direct calculation gives
\begin{equation}
    \label{eq:td-trunc-coefficients}
    \abs{\bar{c}_j-c_j} 
    \leq \abs{\bar{c}_j-\widetilde{c}_j}+\abs{\widetilde{c}_j-c_j}
    \leq\frac{1}{2J^2}+\frac{2^{-5J}}{2J^2} \sum_{k=0}^J\abs{(V^{-1})_{j,k}}
    \leq\frac{1}{2J^2}+\frac{2^{-5J}}{2J^2}\,2^{5J}
    =\frac1{J^2}\enspace.
\end{equation}
Here, the second inequality uses $0 \leq \widetilde{c}j-\bar{c}j \leq 1/(2J^2)$ for the first term and the fact that the approximation error of $256^kd_k/a_k$ is at most $2^{-5J}/(2J^2)$ for the second term; the third inequality follows from \Cref{eq:td-inverse-row-sum}. 

Applying the triangle inequality again, we obtain
\begin{align*}
    \sup_{|z|\leq 1}\abs*{\sum_{j=1}^J\bar c_j\Phi(\lambda_j z)-|z|}
    &\leq \sup_{|z|\leq 1} \abs*{\sum_{j=1}^J(\bar{c}_j-c_j)\Phi(\lambda_j z)}
      +\sup_{|z|\leq 1}\abs*{\sum_{j=1}^J c_j\Phi(\lambda_j z)-P(z)}
      +\sup_{|z|\leq 1}\abs{P(z)-|z|}\\
    &\leq \sum_{j=1}^J\abs{\bar{c}_j-c_j} + \frac{1}{m}+\frac{2\beta}{m}\\
    &\leq \frac{1}{J}+\frac{1}{m}+\frac{2\beta}{m}
      \leq\frac{4\beta}{m}
      \leq\varepsilon\enspace.
\end{align*}
Here, the second line uses $0\leq\Phi(x)\leq1 $ for $|x|\leq 1/2$ for the first term, \Cref{eq:synthesis-error} for the second term, and \Cref{eq:td-polynomial-error} for the third term; the last line uses \Cref{eq:td-trunc-coefficients}, $J=8m$ and $\beta>1$, and $m\geq4\beta/\varepsilon$ for the three inequalities, respectively. 

Following \Cref{eq:td-trunc-coefficients}, we obtain the magnitude bound
\[
    \sum_{j=1}^J\abs{\bar c_j}
    \leq\sum_{j=1}^J\abs{c_j}+\frac1J
    \leq2^{62m}+1
    \leq2^{\ceil*{496\beta/\varepsilon}}+1\enspace.
\]
Here, the second inequality uses \Cref{eq:coefficient-mass} and the last inequality uses $m<8\beta/\varepsilon$. The coefficient bound and the denominator $2J^2$ imply that each $\bar{c}_j$ has an $O(m)$-bit representation, while each $\lambda_j=j/(16J)$ has an $O(\log J)=O(\log m)$-bit representation.

Since the coefficients $d_k$ are efficiently computable, \Cref{eq:td-poly-coefficient} and the bound $256^k(2k-1)(2k)\leq2^{10m}$ in \Cref{eq:td-linear-system} show that the required dyadic approximations to $256^kd_k/a_k=256^k(2k-1)(2k)d_k\ln2$ can be computed in polynomial time by computing $d_k$ and $\ln2$ to absolute error at most $2^{-O(m)}$, requring $O(m)$ fractional bits.
The entries of $V$ and the dyadic approximations used in \Cref{eq:moments} have polynomial bit length, so exact rational elimination and dyadic truncation take time polynomial in $m=O(1/\varepsilon)$ and the binary encoding length of $\varepsilon$. We complete the proof by relabeling $\bar{c}_j$ as $c_j$.
\end{proof}

\subsubsection{Implementing the QED instance}
Our construction underlying \Cref{thm:QSD-in-QSZK-natural-regime} is a reduction from a \QSD{} instance to an instance of \textsc{Quantum Entropy Difference} ($\QED[g]$), where $\QED[g]$ is defined similarly to \Cref{def:QSD}, with the following promise:
\begin{itemize}
    \item \emph{Yes:} The pair of quantum circuits $(Q_0,Q_1)$ satisfies $\S_2(\rho_0)-\S_2(\rho_1)\geq g(n)$;
    \item \emph{No:} The pair of quantum circuits $(Q_0,Q_1)$ satisfies $\S_2(\rho_0)-\S_2(\rho_1)\leq-g(n)$. 
\end{itemize}
Following~\cite[Section~5]{BASTS10}, $\QED[1/2]$ is in \QSZK{}, and the \QSZK{} containment extends directly to $g(n)\geq 1/\poly(n)$, as stated explicitly in~\cite[Theorem~2.20]{Liu23}:\footnote{While the notion of \emph{input length} differs between our setting and~\cite{Liu23}, we can simply pad the output with pure qubits, allowing the input description length to replace the number of output qubits used in~\cite{Liu23}.}
\begin{lemma}[\QED{} is in \QSZK{}, adapted from~{\cite[Section~5]{BASTS10}}]
    \label{lemma:QED-in-QSZK}
    For every efficiently computable function $g\colon\Naturals\to\mathbb{R}_{>0}$ such that $g(n)\geq 1/\poly(n)$, we have $\QED[g]\in\QSZK$.
\end{lemma}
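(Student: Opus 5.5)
The plan is to reduce $\QED[g]$ to $\QED[1/2]$ by a polynomial-time many-one reduction that amplifies the entropy gap via tensor powers, and then invoke the \QSZK{} containment of $\QED[1/2]$ from~\cite[Section~5]{BASTS10}. Since \QSZK{} is closed under polynomial-time many-one reductions (the verifier and simulator of the target instance can be run on the image of the reduction), this gives $\QED[g]\in\QSZK$.

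\textbf{Step 1 (gap amplification).} Given an instance $(Q_0,Q_1)$ of description length $n$ with output states $\rho_0,\rho_1$, compute $k\coloneqq\ceil*{1/(2g(n))}$. This is possible in polynomial time because $g$ is efficiently computable, and $k\leq\poly(n)$ because $g(n)\geq1/\poly(n)$. For $z\in\binset$, let $Q_z^{(k)}$ run $k$ independent copies of $Q_z$ on disjoint registers and output all $k$ output registers, so that its output state is $\rho_z^{\otimes k}$. By additivity of the von Neumann entropy under tensor products (the case of \Cref{lemma:joint-entropy} with a trivial label, or directly from the spectrum of $\rho^{\otimes k}$), we have
\[
    \S_2\rbra*{\rho_0^{\otimes k}}-\S_2\rbra*{\rho_1^{\otimes k}}=k\rbra*{\S_2(\rho_0)-\S_2(\rho_1)}\enspace.
\]
Hence a \emph{yes} instance maps to a difference of at least $kg(n)\geq1/2$, and a \emph{no} instance maps to a difference of at most $-kg(n)\leq-1/2$.

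\textbf{Step 2 (size bookkeeping).} The new circuits have $k\cdot\poly(n)=\poly(n)$ gates, output qubits and description length $n'$. The promise threshold $1/2$ is a constant, so it is insensitive to the change of input length from $n$ to $n'$. This is unlike the shifted-parameter issues in \Cref{thm:QSD-is-QIPbit-hard}, where the thresholds depended on the input length.

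\textbf{Main obstacle.} I expect the delicate point to be the interface with~\cite{BASTS10}, whose \QSZK{} protocol for \QED{} is analyzed with its resources and error bounds parameterized by the number of output qubits rather than the description length. The resource and error bounds must remain polynomial in $n'$ after the $k$-fold blow-up. I would handle this as the footnote suggests: pad the output registers of $Q_0^{(k)}$ and $Q_1^{(k)}$ with fresh qubits fixed to $\ket{0}$. This leaves both entropies unchanged and makes the number of output qubits comparable to the description length, so the statement of~\cite[Section~5]{BASTS10} (equivalently~\cite[Theorem~2.20]{Liu23}) applies verbatim to the padded instance.
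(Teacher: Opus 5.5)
Your proposal is correct and follows the paper's route. The paper likewise derives the lemma from $\QED[1/2]\in\QSZK$ in~\cite[Section~5]{BASTS10}. It delegates the extension to gaps $g(n)\geq 1/\poly(n)$ to~\cite[Theorem~2.20]{Liu23} rather than spelling out your $k$-fold tensor-power amplification, and it handles the output-qubits-versus-description-length mismatch with the same padding by pure qubits.

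One minor slip: $\S_2(\rho^{\otimes k})=k\,\S_2(\rho)$ is not the trivial-label case of \Cref{lemma:joint-entropy}. It is the case where the label distribution is the spectrum of one factor and all conditional states coincide. Your alternative justification via the spectrum of $\rho^{\otimes k}$ is fine.
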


\begin{proof}[Proof of \Cref{thm:QSD-in-QSZK-natural-regime}]
By the \QSZK{} containment in \Cref{lemma:QED-in-QSZK}, it suffices to establish a Karp reduction from $\QSD[a,b]$ with $\Delta(n) \coloneqq a(n)-b(n) \geq 1/O(\log{n})$ to $\QED[g]$ with $g(n) \geq 1/q(n)$ for some polynomial $q$. 
Choose dyadic numbers $\varepsilon=2^{-t}$ and $\tau\in 2^{-t}\mathbb{Z} \cap [0,1]$ satisfying
\[
    6\varepsilon < \Delta \leq 16\varepsilon
    \quad\text{and}\quad
    \abs*{\tau(n)-\frac{a(n)+b(n)}2}\leq\varepsilon\enspace.
\]
These parameters can be computed using $O(1+\log(1/\Delta))$ bits of precision.

\paragraph{Constructing the $\QED$ instance.}
Apply \Cref{lemma:td-efficient-signed-approx-dyadic} with accuracy $\varepsilon$ to obtain $J$ and dyadic numbers $c_j$ and $\lambda_j$ for $j\in[J]$.
Using the fixed constant $\beta>1$ in \Cref{lemma:abs-polynomial-approx}, the coefficient bound in \Cref{lemma:td-efficient-signed-approx-dyadic} gives $\sum_{j=1}^J \abs{c_j}+\tau \leq 2^{496\beta/\varepsilon}+2 \leq 2^{1+\ceil*{496\beta/\varepsilon}} \eqqcolon \Upsilon$. 

Let the quantum circuit pair $(Q_0,Q_1)$ be an instance of \QSD{}, where each $Q_j$ prepares a purification of the state $\rho_j$ for $j\in\binset$. Recall that $\rho_{j,\lambda}= \frac{1+\lambda}{2}\rho_j + \frac{1-\lambda}{2}\rho_{1-j}$ for each $j\in\binset$, as defined in \Cref{eq:state-interpolate}. 
Combining \Cref{lemma:td-scalar-error-bound,lemma:td-efficient-signed-approx-dyadic} gives the error bound
\begin{equation}
    \label{eq:shared-approximation}
    \abs*{\sum_{j=1}^J c_j\QJS_2(\rho_{0,\lambda_j},\rho_{1,\lambda_j}) - \TD(\rho_0,\rho_1)}\leq\varepsilon\enspace.
\end{equation}

We obtain quantum circuits $Q'_0$ and $Q'_1$, preparing $\rho'_0$ and $\rho'_1$, respectively, by applying the dyadic convex combination construction (\Cref{lemma:dyadic-linear-combi-states}) to the circuit-coefficient pairs in \Cref{table:polarization-newQ0,table:polarization-newQ1}, with $J+2$ terms and $L=\log(2\Upsilon J^2)$. Here, \Cref{table:polarization-newQ0,table:polarization-newQ1} list the ingredients, including the circuit-coefficient pairs, used to construct $Q'_0$ and $Q'_1$, respectively.

\begin{table}[ht]
    \centering
    \begin{tabular}{c c c c}
        \toprule
        Term & Circuit & State & Coefficient \\
        \midrule
        $j\in[J]$ with $c_j\geq0$
        & $Q_{\geq0}$
        & $\varrho\coloneqq I_2/2 \otimes\rho_+$
        & $\pi_j \coloneqq c_j/\Upsilon$ \\
        \midrule
        $j\in[J]$ with $c_j<0$
        & $Q_{<0,j}$
        & $\varsigma_j \coloneqq \frac{1}{2}\sum_{z\in\binset} \ketbra{z}{z}\otimes\rho_{z,\lambda_j}$
        & $\pi_j \coloneqq -c_j/\Upsilon$ \\
        \midrule
        Threshold offset
        & $Q_{\ketbra{\bar{0}}{\bar{0}}}$
        & $\ketbra{0}{0}^{\otimes(r+1)}$
        & $\pi_{J+1} \coloneqq \tau/\Upsilon$ \\
        \midrule
        Remainder
        & $Q_{\ketbra{\bar{0}}{\bar{0}}}$
        & $\ketbra{0}{0}^{\otimes(r+1)}$
        & $\pi_{J+2} \coloneqq 1-\rbra[\big]{ \sum_{j=1}^J\abs{c_j}+\tau }/\Upsilon$ \\
        \bottomrule
    \end{tabular}
    \caption{Ingredients for constructing $Q'_0$.}
    \label{table:polarization-newQ0}
\end{table}

\begin{table}[ht]
    \centering
    \begin{tabular}{c c c c}
        \toprule
        Term & Circuit & State & Coefficient \\
        \midrule
        $j\in[J]$ with $c_j\geq0$
        & $Q_{<0,j}$
        & $\varsigma_j = \frac{1}{2}\sum_{z\in\binset}
            \ketbra{z}{z}\otimes\rho_{z,\lambda_j}$
        & $\pi_j \coloneqq c_j/\Upsilon$ \\
        \midrule
        $j\in[J]$ with $c_j<0$
        & $Q_{\geq0}$
        & $\varrho = I_2/2 \otimes\rho_+$
        & $\pi_j \coloneqq -c_j/\Upsilon$ \\
        \midrule
        Threshold offset
        & $Q_{I_2/2}$
        & $I_2/2 \otimes\ketbra{0}{0}^{\otimes r}$
        & $\pi_{J+1} \coloneqq \tau/\Upsilon$ \\
        \midrule
        Remainder
        & $Q_{\ketbra{\bar{0}}{\bar{0}}}$
        & $\ketbra{0}{0}^{\otimes(r+1)}$
        & $\pi_{J+2} \coloneqq 1-\rbra[\big]{\sum_{j=1}^J\abs{c_j}+\tau}/\Upsilon$ \\
        \bottomrule
    \end{tabular}
    \caption{Ingredients for constructing $Q'_1$.}
    \label{table:polarization-newQ1}
\end{table}

The resulting states are given by
\begin{subequations}
\label{eq:implementation}
\begin{align}
    \rho'_0 &=
    \sum_{\substack{j\in[J]\colon c_j\geq0}} \pi_j\ketbra{j-1}{j-1}\otimes\varrho
    +\sum_{\substack{j\in[J]\colon c_j<0}} \pi_j\ketbra{j-1}{j-1}\otimes\varsigma_j \\
    &\qquad+\pi_{J+1}\ketbra{J}{J}\otimes\ketbra{0}{0}^{\otimes(r+1)}
    +\pi_{J+2}\ketbra{J+1}{J+1}\otimes\ketbra{0}{0}^{\otimes(r+1)}\enspace,\\
    \rho'_1 &=
    \sum_{\substack{j\in[J]\colon c_j\geq0}} \pi_j\ketbra{j-1}{j-1}\otimes\varsigma_j
    +\sum_{\substack{j\in[J]\colon c_j<0}} \pi_j\ketbra{j-1}{j-1}\otimes\varrho \\
    &\qquad+\pi_{J+1}\ketbra{J}{J} \otimes\frac{I_2}{2}\otimes\ketbra{0}{0}^{\otimes r}
    +\pi_{J+2}\ketbra{J+1}{J+1} \otimes\ketbra{0}{0}^{\otimes(r+1)}\enspace.
\end{align}
\end{subequations}
Next, we construct the circuit ingredients in \Cref{table:polarization-newQ0,table:polarization-newQ1}:\footnote{$Q_{I_2/2}$ is obtained by first preparing the EPR state $(\ket{00}+\ket{11})/\sqrt{2}$ and then outputting the first qubit together with $r$ all-zero qubits and tracing out the second qubit.}
\begin{itemize}
    \item $Q_{\geq0}$: Applying \Cref{lemma:dyadic-linear-combi-states} to the circuit-coefficient pairs $(I_2\otimes Q_0,1/4)$, $(I_2\otimes Q_1,1/4)$, $(\PauliX\otimes Q_0,1/4)$, and $(\PauliX\otimes Q_1,1/4)$, with four terms and $L=2$ gives a circuit preparing $\varrho$. Here, for each $z\in\binset$, the circuits $I_2\otimes Q_z$ and $\PauliX\otimes Q_z$ use an additional output qubit initialized to $\ket{0}$ and prepare $\ketbra{0}{0}\otimes\rho_z$ and $\ketbra{1}{1}\otimes\rho_z$, respectively.
    
    \item $Q_{<0,j}$: For each $j\in[J]$, applying \Cref{lemma:dyadic-linear-combi-states} to the circuit-coefficient pairs $\rbra[\big]{I_2\otimes Q_0,\frac{1+\lambda_j}{4}}$, $\rbra[\big]{I_2\otimes Q_1,\frac{1-\lambda_j}{4}}$, $\rbra[\big]{\PauliX\otimes Q_0,\frac{1-\lambda_j}{4}}$, and $\rbra[\big]{\PauliX\otimes Q_1,\frac{1+\lambda_j}{4}}$, with four terms and $L=\log(64J)$ gives a circuit preparing $\varsigma_j$. Here, the choice $L=\log(64J)$ is valid because $(1\pm\lambda_j)/4=(16J\pm j)/(64J)$.
\end{itemize}

\paragraph{Analysis.} Since the label distributions in $\rho'_0$ and $\rho'_1$ coincide and equal $(\pi_1,\ldots,\pi_{J+2})$, applying the joint entropy theorem (\Cref{lemma:joint-entropy}) to \Cref{eq:implementation} yields
\begin{subequations}
\label{eq:encoding}
\begin{align}
    \S_2(\rho'_0)-\S_2(\rho'_1)
    &=\frac{1}{\Upsilon}\sum_{j=1}^J c_j\rbra*{\S_2(\varrho)-\S_2(\varsigma_j)}-\frac{\tau}{\Upsilon}\\
    &=\frac{1}{\Upsilon}\sum_{j=1}^J c_j\rbra*{ \S_2(\rho_+)- \frac{\S_2(\rho_{0,\lambda_j}) + \S_2(\rho_{1,\lambda_j})}{2} } -\frac{\tau}{\Upsilon}\\
    &=\frac{1}{\Upsilon}\sum_{j=1}^J c_j\QJS_2(\rho_{0,\lambda_j},\rho_{1,\lambda_j})-\frac{\tau}{\Upsilon}\enspace.
\end{align}
\end{subequations}
Here, exchanging $\varrho$ and $\varsigma_j$ when $c_j<0$ gives the signed coefficient $c_j/\Upsilon$, while the threshold branches contribute $-\tau/\Upsilon$ and the remainder branches contribute zero in the first line. The second line again uses \Cref{lemma:joint-entropy} and $(\rho_{0,\lambda_j}+\rho_{1,\lambda_j})/2=\rho_+$.

Following \Cref{eq:encoding}, to analyze the entropy difference $\S_2(\rho'_0)-\S_2(\rho'_1)$, it suffices to analyze the quantity $\sum_{j=1}^J c_j\QJS_2(\rho_{0,\lambda_j},\rho_{1,\lambda_j})-\tau$. 
To this end, using the error bound in \Cref{eq:shared-approximation}, the choice of $\tau$, and $6\varepsilon<\Delta$, we obtain
\begin{itemize}
    \item For \emph{yes} instances, $\TD(\rho_0,\rho_1) \geq a(n)$ implies that 
    \[ \frac{1}{\Upsilon}\rbra[\Bigg]{ \sum_{j=1}^J c_j\QJS_2(\rho_{0,\lambda_j},\rho_{1,\lambda_j})-\tau }
    \geq\frac{\TD(\rho_0,\rho_1)-\varepsilon-\tau}{\Upsilon}
    \geq\frac{\Delta/2-2\varepsilon}{\Upsilon} > \frac{\varepsilon}{\Upsilon} \eqqcolon g\enspace.
    \]
    Here, the second inequality follows from $\TD(\rho_0,\rho_1)-\varepsilon-\tau \geq a-\varepsilon-\rbra[\big]{\frac{a+b}{2}+\varepsilon} = \frac{\Delta}{2}-2\varepsilon$. 

    \item For \emph{no} instances, $\TD(\rho_0,\rho_1) \leq b(n)$ implies that 
    \[ \frac{1}{\Upsilon}\rbra[\Bigg]{ \sum_{j=1}^J c_j\QJS_2(\rho_{0,\lambda_j},\rho_{1,\lambda_j})-\tau } 
    \leq\frac{\TD(\rho_0,\rho_1)+\varepsilon-\tau}{\Upsilon}
    \leq\frac{-\Delta/2+2\varepsilon}{\Upsilon} < -\frac{\varepsilon}{\Upsilon}\enspace.
    \]
    Here, the second inequality uses $\TD(\rho_0,\rho_1)+\varepsilon-\tau \leq b+\varepsilon-\rbra[\big]{\frac{a+b}{2}-\varepsilon} = -\frac{\Delta}{2}+2\varepsilon$. 
\end{itemize}
Consequently, we have $\log(1/g) = 1+\ceil*{496\beta/\varepsilon}+\log\frac{1}\varepsilon=O(\log n)$, where the last equality follows from $1/\varepsilon\leq16/\Delta=O(\log n)$. Therefore, we obtain the required quantum entropy gap bound $g(n)\geq n^{-O(1)}$, enabling the use of \Cref{lemma:QED-in-QSZK}.

\vspace{1em}
It remains to check the computational efficiency of the reduction. Each term circuit in \Cref{table:polarization-newQ0,table:polarization-newQ1} uses $O(1)$ copies of $Q_0,Q_1$ and $O(r+\log J)$ additional gates. The final application of \Cref{lemma:dyadic-linear-combi-states} combines $J+2=O(1/\varepsilon)$ terms with $L=\log(2\Upsilon J^2)=O(1/\varepsilon)$.
Since the resulting dyadic coefficients are computable in polynomial time in $n$ and $1/\varepsilon$ and have $O(1/\varepsilon)$-bit representations, both the time required to compute the circuit descriptions and the resulting circuit sizes are $\poly(n,1/\varepsilon)=\poly(n,1/\Delta)$, which completes the proof. 
\end{proof}


\subsection{Answer compression from \texorpdfstring{$\QIP_{\log\text{-}\bit}(2)$}{} to \texorpdfstring{\QIPbit{}}{} via \QSD{}}
\label{subsec:answer-compression-QIPlbit}

To establish \Cref{thm:QIPlbit-answer-compression}, since $\QSD$ is in $\QIPbit$ (cf.~\cite[Section 4.2]{Watrous02}, see also \Cref{thm:QSD-is-QIPbit-Complete}\ref{thmitem:QSD-containment}), it suffices to prove that $\QIP_{\log\text{-}\bit}(2)$ can be reduced to $\QSD$ when the completeness and soundness parameters are sufficiently separated:\footnote{$a$, $b$, and $\ell$ are shifted for the same reason explained in \Cref{footnote:shifted-number}.}

\begin{theorem}[Answer compression for $\QIP_{\ell\text{-}\bit}$ via \QSD{}]\label{thm:answer-compression-QSD}
Let $\ell(n)$ be an efficiently computable positive integer-valued function such that $\ell(n) = O(\log n)$. Let $a, b \colon \Naturals \to [0, 1]$ be efficiently computable functions such that $0 \leq b(n) < a(n) \leq 1$ and $a > \rbra*{1 + 2^{\ell/2}}b/2$ for every $n \in \Naturals$. Then there exists a constant $c$ such that 
\[ \QSD\sbra*{\frac{a'}{2^{\ell'}}, \frac{1 + 2^{\ell'/2}}{2} \cdot \frac{b'}{2^{\ell'}}} \text{ is } \QIP_{\ell\text{-}\bit}\sbra*{2, a, b}\text{-hard}\enspace,\] 
where $a'$, $b'$, and $\ell'$ are efficiently computable functions satisfying $a'(n) = a(n - c)$, $b'(n) = b(n - c)$, and $\ell'(n) = \ell(n - c)$ for every $n \geq c$.
\end{theorem}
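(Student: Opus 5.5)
We follow the construction sketched in the introduction. Fix a verifier $V=(V_1,V_2)$ for $\calI\in\QIP_{\ell\text{-}\bit}[2,a,b]$. By \Cref{lemma:optimal-acceptance-classical-response}, $\omega\coloneqq\omega(V(x))=\max_{\{M_r\}}\sum_r\Tr(M_rX_{x,r})$ with subnormalized $X_{x,r}$ on $\sfM$. Let $p_{x,r}\coloneqq\Tr(X_{x,r})$ and $p_x\coloneqq 2^{-\ell}\sum_r p_{x,r}$. For $b\in\binset$ define
\[
\rho_b\coloneqq\frac{1}{2^{2\ell}}\ketbra{0}{0}\otimes\sum_{u,v,r:\,h_{u,v}(r)=b}\ketbra{u,v}{u,v}\otimes X_{x,r}+(1-p_x)\ketbra{1}{1}\otimes\ketbra{\bar 0}{\bar 0},
\]
where $h_{u,v}(r)=\innerprodF{u}{r}\oplus v$. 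Each $\rho_b$ has trace $2^{-2\ell}\cdot 2^{\ell}\sum_r p_{x,r}+1-p_x=1$, since each $(u,r)$ has exactly one $v$ with $h_{u,v}(r)=b$. The circuit $Q_b$ samples $u,r$ uniformly and sets $v\coloneqq\innerprodF{u}{r}\oplus b$. It then runs $V_1$, followed by $V_2$ with response $r$. On acceptance it outputs flag $0$ together with $(u,v)$ and $\sfM$. On rejection it outputs flag $1$ and all-zero registers. This realizes $\rho_b$, since the rejection branch has weight $2^{-\ell}\sum_r(1-p_{x,r})=1-p_x$. The description length is $|x|+O(1)$, which accounts for the shifts $a',b',\ell'$.

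Next we compute the trace distance. The flag-$1$ blocks cancel, and the flag-$0$ block is block-diagonal in $u$. This gives
\[
\TD(\rho_0,\rho_1)=\frac{1}{2^{2\ell+1}}\sum_{u}\Big\|\sum_r(-1)^{\innerprodF{u}{r}}X_{x,r}\Big\|_1\cdot 2\cdot\tfrac12\cdot 2,
\]
where we track the constants carefully. For fixed $u$, the two $v$-values contribute $\pm Y_u$ with $Y_u\coloneqq\sum_r(-1)^{\innerprodF{u}{r}}X_{x,r}$. So $\TD(\rho_0,\rho_1)=2^{-2\ell}\sum_u\|Y_u\|_1$, which is up to normalization $2\max_{\{N_b\}}\Pr[\text{guess }h_{u,v}(r)]-(\text{baseline})$.

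\textbf{Lower bound (yes instances).} Take an optimal POVM $\{M_r\}$ and the Hermitian contraction $H_u=\sum_r(-1)^{\innerprodF{u}{r}}M_r$, so that $\|Y_u\|_1\ge\Tr(H_uY_u)=\sum_{r,r'}(-1)^{\innerprodF{u}{r\oplus r'}}\Tr(M_{r'}X_{x,r})$. Averaging over $u$ kills the off-diagonal terms $r\neq r'$, since $\sum_u(-1)^{\innerprodF{u}{s}}=2^\ell\delta_{s,0}$. Hence $\sum_u\|Y_u\|_1\ge 2^\ell\omega$, and therefore $\TD\ge\omega/2^\ell$.

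\textbf{Upper bound (no instances).} This is the main obstacle: we must show $2^{-2\ell}\sum_u\|Y_u\|_1\le\frac{1+2^{\ell/2}}{2^{\ell+1}}\omega$. I would mimic the leftover-hash argument of \cite[Theorem~6]{TSSR11}. Split $\sum_u\|Y_u\|_1=\|Y_0\|_1+\sum_{u\ne0}\|Y_u\|_1$, where $\|Y_0\|_1=\sum_r p_{x,r}\le\omega$. For the $u\ne0$ terms, take any $\sigma$ with $I\otimes\sigma\succeq\sum_r\ketbra{r}{r}\otimes X_{x,r}$ and $\Tr\sigma=\omega$. This $\sigma$ is the optimal dual, by SDP duality for the guessing probability. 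Then use the weighted Cauchy--Schwarz/Hölder bound
\[
\|Y_u\|_1\le\sqrt{\Tr\sigma}\,\big\|\sigma^{-1/4}Y_u\sigma^{-1/4}\big\|_2 .
\]
Summing over $u$ and applying Cauchy--Schwarz over $2^\ell$ terms gives
\[
\sum_u\|\sigma^{-1/4}Y_u\sigma^{-1/4}\|_2^2=2^\ell\sum_r\Tr\big((\sigma^{-1/4}X_{x,r}\sigma^{-1/4})^2\big)\le 2^\ell\sum_r\Tr(X_{x,r})\le 2^\ell\omega,
\]
where the middle equality is Parseval and the inequality uses $X_{x,r}\preceq\sigma$. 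Together these give $\sum_{u}\|Y_u\|_1\le\sqrt{2^\ell\omega\cdot 2^\ell\omega}=2^\ell\omega$. This is already comparable to the bound, but handling the $u=0$ term separately is what yields the factor $(1+2^{\ell/2})/2$. I expect the exact constant to require tuning which terms are bounded by which estimate.

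Finally, plugging $\omega\ge a$ and $\omega\le b$ into the two bounds gives the stated $\QSD$ thresholds. The hypothesis $a>(1+2^{\ell/2})b/2$ guarantees that these thresholds are ordered correctly.
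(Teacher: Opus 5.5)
\textbf{There is a genuine gap in the soundness half.} Your construction of $\rho_0,\rho_1$, the identity $\TD(\rho_0,\rho_1)=2^{-2\ell}\sum_u\|Y_u\|_1$, and the completeness bound $\TD(\rho_0,\rho_1)\ge\omega/2^\ell$ are all correct. The completeness bound via the contractions $H_u=\sum_r(-1)^{\innerprodF{u}{r}}M_r$ is a tidier version of the paper's explicit two-outcome measurement.

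The gap is that you conflate $\sum_r p_{x,r}=2^\ell p_x$ with $p_x$. Both ``$\|Y_0\|_1=\sum_r p_{x,r}\le\omega$'' and ``$2^\ell\sum_r\Tr(X_{x,r})\le 2^\ell\omega$'' are false; only $p_x\le\omega$ holds, by uniform guessing. So your intermediate conclusion $\sum_u\|Y_u\|_1\le 2^\ell\omega$, i.e.\ $\TD(\rho_0,\rho_1)\le\omega/2^\ell$, is false. It would have turned your lower bound into an equality, which should have been a red flag.

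Here is a concrete counterexample. Take $\ell=2$, a one-dimensional $\sfM$, and a verifier that accepts with probability $b$ on responses $00,01,10$ and rejects on $11$. Then $\omega=b$ and $(Y_{00},Y_{01},Y_{10},Y_{11})=(3b,b,b,-b)$, so $\TD(\rho_0,\rho_1)=6b/16=3b/8>b/4$. Moreover, $3b/8$ is exactly the theorem's bound $\frac{1+2^{\ell/2}}{2^{\ell+1}}b$, so there is no slack to absorb the error.

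With the bookkeeping corrected, your unsplit chain gives only $\sum_u\|Y_u\|_1\le 2^{3\ell/2}\sqrt{\omega p_x}$, i.e.\ $\TD\le 2^{-\ell/2}\omega$. This misses the required $\frac{2^{-\ell}+2^{-\ell/2}}{2}\omega$ by a factor tending to $2$. The ``tuning'' you defer is precisely the missing argument.

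\textbf{The missing idea and how to fix it.} You need to remove the $u=0$ term on the Hilbert--Schmidt side as well, so that the $u\neq0$ contribution vanishes when $p_x=\omega$. Your dual weight $\sigma$ (with $\sigma\succeq X_{x,r}$ for all $r$ and $\Tr\sigma=\omega$) actually suffices.

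Cauchy--Schwarz gives $2^\ell p_x=\Tr Y_0\le\sqrt{\omega}\,\|\sigma^{-1/4}Y_0\sigma^{-1/4}\|_2$. Combining this with Parseval,
\[
\sum_{u\neq0}\|\sigma^{-1/4}Y_u\sigma^{-1/4}\|_2^2 = 2^\ell\sum_r\|\sigma^{-1/4}X_{x,r}\sigma^{-1/4}\|_2^2-\|\sigma^{-1/4}Y_0\sigma^{-1/4}\|_2^2 \le 2^{2\ell}p_x-\frac{2^{2\ell}p_x^2}{\omega}.
\]
Cauchy--Schwarz over the $2^\ell-1$ nonzero $u$ then gives $\sum_{u\neq0}\|Y_u\|_1\le 2^\ell\sqrt{(2^\ell-1)p_x(\omega-p_x)}$.

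Add $\|Y_0\|_1=2^\ell p_x$ and maximize $y+\sqrt{(2^\ell-1)y(1-y)}$ over $y=p_x/\omega\in[0,1]$; the maximum is $(1+2^{\ell/2})/2$. This yields $\TD(\rho_0,\rho_1)\le\frac{1+2^{\ell/2}}{2^{\ell+1}}\omega$, as required.

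The paper reaches the same inequality by a different weight. It applies Renner's lemma with $\overline{X}_x=2^{-\ell}\sum_s X_{x,s}$, which has trace $p_x$. It then identifies the $u\neq0$ weighted Hilbert--Schmidt sum, up to normalization, with the pretty-good-measurement success probability minus $p_x$, which is at most $b-p_x$.
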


\begin{proof}
Consider a promise problem $\calI = (\calI_{\yes}, \calI_{\no}) \in \QIP_{\ell\text{-}\bit}{[2, a, b]}$. Let $V(x) = (V_1(x), V_2(x))$ be the verifier actions in the $\QIP_{\ell\text{-}\bit}{[2, a, b]}$ protocol.
By \Cref{lemma:optimal-acceptance-classical-response}, we obtain
\begin{align}\label{eqn:optimal-acceptance-prob-multiple-2}
    \omega(V(x)) \coloneq \max_{P^*}\prob{(\protocol{P^*}{V})(x)\text{ accepts}} =\max_{\textnormal{POVM }\{\Pi_s\}_{s\in\binset^\ell}}\sum_{s\in\binset^\ell}\Tr\rbra*{\Pi_sX_{x,s}}\enspace,
\end{align}
where $X_{x,s}\coloneq\Tr_{\sfM'\sfW}\rbra*{\ketbra{1}{1}_{\sfO}V_2(x)_{\sfM'\sfW}\rbra*{\ketbra{s}{s}_{\sfM'}\otimes \ketbra{\psi_x}{\psi_x}_{\sfM\sfW}}V_2(x)^\dagger_{\sfM'\sfW}}$ is a subnormalized state on $\sfM$ as defined in \Cref{eqn:subnormalized-state-X_xs} for $s \in \binset^\ell$ and the state $\ket{\psi_x} \coloneq V_1(x)\ket{\bar{0}}_{\sfM\sfW}$ is the joint state of $V$'s private register and what $V$ sends. 
We set 
\[ p_{x, s} \coloneq \Tr(X_{x, s}) \quad\text{and}\quad p_x \coloneq \frac{1}{2^\ell}\sum_{s \in \binset^\ell}p_{x, s}\enspace. \]

To construct the $\QSD$ instance, we use hash functions $\{h_{u, v}\}_{u\in \binset^\ell, v \in \binset}$ where for each $x \in \binset^\ell$, $h_{u, v}(x) = \innerprodF{u}{x} \oplus v$.\footnote{This is the standard affine inner-product pairwise independent hash family; see, e.g., \cite[Claim~2.4]{LW05}.} We define two states $\sigma_0$ and $\sigma_1$ before showing that they can be efficiently generated by circuits $(Q_0, Q_1)$, which will be our \QSD{} instance: for $b \in \binset$,
\begin{align*}
\sigma_{x, b} &\coloneq \frac{1}{2^{2\ell}}\sum_{u \in \binset^\ell, v \in \binset}\ketbra{0, u, v}{0, u, v} \otimes \sum_{s \in \binset^\ell \text{ s.t. }h_{u, v}(s) = b}X_{x, s} + (1 - p_x)\ketbra{1, 0^\ell, 0, \bar{0}}{1, 0^\ell, 0, \bar{0}}\enspace,
\end{align*}
which can be rewritten as
\begin{align*}
\sigma_{x, b} &= \frac{1}{2^{2\ell}}\sum_{u \in \binset^\ell, s \in \binset^\ell}\ketbra{0, u, \innerprodF{u}{s} \oplus b}{0, u, \innerprodF{u}{s} \oplus b} \otimes X_{x, s} + (1 - p_x)\ketbra{1, 0^\ell, 0, \bar{0}}{1, 0^\ell, 0, \bar{0}}\enspace.
\end{align*}

Then as $0 \preceq X_{x, s}$ and $p_{x, s} \leq 1$, we can get that $\sigma_{x, b} \succeq 0$ and $\Tr\rbra*{\sigma_{x, b}} = 1$. Moreover, $\sigma_{x, b}$ can be generated efficiently by the following circuit $Q_{x, b}$:

\begingroup
\LinesNumbered
\begin{algorithm}[H]
    \SetAlgorithmName{Circuit}{Algorithm}{List of Algorithms}
    \caption{Quantum circuit $Q_{x, b}$ for preparing $\sigma_{x, b}$.}
	\label{circuit:easy-regime-for-QIPtwol}
    \SetEndCharOfAlgoLine{.}
    \setlength{\parskip}{5pt}
    \SetKwFor{While}{}{:}{}
    \SetKwFor{For}{For}{:}{}
    \SetKwIF{If}{ElseIf}{Else}{If}{:}{elif}{Else:}{}%
    \SetKwComment{Comment}{// }{}
    \SetKwInOut{Input}{Input}
    \SetKwInOut{Output}{Output}

    Prepare the mixed state $\frac{1}{2^{2\ell}}\sum_{u \in \binset^\ell, s \in \binset^\ell}\ketbra{0, u, s}{0, u, s}$ on registers $(\sfA, \sfB, \sfM')$.

    Compute $\innerprodF{u}{s} \oplus b$ coherently in the register $\sfC$.

    Prepare the state $\ket{\psi_x}$ by applying $V_1(x)$ on $\ket{\bar{0}}_{\sfM\sfW}$.

    Apply $V_2(x)$ on the registers $(\sfM', \sfW)$ to obtain the result on a register $\sfO$.

    \medskip
    {\color{gray}\Comment{Preparing the branch $\frac{1}{2^{2\ell}}\sum_{u \in \binset^\ell, s \in \binset^\ell}\ketbra{0, u, \innerprodF{u}{s} \oplus b}{0, u, \innerprodF{u}{s} \oplus b} \otimes X_{x, s}$}}

    Controlled on $\sfO = 1$, return the registers $(\sfA, \sfB, \sfC, \sfM)$.

    \medskip
    {\color{gray}\Comment{Preparing the branch $(1 - p_x)\ketbra{1, 0^\ell, 0, \bar{0}}{1, 0^\ell, 0, \bar{0}}$}}

    Controlled on $\sfO = 0$, prepare the state $\ket{1, 0^\ell, 0, \bar{0}}$ on the registers $(\sfA_1, \sfB_1, \sfC_1, \sfM_1)$.
    
    Swap the registers $(\sfA_1, \sfB_1, \sfC_1, \sfM_1)$ and the registers $(\sfA, \sfB, \sfC, \sfM)$.
    
    Return the registers $(\sfA, \sfB, \sfC, \sfM)$.

\end{algorithm}
\endgroup

Notice that with the description of $V$, the length of the prover message $\ell$ and the length of the verifier message are efficiently computable given the instance $x$. As we only consider uniform verifier $V$, there exists a constant $c$ such that the pair of circuits $(Q_{x, 0}, Q_{x, 1})$ has description length $n' = n + c$ where $n \coloneq \abs{x}$ is the input length. Let $a'$, $b'$, and $\ell'$ be efficiently computable functions such that $a'(n) = a(n - c)$, $b'(n) = b(n - c)$, and $\ell'(n) = \ell(n - c)$ as described in the theorem statement. Moreover, both $Q_{x, 0}$ and $Q_{x, 1}$ are polynomial-size quantum circuits because $V$ runs in polynomial time.

It remains to show that the map from $x$ to $(Q_{x, 0}, Q_{x, 1})$ reduces the problem $\QIP_{\ell\text{-}\bit}\sbra*{2, a, b}$ to the problem $\QSD\sbra*{\frac{a'}{2^{\ell'}}, \frac{1 + 2^{\ell'/2}}{2} \cdot \frac{b'}{2^{\ell'}}}$.

\paragraph{Completeness.}
Intuitively, a \emph{yes} instance $x$ is mapped to a \emph{yes} instance $(Q_{x, 0}, Q_{x, 1})$ because $\omega(V(x)) \geq a$ implies that there is a way to distinguish $\{X_{x, s}\}_{s \in \binset^\ell}$, and thus we can guess $\innerprodF{u}{s}$ with high probability given a string $u \in \binset^\ell$, which enables us to distinguish between $\sigma_{x, 0}$ and $\sigma_{x, 1}$.

Formally, by \Cref{eqn:optimal-acceptance-prob-multiple-2}, for a \emph{yes} instance $x \in \GetYes{\PromiseProblemI}$, there exists a POVM $\{\Pi^*_s\}_{s \in \binset^\ell}$ such that 
\begin{align}\label{eqn:condition-for-yes-instance-QIP-logbit}
    \sum_{s \in \binset^\ell}\Tr(\Pi^*_s X_{x, s}) \geq a(n)\enspace.
\end{align}

We define a POVM $\{M_0, M_1\}$ where $M_0 \coloneq \sum_{u, s \in \binset^\ell}\ketbra{0, u, \innerprodF{u}{s}}{0, u, \innerprodF{u}{s}}_{\sfA\sfB\sfC} \otimes (\Pi^*_s)_{\sfM}$ and $M_1 \coloneq \sum_{u, s \in \binset^\ell}\ketbra{0, u, \innerprodF{u}{s} \oplus 1}{0, u, \innerprodF{u}{s} \oplus 1}_{\sfA\sfB\sfC} \otimes (\Pi^*_s)_{\sfM} + \ketbra{1}{1}_{\sfA} \otimes I_{\sfB\sfC\sfM}$. We show that $\{M_0, M_1\}$ can distinguish between $\sigma_{x, 0}$ and $\sigma_{x, 1}$ with high probability:
\begin{align*}
&\Tr\rbra*{\sigma_{x, 0} M_0} + \Tr\rbra*{\sigma_{x, 1} M_1}\\
=& \frac{1}{2^{2\ell}}\sum_{u, u', s, s' \in \binset^\ell}\Tr\rbra*{\rbra*{\ketbra{0, u, \innerprodF{u}{s}}{0, u, \innerprodF{u}{s}} \otimes X_{x, s}}\rbra*{ \ketbra{0, u', \innerprodF{u'}{s'}}{0, u', \innerprodF{u'}{s'}} \otimes \Pi^*_{s'}}} + 1 - p_x\\
&+ \frac{1}{2^{2\ell}}\sum_{u, u', s, s' \in \binset^\ell}\Tr\rbra*{\rbra*{\ketbra{0, u, \innerprodF{u}{s} \oplus 1}{0, u, \innerprodF{u}{s} \oplus 1}\otimes X_{x, s}} \rbra*{ \ketbra{0, u', \innerprodF{u'}{s'} \oplus 1}{0, u', \innerprodF{u'}{s'} \oplus 1} \otimes \Pi^*_{s'}}}\\
=& \frac{1}{2^{2\ell}}\sum_{\substack{u \in \binset^\ell, s, u', s' \in \binset^\ell, b \in \binset \\ \text{s.t. } u = u' \land \innerprodF{u}{s} \oplus b =  \innerprodF{u'}{s'} \oplus b}}\Tr\rbra*{\Pi^*_{s'}X_{x, s}} + 1 - p_x\\
=& \frac{1}{2^{2\ell}} \cdot 2^\ell \cdot 2\sum_{s \in \binset^\ell}\Tr\rbra*{\Pi^*_{s}X_{x, s}} + \frac{1}{2^{2\ell}}\sum_{\substack{u \in \binset^\ell, s, u', s' \in \binset^\ell, b \in \binset \\ \text{s.t. } s \neq s' \land u = u' \land \innerprodF{u}{s} \oplus b =  \innerprodF{u'}{s'} \oplus b}}\Tr\rbra*{\Pi^*_{s'}X_{x, s}} + 1 - p_x\\
=& \frac{1}{2^{\ell - 1}}\sum_{s \in \binset^\ell}\Tr\rbra*{\Pi^*_{s}X_{x, s}} + \frac{1}{2^{2\ell}} \cdot 2^{\ell - 1} \cdot 2\sum_{\substack{s, s' \in \binset^\ell\\ \text{s.t. } s \ne s'}}\Tr\rbra*{\Pi^*_{s'}X_{x, s}} + 1 - p_x\\
=& \frac{1}{2^{\ell - 1}}\sum_{s \in \binset^\ell}\Tr\rbra*{\Pi^*_{s}X_{x, s}} + \frac{1}{2^{\ell}} \sum_{s \in \binset^\ell}\Tr\rbra*{(I - \Pi^*_{s})X_{x, s}} + 1 - p_x\\
=& \frac{1}{2^{\ell}}\sum_{s \in \binset^\ell}\Tr\rbra*{\Pi^*_{s}X_{x, s}} + \frac{1}{2^{\ell}}\sum_{s \in \binset^\ell}\Tr\rbra*{X_{x, s}} + 1 - p_x\\
\geq & \frac{1}{2^{\ell}}a(n) + 1\enspace,
\end{align*}
where we split the summation according to whether $s$ equals $s'$ in the fifth line, and the sixth line follows from the fact that there are exactly $2^{\ell - 1}$ strings $u \in \binset^\ell$ such that $\innerprodF{u}{s} \oplus b =  \innerprodF{u}{s'} \oplus b$ for $s \ne s'$, and in the last line, we plug in the definition of $p_x$ and use \Cref{eqn:condition-for-yes-instance-QIP-logbit}.

By the operational interpretation of trace distance (i.e., the Holevo--Helstrom bound~\cite{Holevo73TraceDist,Helstrom69}), it follows that
\begin{align*}
    \TD\rbra*{\sigma_{x, 0}, \sigma_{x, 1}} \geq& \Tr\rbra*{\sigma_{x, 0} M_0} - \Tr\rbra*{\sigma_{x, 1} M_0}\\
    =& \Tr\rbra*{\sigma_{x, 0} M_0} + \Tr\rbra*{\sigma_{x, 1} M_1} - 1\\
    \geq& \frac{a(n)}{2^{\ell(n)}} = \frac{a'(n')}{2^{\ell'(n')}}\enspace,
\end{align*}
which implies that $(Q_{x, 0}, Q_{x, 1})$ is a \emph{yes} instance for $\QSD\sbra*{\frac{a'}{2^{\ell'}}, \frac{1 + 2^{\ell'/2}}{2} \cdot \frac{b'}{2^{\ell'}}}$.

\paragraph{Soundness.} 
Intuitively, the hash family $\{h_{u, v}\}_{u\in \binset^\ell, v \in \binset}$ is a standard pairwise-independent hash family, and thus it is a strong quantum randomness extractor. Informally, with strong quantum randomness extractors, one can extract nearly uniform randomness $h_{u,v}(s)$ from an imperfect randomness source $s$, even when $X_{x, s}$ and the hash seed $(u,v)$ are leaked to the adversary, as long as it is impossible to guess $s$ from $X_{x, s}$ with high probability. This intuition can be formalized by comparing the two states $\varsigma$ and $\varsigma'$ in \Cref{lemma:extractor-works}, where $\varsigma$ is the joint state of a uniformly random bit and the adversary's state, $\varsigma'$ is the joint state of the hash result $h_{u, v}(s)$ and the adversary's state, and both states are normalized with a dummy state $\ket{1, 0, 0^\ell, 0, \bar{0}}$. 

\begin{restatable}[The hashed state index is almost uniformly random]{lemma}{extractorWorks}
\label{lemma:extractor-works}
Let $\{h_{u, v}\}_{u \in \binset^\ell, v \in \binset}$, $p_x$, and $\{X_{x, s}\}_{s \in \binset^\ell}$ be the functions, the number, and the subnormalized quantum states defined above in the proof of \Cref{thm:answer-compression-QSD}.
Define quantum states $\varsigma$ and $\varsigma'$ as below:
\begin{align*}
    \varsigma &\coloneq \frac{1}{2^{2\ell + 2}}\sum_{u, s \in \binset^\ell, v, b \in \binset} \ketbra{0}{0} \otimes \ketbra{b}{b} \otimes \ketbra{u, v}{u, v} \otimes X_{x, s} + (1 - p_x) \ketbra{1, 0, 0^\ell, 0, \bar{0}}{1, 0, 0^\ell, 0, \bar{0}}\enspace,\\
    \varsigma' &\coloneq \frac{1}{2^{2\ell + 1}}\sum_{u, s \in \binset^\ell, v \in \binset}\ketbra{0}{0} \otimes \ketbra{h_{u, v}(s)}{h_{u, v}(s)} \otimes \ketbra{u, v}{u, v} \otimes X_{x, s}  + (1 - p_x) \ketbra{1, 0, 0^\ell, 0, \bar{0}}{1, 0, 0^\ell, 0, \bar{0}}\enspace.  
\end{align*} 
If $\max\limits_{\{M^*_s\}_{s \in \binset^\ell}} \sum_{s \in \binset^\ell}\Tr\rbra*{M^*_s X_{x, s}} \leq b$, then the trace distance between $\varsigma$ and $\varsigma'$ is bounded by
\[\TD\rbra*{\varsigma, \varsigma'} \leq \frac{1 + 2^{\ell/2}}{2^{\ell + 2}} \cdot b\enspace.\]
\end{restatable}

The proof of \Cref{lemma:extractor-works} is deferred to \Cref{subsection:proof-of-extractor}. It remains to show the soundness using \Cref{lemma:extractor-works}.

We relate the difference between $\varsigma$ and $\varsigma'$ to the difference between $\sigma_0$ and $\sigma_1$:\footnote{Strictly speaking, we identify tensor products that differ only by the canonical permutation of their classical registers. Equivalently, the equality below holds up to conjugation by the unitary that reorders these registers.}
\begin{align*}
    \varsigma - \varsigma' =& \frac{1}{2^{2\ell + 2}}\sum_{u, s \in \binset^\ell, v, b \in \binset} (-1)^{h_{u, v}(s) + b + 1} \ketbra{0, b}{0, b} \otimes \ketbra{u, v}{u, v} \otimes X_{x, s}\\
    =&\ketbra{0}{0} \otimes \rbra*{\ketbra{1}{1} - \ketbra{0}{0}} \otimes \frac{1}{2^{2\ell + 2}}\sum_{u \in \binset^\ell, v \in \binset} \ketbra{u, v}{u, v} \otimes \sum_{s \in \binset^\ell}(-1)^{h_{u, v}(s)}X_{x, s}\\
    =& \frac{1}{4}\rbra*{\ketbra{1}{1} - \ketbra{0}{0}}\otimes (\sigma_{x, 0} - \sigma_{x, 1})\enspace,
\end{align*}
which implies that
\begin{align}\label{eqn:TD-relation}
    \TD\rbra*{\varsigma, \varsigma'} = \frac{1}{4} \cdot 2 \cdot \TD\rbra*{\sigma_{x, 0}, \sigma_{x, 1}} = \frac{1}{2}\TD\rbra*{\sigma_{x, 0}, \sigma_{x, 1}}\enspace.
\end{align}

For a \emph{no} instance $x \in \GetNo{\PromiseProblemI}$, $\omega(V(x)) \leq b(n)$. By \Cref{eqn:TD-relation,eqn:optimal-acceptance-prob-multiple-2,lemma:extractor-works},
\begin{align*}
    \TD\rbra*{\sigma_{x, 0}, \sigma_{x, 1}} = 2 \cdot \TD\rbra*{\varsigma, \varsigma'} \leq \frac{1 + 2^{\ell(n)/2}}{2^{\ell(n) + 1}} \cdot b(n) = \frac{1 + 2^{\ell'(n')/2}}{2^{\ell'(n') + 1}} \cdot b'(n')\enspace,
\end{align*}
which implies that $(Q_{x, 0}, Q_{x, 1})$ is a \emph{no} instance for $\QSD\sbra*{\frac{a'}{2^{\ell'}}, \frac{1 + 2^{\ell'/2}}{2} \cdot \frac{b'}{2^{\ell'}}}$.
\end{proof}

\section{Public coins weaken quantum proof systems with a laconic prover}

In this section, we extend the phenomenon that \emph{classical public coins} make interaction useless in laconic interactive proof systems, where the prover sends $O(\log{n})$ bits in the classical setting~\cite{GH98} or $O(\log{n})$ qubits in the quantum setting, to \emph{quantum public coins}: 
\begin{itemize}
    \item In \Cref{subsec:qcQAM-complete}, we show that estimating the $2^\ell$-ary steering-game value, which is inspired by ``steering'' in quantum information theory and naturally aligns with two-message quantum public-coin proof systems with an $\ell$-bit prover response, is $\qcQAM[\ell]$-complete. 
    \item In \Cref{subsec:qcQAM[1]=BQP}, since the \emph{binary} steering-game value admits a closed-form expression~\cite[Section~II.C]{AM14}, we prove that $\qcQAM[1]$ with \emph{inverse-polynomial} gap, where the prover sends a single bit, collapses to \BQP{} (\Cref{thm:qcQAM[1]=BQP}).
    \item In \Cref{subsec:qcQAM[sqrt(logn)]=BQP}, we establish that $\qcQAM[O(\sqrt{n})]$ with \emph{constant} gap, where the prover sends $O(\sqrt{n})$ bits, collapses to \BQP{} (\Cref{thm:qcQAM[sqrt log n]=BQP}). 
\end{itemize}


\subsection{A complete problem for \texorpdfstring{$\qcQAM[\ell]$}{} via \texorpdfstring{$2^\ell$}{}-ary steering-game values}
\label{subsec:qcQAM-complete}

The steering-game value defined in \Cref{def:steering-prob} is a restricted and rescaled version of the optimal success quantity in \emph{generalized
state discrimination}~\cite[Equation~(14)]{AM14},
\begin{equation}
    \label{eq:generalized-state-discrimination}
    \max_{\text{POVM} \cbra{E_r}} \sum_{r\in\binset^\ell} \Tr(A_r E_r)\enspace,
\end{equation}
where the operators $\cbra{A_r}$ are allowed to be arbitrary positive semidefinite operators. 
One can recover the optimal success probability of the quantum state discrimination used in \MultiQSD{} (\Cref{def:MultiQSD}) by setting $A_r = \rho_r/2^{\ell}$. To obtain the steering-game value from \Cref{eq:generalized-state-discrimination}, one restricts $A_r$ from PSD operators to $0 \preceq A_r \preceq I$ and sets $E_r=\sigma_r d_\sfR$ for each $r\in\binset^\ell$.

\begin{definition}[Steering-game value problem, \SteerVal{}]
    \label{def:steering-prob}
    In a steering game, Alice and Bob share $m$ EPR pairs, with Bob's halves stored in the register $\sfR$. Alice performs an arbitrary POVM  $\cbra{M_r}_{r\in\binset^\ell}$ on her halves and sends the outcome $r$, stored in the register $\sfI$, to Bob. Bob applies a quantum circuit $Q$ of description length $n$ to $(\sfI,\sfA,\sfR)$, with the register $\sfA$ initialized to $\ket{\bar{0}}$, and measures the output qubit in the computational basis. Alice wins if the outcome is $1$.
    
    \noindent Let $Q_r$ denote $Q$ with $\sfI$ initialized to $\ket{r}$,
    and let $\Pi_r \coloneqq \rbra{\bra{\bar{0}}_{\sfA}\otimes I_\sfR} Q_r^\dagger \ketbra{1}{1}_\Out Q_r \rbra{\ket{\bar{0}}_\sfA\otimes I_\sfR}$ be the induced acceptance effect on $\sfR$. The \emph{$2^\ell$-ary steering-game value} $\MSV{\ell}{Q}$ is the maximum winning probability over Alice's measurements. Equivalently,
    \[ \MSV{\ell}{Q} \coloneqq \max_{\substack{\sigma_0,\cdots,\sigma_{2^\ell-1}\succeq 0\\ \sum_{r\in\binset^\ell} \sigma_r = I_\sfR/d_\sfR}} \sum_{r\in\binset^\ell} \Tr(\Pi_r\sigma_r), \quad\text{where } d_\sfR\coloneqq\dim(\sfR)\enspace.\]

    \noindent Let $a(n)$ and $b(n)$ be efficiently computable functions such that $0 \leq b(n) < a(n) \leq 1$. The promise problem $\SteerVal[\ell(n),a(n),b(n)]$ asks to distinguish between the following two cases: 
    \begin{itemize}
        \item \emph{Yes}: The quantum circuit $Q$ satisfies $\MSV{\ell}{Q}\geq a(n)$;
        \item \emph{No}: The quantum circuit $Q$ satisfies $\MSV{\ell}{Q}\leq b(n)$.
    \end{itemize}
    When $\ell=1$, we denote the corresponding \emph{binary} version by $\BinSteerVal[a(n),b(n)]$, and write $\BSV{Q_0}{Q_1}$ for the binary steering-game value. 
\end{definition}

For convenience, from an information-theoretic perspective, we also adopt the notation $\MSV{\ell}{\Pi}$, where $\Pi \coloneqq \cbra{\Pi_r}_{r\in\binset^\ell}$, and $\BSV{\Pi_0}{\Pi_1}$ for the steering-game value and its binary version, respectively.

\begin{lemma}[$\SteerVal_{2^\ell}$ is {$\qcQAM[\ell]$}-complete]
    \label{lemma:SteerVal-qcQAM[l]-complete}
    For any efficiently computable functions $c(n)$ and $s(n)$ such that $0 \leq s(n) < c(n) \leq 1$,
    \[ \SteerVal[\ell(n),c(n),s(n)] \text{ is } \qcQAM[\ell(n),c(n),s(n)]\text{-complete}\enspace.\]
\end{lemma}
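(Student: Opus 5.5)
The plan is to prove both directions by identifying $\MSV{\ell}{Q}$ with the maximum acceptance probability of a natural $\qcQAM[\ell]$ proof system, and conversely. This is the same pattern as \Cref{lemma:optimal-acceptance-classical-response}, specialized to a verifier whose first message is halves of EPR pairs.

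\textbf{Containment.} Given an instance $Q$ of $\SteerVal[\ell,c,s]$ on registers $(\sfI,\sfA,\sfR)$ with $\sfR$ of $m$ qubits, the verifier $V_1$ prepares $m$ EPR pairs, keeps Bob's halves in $\sfV\coloneqq\sfR$, and sends the other halves. Upon receiving $r\in\binset^\ell$ in $\sfM$, the verifier $V_2$ copies $r$ into $\sfI$, runs $Q$ on $(\sfI,\sfA,\sfR)$ with $\sfA=\ket{\bar0}$, and accepts iff the output qubit is $1$. By \cite[Theorem~2.37]{Watrous18}, any prover sending an $\ell$-bit string is described by a POVM $\cbra{M_r}$ on the received halves, and every such POVM is realizable.

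The key computation is the steering identity for the maximally entangled state:
\[
\Tr_{\text{Alice}}\rbra*{(M_r\otimes I)\ketbra{\Phi_{2^m}}{\Phi_{2^m}}}=M_r^{T}/d_\sfR .
\]
Hence the acceptance probability equals $\sum_r\Tr(\Pi_r\,M_r^T/d_\sfR)$. The map $\cbra{M_r}\mapsto\cbra{\sigma_r=M_r^T/d_\sfR}$ is a bijection between POVMs and families with $\sigma_r\succeq0$ and $\sum_r\sigma_r=I/d_\sfR$, because transposition preserves positivity and fixes $I$. Therefore the maximum acceptance probability is exactly $\MSV{\ell}{Q}$, which gives completeness $c$ and soundness $s$.

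\textbf{Hardness.} Take $\calI\in\qcQAM[\ell,c,s]$ with verifier $(V_1,V_2)$. By \Cref{def:qcQAM}, $V_1$ only prepares $p(n)$ EPR pairs and stores the first halves in $\sfV$. Define $Q$ acting on $(\sfI,\sfA,\sfR)$ with $\sfR\coloneqq\sfV$ and $\sfI\coloneqq\sfM$, where $\sfA$ holds $V_2$'s remaining workspace, and let $Q$ run $V_2(x)$.

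The one subtlety is which half of each pair is Bob's. A swap of the two halves leaves $\ket{\Phi_{2^m}}$ invariant, so the shared state in the steering game coincides with the state after $V_1$. The same steering computation then shows that the acceptance probability of a prover with POVM $\cbra{M_r}$ is $\sum_r\Tr(\Pi_r M_r^T/d_\sfR)$, so $\omega(V(x))=\MSV{\ell}{Q}$. This makes the map $x\mapsto Q$ a Karp reduction that preserves $c$ and $s$.

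The description length of $Q$ is $|x|+O(1)$ for a uniform verifier. Following the paper's convention in \Cref{thm:QSD-is-QIPbit-hard}, this shift would be absorbed by reindexing $c,s,\ell$ as in \Cref{footnote:shifted-number}.

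\textbf{Main obstacle.} The main care point is the exact bijection between prover strategies and feasible $\cbra{\sigma_r}$, including the transpose convention and the placement of Alice's versus Bob's halves. I would handle it once via the identity $(M\otimes I)\ket{\Phi}=(I\otimes M^T)\ket{\Phi}$ and reuse it in both directions. The remaining steps (efficiency and bookkeeping of the input length) are routine.
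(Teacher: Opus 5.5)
Your proposal is correct and follows essentially the same route as the paper. For containment, both use the EPR-based protocol and the steering identity $\Tr_{\text{Alice}}\bigl((M_r\otimes I)\ketbra{\Phi}{\Phi}\bigr)=M_r^{T}/d_{\sfR}$, which gives a bijection between prover POVMs and feasible families $\cbra{\sigma_r}$; for hardness, both take $Q$ to be $V_2(x)$ with the response register serving as $\sfI$. Your remarks on the swap symmetry of the EPR pairs and on reindexing the constant shift in description length are harmless refinements that the paper leaves implicit.
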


\begin{proof}
    We first prove that $\SteerVal[\ell(n),c(n),s(n)] \in \qcQAM[\ell(n),c(n),s(n)]$, and then show that $\SteerVal[\ell(n),c(n),s(n)]$ is $\qcQAM[\ell(n),c(n),s(n)]$-hard.

    \parheading{The $\qcQAM[\ell]$ containment.}
    We start by presenting a simple $\qcQAM[\ell(n),c(n),s(n)]$ protocol for $\SteerVal[\ell(n),c(n),s(n)]$, as described in \Cref{protocol:SV-in-qcQAM[ell]}.

\begingroup
\LinesNotNumbered
\begin{algorithm}[!ht]
    \SetAlgorithmName{Protocol}{protocol}{List of Protocols}
    \caption{A $\qcQAM[\ell]$ protocol for $\SteerVal_{2^\ell}$.}
	\label{protocol:SV-in-qcQAM[ell]}
    \SetEndCharOfAlgoLine{.}
    \SetKwFor{While}{}{:}{}
    \SetKwFor{For}{For}{:}{}
    \SetKwIF{If}{ElseIf}{Else}{If}{:}{elif}{Else:}{}%
    \SetKwComment{Comment}{// }{}
    \SetKwInOut{Input}{Input}
    \SetKwInOut{Output}{Output}

    \Input{A quantum circuit $Q$ that induces $\cbra{Q_r}_{r\in\binset^\ell}$ by initializing $\sfI$ to $\ket{r}$ for each $r\in\binset^\ell$.}
    \Output{ACCEPT or REJECT.}
    \medskip

    \textbf{1.} $V$ prepares $m$ EPR pairs $\ket{\Phi} \coloneqq \frac{1}{2^{m/2}} \sum_{s\in\binset^m} \ket{s,s}_{\sfS\sfR}$, and sends the register $\sfS$ to $P$ while keeping the register $\sfR$.
    \medskip
    
    \textbf{2.} $V$ receives an $\ell$-bit string $r\in\binset^\ell$ from $P$.
    An honest prover chooses an optimal feasible family $\cbra{\sigma_t^\star}_{t\in\binset^\ell}$ for $\MSV{\ell}{Q}$ and measures $\sfS$ using the POVM $\cbra{M_t}_{t\in\binset^\ell}$ with $M_t\coloneqq 2^m(\sigma_t^\star)^T$.
    \medskip
    
    \textbf{3.} $V$ applies the circuit $Q_r$ to the registers $(\sfA,\sfR)$, with $\sfA$ initialized to $\ket{\bar{0}}$, and accepts iff the output qubit of $Q_r$ is measured to be $1$.
\end{algorithm}
\endgroup

    As in the proof of \Cref{lemma:MultiQSD-is-in-QIP-ell-bit}, every prover strategy in \Cref{protocol:SV-in-qcQAM[ell]} can be described by a POVM $\cbra{M_r}_{r\in\binset^\ell}$ on the received register $\sfS$, where $M_r = d_\sfR \sigma_r^T$ and $d_\sfR = \dim(\sfR)=2^m$, and every such POVM gives a valid prover strategy. Consequently, 
    \begin{equation}
    \label{eq:MSV-protocol-pacc}
    \Pr[(\protocol{P}{V})(Q)\text{ accepts}] 
    = \sum_{r\in\binset^\ell} \bra{\Phi}(M_r\otimes\Pi_r)\ket{\Phi}
    = \sum_{r\in\binset^\ell} \frac{\Tr(\Pi_r M_r^T)}{d_\sfR}
    = \MSV{\ell}{Q}\enspace. 
    \end{equation}
    Here, the last equality follows because $\sigma_r=M_r^T/d$ gives a bijection between the POVM $\cbra{M_r}$ and the feasible families in \Cref{def:steering-prob}. 
    Therefore, we establish the correctness of \Cref{protocol:SV-in-qcQAM[ell]}:
    \begin{itemize}
        \item For \emph{yes} instances, an optimal feasible family $\cbra{\sigma_r^\star}$ for $\MSV{\ell}{Q}$ specifies an honest prover through $M_r=d_\sfR \rbra{\sigma^\star_r}^T$.\footnote{The transpose is taken in the computational basis defining $\ket{\Phi}$.} 
        \item For \emph{no} instances, every feasible family $\cbra{\sigma'_r}$ for $\MSV{\ell}{Q}$ induces a $2^\ell$-outcome POVM $\cbra{M'_r}$, so the soundness bound follows directly from $\MSV{\ell}{Q}\leq s(n)$.
    \end{itemize}

    \sloppypar
    \parheading{The $\qcQAM[\ell]$-hardness.}
    We now establish that $\SteerVal[\ell(n),c(n),s(n)]$ is hard for $\qcQAM[\ell(n),c(n),s(n)]$. Let $\protocol{P}{V}$ be a $\qcQAM[\ell(n),c(n),s(n)]$ protocol in the above EPR-message normal form, and fix an input $x$. It suffices to construct a quantum circuit $Q$, with induced effective acceptance effects $\cbra{\Pi_r}_{r\in\binset^\ell}$, such that the maximum acceptance probability of $\protocol{P}{V}$, denoted by $\omega(V)$, equals the steering-game value $\MSV{\ell}{Q}$. 

    Let $Q\coloneqq Q_x$ be the verifier's action $V_2(x)$ after receiving the prover's $\ell$-bit message $r$ stored in the register $\sfI$. The same calculation as in \Cref{eq:MSV-protocol-pacc} shows that $\MSV{\ell}{Q}=\omega(V)$ as desired. Consequently, the reduction preserves $\omega(V)$ exactly, as well as the prover's response length and both promise thresholds.
\end{proof}


\subsection{\texorpdfstring{$\qcQAM[1] = \BQP$}{}}
\label{subsec:qcQAM[1]=BQP}

\begin{theorem}\label{thm:qcQAM[1]=BQP}
    For any efficiently computable functions $c(n)$ and $s(n)$ such that $0 \leq s(n) < c(n) \leq 1$ and $c(n)-s(n) \geq 1/\poly(n)$, 
    \[\qcQAM\sbra*{1, c(n), s(n)} = \BQP \enspace.\]
\end{theorem}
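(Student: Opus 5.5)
The inclusion $\BQP\subseteq\qcQAM[1,c,s]$ is immediate: the verifier ignores the EPR halves and the prover's bit, and decides $x$ itself with error reduced by repetition. For the converse we may, by \Cref{lemma:SteerVal-qcQAM[l]-complete} with $\ell=1$, assume the input is a circuit $Q$ inducing acceptance effects $0\preceq\Pi_0,\Pi_1\preceq I$ on the $m$-qubit register $\sfR$. It then suffices to estimate $\BSV{\Pi_0}{\Pi_1}$ to additive error $(c-s)/3$ with constant success probability in quantum polynomial time. The starting point is the closed form from the Holevo--Helstrom formula for binary generalized state discrimination~\cite{AM14}, exactly as used in \Cref{thm:QSD-is-QIPbit-hard}. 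Substituting $E_r=d_\sfR\sigma_r$ gives
\[
\BSV{\Pi_0}{\Pi_1}=\frac{\Tr(\Pi_0)+\Tr(\Pi_1)+\norm{\Pi_0-\Pi_1}_1}{2d_\sfR}.
\]
So we need to estimate three normalized traces: $\Tr(\Pi_0)/d_\sfR$, $\Tr(\Pi_1)/d_\sfR$, and $\Tr\abs{\Pi_0-\Pi_1}/d_\sfR$.

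First, we build exact block-encodings of $\Pi_r$. Take $U_r\coloneqq Q_r^\dagger\,(Z_{\Out}\text{-type reflection})\,Q_r$, or more simply use the standard construction $Q_r^\dagger\,\mathrm{CNOT}_{\Out\to\text{anc}}\,Q_r$. With ancillas $\sfA$ and a fresh qubit postselected on $\ket{\bar 0}$, this gives an exact $(1,a,0)$-block-encoding of $\Pi_r$. Applying \Cref{corr:normalized-trace-estimation} then yields $\Tr(\Pi_r)/d_\sfR$ to accuracy $\epsilon$ with $O(1/\epsilon)$ uses. Second, \Cref{lemma:lcu} with weights $(1/2,-1/2)$ gives an exact $(1,a+1,0)$-block-encoding of $A\coloneqq(\Pi_0-\Pi_1)/2$, which is Hermitian with $\norm{A}\leq 1/2$. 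Apply \Cref{corr:qsvt-hermitian-contraction} with the polynomial $P_\eta$ of \Cref{lemma:abs-polynomial-approx} (so that $\abs{P_\eta}\leq1$). This gives a $(2,a+5,0)$-block-encoding of a Hermitian $A_P$ with $\norm{A_P-P_\eta(A)}\leq\delta$, using degree $O(1/\eta)$. Since $\norm{P_\eta(A)-\abs{A}/2}\leq\eta$ in operator norm, the normalized trace satisfies
\[
\abs*{\Tr(A_P)/d_\sfR-\Tr\abs{\Pi_0-\Pi_1}/(4d_\sfR)}\leq\eta+\delta.
\]
The key point is that operator-norm error per eigenvalue translates into normalized-trace error without any dimension blow-up. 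Applying \Cref{corr:normalized-trace-estimation} once more (with $\alpha=2$) estimates $\Tr(A_P)/d_\sfR$.

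Choosing $\eta,\delta,\epsilon=\Theta(c-s)=1/\poly(n)$, the total error stays below $(c-s)/3$. The circuit uses $O(1/\eta)$ calls to $Q_0,Q_1$ per block-encoding query, and there are $O(1/\epsilon)$ queries, so the running time is polynomial. The success probability $8/\pi^2$ of each estimate is boosted by taking medians of $O(1)$ repetitions, and the algorithm accepts iff the combined estimate exceeds $(c+s)/2$.

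The step I expect to need the most care is the $\abs{\Pi_0-\Pi_1}$ term. There we must make sure the QSVT output is a genuine Hermitian exact block-encoding, so that the Hadamard test returns $\Tr(A_P)$ rather than a real part of a non-Hermitian block. We also must make sure the normalization of $A$ keeps its spectrum inside $[-1,1]$. \Cref{corr:qsvt-hermitian-contraction} was designed for exactly this, so the main remaining check is bookkeeping of the factors $2$ and $1/2$ and of the polynomial degree $O(1/(c-s))$.
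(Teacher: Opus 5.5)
Your proposal is correct and is essentially the paper's proof: closed form from~\cite{AM14}, exact block-encodings built from $Q_r$, LCU for the difference, QSVT with the polynomial of \Cref{lemma:abs-polynomial-approx}, and normalized trace estimation with median boosting. The paper instead block-encodes $I-2\Pi_r$ and $\Pi_0-\Pi_1$ and applies \Cref{thm:qsvt-hermitian} with $P_\eta/2$, tolerating a non-Hermitian top-left block because the Hadamard test returns a real part anyway; your halving of $A$ plus \Cref{corr:qsvt-hermitian-contraction} is an equally valid variant, and the one harmless slip is that $Q_r^\dagger\,\mathrm{CNOT}_{\Out\to\mathrm{anc}}\,Q_r$ with the fresh qubit projected onto $\ket{0}$ encodes $I-\Pi_r$ (on $a+1$ ancillas) rather than $\Pi_r$, which does not matter since only affine combinations of the $\Pi_r$ are needed.
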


Before establishing \Cref{thm:qcQAM[1]=BQP}, we need a closed-form expression for the binary steering-game value, which is a direct consequence of the Holevo--Helstrom formula for \emph{binary} generalized state discrimination~\cite[Section~II.C]{AM14}:\footnote{See \Cref{footnote:binary-generalized-state-discrimination}.}
\begin{lemma}[Closed form of the binary steering value]
    \label{lemma:BinSteerVal-closed-form}
    Let $(\Pi_0,\Pi_1)$ be the effective two-outcome measurement pair induced on the quantum register $\sfR$. Then the binary steering value admits the closed-form expression
    \begin{equation}
        \label{eq:BSV-close-form}
        \BSV{\Pi_0}{\Pi_1} = \frac{\Tr(\Pi_0+\Pi_1) + \norm{\Pi_0-\Pi_1}_1}{2d}, \quad\text{where }d=\dim(\sfR)\enspace.
    \end{equation}
\end{lemma}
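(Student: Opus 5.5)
The plan is to reduce to a direct binary optimization over a single operator. In $\BSV{\Pi_0}{\Pi_1}$ the feasible families are pairs $\sigma_0,\sigma_1\succeq 0$ with $\sigma_0+\sigma_1=I/d$. Writing $E_r\coloneqq d\,\sigma_r$, these are exactly the two-outcome POVMs $\{E_0,E_1\}$ on $\sfR$. The value then becomes
\[
\BSV{\Pi_0}{\Pi_1}=\frac{1}{d}\max_{0\preceq E_0\preceq I}\rbra*{\Tr(\Pi_0E_0)+\Tr(\Pi_1(I-E_0))}.
\]
This is the generalized binary discrimination quantity in \Cref{eq:generalized-state-discrimination} with $A_r=\Pi_r$, rescaled by $1/d$. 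One could simply invoke the Holevo--Helstrom formula of~\cite[Section~II.C]{AM14}, as already done in the proof of \Cref{thm:QSD-is-QIPbit-hard}. For completeness, I would instead verify the formula directly, since the argument is short.

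First, rewrite the objective as $\Tr(\Pi_1)+\Tr\rbra*{(\Pi_0-\Pi_1)E_0}$. The first term does not depend on $E_0$, so it remains to show
\[
\max_{0\preceq E\preceq I}\Tr(HE)=\Tr(H_+),\qquad H\coloneqq\Pi_0-\Pi_1 .
\]
For the lower bound, choose $E$ to be the projector ${\bf1}_{H\succ0}$ onto the positive eigenspace of $H$; this gives exactly $\Tr(H_+)$. For the upper bound, diagonalize $H=H_+-H_-$. For any $0\preceq E\preceq I$ we have $\Tr(HE)\le\Tr(H_+E)\le\Tr(H_+)$, using $\Tr(H_-E)\ge0$ in the first step and $E\preceq I$ in the second.

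Finally, since $H$ is Hermitian, $\Tr(H_+)=\tfrac12\rbra*{\Tr(H)+\norm{H}_1}$. Substituting this back gives the objective value
\[
\Tr(\Pi_1)+\tfrac12\Tr(\Pi_0-\Pi_1)+\tfrac12\norm{\Pi_0-\Pi_1}_1=\tfrac12\rbra*{\Tr(\Pi_0+\Pi_1)+\norm{\Pi_0-\Pi_1}_1}.
\]
Dividing by $d$ yields \Cref{eq:BSV-close-form}. No step is a real obstacle. The only point needing care is the bijection between feasible families $\{\sigma_r\}$ and POVMs: the constraint $\sigma_0\succeq 0$ together with $\sigma_1=I/d-\sigma_0\succeq0$ is equivalent to $0\preceq E_0\preceq I$.
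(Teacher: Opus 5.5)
Your argument is correct and follows the same route as the paper, which obtains \Cref{eq:BSV-close-form} directly from the Holevo--Helstrom formula for binary generalized state discrimination~\cite{AM14}, applied with $A_r=\Pi_r$ after the rescaling $E_r=d\,\sigma_r$. The only difference is that you prove that formula yourself, using the optimal effect ${\bf1}_{\Pi_0-\Pi_1\succ0}$ and the identity $\Tr(H_+)=\tfrac12(\Tr H+\norm{H}_1)$, so your proof is self-contained where the paper's relies on the citation.
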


We also need the following exact block-encoding gadgets:

\begin{lemma}[Exact block-encoding gadgets]
    \label{lemma:exact-block-encoding-gadgets}
    Let $(Q_0,Q_1)$ be a \BinSteerVal{} instance, and let $\Pi_r$ the corresponding effective acceptance effect on the quantum register $\sfR$ induced by $Q_r$ for $r\in\binset$, and let $a$ be the number of ancillary qubits used by $Q_r$. One can efficiently construct the following exact block-encoding, where $a(n)$ denotes the number of ancillary qubits:
    \begin{enumerate}[label={\upshape(\arabic*)}]
        \item \label{thmitem:verifier-branch-gadget} For $r\in\binset$, $U_r \coloneqq Q_r^\dagger (I_{\sfO}-2\ketbra{1}{1}_{\sfO}) Q_r$ is a $(1,a,0)$-block-encoding of $I_{\sfR}-2 \Pi_r$. 
        \item \label{thmitem:effects-difference-gadget} There is a $(1,a+1,0)$-block-encoding $U_-$ of $\Pi_0-\Pi_1$ using one query to each of controlled-$U_0$ and controlled-$U_1$, and two Hadamard gates on the extra control qubit. 
    \end{enumerate}
\end{lemma}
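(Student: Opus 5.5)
For item~(1), I will compute the top-left block of $U_r$ directly. Write $P_0\coloneqq \bra{\bar{0}}_{\sfA}\otimes I_\sfR$. Expanding the reflection as $I_\sfO-2\ketbra{1}{1}_\sfO$ gives
\[
P_0 U_r P_0^\dagger = P_0 Q_r^\dagger Q_r P_0^\dagger - 2 P_0 Q_r^\dagger \ketbra{1}{1}_\sfO Q_r P_0^\dagger = I_\sfR - 2\Pi_r.
\]
Here $\Pi_r$ is exactly the effect in \Cref{def:steering-prob}. The operator $I_\sfO-2\ketbra{1}{1}_\sfO$ is a single-qubit $Z$ gate (up to convention) on the output qubit, so $U_r$ is an efficient unitary and an exact $(1,a,0)$-block-encoding. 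One convention point needs checking: the output qubit must be one of the $a$ ancilla/work qubits of $Q_r$ or of $\sfR$. In either case the identity holds, because the projector acts after $Q_r$ and is absorbed into the full register.

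For item~(2), I will write $\Pi_0-\Pi_1 = \tfrac12\bigl((I-2\Pi_1)-(I-2\Pi_0)\bigr)$. This is a linear combination of $U_1$ and $U_0$ with weights $\pm\tfrac12$, so I will apply \Cref{lemma:lcu} with $m=2$, $W=H$ (one qubit, $b=1$), and $w_0=-\tfrac12$, $w_1=\tfrac12$. This gives $w=1$, so the result is a $(1,a+1,0)$-block-encoding.

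Concretely, the circuit is $U_-\coloneqq (H\otimes I)\bigl(\ketbra{0}{0}\otimes(-U_0)+\ketbra{1}{1}\otimes U_1\bigr)(H\otimes I)$. The select operator is implemented by one controlled-$U_1$ (control on $\ket1$) and one controlled-$U_0$ (control on $\ket0$, conjugated by $X$ gates, or equivalently with the sign absorbed by a $Z$ on the control). Its top-left block is $\tfrac12(-U_0+U_1)$ restricted to $\ket{\bar0}_\sfA$, that is, $\tfrac12\bigl(-(I-2\Pi_0)+(I-2\Pi_1)\bigr)=\Pi_0-\Pi_1$.

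The sign bookkeeping and the phase gate on the control add only $O(1)$ gates beyond the two Hadamards, which the statement's count tolerates modulo constant single-qubit gates. The main (minor) obstacle is confirming that the two branches share the same ancilla register $\sfA$ initialized to $\ket{\bar0}$, so that the blocks combine correctly. This holds because both $Q_0$ and $Q_1$ are $Q$ with different $\sfI$ settings, so $a$ is common to both.
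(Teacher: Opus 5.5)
Your proposal is correct and matches the paper's proof. For item (1), you expand $Q_r^\dagger(I_{\sfO}-2\ketbra{1}{1}_{\sfO})Q_r$ against $\ket{\bar{0}}_{\sfA}$ and use the definition of $\Pi_r$; for item (2), you apply \Cref{lemma:lcu} to $U_0,U_1$ with weights $-\tfrac12,\tfrac12$ (so $w=1$), exactly as the paper does. One small thing to keep straight: the phase on the control must be $\mathrm{diag}(-1,1)=XZX$ rather than a bare $Z$, because a bare $Z$ puts the minus sign on the $\ket{1}$ branch and block-encodes $\Pi_1-\Pi_0$ instead.
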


\begin{proof}
    We first establish \Cref{thmitem:verifier-branch-gadget} for $r\in\binset$ via a direct calculation: 
\begin{align*}
    &\rbra{\bra{\bar{0}}_{\sfA}\otimes I_{\sfR}}U_r\rbra{\ket{\bar{0}}_{\sfA}\otimes I_{\sfR}}\\
    =~& \rbra{\bra{\bar{0}}_{\sfA}\otimes I_{\sfR}}Q_r^\dagger (I_{\sfO}-2\ketbra{1}{1}_{\sfO}) Q_r\rbra{\ket{\bar{0}}_{\sfA}\otimes I_{\sfR}} \\
    =~& \rbra{\bra{\bar{0}}_{\sfA}\otimes I_{\sfR}}Q_r^\dagger Q_r\rbra{\ket{\bar{0}}_{\sfA}\otimes I_{\sfR}}  
    - 2\rbra{\bra{\bar{0}}_{\sfA}\otimes I_{\sfR}}Q_r^\dagger \ketbra{1}{1}_{\sfO} Q_r\rbra{\ket{\bar{0}}_{\sfA}\otimes I_{\sfR}} \\
    =~& I_{\sfR}-2\Pi_r\enspace.
\end{align*}

For \Cref{thmitem:effects-difference-gadget}, we apply \Cref{lemma:lcu} to the block encodings $U_0$ and $U_1$ from \Cref{thmitem:verifier-branch-gadget}, with coefficients $-1/2$ and $1/2$, respectively.
Since the coefficient magnitudes sum to one, the resulting unitary $U_-$ is a $(1,a+1,0)$-block encoding of $-\rbra*{I_{\sfR}-2\Pi_0}/2 +\rbra*{I_{\sfR}-2\Pi_1}/2 =\Pi_0-\Pi_1$. The construction uses one controlled query to each of $U_0$ and $U_1$ and $O(1)$ one-qubit gates.
\end{proof}

We are now ready to proceed with the proof of \Cref{thm:qcQAM[1]=BQP}:

\begin{proof}[Proof of \Cref{thm:qcQAM[1]=BQP}]
    Since $\BQP \subseteq \qcQAM[1]$ holds straightforwardly, it suffices to prove that $\qcQAM[1] \subseteq \BQP$. To this end, we consider the $\qcQAM[1]$-complete problem \BinSteerVal{} (\Cref{lemma:SteerVal-qcQAM[l]-complete} with $\ell=1$). Together with the closed form of the binary steering value (\Cref{lemma:BinSteerVal-closed-form}), it suffices to construct a quantum algorithm that estimates the binary steering value within additive error strictly smaller than $\Delta/4$ and compares the estimate with the midpoint threshold $(c+s)/2=s+\Delta/2$, using polynomially many queries to $Q_0$ and $Q_1$ in $1/\Delta$, where $\Delta \coloneqq c(n)-s(n)$ denotes the promise gap of the $\qcQAM[1,c(n),s(n)]$ protocol.
    
    More specifically, for any \BinSteerVal{} instance $(Q_0,Q_1)$, with induced effective acceptance effects $\Pi_0$ and $\Pi_1$ on the register $\sfR$, where $d_\sfR=\dim(\sfR),$ it suffices to decide whether $\BSV{\Pi_0}{\Pi_1}$ is at least $c$ or at most $s$, where  
    \begin{equation}
        \label{eq:BSV-three-terms}
        \BSV{\Pi_0}{\Pi_1} = \frac{\Tr(\Pi_0)}{2d_\sfR} + \frac{\Tr(\Pi_1)}{2d_\sfR} + \frac{\norm{\Pi_0 - \Pi_1}_1}{2d_\sfR}\enspace.
    \end{equation} 

    We now provide the main algorithm, as presented in \algoref{algo:BSV-estimation}. In the remainder of the proof, we establish the correctness of \algoref{algo:BSV-estimation} and analyze its efficiency.

\begingroup
\LinesNumbered
\begin{algorithm}[!ht]
    \SetAlgorithmName{Algorithm}{algorithm}{List of Algorithms}
    \caption{Quantum algorithm for estimating $\BSV{\Pi_0}{\Pi_1}$.}
	\label{algo:BSV-estimation}
    \SetEndCharOfAlgoLine{.}
    \setlength{\parskip}{5pt}
    \SetKwFor{While}{}{:}{}
    \SetKwFor{For}{For}{:}{}
    \SetKwIF{If}{ElseIf}{Else}{If}{:}{elif}{Else:}{}%
    \SetKwComment{Comment}{// }{}
    \SetKwInOut{Input}{Input}
    \SetKwInOut{Output}{Output}

    \Input{Quantum circuits $Q_0$ and $Q_1$ corresponding to induced effective acceptance effects $\Pi_0$ and $\Pi_1$ on the $m$-qubit register $\sfR$, respectively; $\Delta \in (0,1)$.}
    \Output{An estimate of $\BSV{\Pi_0}{\Pi_1}$ with high probability.}

    {\color{gray}\Comment{Estimating $\Tr(\Pi_r)/d_{\sfR}$ for $r\in\binset$.}}
    
    For $r\in\binset$, implement a unitary operator $U_r$ that is a $(1,a,0)$-block-encoding of $I_\sfR - 2\Pi_r$ via \Cref{lemma:exact-block-encoding-gadgets}\ref{thmitem:verifier-branch-gadget}, using one query to each of $Q_r$ and $Q_r^\dagger$\;

    For $r\in\binset$, perform the normalized trace estimation (\Cref{corr:normalized-trace-estimation}) on $U_r$, denoted the estimate by $\widehat{x}_r$. 

    {\color{gray}\Comment{Estimating $\norm{\Pi_0-\Pi_1}_1/d_{\sfR}$.}}
    
    Let $P^{\rm abs}_d$ be the degree-$d$ polynomial that approximates $|x|/4$ in the range $[-1,1]$, which is determined according to $\epsilon$ and $n$ and is constructed from \Cref{lemma:abs-polynomial-approx}. 

    Implement a unitary operator $U_-$ that is a $(1,a+1,0)$-block-encoding of $\Pi_0-\Pi_1$ via \Cref{lemma:exact-block-encoding-gadgets}\ref{thmitem:effects-difference-gadget}, using one query to each of controlled-$U_0$ and controlled-$U_1$\; 

    Implement a unitary operator $U_{P^{\rm abs}_d(\Pi_0-\Pi_1)}$ that is a block-encoding of $P^{\rm abs}_d(\Pi_0-\Pi_1)$ by QSVT (\Cref{thm:qsvt-hermitian}), using $O(\deg(P^{\rm abs}_d))$ queries to $U_-$.

    Perform the normalized trace estimation (\Cref{corr:normalized-trace-estimation}) on $U_{P^{\rm abs}_d(\Pi_0-\Pi_1)}$, denoted the estimate by $\widehat{y}$. 

    {\color{gray}\Comment{Putting everything together.}}

    Return $1/2-(\widehat{x}_0+\widehat{x}_1)/4+2\widehat{y}$. 
\end{algorithm}
\endgroup

\paragraph{Estimating $\Tr(\Pi_r)/d_{\sfR}$ for $r\in\binset$.} We begin by constructing a $(1,a,0)$-block-encoding of $I_\sfR - 2\Pi_r$ for $r\in\binset$, denoted by $U_r$, using \Cref{lemma:exact-block-encoding-gadgets}\ref{thmitem:verifier-branch-gadget}. 
Let $x_r \coloneqq \Tr(I_{\sfR}-2\Pi_r)/d_{\sfR}$ for $r\in\binset$, then a direct calculation shows that 
\begin{equation}
    \label{eq:Pi-estimation}
    \forall r\in\binset, \quad \frac{\Tr(\Pi_r)}{d_{\sfR}} = \frac{1}{2} \rbra*{ \frac{\Tr(I_{\sfR})}{d_{\sfR}} - \frac{\Tr(I_{\sfR}-2\Pi_r)}{d_{\sfR}}} = \frac{1-x_r}{2}\enspace.
\end{equation}

For each $r\in\binset$, we apply the normalized trace estimation (\Cref{corr:normalized-trace-estimation}) to $U_r$ with $\epsilon=\epsilon_1 \coloneqq \Delta/8$, repeat the procedure $5$ times independently, and take the median. Denote the resulting estimate by $\widehat{x}_r$, which has additive error at most $\epsilon_1$ with
probability at least $11/12$.\footnote{Since a single run succeeds with probability at least $8/\pi^2 > 4/5$, it follows that the $5$-run median succeeds with probability at least $\sum_{j=3}^5 \binom{5}{j} \rbra[\big]{\frac{4}{5}}^j \rbra[\big]{\frac{1}{5}}^{5-j} > 11/12$. \label{footnote:median-repeation}} 
Conditioned on this success event, \Cref{eq:Pi-estimation} implies that 
\begin{equation}
    \label{eq:Pi-estimation-bound}
    \forall r\in\binset, \quad \abs*{ \frac{1-\widehat{x}_r}{2} - \frac{\Tr(\Pi_r)}{d_{\sfR}} } = \frac{\abs*{x_r-\widehat{x}_r}}{2} \leq \frac{\epsilon_1}{2}\enspace.
\end{equation}

\paragraph{Estimating $\norm{\Pi_0-\Pi_1}_1/d_{\sfR}$.} 
We start by constructing a $(1,a+1,0)$-block-encoding of $\Pi_0-\Pi_1$, denoted by $U_-$, using  \Cref{lemma:exact-block-encoding-gadgets}\ref{thmitem:effects-difference-gadget}. 

We now set $\eta\coloneq \Delta/16$, which satisfies $\eta \in (0,1/16) \subseteq (0,1/2)$ since $0<\Delta(n)<1$. 
Let $P_\eta$ be the polynomial from \Cref{lemma:abs-polynomial-approx}, whose parity coincides with its degree. We define $P^{\rm abs}_d \coloneqq P_\eta/2$, where the degree $d=\deg\rbra[\big]{P^{\rm abs}_d} = \deg(P_\eta) \leq \ceil*{\beta/\eta} = \ceil*{16\beta/\Delta}$.
Furthermore, $P^{\rm abs}_d$ satisfies the following bounds:
\begin{subequations}
\label{eq:abs-polynomial-approx-bound}
\begin{align}
    \max_{x\in[-1,1]} \abs*{ P^{\rm abs}_d(x) } &= \frac{1}{2}\max_{x\in[-1,1]} \abs*{P_{\eta}(x)} \leq \frac{1}{2}\enspace,\\
    \max_{x\in[-1,1]}  \abs*{ P^{\rm abs}_d(x) - \frac{|x|}{4} } &= \frac{1}{2} \max_{x\in[-1,1]}  \abs*{ P_\eta(x) - \frac{|x|}{2} } \leq \frac{\eta}{2} = \frac{\Delta}{32}\enspace. 
\end{align}
\end{subequations}

For convenience, we set $A\coloneqq \Pi_0-\Pi_1$.
Since $\norm{A} \leq 1$,\footnote{The fact $\norm{A} \leq 1$ follows directly from $-I_\sfR \preceq A \preceq I_\sfR$, which can be obtained by combining $A \preceq \Pi_0 \preceq I_\sfR$ and $A \succeq -\Pi_1 \succeq -I_\sfR$.} one can construct a $(1,a+3,\delta)$-block-encoding of $P^{\rm abs}_d(A)$, denoted by $U_{P^{\rm abs}_d(A)}$, whose ancillary qubit register is $\sfB$, by using the QSVT (\Cref{thm:qsvt-hermitian}) associated with $P^{\rm abs}_d$ and $\delta\coloneqq\Delta/64$. 
It follows that
\begin{equation}
    \label{eq:implementation-error-bound}
    \norm*{ \bra{\bar{0}}_{\sfB} U_{P^{\rm abs}_d(A)} \ket{\bar{0}}_{\sfB} - P^{\rm abs}_d(A) } \leq \delta\enspace.
\end{equation}
Because $A$ is Hermitian and its spectrum lies in $[-1,1]$, \Cref{eq:abs-polynomial-approx-bound} implies 
\begin{equation}
    \label{eq:td-polynomial-error-bound}
    \norm*{ P^{\rm abs}_d(A) - \frac{\abs*{A}}{4}} \leq \frac{\Delta}{32}\enspace.
\end{equation}

\sloppypar
Let $y \coloneqq \norm{A}_1/(4 d_{\sfR})$. 
We apply the normalized trace estimation (\Cref{corr:normalized-trace-estimation}) to $\bra{\bar{0}}_\sfB U_{P^{\rm abs}_d(A)} \ket{\bar{0}}_{\sfB}$ with $\epsilon = \epsilon_2 \coloneqq \Delta/64$, repeat the procedure $5$ times independently, and take the median. Let $\widehat{y}$ denote the resulting estimate. By \Cref{footnote:median-repeation}, this estimate has additive error at most $\epsilon_2$ with probability at least $11/12$, and on the success event we obtain:
\begin{subequations}
\label{eq:Pi-diff-estimation-bound}
\begin{align}
    \abs*{ \widehat{y} - y } 
    &= \abs*{ \widehat{y} - \frac{\Tr\rbra*{ \abs{A}/4 }}{d_\sfR} } \\
    &\leq \abs*{ \widehat{y} - \frac{\Tr\rbra[\big]{ \bra{\bar{0}}_{\sfB} U_{P^{\rm abs}_d(A)} \ket{\bar{0}}_{\sfB} }}{d_\sfR}  } + \frac{1}{d_\sfR} \abs*{ \Tr\rbra[\big]{ \bra{\bar{0}}_{\sfB} U_{P^{\rm abs}_d(A)} \ket{\bar{0}}_{\sfB} - P^{\rm abs}_d(A) } }\\
     &\qquad+ \frac{1}{d_\sfR} \abs*{ \Tr\rbra*{ P^{\rm abs}_d(A) - \frac{\abs*{A}}{4} } }\\
     &\leq \epsilon_2 + \norm*{ \bra{\bar{0}}_{\sfB} U_{P^{\rm abs}_d(A)} \ket{\bar{0}}_{\sfB}  - P^{\rm abs}_d(A) } + \norm*{ P^{\rm abs}_d(A) - \frac{\abs*{A}}{4} }\\
     &\leq \epsilon_2 + \delta + \frac{\Delta}{32}\enspace. 
\end{align}
\end{subequations}
Here, the second line uses the triangle inequality, the fourth line follows from the fact that $\abs{\Tr(X)} \leq d_\sfR \norm{X}$ for an operator $X$ acting on $\sfR$, where $\norm{X}$ equals the largest singular value $\sigma_{\max}(X)$, and the last line is guaranteed by the implementation error bound in \Cref{eq:implementation-error-bound} and the approximation error bound in \Cref{eq:td-polynomial-error-bound}. 

\paragraph{Combining the estimates.}
We first combine the estimates, which correspond to three terms on the right-hand side of \Cref{eq:BSV-three-terms}. The total additive error is strictly smaller than $\Delta/4$, each estimate succeeds with probability at least $11/12$. Consequently, by the union bound, all three estimates succeed simultaneously with probability at least $1 - 3 (1-11/12) = 3/4$. On this simultaneous success event, combining \Cref{eq:BSV-three-terms,eq:Pi-estimation-bound,eq:Pi-diff-estimation-bound} implies that 
\begin{align*}
    \abs*{\rbra*{ \frac{1}{2}-\frac{{\widehat{x}_0}+\widehat{x}_1}{4}+2\widehat{y} } - \BSV{\Pi_0}{\Pi_1}}
    &\leq \frac{1}{2}\sum_{r\in\binset}\abs*{ \frac{1-\widehat{x}_r}{2} - \frac{\Tr(\Pi_r)}{d_\sfR} } + 2\abs*{ \widehat{y} - \frac{\norm{A}_1/4}{d_\sfR} } \\
    &\leq \frac{\epsilon_1}{2} + 2\epsilon_2 + 2\delta + \frac{\Delta}{16}\\
    &= \frac{3\Delta}{16} < \frac{\Delta}{4}\enspace.
\end{align*}
Here, the first line uses the triangle inequality, and the last line follows directly from the chosen parameter $\epsilon_1$, $\epsilon_2$, and $\delta$. Therefore, \Cref{algo:BSV-estimation} returns an estimate of $\BSV{\Pi_0}{\Pi_1}$ to within additive error $3\Delta/16$ with probability at least $3/4 > 2/3$. 

\paragraph{Efficiency analysis.}
To analyze the efficiency of \Cref{algo:BSV-estimation}, we observe that the exact block-encodings of $I_{\sfR}-2\Pi_r$ for $r\in\binset$ and $\Pi_0-\Pi_1$ are efficiently implementable by \Cref{lemma:exact-block-encoding-gadgets}. Also, \Cref{corr:normalized-trace-estimation} implies that the number of queries to $U_0$ and $U_1$ is $O(1/\epsilon_1)=O(1/\Delta)$ for estimating $\Tr(\Pi_r)/d_{\sfR}$ for $r\in\binset$. Similarly, \Cref{corr:normalized-trace-estimation} gives $O(1/\epsilon_2)=O(1/\Delta)$ queries to $U_{P^{\rm abs}_d(\Pi_0-\Pi_1)}$ for estimating $\norm{\Pi_0-\Pi_1}_1/d_{\sfR}$, while \Cref{thm:qsvt-hermitian,lemma:abs-polynomial-approx} shows that the number of queries to $U_0$ and $U_1$ for implementing $U_{P^{\rm abs}_d(\Pi_0-\Pi_1)}$ is $O(\deg(P^\mathrm{abs}_d))=O(1/\Delta)$. Putting these estimates together, implementing \Cref{algo:BSV-estimation} uses $O(1/\Delta^2)$ queries to the given circuits $Q_0$ and $Q_1$, which gives the desired \BQP{} containment.
\end{proof}

\subsection{\texorpdfstring{$\qcQAM\sbra*{ O\rbra*{ \sqrt{\log{n}} } }$}{} with constant gap is in \BQP{}}
\label{subsec:qcQAM[sqrt(logn)]=BQP}

\begin{theorem}\label{thm:qcQAM[sqrt log n]=BQP}
For any efficiently computable functions $c(n)$, $s(n)$, and $\ell(n)$ such that $\ell(n) = O(\sqrt{\log n})$, $0 \leq s(n) < c(n) \leq 1$, and the promise gap $\Delta(n) \coloneq c(n) - s(n) = \Theta(1)$,
\[\qcQAM\sbra*{\ell(n), c(n), s(n)} = \BQP \enspace.\]
The underlying quantum algorithm runs in time $\poly(n,\ell,1/\Delta) \exp\rbra*{O\rbra[\big]{\ell^2/\Delta^6}}$, where $n$ denotes the description length specified in \Cref{def:steering-prob}.
\end{theorem}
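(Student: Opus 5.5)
Since $\BQP\subseteq\qcQAM[\ell,c,s]$ is immediate, the plan is to prove $\qcQAM[\ell,c,s]\subseteq\BQP$ through the $\qcQAM[\ell]$-complete problem $\SteerVal$ (\Cref{lemma:SteerVal-qcQAM[l]-complete}). We will give a quantum algorithm that estimates $\MSV{\ell}{\Pi}$ to additive error below $\Delta/4$ and then compare the estimate against $(c+s)/2$. The first step is to linearize. Put $C\coloneqq\sum_r\ketbra{r}{r}_{\sfI}\otimes\Pi_r$, and let $\sigma$ range over density operators on $(\sfI,\sfR)$. We replace the constraint $\Tr_{\sfI}(\sigma)=I_\sfR/d_\sfR$ by the penalty objective $F_B(\sigma)=\Tr(C\sigma)-B\norm{\Tr_{\sfI}(\sigma)-I_\sfR/d_\sfR}_1$ with $B=O(1/\varepsilon)$, following \cite[Theorem 5]{GW13}. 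One direction is immediate, since $\max F_B\geq\MSV{\ell}{\Pi}$. For the other direction, take any $\sigma$ whose marginal $\tau=\Tr_\sfI(\sigma)$ deviates from $I/d_\sfR$ by $\theta$ in trace norm. We repair $\sigma$ into a feasible point: mix with $I/(2^\ell d_\sfR)$ and rescale by $(\tau')^{-1/2}$-type conjugation. The objective changes by $O(\theta)$ up to constants, so $B=O(1/\varepsilon)$ suffices. Through the dual characterization $\norm{Y}_1=\max_{\norm{H}\le1}\Tr(HY)$, the problem then becomes a zero-sum game with linear payoff $\Tr\rbra{(C-B\,I_\sfI\otimes H)\sigma}+B\Tr(H)/d_\sfR$.

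The second step runs matrix multiplicative weights on $\sigma$. We start from $\sigma_0=I/(2^\ell d_\sfR)$ and set $\sigma_{t+1}\propto\exp\rbra{\eta\sum_{s\le t}(C-B\,I\otimes H_s)}$, where each $H_s\approx\sign(\Tr_\sfI\sigma_s-I/d_\sfR)$ is a best response. The loss operators have norm $O(B)$. The relative-entropy analysis \cite{WK06} bounds the regret against the optimum $\sigma^*$ by $D(\sigma^*\|\sigma_0)/(\eta T)+O(\eta B^2)$. The key observation is that any feasible $\sigma^*$ has $\Tr_\sfI\sigma^*=I/d_\sfR$, so $D(\sigma^*\|\sigma_0)=\log 2^\ell-\S(\sigma^*)+\log d_\sfR\le\ell$, because $\S(\sigma^*)\ge\S(I/d_\sfR)$. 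Taking $T=O(\ell B^2/\varepsilon^2)=O(\ell/\varepsilon^4)$ iterations and averaging the game values $\Tr((C-BH_t)\sigma_t)+B\Tr(H_t)/d_\sfR$ then gives an $\varepsilon$-estimate, provided each $H_t$ is a $\varepsilon$-approximate best response.

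The third step implements these iterations quantumly, with every quantity accessed only as a normalized trace (\Cref{corr:normalized-trace-estimation}). Step $t$ needs a block-encoding of $G_t=\sum_{s<t}(C-B\,I\otimes H_s)/(tB')$, built by LCU (\Cref{lemma:lcu}) from $C$, which is block-encoded exactly via the gadgets of \Cref{lemma:exact-block-encoding-gadgets}, and from earlier $H_s$. From $G_t$ we get a block-encoding of the unnormalized Gibbs operator $e^{\kappa(G_t-1)}$ by QSVT with \Cref{lemma:exp-polynomial-approx}, where $\kappa=\eta tB'=O(\ell/\varepsilon\cdot\ldots)$, and its normalized trace $Z_t$ gives $\sigma_t$ implicitly. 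The feedback $H_t$ is obtained by applying \Cref{lemma:sign-polynomial-approx} to a block-encoding of $\Tr_\sfI(e^{\kappa(G_t-1)})/Z_t-I/d_\sfR$, appropriately rescaled by $d_\sfR$. The partial trace over the $2^\ell$-dimensional $\sfI$ is obtained by LCU summing over $\ket{r}$. Exactness of the sign is unnecessary, because the payoff is linear and a contraction that is wrong only on eigenvalues of size below $\delta$ costs $O(\delta)$ in trace norm after normalization. The normalized trace quantities are then estimated by the Hadamard test and amplitude estimation.

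The main obstacle is cost blow-up from recursion, and this is why the theorem needs $\ell=O(\sqrt{\log n})$ with constant $\Delta$. The block-encoding of $H_t$ invokes the Gibbs block-encoding at step $t$, which itself contains $H_s$ for all $s<t$, so the gate count multiplies by the polynomial degrees ($\mathrm{poly}(\ell,1/\varepsilon)$ factors, plus subnormalization losses of order $2^\ell$ from the partial trace and $1/Z_t$) at every level. The total is therefore $\exp(O(T\log(\cdot)))$. The first thing to try is bounding the product of degrees and of normalization factors per level by $2^{O(\ell)}$, which would give $\exp(O(T\ell))=\exp(O(\ell^2/\Delta^{c}))$. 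Choosing $\varepsilon=\Theta(\Delta)$ and tuning the error parameters at each level makes the time $\poly(n,\ell,1/\Delta)\exp(O(\ell^2/\Delta^6))$, which is polynomial exactly in the stated regime. The error propagation through $T$ nested approximations also needs care. Each level's error should be amplified by at most the same degree factors, so it is enough to take per-level precision $\exp(-O(T\ell))$, which still costs only logarithmic degree overhead.
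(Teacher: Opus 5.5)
Your outline is the paper's proof. You reduce to \SteerVal{} and replace the marginal constraint by a trace-norm penalty with $B=\Theta(1/\varepsilon)$. You run MMWU with the relative-entropy potential; its initial value is at most $\ell\ln 2$ for the block-diagonal optimum, which you may assume, but for a general feasible $\sigma^*$ the step $\S(\sigma^*)\geq\S(I/d_{\sfR})$ fails. You then realize the Gibbs states and sign feedback by QSVT, at a cost exponential in the recursion depth.

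The one step you argue differently is the penalty lemma, and there your quantitative claim is wrong. A repair losing only $O(\theta)$ with an absolute constant would make the penalty exact for some constant $B$. But exactness requires $B$ to be at least half the eigenvalue spread of some optimal solution of the dual $\min\{\Tr(Y)/d_{\sfR} : Y\succeq\Pi_r\ \forall r\}$, and that spread can grow with $2^\ell$. Your conjugation actually gives $O(\sqrt{\theta})$, which is exactly why $B=\Theta(1/\varepsilon)$ is needed. Mix slightly first so that $\tau\succ0$, set $K=(d_{\sfR}\tau)^{-1/2}$, and apply $K$ on $\sfR$ to a purification $\ket{\psi}$ of $\sigma$. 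This gives a unit vector whose reduced state has marginal $K\tau K=I/d_{\sfR}$. Its overlap with $\ket{\psi}$ is $\Tr(K\tau)=\Tr(\tau^{1/2})/\sqrt{d_{\sfR}}$, the fidelity of $\tau$ and $I/d_{\sfR}$, hence at least $1-\theta/2$. So the repaired state is within trace distance $\sqrt{\theta}$ of $\sigma$, and $\Tr(C\sigma)-B\theta\leq\MSV{\ell}{Q}+\sqrt{\theta}-B\theta\leq\MSV{\ell}{Q}+1/(4B)$. Corrected this way, your primal repair is a valid and somewhat shorter alternative to the paper's dual argument. That argument uses strong duality, then the regularized dual $Y_\theta=Y^*((1-\theta)I+\theta Y^*)^{-1}$ with $\norm{Y_\theta}_\infty\leq 1/\theta$, then H\"older.

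Second, you plan to control error through $T$ nested approximations with per-level precision $\exp(-O(T\ell))$, assuming errors are amplified only by degree factors. That is not the robustness available here. \Cref{thm:qsvt-hermitian} turns an $\epsilon$-error in the input block-encoding into one of order $d\sqrt{\epsilon/\alpha}$, so keeping propagated errors small would need $\log(1/\epsilon)$ exponential in the depth, not $O(T\ell)$. The paper avoids propagation altogether. By \Cref{corr:qsvt-hermitian-contraction}, every circuit is an \emph{exact} block-encoding of some Hermitian contraction, and $H_t$ is \emph{defined} as the contraction the sign-QSVT circuit actually encodes. Since \Cref{lemma:estimate-MSV-with-average-G} only needs $\xi$-approximate best responses, $K_{t+1}$ is then encoded exactly by LCU. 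All errors are therefore local to round $t$: $\widehat{E}_t$ versus $E_t$, and $\widehat{z}_t$ versus $z_t$.

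What you leave implicit is the quantity that sets the per-level cost. It is the lower bound $z_t\geq\mu=e^{-2\alpha_T}$ on the normalized partition function, where $\alpha_T=O(\eta W T)=O(\ell/\Delta^2)$. The sign threshold must be about $\xi\mu$, so each level multiplies the gate count by $e^{O(\ell/\Delta^2)}$, not $2^{O(\ell)}$ uniformly in $\Delta$. Over $T=O(\ell/\Delta^4)$ levels this gives the stated $\exp(O(\ell^2/\Delta^6))$.

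Finally, neither the $2^\ell$ loss from the partial trace nor the $1/z_t$ factor is needed. Since $E_t$ is block diagonal in $\sfI$, $\Tr_{\sfI}(E_t)/2^\ell=(\bra{u}\otimes I)E_t(\ket{u}\otimes I)$ with $\ket{u}=H^{\otimes\ell}\ket{0^\ell}$. And $\Tr_{\sfI}(\rho_t)-I/d_{\sfR}=D_t/(z_t d_{\sfR})$ has the same sign as $D_t=(\bra{u}\otimes I)E_t(\ket{u}\otimes I)-z_t I$.
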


To prove \Cref{thm:qcQAM[sqrt log n]=BQP}, we consider the problem $\SteerVal\sbra*{\ell(n),c(n),s(n)}$, which is shown to be $\qcQAM\sbra*{\ell(n), c(n), s(n)}$-complete by \Cref{lemma:SteerVal-qcQAM[l]-complete}. Our approach is to apply matrix multiplicative weights to approximate the $2^\ell$-ary steering-game value $\MSV{\ell}{Q}$, and then use the QSVT techniques to estimate the quantities arising in the multiplicative-weights updates without explicitly materializing the exponentially large matrices.

\subsubsection{Approximating the \texorpdfstring{$2^\ell$}{}-ary steering-game value}\label{subsubsec:appoximate-MSV}

To obtain a matrix-multiplicative-weights-friendly approximation of $\MSV{\ell}{Q}$, we first combine the variables appearing in $\MSV{\ell}{Q}$ into a single operator. Define
\[
C \coloneq \sum_{r \in \binset^\ell}\ketbra{r}{r}_{\sfI} \otimes \rbra*{\Pi_r}_{\sfR}\enspace,
\qquad
\sigma \coloneq \sum_{r \in \binset^\ell}\ketbra{r}{r}_{\sfI} \otimes \rbra*{\sigma_r}_{\sfR}\enspace.
\]
Then the objective becomes $\sum_{r\in\binset^\ell} \Tr(\Pi_r\sigma_r)=\Tr(C\sigma)$ and the constraints on $\sigma_r$ implies that $\sigma \in \Dens(\sfI, \sfR)$ and $\Tr_{\sfI}(\sigma)=I_{\sfR}/d_{\sfR}$.
Conversely, for any $\sigma\in\Dens(\sfI,\sfR)$ satisfying $\Tr_{\sfI}(\sigma)=I_{\sfR}/d_{\sfR}$, dephasing the register $\sfI$ in the computational basis preserves both the constraints and the objective $\Tr(C\sigma)$, since $C$ is block diagonal in $\sfI$. Hence it suffices to optimize over all density operators satisfying constraints $\sigma \in \Dens(\sfI, \sfR)$ and $\Tr_{\sfI}(\sigma)=I_{\sfR}/d_{\sfR}$.

Therefore, $\MSV{\ell}{Q}$ can be expressed as
\begin{align}\label{eqn:MSV-group-form}
\MSV{\ell}{Q}=\max_{\substack{\sigma \in \Dens(\sfI,\sfR)\\\Tr_{\sfI}(\sigma)=I_{\sfR}/d_{\sfR}}}\Tr(C\sigma)\enspace.
\end{align}

The following lemma removes the constraint $\Tr_{\sfI}(\sigma) = I_{\sfR}/d_{\sfR}$ by penalizing violations of this constraint, at the cost of an additive approximation error.

\begin{lemma}\label{lemma:approximate-MSV-by-introducing-penalty-on-ell-1-norm}
For $\varepsilon \in (0, 1)$ and every state $\rho \in \Dens(\sfI, \sfR)$, define \[\calF_{\varepsilon}(\rho) \coloneq \Tr(C\rho) - \frac{1 + 4\varepsilon}{4\varepsilon}\norm{\Tr_{\sfI}(\rho) - I_{\sfR}/d_{\sfR}}_1\enspace.\]

Then $\max_{\rho \in \Dens(\sfI, \sfR)}\calF_{\varepsilon}(\rho)$ approximates $\MSV{\ell}{Q}$ as follows: 
\[\MSV{\ell}{Q} \leq \max_{\rho \in \Dens(\sfI, \sfR)}\calF_{\varepsilon}(\rho) \leq \MSV{\ell}{Q} + \varepsilon\enspace.\]
\end{lemma}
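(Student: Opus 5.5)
The lower bound $\MSV{\ell}{Q}\le\max_\rho\calF_\varepsilon(\rho)$ is immediate. By \Cref{eqn:MSV-group-form}, an optimal $\sigma^\star$ for $\MSV{\ell}{Q}$ is a density operator on $(\sfI,\sfR)$ with $\Tr_{\sfI}(\sigma^\star)=I_\sfR/d_\sfR$. Its penalty term therefore vanishes, so $\calF_\varepsilon(\sigma^\star)=\Tr(C\sigma^\star)=\MSV{\ell}{Q}$.

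For the upper bound, fix an arbitrary $\rho\in\Dens(\sfI,\sfR)$. Since $C$ is block diagonal in $\sfI$, dephasing $\sfI$ leaves both $\Tr(C\rho)$ and $\Tr_\sfI(\rho)$ unchanged, so we may take $\rho=\sum_r\ketbra{r}{r}\otimes\rho_r$ with $\rho_r\succeq0$. Set $\tau\coloneqq\Tr_\sfI(\rho)=\sum_r\rho_r$, $\delta\coloneqq\norm{\tau-I_\sfR/d_\sfR}_1$, and $B\coloneqq(1+4\varepsilon)/(4\varepsilon)$. The plan is to repair $\rho$ into a feasible point $\sigma$ of \Cref{eqn:MSV-group-form} while controlling the loss $\Tr(C\rho)-\Tr(C\sigma)$ by a linear function of $\delta$ plus at most $\varepsilon$. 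Then $\calF_\varepsilon(\rho)\le\MSV{\ell}{Q}+\varepsilon$ follows once $B$ dominates the linear coefficient.

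The repair proceeds in two steps.
\begin{itemize}
\item \textbf{Shrinking.} Write $\tau-I/d_\sfR=P-N$ with $P,N\succeq0$ and $\Tr P=\Tr N=\delta/2$, the traces being equal because both $\tau$ and $I/d_\sfR$ have unit trace. Then $\tau\preceq I/d_\sfR+P$. Choose a contraction $K$ with $K\tau K^\dagger\preceq I/d_\sfR$ and $\sum_r\Tr(\rho_r-K\rho_rK^\dagger)\le\Tr P=\delta/2$; one option is $K=(I/d_\sfR)^{1/2}\rbra{I/d_\sfR+P}^{-1/2}$, suitably restricted to the support. The loss in objective is then bounded by the lost trace because $0\preceq\Pi_r\preceq I$. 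This is the delicate point: the difference $\Tr(\Pi_r\rho_r)-\Tr(\Pi_r K\rho_rK^\dagger)$ need not be bounded by $\Tr(\rho_r-K\rho_rK^\dagger)$ when $K$ is not a projection. If that bound fails, I would instead use the cruder estimate $\sum_r\norm{\rho_r-K\rho_rK^\dagger}_1\le O(\delta)$ by a gentle-measurement-type argument, or rescale $\rho$ by a scalar.
\item \textbf{Padding.} The deficit $D\coloneqq I/d_\sfR-K\tau K^\dagger\succeq0$ has trace at most $\delta/2$. Add it to one block, say $\sigma_{0^\ell}=K\rho_{0^\ell}K^\dagger+D$. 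This keeps $\sigma$ feasible and can only increase the objective.
\end{itemize}

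A simpler alternative avoids $K$ altogether: mix, taking $\sigma=(1-t)\rho'+t\,(\cdot)$ with the maximally mixed state in the appropriate blocks. When $\delta$ is large, say $\delta\ge 4\varepsilon/(1+4\varepsilon)$, the statement is trivial, since then $\calF_\varepsilon(\rho)\le1-B\delta\le0\le\MSV{\ell}{Q}$. This case split is the source of the specific constant $B=(1+4\varepsilon)/(4\varepsilon)$, and suggests the intended argument is a convex combination whose mixing weight is chosen relative to $\varepsilon$.

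The main obstacle is the shrinking step: producing blocks $\sigma_r\preceq$ feasible with $\sum_r\Tr(\Pi_r(\rho_r-\sigma_r))\le c\,\delta+\varepsilon$ for a constant $c\le B$. I would first try the case split above. In the small-$\delta$ regime I would use the operator inequality $\tau\preceq(1+\gamma)I/d_\sfR+\text{(excess)}$ and the scaling $\sigma_r=\rho_r/(1+\gamma)$, absorbing the remainder into the $\varepsilon$ slack. I would tune $\gamma$ so that $\gamma\le\varepsilon$ and the excess trace is at most $\delta/\gamma$, which matches $B\approx1/(4\varepsilon)$.
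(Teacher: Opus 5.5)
\paragraph{Verdict: genuine gap.} The lower bound is fine, and so is the trivial regime $\delta\geq 1/B$. Your lost-trace estimate for $K=(I+d_\sfR P)^{-1/2}$ is also correct, since $\Tr(\tau(I-K^2))\leq\Tr\rbra*{(I/d_\sfR+P)(I-K^2)}=\Tr P$. The problem is the step you flag, and neither fallback closes it.

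\textbf{Gentle measurement gives $O(\sqrt{\delta})$, not $O(\delta)$.} What it actually gives is $\norm{\rho_r-K\rho_rK}_1\leq 2\sqrt{\Tr(\rho_r)\Tr(\rho_r(I-K^2))}$, which sums to $O(\sqrt{\delta})$ by Cauchy--Schwarz. That order is attained. Take $2(d_\sfR-1)$ equal-weight rank-one blocks $\rho_{j,\pm}\propto\ketbra{\psi_{j,\pm}}{\psi_{j,\pm}}$, with $\ket{\psi_{j,\pm}}=\sqrt{p}\ket{0}\pm\sqrt{1-p}\ket{j}$ and $p=4/d_\sfR$ (remaining blocks zero). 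Then $\delta=6/d_\sfR$ and $K=\mathrm{diag}(1/2,1,\ldots,1)$, yet $\sum_r\norm{\rho_r-K\rho_rK}_1=\Theta(d_\sfR^{-1/2})$.

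\textbf{Rescaling and dominated repairs fail too.} The same example rules out scalar rescaling and any repair with $\sigma_r\preceq\rho_r$. Each such $\sigma_r$ must be a multiple of $\ketbra{\psi_r}{\psi_r}$. Then $\bra{0}\sum_r\sigma_r\ket{0}\leq 1/d_\sfR$ forces you to discard at least $3/4$ of the trace, even though $\delta\to 0$.

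\textbf{No linear bound is possible.} No repair $\rho\mapsto\sigma$ with $\sum_r\Tr(\rho_r-\sigma_r)_+\leq c\,\delta$ for a universal $c$ can exist. It would make the penalty exact (no $+\varepsilon$) once $B\geq c$. That fails on any instance all of whose optimal dual solutions $Y$ (see below) satisfy $\lambda_{\max}(Y)-\lambda_{\min}(Y)>2c$. So ``absorbing the remainder into the $\varepsilon$ slack'' is exactly the unproven step, and the case split does not explain where $\varepsilon$ comes from.

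\paragraph{The missing idea.} Accept a sublinear loss and trade it against the linear penalty. In your framework, set $X_r\coloneqq\rho_r-K\rho_rK$, keep your padding, and use $0\preceq\Pi_r\preceq I$. The objective loss is then at most
\[
\sum_r\Tr(X_r)_+=\sum_r\tfrac12\rbra*{\norm{X_r}_1+\Tr X_r}\leq\sqrt{\delta/2}+\delta/4 .
\]
Hence
\[
\calF_\varepsilon(\rho)-\MSV{\ell}{Q}\leq\sup_{\delta\geq 0}\rbra*{\sqrt{\delta/2}-\rbra*{B-\tfrac14}\delta}=\frac{1}{8B-2}=\frac{\varepsilon}{2+6\varepsilon}<\varepsilon\enspace,
\]
which gives a valid primal proof.

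\paragraph{The paper's route.} The paper argues on the dual side instead. By strong duality, $\MSV{\ell}{Q}=\min\{\Tr(Y)/d_\sfR : Y\succeq\Pi_r\ \forall r\}$. It regularizes an optimal $Y^*$ to $Y_\theta=Y^*((1-\theta)I+\theta Y^*)^{-1}$ with $\theta=1/B$. This $Y_\theta$ stays feasible by operator monotonicity, satisfies $\norm{Y_\theta}_\infty\leq B$, and raises the dual objective by at most $\theta/(4(1-\theta))=\varepsilon$. H\"older then gives $\calF_\varepsilon(\rho)\leq\Tr(Y_\theta)/d_\sfR+(\norm{Y_\theta}_\infty-B)\,\delta\leq\MSV{\ell}{Q}+\varepsilon$.

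Both routes trace the additive $\varepsilon$ to the same fact. The dual optimum can have norm far larger than $B$; truncating it costs $\varepsilon$ on the dual side and shows up as an $O(\sqrt{\delta})$ loss on the primal side.
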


\begin{proof}
We first prove the lower bound. Let $\sigma^*$ be an optimal state in \Cref{eqn:MSV-group-form}. Then $\Tr(C\sigma^*)=\MSV{\ell}{Q}$, and the state satisfies the constraints $\sigma^* \in \Dens(\sfI,\sfR)$ and $\Tr_{\sfI}(\sigma^*)=\frac{I_{\sfR}}{d_{\sfR}}$.
Hence the penalty term vanishes, and therefore
\[
\MSV{\ell}{Q}=\Tr(C\sigma^*)=\calF_{\varepsilon}(\sigma^*) \leq \max_{\rho \in \Dens(\sfI,\sfR)}\calF_{\varepsilon}(\rho)\enspace.
\]

We next prove the upper bound by deriving an equivalent dual expression for $\MSV{\ell}{Q}$. The optimization problem in \Cref{eqn:MSV-group-form} has dual
\[
\min_{\substack{Y=Y^\dagger\\I_{\sfI}\otimes Y \succeq C}}\frac{\Tr(Y)}{d_{\sfR}}\enspace,
\]
by standard semidefinite-programming duality (see, e.g.,~\cite[Section~1.2.3]{Watrous18}).
Moreover, the primal is strictly feasible, for example by taking $\sigma = \frac{I_{\sfI\sfR}}{2^\ell d_{\sfR}}$.
Hence, by Slater's condition (see, e.g.,~\cite[Theorem~1.18(2)]{Watrous18}), strong duality holds, and therefore
\begin{align}\label{eqn:MSV-equivalent-formula}
\MSV{\ell}{Q}=\min_{\substack{Y=Y^\dagger\\I_{\sfI}\otimes Y \succeq C}}\frac{\Tr(Y)}{d_{\sfR}} =\min_{\substack{Y=Y^\dagger\\\forall r \in \binset^\ell, Y \succeq \Pi_r}}\frac{\Tr(Y)}{d_{\sfR}}\enspace,
\end{align}
where we use the definition of $C$ in the last equality.

Let $Y^*$ be an optimal solution to \Cref{eqn:MSV-equivalent-formula}. Since
$Y^* \succeq \Pi_r \succeq 0$ for every $r \in \binset^\ell$, we have $Y^* \succeq 0$. Thus we can write its spectral decomposition as $Y^*=\sum_i \lambda_i \ketbra{\psi_i}{\psi_i}$ where $\lambda_i \geq 0$.
For $\theta\in(0,1)$, define
\[
Y_\theta \coloneq \frac{Y^*}{(1-\theta)I_{\sfR}+\theta Y^*}
= \sum_i \frac{\lambda_i}{1-\theta+\theta\lambda_i} \ketbra{\psi_i}{\psi_i}\enspace.\]
We claim that $Y_\theta$ remains feasible, has bounded operator norm, and increases the dual objective by at most a small additive amount. Indeed,
\[
\norm{Y_\theta}_\infty = \max_i \frac{\lambda_i}{1-\theta+\theta\lambda_i}\leq \frac{1}{\theta}\enspace.
\]
Moreover,
\[
\frac{\Tr(Y_\theta)-\Tr(Y^*)}{d_{\sfR}}
= \frac{1}{d_{\sfR}} \sum_i \frac{\theta\lambda_i(1-\lambda_i)}{1-\theta+\theta\lambda_i} \leq \frac{\theta}{4(1-\theta)}\enspace,
\]
and thus 
\begin{align*}
    \frac{\Tr(Y_\theta)}{d_{\sfR}} \leq \frac{\Tr(Y^*)}{d_{\sfR}} + \frac{\theta}{4(1-\theta)} = \MSV{\ell}{Q} + \frac{\theta}{4(1-\theta)}\enspace.
\end{align*}

It remains to show that $I_{\sfI} \otimes Y_\theta \succeq C$. Since $C=\sum_{r\in\binset^\ell}\ketbra{r}{r}_{\sfI}\otimes \Pi_r$, it suffices to show that $Y_\theta\succeq\Pi_r$ for every $r\in\binset^\ell$. We prove the claim by combining the following two inequalities:
\begin{align*}
    &Y_{\theta} - \frac{\Pi_r}{(1-\theta)I_{\sfR}+\theta \Pi_r}\\
    =& \frac{1}{\theta} I_{\sfR}  - \frac{1 - \theta}{\theta} \cdot \frac{I_{\sfR}}{(1-\theta)I_{\sfR}+\theta Y^*} - \frac{1}{\theta} I_{\sfR}  + \frac{1 - \theta}{\theta} \cdot \frac{I_{\sfR}}{(1-\theta)I_{\sfR}+\theta \Pi_r}\\
    =& \frac{1 - \theta}{\theta} \rbra*{\rbra*{(1-\theta)I_{\sfR}+\theta \Pi_r}^{-1} - \rbra*{(1-\theta)I_{\sfR}+\theta Y^*}^{-1}}\\
    \succeq& 0\enspace,
\end{align*}
where we use that $Y^*\succeq\Pi_r$ and that inversion reverses the Loewner order. Moreover, 
\begin{align*}
    \frac{\Pi_r}{(1-\theta)I_{\sfR}+\theta \Pi_r} - \Pi_r = \theta \cdot  \frac{\Pi_r(I_{\sfR} - \Pi_r)}{(1-\theta)I_{\sfR}+\theta \Pi_r} \succeq 0\enspace,
\end{align*}
where we use that $0 \preceq \Pi_r \preceq I_{\sfR}$.

We set $\theta \coloneq \frac{4\varepsilon}{1 + 4\varepsilon}$. Then $Y_{\theta}$ satisfies
\[
I_{\sfI} \otimes Y_{\theta} \succeq C, \qquad
\norm{Y_\theta}_\infty\leq\frac{1 + 4\varepsilon}{4\varepsilon}, \qquad \text{and} \quad
\frac{\Tr(Y_\theta)}{d_{\sfR}}
\leq \MSV{\ell}{Q} + \varepsilon \enspace.
\]
We use $Y_{\theta}$ to give the upper bound
\begin{align*}
    \calF_{\varepsilon}(\rho) =& \Tr(C\rho) - \frac{1 + 4\varepsilon}{4\varepsilon}\norm{\Tr_{\sfI}(\rho) - I_{\sfR}/d_{\sfR}}_1\\
    \leq& \Tr\rbra*{\rbra*{I_{\sfI} \otimes Y_{\theta}}\rho} - \frac{1 + 4\varepsilon}{4\varepsilon}\norm{\Tr_{\sfI}(\rho) - I_{\sfR}/d_{\sfR}}_1\\
    =& \Tr\rbra*{Y_{\theta}\rbra*{\Tr_{\sfI}(\rho) - I_{\sfR}/d_{\sfR}}} + \Tr\rbra*{Y_{\theta} I_{\sfR}/d_{\sfR}} - \frac{1 + 4\varepsilon}{4\varepsilon}\norm{\Tr_{\sfI}(\rho) - I_{\sfR}/d_{\sfR}}_1\\
    \leq & \norm{Y_{\theta}}_{\infty} \cdot \norm{\Tr_{\sfI}(\rho) - I_{\sfR}/d_{\sfR}}_1 + \frac{\Tr(Y_{\theta})}{d_{\sfR}} - \frac{1 + 4\varepsilon}{4\varepsilon}\norm{\Tr_{\sfI}(\rho) - I_{\sfR}/d_{\sfR}}_1\\
    \leq & \MSV{\ell}{Q} + \varepsilon\enspace,
\end{align*}
where we use $I_{\sfI} \otimes Y_{\theta} \succeq C$ in the second line, we use H\"older inequality for Schatten norms (\Cref{lemma:matrix-Holder}) in the forth line, and we use $\norm{Y_\theta}_\infty\leq\frac{1 + 4\varepsilon}{4\varepsilon}$ and $\frac{\Tr(Y_\theta)}{d_{\sfR}}
\leq \MSV{\ell}{Q} + \varepsilon$ in the last line.
\end{proof}

\Cref{lemma:approximate-MSV-by-introducing-penalty-on-ell-1-norm} reduces the problem of estimating $\MSV{\ell}{Q}$ to the unconstrained optimization problem $\max_{\rho \in \Dens(\sfI,\sfR)} \calF_{\varepsilon}(\rho)$, in which the marginal constraint $\Tr_{\sfI}(\rho)=I_{\sfR}/d_{\sfR}$ is replaced by a trace-norm penalty. However, the quantity $\norm{\Tr_{\sfI}(\rho)-I_{\sfR}/d_{\sfR}}_1$ does not by itself provide a linear direction along which to reduce the constraint violation, as required by the matrix multiplicative weights method. To obtain such a direction, we use the dual characterization of the trace norm:
\begin{align}\label{eqn:trace-norm-duality}
\max_{\substack{H = H^\dagger\\ \norm{H}_{\infty} \leq 1}}\Tr\rbra*{H\rbra*{\Tr_{\sfI}(\rho)-I_{\sfR}/d_{\sfR}}} = \norm{\Tr_{\sfI}(\rho)-I_{\sfR}/d_{\sfR}}_1\enspace.
\end{align}
Motivated by this duality, we introduce the following function. For $\varepsilon\in(0,1)$, $\rho\in\Dens(\sfI,\sfR)$, and Hermitian contraction $H$ on $\sfR$, define
\[
G_{\varepsilon}(\rho,H) \coloneq \Tr(C\rho) - \frac{1+4\varepsilon}{4\varepsilon} \Tr\rbra*{H\rbra*{\Tr_{\sfI}(\rho)-I_{\sfR}/d_{\sfR}}}\enspace.
\]
When $H$ approximately attains the maximum in \Cref{eqn:trace-norm-duality}, the quantity $G_{\varepsilon}(\rho,H)$ approximates $\calF_{\varepsilon}(\rho)$ from above.

We now apply the matrix multiplicative weights method to $G_{\varepsilon}$. At each iteration, we choose a Hermitian contraction that approximately attains the maximum in \Cref{eqn:trace-norm-duality}. More precisely, let $\xi\in(0,1)$ be an accuracy parameter and let $\eta\in(0,1)$ be a learning rate, to be specified later. We recursively define the states $\rho_t$ and Hermitian contractions $H_t$ as follows:
\begin{itemize}
    \item Initialize $\rho_1 \coloneq \frac{I_{\sfI\sfR}}{2^\ell d_{\sfR}}$ and set $K_1 \coloneq 0$.
    \item For round $t \geq 1$, we choose an arbitrary Hermitian contraction $H_t$ such that \[\Tr(H_t(\Tr_{\sfI}(\rho_t) - I_{\sfR}/d_{\sfR})) \geq \norm{\Tr_{\sfI}(\rho_t) - I_{\sfR}/d_{\sfR}}_1 - \xi\enspace.\]
    Define $A_t \coloneq C - \frac{1 + 4\varepsilon}{4\varepsilon} I_{\sfI} \otimes H_t$ and $K_{t + 1} \coloneq K_t + \eta A_j$. Then the state for the next iteration is defined as \[\rho_{t + 1} \coloneq \frac{e^{K_{t + 1}}}{\Tr(e^{K_{t + 1}})}\enspace.\]
\end{itemize}

The following lemma bounds the average value obtained by this procedure.

\begin{lemma}\label{lemma:estimate-MSV-with-average-G}
If $0 < \eta (1 + \frac{1 + 4\varepsilon}{4\varepsilon}) \leq 1$, for every integer $T \geq 1$, states $\rho_1, \cdots, \rho_T$, and Hermitian contractions $H_1, \cdots, H_T$ generated by the above procedure satisfy
\[\MSV{\ell}{Q} - \frac{\ell \ln 2}{\eta T} - \eta W^2\leq \frac{1}{T}\sum_{t = 1}^T G_{\varepsilon}(\rho_t, H_t) \leq \MSV{\ell}{Q} + \varepsilon + \frac{1 + 4\varepsilon}{4\varepsilon} \cdot \xi \enspace.\]
\end{lemma}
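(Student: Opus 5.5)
The two bounds come from different sources: the upper bound from the near-optimal choice of each $H_t$ combined with \Cref{lemma:approximate-MSV-by-introducing-penalty-on-ell-1-norm}, and the lower bound from the standard relative-entropy regret analysis of matrix multiplicative weights. Write $B\coloneq\frac{1+4\varepsilon}{4\varepsilon}$, and take $W\coloneq 1+B$ to be the bound on $\norm{A_t}_\infty$. This holds because $\norm{C}\leq 1$ (each $\Pi_r$ is an effect) and $\norm{I_\sfI\otimes H_t}\leq 1$.

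For the \emph{upper bound}, use the choice of $H_t$: $\Tr(H_t(\Tr_\sfI(\rho_t)-I_\sfR/d_\sfR))\geq\norm{\Tr_\sfI(\rho_t)-I_\sfR/d_\sfR}_1-\xi$. This gives $G_\varepsilon(\rho_t,H_t)\leq\calF_\varepsilon(\rho_t)+B\xi$. Now \Cref{lemma:approximate-MSV-by-introducing-penalty-on-ell-1-norm} yields $\calF_\varepsilon(\rho_t)\leq\MSV{\ell}{Q}+\varepsilon$. Averaging over $t$ finishes this direction.

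For the \emph{lower bound}, observe that $G_\varepsilon(\rho,H)=\Tr(A\rho)+B\Tr(H)/d_\sfR$, where $A=C-B\,I_\sfI\otimes H$. Let $\sigma^*$ be optimal in \Cref{eqn:MSV-group-form}. Since $\Tr_\sfI(\sigma^*)=I_\sfR/d_\sfR$, we have $G_\varepsilon(\sigma^*,H_t)=\Tr(C\sigma^*)=\MSV{\ell}{Q}$ for every $t$. Hence it suffices to show the regret bound
\[
\frac1T\sum_t \Tr(A_t\sigma^*)\leq\frac1T\sum_t\Tr(A_t\rho_t)+\frac{\ell\ln2}{\eta T}+\eta W^2 ,
\]
because the constant term $B\Tr(H_t)/d_\sfR$ is identical on both sides.

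To prove the regret bound, I will run the usual potential argument with $\Phi_t\coloneq\ln\Tr e^{K_t}$. By Golden--Thompson, $\Tr e^{K_t+\eta A_t}\leq\Tr(e^{K_t}e^{\eta A_t})$. Since $\norm{\eta A_t}\leq 1$, the operator bound $e^{X}\preceq I+X+X^2$ holds for $\norm{X}\leq1$. Combined with $\ln(1+x)\leq x$, this gives $\Phi_{t+1}-\Phi_t\leq\eta\Tr(A_t\rho_t)+\eta^2W^2$. Summing over $t$ and using $\Phi_1=\ln(2^\ell d_\sfR)$ gives an upper bound on $\Phi_{T+1}$. For the matching lower bound, apply the Gibbs variational principle: $\Phi_{T+1}\geq\Tr(K_{T+1}\sigma^*)+\S(\sigma^*)=\eta\sum_t\Tr(A_t\sigma^*)+\S(\sigma^*)$. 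The key point is that $\S(\sigma^*)\geq\S(\Tr_\sfI\sigma^*)=\ln d_\sfR$, since $\sigma^*$ is block diagonal in the classical register $\sfI$ and conditional entropy is nonnegative for classical-quantum states. This is exactly the relative-entropy refinement $\D(\sigma^*\|\rho_1)=\ln(2^\ell d_\sfR)-\S(\sigma^*)\leq\ell\ln2$. Rearranging and dividing by $\eta T$ gives the claim.

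The main obstacle is securing the $\ell\ln 2$ (instead of $\ln(2^\ell d_\sfR)$) term, i.e., the entropy lower bound on $\sigma^*$. One must either use the dephased optimum or note directly that $\Tr_\sfI(\sigma^*)$ is maximally mixed and apply $\S(\sfI\sfR)\geq\S(\sfR)$ for cq states. Some care is also needed in the second-order bound, where the hypothesis $\eta(1+B)\leq1$ ensures $\norm{\eta A_t}\leq1$.
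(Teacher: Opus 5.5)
Your proposal is correct and is essentially the paper's proof: the upper bound is handled identically, and for the lower bound the paper uses the same Golden--Thompson and $e^{x}\le 1+x+x^{2}$ step but tracks the potential $R_t=\D(\sigma^*\|\rho_t)=\ln\Tr e^{K_t}-\Tr(K_t\sigma^*)-\S(\sigma^*)$, so your $\Phi_t$ bookkeeping combined with the Gibbs variational principle is just $R_{T+1}\ge 0$ rewritten. Likewise, your cq-entropy bound $\S(\sigma^*)\ge\ln d_{\sfR}$ is the same fact the paper obtains from $d_{\sfR}\sigma^*_r\preceq I_{\sfR}$ to conclude $R_1\le\ell\ln 2$.
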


\begin{proof}
For convenience, we extend the procedure by defining
\[
K_{T+1} \coloneq \eta \sum_{j=1}^{T} A_j, \qquad \text{and} \quad \rho_{T+1} \coloneq \frac{e^{K_{T+1}}}{\Tr(e^{K_{T+1}})}\enspace.
\]

We start with the upper bound. 

Since $H_t$ is chosen to satisfy $\Tr(H_t(\Tr_{\sfI}(\rho_t) - I_{\sfR}/d_{\sfR})) \geq \norm{\Tr_{\sfI}(\rho_t) - I_{\sfR}/d_{\sfR}}_1 - \xi$, we obtain
\[G_{\varepsilon}(\rho_t, H_t) \leq  F_{\varepsilon}(\rho_t) + \frac{1 + 4\varepsilon}{4\varepsilon} \xi \leq \MSV{\ell}{Q} + \varepsilon + \frac{1 + 4\varepsilon}{4\varepsilon} \cdot \xi \enspace,\] 
where we use \Cref{lemma:approximate-MSV-by-introducing-penalty-on-ell-1-norm} in the last inequality.

Next we show the lower bound. 

Let $\sigma^*$ denote the state that obtains the maximum in \Cref{eqn:MSV-group-form}. Without loss of generality, we can write $\sigma^*$ as 
\[\sigma^* = \sum_{r \in \binset^\ell}\ketbra{r}{r}_{\sfI} \otimes \rbra*{\sigma^*_r}_{\sfR}\enspace.\] 
As each $\rho_{t}$ is a Gibbs state and hence full rank, the quantum relative entropy $D(\sigma^* \| \rho_t) \coloneq \Tr\rbra*{\sigma^*(\ln \sigma^* - \ln \rho_t)}$ is finite. We set $R_t \coloneq D(\sigma^* \| \rho_t)$ and use it as a potential function measuring the progress of the multiplicative-weights updates relative to the optimal solution $\sigma^*$.

We next analyze the change in the potential $R_t$ in one round. For each round $t$,
\begin{subequations}\label{eqn:diff-Rt-Rt+1}
\begin{align}
-R_{t+1} + R_t &= \Tr(\sigma^*(\ln \rho_{t + 1} - \ln \rho_t))\\
&= \Tr(\sigma^*(K_{t + 1} - \ln \Tr\rbra*{e^{K_{t + 1}}} - K_t + \ln \Tr\rbra*{e^{K_t}}))\\
&= \eta \Tr(\sigma^*A_t) + \ln \frac{\Tr\rbra*{e^{K_t}}}{\Tr\rbra*{e^{K_{t + 1}}}}\enspace,
\end{align}
\end{subequations}
where we use that $\rho_t = \frac{e^{K_t}}{\Tr\rbra*{e^{K_t}}}$ and thus $\ln \rho_t = K_t - \ln \Tr\rbra*{e^{K_t}}$ in the second line.

Define $Z_t \coloneq \Tr(e^{K_t})$. We continue to lower bound $\ln \frac{Z_t}{Z_{t + 1}}$.

Set $W \coloneq 1 + \frac{1 + 4\varepsilon}{4\varepsilon}$. By triangle inequality, \[\norm{A_t}_{\infty} \leq \norm{C}_{\infty} + \frac{1 + 4\varepsilon}{4\varepsilon}\norm{I_{\sfI} \otimes H_t}_{\infty} \leq 1 + \frac{1 + 4\varepsilon}{4\varepsilon} = W\enspace.\] 
Then $\norm{\eta A_t}_{\infty} \leq \eta W \leq 1$, which means the eigenvalues of $\eta A_t$ lies in $[-1, 1]$. Notice that $e^x \leq 1 + x + x^2$ for $x \in [-1, 1]$. We obtain
\begin{align}\label{eqn:upper-bound-e-eta-A-t}
    e^{\eta A_t} \preceq I_{\sfI\sfR} + \eta A_t + \eta^2 A_t^2 \preceq I_{\sfI\sfR} + \eta A_t + \eta^2 W^2 I_{\sfI\sfR}\enspace.
\end{align}
Define $Z_t \coloneq \Tr(e^{K_t})$. Then
\begin{align*}
    \frac{Z_{t + 1}}{Z_t} &= \frac{\Tr(e^{K_t + \eta A_t})}{Z_t}\\
    &\leq \frac{\Tr\rbra*{e^{K_t}e^{\eta A_t}}}{Z_t}\\
    &\leq \frac{\Tr\rbra*{e^{K_t}\rbra*{I_{\sfI\sfR} + \eta A_t + \eta^2 W^2 I_{\sfI\sfR}}}}{Z_t}\\
    &= 1 + \eta \Tr\rbra*{A_t\rho_t} + \eta^2 W^2\enspace,
\end{align*}
where we use the Golden--Thompson inequality \cite{Golden65,Thompson65} in the second line, and we use \Cref{eqn:upper-bound-e-eta-A-t} in the third line.

Taking logarithms, we obtain
\begin{align*}
    \ln \frac{Z_{t + 1}}{Z_t} \leq & \ln(1 + \eta \Tr\rbra*{A_t\rho_t} + \eta^2 W^2) \leq \eta \Tr\rbra*{A_t\rho_t} + \eta^2 W^2 \enspace,
\end{align*}
where we use the inequality $\ln(1 + x) \leq x$, and thus
\begin{align}\label{eqn:lower-bound-for-ln-Zt-divided-by-ln-Zt+1}
    \ln \frac{Z_t}{Z_{t + 1}} \geq -\eta \Tr\rbra*{A_t\rho_t} - \eta^2 W^2 \enspace.
\end{align}

Combining \Cref{eqn:diff-Rt-Rt+1,eqn:lower-bound-for-ln-Zt-divided-by-ln-Zt+1}, we obtain
\begin{subequations}\label{eqn:progress-vs-error}
\begin{align}
    -R_{t + 1} + R_t &\geq \eta \Tr\rbra*{A_t(-\rho_t + \sigma^*)} - \eta^2 W^2\\
    &= \eta \Tr\rbra*{C(-\rho_t + \sigma^*)} + \eta \cdot \frac{1 + 4\varepsilon}{4\varepsilon} \Tr(H_t \rbra*{\Tr_{\sfI}\rbra*{\rho_t} - I_{\sfR}/d_{\sfR}}) - \eta^2 W^2\\
    &= \eta\rbra*{\MSV{\ell}{Q} - G_{\varepsilon}(\rho_t, H_t)} - \eta^2 W^2\enspace,
\end{align}
\end{subequations}
where we plug in $A_t$ and use that $\Tr_{\sfI}\rbra*{\sigma^*} = I_{\sfR}/d_{\sfR}$ in the second line.

Informally, \Cref{eqn:progress-vs-error} says that in each round, either $G_\varepsilon(\rho_t,H_t)$ is already close to $\MSV{\ell}{Q}$ from below, or the potential decreases by a comparable amount in round $t$. Since the potential $R_t$ is nonnegative and its initial value $R_1$ is bounded, the latter cannot occur too many times.

The argument can be formally analyzed by the average of $G_{\varepsilon}(\rho_t, H_t)$:
\begin{align*}
\MSV{\ell}{Q} - \frac{1}{T}\sum_{t = 1}^T G_{\varepsilon}(\rho_t, H_t) &= \frac{1}{T}\sum_{t = 1}^T \rbra*{\MSV{\ell}{Q} -  G_{\varepsilon}(\rho_t, H_t)}\\
& \leq \frac{1}{\eta T}\sum_{t = 1}^T \rbra*{-R_{t + 1} + R_t} + \eta W^2\\
& = \frac{-R_{T + 1} + R_1}{\eta T} + \eta W^2\\
&\leq \frac{R_1}{\eta T} + \eta W^2\enspace,
\end{align*}
where we use \Cref{eqn:progress-vs-error} in the second line, and that $R_{T + 1} \geq 0$ in the last line.

The following bound on $R_1$ concludes the proof:
\begin{align*}
    R_1 &= \Tr\rbra*{\sigma^* (\ln \sigma^* - \ln \rho_1)}\\
    &= \Tr\rbra*{\sigma^* (\ln \sigma^* + \ln (2^\ell d_{\sfR}) I_{\sfI\sfR})}\\
    &= \sum_{r \in \binset^\ell}\Tr\rbra*{\sigma^*_r \ln \sigma^*_r} + \ln (2^\ell d_{\sfR})\\
    &= \frac{1}{d_{\sfR}}\sum_{r \in \binset^\ell}\Tr\rbra*{d_{\sfR}\sigma^*_r \ln \rbra*{d_{\sfR}\sigma^*_r}}  - \sum_{r \in \binset^\ell}\ln \rbra*{d_{\sfR}}\Tr\rbra*{\sigma^*_r}+ \ln (2^\ell d_{\sfR})\\
    &\leq 0 - \ln d_{\sfR} + \ln (2^\ell d_{\sfR})\\
    & = \ell \ln 2\enspace,
\end{align*}
where we use plug in $\sigma^*$ in the third line, and we use that $d_{\sfR} \sigma^*_r \preceq d_{\sfR} \sum_{r \in \binset^\ell} \sigma^*_r = I_{\sfR}$ in the fifth line.
\end{proof}

\subsubsection{Proof of \texorpdfstring{\Cref{thm:qcQAM[sqrt log n]=BQP}}{}}

\begin{proof}[Proof of \Cref{thm:qcQAM[sqrt log n]=BQP}]
Since the verifier may simply ignore the prover's message and decide a $\BQP$ problem directly, we immediately obtain $\BQP \subseteq \qcQAM\sbra*{\ell(n), c(n), s(n)}$. It therefore remains to establish the reverse inclusion $\qcQAM\sbra*{\ell(n), c(n), s(n)} \subseteq \BQP$. Notice that the problem $\SteerVal\sbra*{\ell(n),c(n),s(n)}$ is $\qcQAM\sbra*{\ell(n), c(n), s(n)}$-complete (see \Cref{lemma:SteerVal-qcQAM[l]-complete}). Hence it suffices to give a $\BQP$ algorithm for $\SteerVal\sbra*{\ell(n),c(n),s(n)}$, or, equivalently, to estimate $\MSV{\ell}{Q}$ to within constant additive error in quantum polynomial time.

Set parameters $\varepsilon \coloneq \frac{\Delta}{32}$, $B \coloneq \frac{1 + 4\varepsilon}{4\varepsilon}$, $W \coloneq 1 + B$, $\xi \coloneq \frac{\Delta}{32B}$, $\eta \coloneq \frac{\Delta}{32W^2}$, and $T \coloneq \lceil \frac{32 \ell \ln 2}{\eta \Delta}\rceil$. Then \Cref{lemma:estimate-MSV-with-average-G} gives that
\[\MSV{\ell}{Q} - \Delta/16 \leq \frac{1}{T}\sum_{t = 1}^T G_{\varepsilon}(\rho_t, H_t) \leq \MSV{\ell}{Q} + \Delta/16 \enspace.\]

It remains to estimate $\frac{1}{T}\sum_{t = 1}^T G_{\varepsilon}(\rho_t, H_t)$ within constant additive error in quantum polynomial time. To this end, we use the QSVT techniques without explicitly materializing the exponentially large matrices $\rho_t$ and $H_t$. We use the notations from \Cref{subsubsec:appoximate-MSV}. 

We show how to obtain block-encodings of the relevant operators.

\paragraph{Block-encoding of $C$} We may assume without loss of generality that $Q$ is controlled by the register $\sfI$, which stores $r$. (We can ensure this by first copying $r$ from $\sfI$ into a fresh register $\sfJ$ using $\CNOT$ gates and then running $Q$ with $\sfJ$ in place of $\sfI$.)

Then $Q = \sum_r \ketbra{r}{r}_{\sfI} \otimes Q_r$. Thus by direct calculation, the unitary $U_Q \coloneq Q^\dagger (I_{\sfO} - 2\ketbra{1}{1}_{\sfO}) Q$ is a $(1, a, 0)$-block-encoding of $I_{\sfI\sfR} - 2C$ where $a$ is the number of ancillary qubits used by $Q$:
\begin{subequations}
\begin{align*}
    &\rbra*{\bra{\bar{0}}_{\sfA} \otimes I_{\sfR\sfI}} U_Q \rbra*{\ket{\bar{0}}_{\sfA} \otimes I_{\sfR\sfI}}\\
    =& \rbra*{\bra{\bar{0}}_{\sfA} \otimes I_{\sfR\sfI}} Q^\dagger (I_{\sfO} - 2\ketbra{1}{1}_{\sfO}) Q \rbra*{\ket{\bar{0}}_{\sfA} \otimes I_{\sfR\sfI}}\\
    =& \sum_r \ketbra{r}{r}_{\sfI} \otimes \rbra*{\bra{\bar{0}}_{\sfA} \otimes I_{\sfR}} Q_r^\dagger (I_{\sfO} - 2\ketbra{1}{1}_{\sfO}) Q_r \rbra*{\ket{\bar{0}}_{\sfA} \otimes I_{\sfR}}\\
    =& \sum_r \ketbra{r}{r}_{\sfI} \otimes (I_{\sfR} - 2\Pi_r)\\
    =& I_{\sfI\sfR} - 2C\enspace.
\end{align*}
\end{subequations}

Applying \Cref{lemma:lcu} on the identity $I_{\sfI\sfR}$ and $I_{\sfI\sfR} - 2C$ gives a $(1, a^{(C)}, 0)$-block-encoding of $\frac{1}{2}(I_{\sfI\sfR} - (I_{\sfI\sfR} - 2C)) = C$, where $a^{(C)} \coloneq a + 1$. Let $S^{(C)}$ denote the gate complexity of the $(1, a^{(C)}, 0)$-block-encoding of $C$. Then by the efficiency of $Q$, $S^{(C)} = \poly(n)$.

\paragraph{Rewriting $\rho_t$} 
To identify the operators associated with $\rho_t$ that need to be block-encoded, we first rewrite $\rho_t$ in terms of an operator with norm at most one.

Define $\alpha_t \coloneq 2\eta (1 + \frac{1 + 4\varepsilon}{4\varepsilon})(t - 1)$. Then by the triangle inequality, $\alpha_t$ upper bounds $\norm{K_t}$: \[\norm{K_t} \leq \eta(t - 1) \norm{C} +  \eta\frac{1 + 4\varepsilon}{4\varepsilon}\sum_{j < t}\norm{I_{\sfI} \otimes H_j} \leq \eta (1 + \frac{1 + 4\varepsilon}{4\varepsilon})(t - 1) < \alpha_t \enspace.\] Since scalar shifts of $K_t$ cancel upon normalization, we have
\[\rho_t = \frac{e^{K_t}}{\Tr\rbra*{e^{K_t}}} = \frac{E_t}{2^\ell \cdot z_t d_{\sfR}}\enspace,\]
where $E_t \coloneq e^{K_t - \alpha_t I_{\sfI\sfR}}$ is an operator with norm at most one and $z_t \coloneq \Tr(E_t)/\rbra*{2^\ell d_{\sfR}}$ denotes the normalized trace of $E_t$. 

Moreover, by construction, $K_t$ is block-diagonal over $\sfI$, and so does $\rho_t$. Thus 
\[\Tr_{\sfI}(\rho_t) - I_{\sfR}/d_{\sfR} = 2^\ell \rbra*{\bra{u}_{\sfI} \otimes I_{\sfR}} \rho_t  \rbra*{\ket{u}_{\sfI} \otimes I_{\sfR}} - I_{\sfR}/d_{\sfR} = \frac{D_t}{z_t d_{\sfR}}\enspace,\]
where $\ket{u}_{\sfI} \coloneq \frac{1}{2^{\ell/2}}\sum_{r \in \binset^\ell} \ket{r}_{\sfI}$ and $D_t \coloneq \rbra*{\bra{u}_{\sfI} \otimes I_{\sfR}} E_t \rbra*{\ket{u}_{\sfI} \otimes I_{\sfR}} - z_t I_{\sfR}$. 

For the subsequent error analysis, we need a uniform lower bound on $z_t$. Since $\norm{K_t}\leq\alpha_t$, we have $E_t\succeq e^{-2\alpha_t}I_{\sfI\sfR}$ and hence $z_t\geq e^{-2\alpha_t}$.
Defining $\mu\coloneq e^{-2\alpha_T}$ and using $\alpha_t\leq\alpha_T$ for every $t\in[T]$, we obtain
\[
    \mu\leq z_t\leq1
    \qquad\text{for every }t\in[T]\enspace.
\]

\paragraph{Block-encodings of $E_t$, $D_t$ and $H_t$}

Next we iteratively construct $(2, a^{(E)}_t, \delta)$-block-encoding of $E_t$ with $S^{(E)}_t$ gates, $(4, a^{(D)}_t, \delta')$-block-encoding of $D_t$ with $S^{(D)}_t$ gates, and $(2, a_{t}^{(H)}, 0)$-block-encoding of $H_t$ with $S^{(H)}_t$ gates via induction as follows, where $\delta \coloneq \frac{\mu^2 \xi}{32}$ and $\delta' \coloneq 3\delta$, and $a^{(E)}_t$, $a^{(D)}_t$, $a_{t}^{(H)}$, $S^{(E)}_t$, $S^{(D)}_t$, and $S_{t}^{(H)}$ are integers to be determined.

For the base case $t = 1$, we set $a^{(E)}_1 \coloneq 2$, $a^{(D)}_t \coloneq 1$, and $a^{(H)}_{1} \coloneq 2$. By the initialization of $\rho_1$ and $K_1$, $E_1 = I_{\sfI\sfR}$, $D_1 = 0$, and $\Tr_{\sfI}(\rho_1) - I_{\sfR}/d_{\sfR} = 0$. Thus $H_1 \coloneq I_{\sfR}$ satisfies that
\[\Tr(H_1(\Tr_{\sfI}(\rho_1) - I_{\sfR}/d_{\sfR})) \geq \norm{\Tr_{\sfI}(\rho_1) - I_{\sfR}/d_{\sfR}}_1 - \xi\enspace.\]
Therefore, $E_1$ has a trivial $(2, a^{(E)}_1, \delta)$-block-encoding $H_{\sfA}(\ketbra{0}{0}_{\sfA} \otimes I_{\sfA'\sfI\sfR} + \ketbra{1}{1}_{\sfA} \otimes X_{\sfA'} \otimes I_{\sfI \sfR}) H_{\sfA}$, $H_1$ has a trivial $(2, a_{1}^{(H)}, 0)$-block-encoding $H_{\sfA}(\ketbra{0}{0}_{\sfA} \otimes I_{\sfA'\sfR} + \ketbra{1}{1}_{\sfA} \otimes X_{\sfA'} \otimes I_{\sfR}) H_{\sfA}$, and $D_1$ has a trivial $(4, a_1^{(D)}, \delta')$-block-encoding $X_{\sfA} \otimes I_{\sfR}$. Let $S^{(E)}_1$, $S^{(D)}_1$, and $S_{1}^{(H)}$ denote the number of gates in the above block-encodings of $E_1$, $D_1$, and $H_1$, respectively. Then $S^{(E)}_1 = O(1)$, $S^{(D)}_1 = O(1)$, and $S_{1}^{(H)} = O(1)$.

For the induction step $t$, suppose that $(2, a^{(E)}_j, \delta)$-block-encoding of $E_j$ with $S_j^{(E)}$ gates, $(4, a^{(D)}_j, \delta')$-block-encoding of $D_j$  with $S_j^{(D)}$ gates, and $(2, a_{j}^{(H)}, 0)$-block-encoding of $H_j$ with $S_j^{(H)}$ gates for $j < t$ are provided. We construct block-encodings of $E_t$, $D_t$, and $H_t$.

Set $\delta_1 \coloneq \delta/2$. Applying \Cref{lemma:exp-polynomial-approx} on $K_t/\alpha_t$ which has operator norm at most 1, there exists a polynomial $P_{\delta_1, \alpha_t}^{(\sf exp)}$ with degree $d^{(\sf exp)}_t = O(\sqrt{\alpha_t \ln (1/\delta_1)} + \ln(1/\delta_1))$ such that
\[\norm{P_{\delta_1, \alpha_t}^{(\sf exp)}(K_t/\alpha_t) - e^{\alpha_t(K_t/\alpha_t - I_{\sfI\sfR})}} \leq \delta_1\enspace.\]
In other words,
\begin{align}\label{eqn:approx-exp-Et}
    \norm{P_{\delta_1, \alpha_t}^{(\sf exp)}(K_t/\alpha_t) - E_t} \leq \delta_1\enspace.
\end{align}

We now construct a block-encoding of $K_t$, in order to apply \Cref{corr:qsvt-hermitian-contraction} to get a block-encoding of $P_{\delta_1, \alpha_t}^{(\sf exp)}(K_t/\alpha_t)/2$. Notice that $K_t$ can be written as a linear combination of operators which we have the block-encodings. Specifically, $K_t = \eta(t - 1)C - \eta \frac{1 + 4\epsilon}{4\epsilon}\sum_{j < t} I_{\sfI} \otimes H_j$. Thus by \Cref{lemma:lcu}, we can construct $(\alpha_t, a^{(K)}_t, 0)$-block-encoding of $K_t$ based on the $(2, a_{j}^{(H)}, 0)$-block-encoding of $H_j$ for $j < t$ and the $(1, a^{(C)}, 0)$-block-encoding of $C$, where $a^{(K)}_t \coloneq \lceil \log_2 t\rceil + \max (a^{(C)}, \max_{j < t}a_{j}^{(H)})$. The block-encoding of $K_t$ has $O\rbra*{\log^2 t\sum_{j = 1}^{t - 1}\rbra*{t + S^{(C)} + S^{(H)}_j}}$ gates.

Set $\delta_2\coloneq\delta/2$. Applying \Cref{corr:qsvt-hermitian-contraction} to the $(\alpha_t,a_t^{(K)},0)$-block-encoding of $K_t$ and the polynomial $P_{\delta_1,\alpha_t}^{(\sf exp)}$ gives a $(2,a_t^{(E)},0)$-block-encoding $U_{\widehat{E}_t}$ of a Hermitian contraction $\widehat E_t$, such that
\[
    \norm*{\widehat E_t- P_{\delta_1,\alpha_t}^{(\sf exp)}(K_t/\alpha_t)} \leq\delta_2\enspace,
\]
and $U_{\widehat{E}_t}$ uses $S^{(E)}_t \coloneq O\rbra*{d^{(\sf exp)}_t\rbra*{\log^2 t\sum_{j = 1}^{t - 1}\rbra*{t + S^{(C)} + S^{(H)}_j} + a_t^{(E)}}}$ gates, where $a_t^{(E)}\coloneq a_t^{(K)}+4$.
By \Cref{eqn:approx-exp-Et},
\[
    \norm*{\widehat E_t-E_t} \leq\delta_1+\delta_2 =\delta\enspace.
\]
Therefore, $U_{\widehat{E}_t}$ is also a $(2,a_t^{(E)},\delta)$-block-encoding of $E_t$.

To construct block-encoding of $D_t = \rbra*{\bra{u}_{\sfI} \otimes I_{\sfR}} E_t \rbra*{\ket{u}_{\sfI} \otimes I_{\sfR}} - z_t I_{\sfR}$, we estimate $z_t$ with the normalized trace estimation. We apply \Cref{corr:normalized-trace-estimation} to $U_{E_t}$ with precision $\delta_3 \coloneq \delta$, repeat the procedure $L \coloneq 2\lceil4\ln(30T)\rceil+1$ times independently, and take the median. Denote the resulting estimate of $z_t$ by $\widehat{z}_t \in [0, 1]$, which has additive error at most $\delta_3 + \delta$ with probability at least $1 - \frac{1}{30T}$.\footnote{Since each
run succeeds with probability at least $8/\pi^2>3/4$,
Hoeffding's inequality bounds the probability that at least
half of the runs fail by $e^{-L/8}\leq1/(30T)$.
The median is within additive error $\delta_3 + \delta$ whenever a majority of the runs succeed.}

Define the approximation to $D_t$ by $\widehat{D}_t \coloneq \rbra*{\bra{u}_{\sfI} \otimes I_{\sfR}} \widehat{E}_t \rbra*{\ket{u}_{\sfI} \otimes I_{\sfR}} - \widehat{z}_t I_{\sfR}$. Then when the estimation of $z_t$ is successful, i.e. when $\abs{z_t - \widehat{z}_t} \leq \delta + \delta_3$,
\begin{align}\label{eqn:Dt-hatDt-closeness}
    \norm{D_t - \widehat{D}_t} \leq \norm{E_t - \widehat{E}_t} + \abs{z_t - \widehat{z}_t} \leq 2\delta + \delta_3 = \delta'\enspace.
\end{align}

The block-encoding of $\widehat{D}_t$ can be constructed via linear combination of unitaries. Notice that $\rbra*{\bra{u}_{\sfI} \otimes I_{\sfR}} \widehat{E}_t \rbra*{\ket{u}_{\sfI} \otimes I_{\sfR}}$ has a $(2, a_t^{(E)} + \ell, 0)$-block-encoding $H^{\otimes \ell}_{\sfI}U_{E_t}H^{\otimes \ell}_{\sfI}$, and $\widehat{z}_t I_{\sfR}$ has a trivial $(2, 2, 0)$-block-encoding since $\widehat{z}_t \leq 1$. By \Cref{lemma:lcu}, we can construct a $(4, a_t^{(E)} + \ell + 1, 0)$-block-encoding of $\widehat{D}_t$, denoted as $U_{\widehat{D}_t}$, such that $U_{\widehat{D}_t}$ uses $S^{(D)}_t \coloneq O(S_t^{(E)} + \ell)$ gates. By \Cref{eqn:Dt-hatDt-closeness}, $U_{\widehat{D}_t}$ is also a $(4, a_t^{(D)}, \delta')$-block-encoding of $D_t$,  where $a_t^{(D)} \coloneq a_t^{(E)} + \ell + 1$.

It remains to construct block-encoding of $H_t$. Recall that
\begin{align*}
    &\Tr_{\sfI}(\rho_t) - I_{\sfR}/d_{\sfR}
    = \frac{D_t}{z_t d_{\sfR}}\enspace.
\end{align*}

Set $ \delta_4 \coloneq \delta$. By \Cref{lemma:sign-polynomial-approx}, there exists a polynomial $P^{(\sf sgn)}_{\delta_4, \delta_4}$ with degree $d^{(\sf sgn)}_t = O(\ln (1/\delta_4)/\delta_4)$ such that for every $x \in [-2, 2]\setminus(-\delta_4, \delta_4)$, $\abs{P^{(\sf sgn)}_{\delta_4, \delta_4}(x) x - \abs{x}} = \abs{x} \cdot \abs{P^{(\sf sgn)}_{\delta_4, \delta_4}(x) - \sign{x}} \leq 2 \cdot \delta_4$, and for every $x \in (-\delta_4, \delta_4)$, $\abs{P^{(\sf sgn)}_{\delta_4, \delta_4}(x) x - \abs{x}} = \abs{x} \cdot \abs{P^{(\sf sgn)}_{\delta_4, \delta_4}(x) - \sign{x}} \leq \delta_4 \cdot 2$. Thus for the operator $\widehat{D}_t$,
\begin{align}\label{eqn:sign-approximate}
    \norm{P^{(\sf sgn)}_{\delta_4, \delta_4}(\widehat{D}_t/4) \widehat{D}_t/4 - \abs{\widehat{D}_t/4}} \leq 2\delta_4\enspace.
\end{align}

Fix a precision parameter $\delta_5 \coloneq \delta$. Applying \Cref{corr:qsvt-hermitian-contraction} on polynomial $P^{(\sf sgn)}_{\delta_4, \delta_4}$,  the Hermitian operator $\widehat{D}_t$ and its $(4, a_t^{(D)}, 0)$-block-encoding, there exists a Hermitian contraction $H_t$ with a $(2, a_t^{(D)} + 4, 0)$-block-encoding, denoted by $U_{H_t}$, such that
\begin{align}\label{eqn:Psgn-approx-Ht}
    \norm{H_t - P^{(\sf sgn)}_{\delta_4, \delta_4}(\widehat{D}_t/4)} \leq \delta_5\enspace.
\end{align}
and $U_{H_t}$ uses $S^{(H)}_t \coloneq O\rbra*{d^{(\sf sgn)}_t \rbra*{S^{(D)}_t + a_t^{(D)} }}$ gates.
We set $a_t^{(H)} \coloneq a_t^{(D)} + 4$. Then $U_{H_t}$ is a $(2, a_t^{(H)}, 0)$-block-encoding of $H_t$. 

It remains to show that $H_t$ satisfy that 
\[\Tr(H_t(\Tr_{\sfI}(\rho_t) - I_{\sfR}/d_{\sfR})) \geq \norm{\Tr_{\sfI}(\rho_t) - I_{\sfR}/d_{\sfR}}_1 - \xi\enspace.\]

This can be established by the triangle inequality.
\begin{align*}
    &\abs{\Tr(H_t \widehat{D}_t) - \norm{\widehat{D}_t}_1}\\
    \leq& d_{\sfR} \norm{H_t \widehat{D}_t - \abs{\widehat{D}_t}}\\
    \leq& d_{\sfR} \norm{P^{(\sf sgn)}_{\delta_4, \delta_4}(\widehat{D}_t/4)\widehat{D}_t - \abs{\widehat{D}_t}} + 4d_{\sfR} \delta_5 \tag{By \Cref{eqn:Psgn-approx-Ht}}\\
    \leq& 8d_{\sfR} \delta_4 + 4d_{\sfR} \delta_5\enspace. \tag{By \Cref{eqn:sign-approximate}}
\end{align*}
Then as $\widehat{D}_t$ is an approximation of $D_t$,
\begin{align*}
&\abs{\Tr(H_t D_t) - \norm{D_t}_1}\\
\leq & \abs{\Tr(H_t \widehat{D}_t) - \norm{\widehat{D}_t}_1} + \abs{\norm{D_t}_1 - \norm{\widehat{D}_t}_1} + \abs{\Tr(H_t (\widehat{D}_t - D_t))}\\
\leq & 8d_{\sfR}\delta_4 + 4d_{\sfR} \delta_5 + 2d_{\sfR} \norm{\widehat{D}_t - D_t}\\
\leq & 8d_{\sfR}\delta_4 + 4d_{\sfR} \delta_5 + 2d_{\sfR}\delta'\enspace, \tag{By \Cref{eqn:Dt-hatDt-closeness}}
\end{align*}
as long as the estimation of $z_t$ is successful, which implies 
\[\Tr(H_t(\Tr_{\sfI}(\rho_t) - I_{\sfR}/d_{\sfR})) \geq \norm{\Tr_{\sfI}(\rho_t) - I_{\sfR}/d_{\sfR}}_1 - 18\delta/z_t \geq \norm{\Tr_{\sfI}(\rho_t) - I_{\sfR}/d_{\sfR}}_1 - \xi \enspace,\]
as long as the estimation of $z_t$ is successful, where we use that $z_t \geq \mu$, $\delta \leq \frac{\mu\xi}{32}$, and $\delta' = 3\delta$. 

\paragraph{Computing $G_{\varepsilon}(\rho_t, H_t)$}
With all the block-encodings of $H_t$, $D_t$, and $E_t$ for $t \in [T]$, we now compute $\frac{1}{T}\sum_{t \in [T]}G_{\varepsilon}(\rho_t, H_t)$, which gives an estimate of $\MSV{\ell}{Q}$.

We write $v_t \coloneq G_{\varepsilon}(\rho_t, H_t)$ as follows.
\begin{align*}
v_t &= \Tr(C\rho_t) - \frac{1+4\varepsilon}{4\varepsilon} \Tr\rbra*{H_t\rbra*{\Tr_{\sfI}(\rho_t)-I_{\sfR}/d_{\sfR}}}\\
&= \frac{\Tr(CE_t)}{2^\ell d_{\sfR}} \cdot \frac{1}{z_t} - \frac{1+4\varepsilon}{4\varepsilon} \frac{\Tr\rbra*{H_tD_t}}{d_{\sfR}} \cdot \frac{1}{z_t}\enspace,
\end{align*}
where we have $(1, a^{(C)}, 0)$-block-encoding of $C$, $(2, a^{(E)}_t, \delta)$-block-encoding of $E_t$, $(4, a^{(D)}_t, \delta')$-block-encoding of $D_t$, and $(2, a_{t}^{(H)}, 0)$-block-encoding of $H_t$.

Let $a_t \coloneq \frac{\Tr(CE_t)}{2^\ell d_{\sfR}}$ and $b_t \coloneq \frac{\Tr\rbra*{H_tD_t}}{d_{\sfR}}$. Then we have $(2, a^{(C)} + a^{(E)}_t, \delta)$-block-encoding of $CE_t$, and $(8, a^{(D)}_t + a^{(H)}_t, \delta')$-block-encoding of $H_tD_t$. Since $C$, $E_t$, $D_t$, and $H_t$ are all Hermitian, $\Tr(CE_t) \in \bbR$ and $\Tr(D_tH_t) \in \bbR$, and thus $a_t$ and $b_t$ can be estimated using the normalized trace estimation: we apply \Cref{corr:normalized-trace-estimation} to $CE_t$ and $H_tD_t$ and their block-encodings with precision $\delta_6 \coloneq \delta$, repeat the procedure $L$ times independently, and take the median. Denote the resulting estimate by $\widehat{a}_t$ and $\widehat{b}_t$. Then $\abs{\widehat{a}_t - a_t} \leq \delta_6 + \delta$ with probability at least $1 - \frac{1}{30T}$, and $\abs{\widehat{b}_t - b_t} \leq \delta_6 + \delta'$ with probability at least $1 - \frac{1}{30T}$.

Conditioned on all the estimations of $a_t$, $b_t$, and $z_t$ are successful for $t \in [T]$, which happens with probability at least $1 - \frac{1}{30T} \cdot 3T = \frac{9}{10}$, our estimation of $v_t$, defined as \[\widehat{v}_t \coloneq \frac{\widehat{a}_t}{\widehat{z}_t'} - \frac{1 + 4\varepsilon}{4\varepsilon}\frac{\widehat{b}_t}{\widehat{z}_t'}\enspace,\]
where $\widehat{z}_t' \coloneq \min (1, \max(\mu, \widehat{z}_t))$, satisfies that
\begin{align*}
    &\abs{\widehat{v}_t - v_t}\\
    =& \abs{\rbra*{\frac{\widehat{a}_t}{\widehat{z}_t'} - \frac{a_t}{z_t}  }- \frac{1 + 4\varepsilon}{4\varepsilon}\rbra*{\frac{\widehat{b}_t}{\widehat{z}_t'} - \frac{{b}_t}{{z}_t}}}\\
    \leq& \abs{\frac{\widehat{a}_t}{\widehat{z}_t'} - \frac{a_t}{z_t}} +  \frac{1 + 4\varepsilon}{4\varepsilon}\abs{\frac{\widehat{b}_t}{\widehat{z}_t'} - \frac{{b}_t}{{z}_t}} \tag{By the triangle inequality}\\
    \leq& \frac{\abs{\widehat{a}_t - a_t}}{\widehat{z}_t'} + \frac{\abs{a_t}\abs{\widehat{z}_t' - z_t}}{z_t \widehat{z}_t'} +  \frac{1 + 4\varepsilon}{4\varepsilon}\frac{\abs{\widehat{b}_t - b_t}}{\widehat{z}_t'} +  \frac{1 + 4\varepsilon}{4\varepsilon}\frac{\abs{b_t}\abs{\widehat{z}_t' - z_t}}{z_t \widehat{z}_t'}  \tag{By the triangle inequality}\\
    \leq& \frac{\delta_6 + \delta}{\mu} + \frac{2\delta}{\mu^2} +  \frac{1 + 4\varepsilon}{4\varepsilon}\rbra*{\frac{\delta_6 + \delta'}{\mu} + \frac{2\delta}{\mu^2}}\\
    \leq & \frac{\Delta}{16}\enspace,
\end{align*}
where we plug in $\delta = \frac{\mu^2 \xi}{32}$ and $\xi = \frac{\Delta}{32B}$.

That is to say, $\frac{1}{T}\sum_{t \in [T]}\widehat{v}_t$ is an estimation of $\frac{1}{T}\sum_{t = 1}^T G_{\varepsilon}(\rho_t, H_t)$ within additive error $\frac{\Delta}{16}$ with probability at least $\frac{9}{10}$, and thus $\frac{1}{T}\sum_{t \in [T]}\widehat{v}_t$ is an estimation of $\MSV{\ell}{Q}$ within additive error $\frac{\Delta}{8}$ with probability at least $\frac{9}{10}$. Therefore, we can solve $\SteerVal\sbra*{\ell(n),c(n),s(n)}$ by comparing $\frac{1}{T}\sum_{t \in [T]}\widehat{v}_t$ with the threshold $\frac{c(n) + s(n)}{2}$.

In particular, the following algorithm solves $\SteerVal\sbra*{\ell(n),c(n),s(n)}$. The correctness of \algoref{algo:MSV-estimation} follows from the above arguments. In the  remainder of the proof, we analyze the efficiency of \algoref{algo:MSV-estimation}.

\begingroup
\LinesNumbered
\begin{algorithm}[!ht]
    \SetAlgorithmName{Algorithm}{algorithm}{List of Algorithms}
    \caption{Quantum algorithm for $\SteerVal[\ell,c,s]$.}
    \label{algo:MSV-estimation}
    \SetEndCharOfAlgoLine{.}
    \setlength{\parskip}{5pt}
    \SetKwFor{While}{}{:}{}
    \SetKwFor{For}{For}{:}{}
    \SetKwIF{If}{ElseIf}{Else}{If}{:}{elif}{Else:}{}
    \SetKwComment{Comment}{// }{}
    \SetKwInOut{Input}{Input}
    \SetKwInOut{Output}{Output}

    \Input{A quantum circuit $Q$ defining a
    $\SteerVal[\ell,c,s]$ instance.}
    \Output{Accept if $\MSV{\ell}{Q}\geq c$ and reject if
    $\MSV{\ell}{Q}\leq s$ with probability at least $9/10$.}

    \medskip
    {\color{gray}\Comment{Initialization.}}

    Set $\Delta\coloneq c-s$ and choose parameters $\varepsilon, B, W, \xi, \eta, T, \mu, \delta$, and $L$ as above.

    Construct $U_C$ from $Q$. Initialize the trivial
    block-encodings of $\widehat E_1=I_{\sfI\sfR}$,
    $\widehat D_1=0$, and $H_1=I_{\sfR}$,
    and set $\widehat z_1=1$.

    \medskip
    {\color{gray}\Comment{Constructing the block-encodings.}}

    \For{$t=2,\ldots,T$}{
        Construct $U_{\widehat E_t}$ from $U_C$ and
        $\{U_{H_j}\}_{j<t}$ as described above\;

        Estimate $z_t=\Tr(E_t)/(2^\ell d_{\sfR})$
        using $U_{\widehat E_t}$ by taking the median of $L$ independent estimates produced by \Cref{corr:normalized-trace-estimation} and denote the resulting
        estimate as $\widehat z_t$\;

        Construct $U_{\widehat D_t}$ from
        $U_{\widehat E_t}$ and $\widehat z_t$,
        and then construct $U_{H_t}$
        as described above\;
    }

    \medskip
    {\color{gray}\Comment{Estimating the payoffs.}}

    For each $t\in[T]$, use product block-encodings and
    normalized trace estimation to obtain estimates
    $\widehat a_t$ and $\widehat b_t$ of $a_t=\frac{\Tr(CE_t)}{2^\ell d_{\sfR}}$ and $b_t=\frac{\Tr(H_tD_t)}{d_{\sfR}}$ by taking the median of $L$ independent estimates produced by \Cref{corr:normalized-trace-estimation}.

    For each $t\in[T]$, set $\widehat z_t' =\min\{1,\max\{\mu,\widehat z_t\}\}$ and $\widehat v_t=\frac{\widehat{a}_t}{\widehat{z}_t'} - \frac{1 + 4\varepsilon}{4\varepsilon}\frac{\widehat{b}_t}{\widehat{z}_t'}$.

    \medskip
    {\color{gray}\Comment{Putting everything together.}}

    Accept if $\frac1T\sum_{t=1}^T\widehat v_t\geq(c+s)/2$,
    and reject otherwise.
\end{algorithm}
\endgroup

\paragraph{Efficiency analysis.}

Since $\Delta=\Theta(1)$, our parameter choices imply $\varepsilon = O(1)$, $B = O(1)$, $W = O(1)$, $\xi = O(1)$, $\eta = O(1)$, and $T = O(\ell)$. Moreover, $\alpha_T = O(T) = O(\ell)$, $\mu = O(e^{-2\alpha_T})$, $\delta = O(\mu^2)$, and thus $\log(1/\delta) = O(\alpha_T) = O(\ell)$.

We analyze the number of ancillary qubits of the block-encodings of $C$, $E_t$, $D_t$, and $H_t$. By the efficiency of $Q$, $a^{(C)} = \poly(n)$. By our inductive construction, $a^{(E)}_1 = 2$, $a^{(D)}_t = 1$, and $a^{(H)}_{1} = 2$, and for every $t \in \{2, 3, \cdots, T\}$, $a_t^{(E)} = \lceil \log_2 t\rceil + \max (a^{(C)}, \max_{j < t}a_{j}^{(H)}) + 4$, $a_t^{(D)} = a_t^{(E)} + \ell + 1$, and $a_t^{(H)} = a_t^{(D)} + 4$. Solving these recurrences gives $a_T^{(E)} = O(a^{(C)} + T(\ell + \log_2 T))$. Therefore, for every $t \in [T]$, $a^{(E)}_t$, $a^{(D)}_t$, and $a^{(H)}_t$ are all bounded by $O(a^{(C)} + \ell^2)$.

Next we analyze the gate complexity of the block-encodings of $C$, $E_t$, $D_t$, and $H_t$. By the efficiency of $Q$, $S^{(C)}=\poly(n)$. By our inductive construction, $S^{(E)}_1 = O(1)$, $S^{(D)}_1 = O(1)$, and $S_{1}^{(H)} = O(1)$, and for every $t \in \{2, 3, \cdots, T\}$, $S^{(E)}_t= O\rbra*{d^{(\sf exp)}_t\rbra*{\log^2 t\sum_{j = 1}^{t - 1}\rbra*{t + S^{(C)} + S^{(H)}_j} + a_t^{(E)}}}$, $S^{(D)}_t = O(S_t^{(E)} + \ell)$, and $S^{(H)}_t = O\rbra*{d^{(\sf sgn)}_t \rbra*{S^{(D)}_t + a_t^{(D)}}}$. Plugging in the bounds for $a_t^{(E)}$, $a_t^{(D)}$, $S^{(C)}$, $d^{(\sf exp)}_t = O(\ell)$ and $d^{(\sf sgn)}_t = \exp\rbra*{O(\ell)}$, we obtain $S^{(E)}_t = O\rbra*{\poly(\ell) \rbra*{\sum_{j = 1}^{t - 1}S^{(H)}_j + \poly(n, \ell)}}$, $S^{(D)}_t = O(S_t^{(E)} + \ell)$, and $S^{(H)}_t = O\rbra*{\exp\rbra*{O(\ell)} \rbra*{S^{(D)}_t + \poly(n, \ell)}}$. 
Solving these recurrences gives $S_T^{(E)} = \poly(n,\ell)\exp(O(\ell^2))$. Therefore, for every $t \in [T]$, $S^{(E)}_t$, $S^{(D)}_t$, and $S^{(H)}_t$ are all bounded by $\poly(n,\ell)\exp(O(\ell^2))$.

There are at most $3T$ normalized trace estimations, each requiring $O(L/\delta)$ oracle calls after success amplification, since the normalization factors are constants. Including the cost of the oracle circuits and the additional gates, the total gate complexity is
\[
    O(TL/\delta)\cdot\poly(n,\ell)\exp(O(\ell^2))
    =\poly(n,\ell)\exp(O(\ell^2))\enspace,
\]
where we use $T=O(\ell)$, $L=O(\log T)$, and $\delta^{-1}=\exp(O(\ell))$. This is polynomial in $n$ when $\ell=O(\sqrt{\log n})$.

The required classical computations, including the generation of the circuit descriptions, can also be performed in polynomial time by the efficient constructions above. This establishes the efficiency of \algoref{algo:MSV-estimation} and, together with the correctness analysis, completes the proof.
\end{proof}

\section*{Acknowledgments}
\noindent
The authors thank Alessandro Chiesa and Thomas Vidick for insightful discussions, and Alessandro Chiesa in particular for suggesting interactive proof systems with a laconic prover in the quantum setting as a direction to consider. 

\sloppypar
ZH was supported in part by the Ethereum Foundation and the Global Chinese Community of Universal Digital Commons.
YL was supported in part by funding from the Swiss State Secretariat for Education, Research and Innovation (SERI). 
This work was also supported in part by a grant of access to OpenAI models through the ChatGPT for Academic Researchers program.

\section*{AI use disclosure}
\noindent
Large language model tools were used interactively at all stages of the preparation of this manuscript: from the earliest exploratory research stages to the final exposition and writing steps. The final manuscript was substantially revised by the authors, who remain solely responsible for all mathematical claims, proofs, references, and conclusions.

\bibliographystyle{alphaurlQ}
\bibliography{Qlaconic}

\appendix
\crefalias{section}{appendix}
\crefalias{subsection}{appendix}
\crefalias{subsubsection}{appendix}

\section{Polarizing the total variation distance in the natural regime}
\label{sec:SD-in-SZK}

The \textsc{Statistical Difference Problem} ($\SD[a,b]$) is the classical counterpart of \Cref{def:QSD}, with efficiently samplable distributions generated by (not necessarily reversible) Boolean circuits on uniformly random input bits and total variation distance as the closeness measure.

\SDinSZKnaturalRegime*

Combining \Cref{thm:SD-in-SZK-natural-regime} with~\cite[Lemma 4.1]{GVW02}, we obtain the following inclusion, which improves~\cite[Theorem 3.1]{GVW02}:
\begin{theorem}[$\IPbit\subseteq\SZK$ in the natural regime]
\label{thm:IPbit-in-SZK}
Let $c(n)$ and $s(n)$ be efficiently computable functions such that $0 \leq s(n) < c(n) \leq 1$. For all sufficiently large $n$, we have:
\[ \text{If } c(n)-s(n) \geq 1/O(\log{n}), \quad \IPbit[c,s] \subseteq \SZK. \]
\end{theorem}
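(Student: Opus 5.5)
The inclusion $\IPbit[c,s]\subseteq\SZK$ will come from composing two reductions, mirroring how \Cref{thm:QIPbit-in-QSZK} is obtained from \Cref{thm:QSD-is-QIPbit-Complete}\ref{thmitem:QSD-hardness} and \Cref{thm:QSD-in-QSZK-natural-regime}. Fix $\calI\in\IPbit[c,s]$ with $\Delta(n)\coloneqq c(n)-s(n)\geq 1/O(\log n)$.

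\emph{Step 1 (hardness).} By~\cite[Lemma~4.1]{GVW02}, $\IPbit[c,s]$ reduces to $\SD[c',s']$, and the thresholds are preserved up to the constant input-length shift of \Cref{footnote:shifted-number}. This is the classical analogue of \Cref{thm:QSD-is-QIPbit-hard}. The reduction maps $x$ to two samplers that take the verifier's randomness and message. The first sampler outputs (message, flag) with the flag recording acceptance under response $0$. The second sampler does the same with response $1$, except that its accept flag is relabeled. By the classical binary Helstrom identity, their total variation distance then equals the optimal prover acceptance probability.

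\emph{Step 2 (shift in length).} The thresholds become $c'(n)=c(n-O(1))$ and $s'(n)=s(n-O(1))$. The gap $c'(n)-s'(n)=\Delta(n-O(1))\geq 1/O(\log(n-O(1)))=1/O(\log n)$ is still in the natural regime for sufficiently large $n$. Efficient computability of $c'$ and $s'$ is inherited from $c$ and $s$.

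\emph{Step 3 (SZK containment).} Apply \Cref{thm:SD-in-SZK-natural-regime} to $\SD[c',s']$ to get $\SD[c',s']\in\SZK$. Since $\SZK$ is closed under polynomial-time Karp reductions, $\calI\in\SZK$.

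The only step that needs any care is checking that the GVW02 reduction holds for all $0\leq s<c\leq 1$ with no loss in the thresholds. The endpoint cases $s=0$ or $c=1$ follow from the same exact identity, so no parameter loss arises there either. With that confirmed, the argument is a direct composition.
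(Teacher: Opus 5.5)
Your proposal is correct and follows the paper's argument: the paper obtains this theorem simply by composing the \SD{}-hardness of \IPbit{} from \cite[Lemma~4.1]{GVW02} (the classical analogue of \Cref{thm:QSD-is-QIPbit-hard}) with \Cref{thm:SD-in-SZK-natural-regime}. The paper leaves three details implicit, and you supply them correctly: the sampler construction via the binary Helstrom identity, the check that the constant input-length shift keeps the gap at $1/O(\log n)$, and the observation that the exact identity also covers the endpoint cases $s=0$ and $c=1$.
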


To establish \Cref{thm:SD-in-SZK-natural-regime}, as in \Cref{thm:QSD-in-QSZK-natural-regime}, we first bound
the approximation error by a scalar error and then encode the approximation as an instance of the \textsc{Entropy Difference Problem} (\ED{}).
For distributions $D_0$ and $D_1$ over the same finite set, define
\[
    \JS_2(D_0,D_1)
    \coloneqq\H_2\rbra*{\frac{D_0+D_1}{2}}
        -\frac{\H_2(D_0)+\H_2(D_1)}2\enspace.
\]
We use the classical counterpart of \Cref{eq:state-interpolate}, with $D_+ \coloneqq (D_0+D_1)/2$ and
\begin{equation}
    \label{eq:classical-smoothing}
    \forall z\in\binset, \quad D_{z,\lambda} \coloneqq \frac{1+\lambda}{2} D_z + \frac{1-\lambda}{2} D_{1-z},
    \quad \text{where } \lambda\in\sbra*{0,\frac{1}{2}}\enspace.
\end{equation}

\begin{lemma}[Scalar error bound for total variation distance approximation]
\label{lem:classical-transfer}
Let $D_0$ and $D_1$ be probability distributions over the same finite set. For every positive integer $J$, real coefficients $c_j$, and $\lambda_j\in[0,1/2]$ for $j\in[J]$, we have
\[
    \abs*{\sum_{j=1}^J c_j\JS_2(D_{0,\lambda_j},D_{1,\lambda_j}) -\TV(D_0,D_1)}
    \leq\sup_{|t|\leq1} \abs*{\sum_{j=1}^J c_j\Phi(\lambda_j t)-|t|}\enspace.
\]
Here, the function $\Phi(x)\coloneqq 1-\H_2\rbra*{(1+x)/2}$ for $\abs{x}<1$. 
\end{lemma}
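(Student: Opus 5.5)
The plan is to reduce \Cref{lem:classical-transfer} to its quantum counterpart \Cref{lemma:td-scalar-error-bound} by embedding the distributions as commuting (diagonal) density operators. Let $S$ be the common finite support set. Set $\rho_z\coloneqq\sum_{x\in S}D_z(x)\ketbra{x}{x}$ for $z\in\binset$. All of the relevant states are then diagonal in the same basis.

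The key steps are the following identifications. First, the interpolated states satisfy
\[
\rho_{z,\lambda}=\sum_{x\in S}D_{z,\lambda}(x)\ketbra{x}{x},
\]
because the map $D\mapsto\sum_x D(x)\ketbra{x}{x}$ is linear and \Cref{eq:state-interpolate} and \Cref{eq:classical-smoothing} use identical coefficients $\frac{1\pm\lambda}{2}$. Second, for any diagonal state the von Neumann entropy equals the Shannon entropy of its diagonal, $\S_2\bigl(\sum_x p(x)\ketbra{x}{x}\bigr)=\H_2(p)$. Applying this to $\rho_{0,\lambda}$, $\rho_{1,\lambda}$ and their average $\rho_+$, which is the embedding of $D_+$, gives
\[
\QJS_2(\rho_{0,\lambda_j},\rho_{1,\lambda_j})=\JS_2(D_{0,\lambda_j},D_{1,\lambda_j})
\]
for every $j$. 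Third, $\rho_0-\rho_1$ is diagonal with entries $D_0(x)-D_1(x)$. Its trace norm is therefore $\sum_x\abs{D_0(x)-D_1(x)}$, so $\TD(\rho_0,\rho_1)=\TV(D_0,D_1)$.

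Substituting these three identities into \Cref{lemma:td-scalar-error-bound}, applied with the same $J$, $c_j$ and $\lambda_j\in[0,1/2]$, yields exactly the claimed inequality. The right-hand side is the same scalar supremum in both statements.

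The only point needing care is that \Cref{lemma:td-scalar-error-bound} holds for arbitrary quantum states of equal dimension, including the commuting case. Its ingredients, \Cref{lem:smoothed-integral} and \Cref{lem:hockey-derivative}, place no noncommutativity assumption on the states, so I expect no real obstacle. If one preferred a self-contained classical proof, the same route works directly: the classical hockey-stick divergence $\sum_x(D_-(x)-tD_+(x))_+$ is convex and nonincreasing in $t$. The smoothed integral representation then specializes pointwise in $x$, and Bonnet's argument (\Cref{lemma:bonnet-mean-value}) goes through unchanged. The embedding argument is shorter, however, and I would present that.
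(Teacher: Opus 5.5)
Your proof is correct but takes a different route from the paper. Embedding $D_0,D_1$ as diagonal states turns $\QJS_2(\rho_{0,\lambda_j},\rho_{1,\lambda_j})$ into $\JS_2(D_{0,\lambda_j},D_{1,\lambda_j})$ and $\TD(\rho_0,\rho_1)$ into $\TV(D_0,D_1)$. \Cref{lemma:td-scalar-error-bound} then applies verbatim, since nothing in its proof uses noncommutativity.

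The paper does not use the quantum lemma here; it argues directly and pointwise. It restricts to $x$ with $D_+(x)>0$ and sets $\theta(x)\coloneqq (D_0(x)-D_1(x))/(D_0(x)+D_1(x))\in[-1,1]$. Then $D_{z,\lambda}(x)=D_+(x)\bigl(1+(-1)^z\lambda\theta(x)\bigr)$, which yields two exact identities: $\JS_2(D_{0,\lambda},D_{1,\lambda})=\sum_x D_+(x)\Phi(\lambda\theta(x))$ and $\TV(D_0,D_1)=\sum_x D_+(x)\abs{\theta(x)}$. The claim follows from the triangle inequality and $\sum_x D_+(x)=1$.

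Compared with your reduction, the paper's argument has three advantages:
\begin{itemize}
\item It avoids the hockey-stick divergence, the smoothed integral representation, integration by parts and Bonnet's theorem.
\item It keeps the classical $\SZK$ result independent of the quantum machinery.
\item It gives a slightly stronger intermediate bound: the $D_+$-weighted average of the pointwise errors $\abs{\sum_j c_j\Phi(\lambda_j\theta(x))-\abs{\theta(x)}}$, not just their supremum.
\end{itemize}
Your embedding is the shorter write-up once \Cref{lemma:td-scalar-error-bound} is available. However, the self-contained classical alternative you sketch (classical hockey-stick divergence plus Bonnet) is heavier than needed. In the commutative case the Jensen--Shannon divergence of the smoothed pair already equals a $D_+$-average of $\Phi(\lambda\theta(x))$, so no integral representation is required.
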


\begin{proof}
We express both quantities $\JS_2(D_{0,\lambda},D_{1,\lambda})$ and $\TV(D_0,D_1)$ as weighted sums with weights $D_+(x)$. Restrict all sums over $x$ to $D_+(x)>0$, and set $\theta(x)\coloneqq\frac{D_0(x)-D_1(x)}{D_0(x)+D_1(x)}\in[-1,1]$. 
Then, $D_{z,\lambda}(x)=D_+(x)(1+(-1)^z\lambda \theta(x))$. A direct calculation gives
\begin{align*}
    \JS_2(D_{0,\lambda},D_{1,\lambda})
    &=\frac12\sum_x\sum_{z\in\binset} D_{z,\lambda}(x)\log\frac{D_{z,\lambda}(x)}{D_+(x)}\\
    &=\sum_xD_+(x) \frac{(1+\lambda \theta(x))\log(1+\lambda \theta(x)) +(1-\lambda \theta(x))\log(1-\lambda \theta(x))}2\\
    &=\sum_xD_+(x)\Phi(\lambda \theta(x))\enspace,\\
    \TV(D_0,D_1)
    &=\frac12\sum_x\abs{D_0(x)-D_1(x)}
      =\sum_xD_+(x)\abs{\theta(x)}\enspace.
\end{align*}

Consequently, it follows that
\begin{align*}
    \abs*{\sum_{j=1}^Jc_j\JS_2(D_{0,\lambda_j},D_{1,\lambda_j})-\TV(D_0,D_1)}
    &\leq\sum_xD_+(x)\abs*{ \sum_{j=1}^J c_j\Phi(\lambda_j \theta(x)) -\abs{\theta(x)}}\\
    &\leq\sup_{|t|\leq1}\abs*{ \sum_{j=1}^J c_j\Phi(\lambda_j \theta(x)) -|t|}\enspace.
\end{align*}
Here, the first line follows from the triangle inequality, and the last line uses $\sum_x D_+(x)=1$.
\end{proof}

\paragraph{Implementing the \ED{} instance.}
The \textsc{Entropy Difference Problem} ($\ED[g]$) is the classical counterpart of $\QED[g]$. Given Boolean circuits $(C_0,C_1)$ of total description length $n$, let $D_z$ be the probability distribution generated by $C_z$ when its input is chosen uniformly at random for each $z\in\binset$. The promise underlying $\ED[g]$ is the following:
\begin{itemize}
    \item \emph{Yes:} $\H_2(D_0)-\H_2(D_1)\geq g(n)$;
    \item \emph{No:} $\H_2(D_0)-\H_2(D_1)\leq-g(n)$.
\end{itemize}

Following~\cite[Theorem~1.4]{GoldreichVadhan99} and~\cite[Theorem~2]{GoldreichSahaiVadhan98}, $\ED[1]$ is in \SZK{}, and the \SZK{} containment extends directly to $g(n)\geq 1/\poly(n)$, as stated in~\cite[Theorem~3.16]{BDRV19}:\footnote{\cite[Theorem~3.16]{BDRV19} uses the common Boolean circuit output length as the parameter $n$. Appending deterministic zeros to both outputs until their lengths equal the original total description length $n$ preserves $H_2(D_0)$ and $H_2(D_1)$, and hence their entropy difference.}
\begin{lemma}[\ED{} is in \SZK{}, adapted from~{\cite[Theorem~3.16]{BDRV19}}]
    \label{lemma:ED-in-SZK}
    For every efficiently computable function $g\colon\Naturals\to\mathbb{R}_{>0}$ such that $g(n)\geq1/\poly(n)$, we have $\ED[g]\in\SZK$.
\end{lemma}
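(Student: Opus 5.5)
The statement immediately above is \Cref{lemma:ED-in-SZK}, which is adapted from \cite[Theorem~3.16]{BDRV19}. The plan is therefore not to reprove the \SZK{} protocol for entropy difference. Instead, I will reduce $\ED[g]$ with $g(n)\geq 1/\poly(n)$ to the constant-gap version $\ED[1]$, which lies in \SZK{} by \cite[Theorem~1.4]{GoldreichVadhan99} and \cite[Theorem~2]{GoldreichSahaiVadhan98}. Since \SZK{} is closed under polynomial-time Karp reductions, this suffices.

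\emph{Step one: normalize the parameter $n$.} Given $(C_0,C_1)$ with total description length $n$, append deterministic zeros to both outputs until they have common output length equal to $n$ (or any fixed polynomial in $n$). This leaves $\H_2(D_0)$ and $\H_2(D_1)$, and hence their difference, unchanged. It reconciles our convention (total description length) with the output-length convention in \cite{BDRV19}, so the gap $g$ is still $1/\poly$ in the new parameter.

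\emph{Step two: amplify the gap by repetition.} Take $k\coloneqq\ceil{1/g(n)}=\poly(n)$ and form circuits $C_z^{\otimes k}$ that run $k$ independent copies of $C_z$ on disjoint random inputs. By additivity of Shannon entropy for product distributions, $\H_2(D_z^{\otimes k})=k\,\H_2(D_z)$. Hence the entropy difference becomes at least $k g\geq 1$ on \emph{yes} instances and at most $-kg\leq -1$ on \emph{no} instances. The new circuits have size $\poly(n)$, and $k$ is efficiently computable because $g$ is. This is a valid Karp reduction to $\ED[1]$.

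\emph{Main obstacle: bookkeeping of input lengths.} The promise threshold in $\ED[1]$ is stated as a function of the new instance's length $n'=\poly(n)$, whereas the gap we obtain is the absolute constant $1$. Since the threshold is constant, it does not depend on $n'$, so this causes no difficulty. If the cited containment were instead stated with the output length as parameter, the zero-padding of step one makes the two parameters agree. Combining the two steps with the cited $\ED[1]\in\SZK$ yields $\ED[g]\in\SZK$.
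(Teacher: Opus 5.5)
Your proposal is correct and follows the paper's route. The paper invokes $\ED[1]\in\SZK$ from \cite{GoldreichVadhan99,GoldreichSahaiVadhan98} and the extension to $1/\poly(n)$ gaps recorded in \cite[Theorem~3.16]{BDRV19}, with the same zero-padding remark to reconcile the parameter $n$; your $k$-fold product construction is the standard justification of that ``extends directly'' step, and as you note the padding becomes superfluous once the gap is the constant $1$. One cosmetic point: take $k\coloneqq p(n)$ for a polynomial $p$ with $g(n)\geq 1/p(n)$, so you never need to compute $\lceil 1/g(n)\rceil$ exactly.
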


\begin{proof}[Proof of \Cref{thm:SD-in-SZK-natural-regime}]
By the $\SZK$ containment in \Cref{lemma:ED-in-SZK}, it suffices to give a Karp reduction from $\SD[a,b]$ with $\Delta(n)\coloneqq a(n)-b(n)\geq 1/O(\log{n})$ to $\ED[1/q]$ for some polynomial $q$.
Let $(C_0,C_1)$ be an $\SD[a,b]$ instance of total description length $n$, where $C_z\colon\binset^{s_z}\to\binset^r$ samples $D_z$ on uniformly random input bits for each $z\in\binset$.
Choose dyadic numbers $\varepsilon=2^{-t}$ and $\tau(n)\in 2^{-t}\mathbb{Z}\cap[0,1]$ such that $6\varepsilon<\Delta\leq16\varepsilon$ and $\abs*{\tau(n)-(a(n)+b(n))/2}\leq\varepsilon$. 
As in the proof of \Cref{thm:QSD-in-QSZK-natural-regime}, these parameters can be computed using $O(1+\log(1/\Delta))$ bits of precision.

\parheading{Constructing the \ED{} instance.}
Apply \Cref{lemma:td-efficient-signed-approx-dyadic} with accuracy $\varepsilon$ to obtain $J$ and dyadic numbers $c_j$ and $\lambda_j$ for $j\in[J]$. We use the underlying construction, where $J$ is a power of two satisfying $J\geq32\beta/\varepsilon$, $\lambda_j=j/(16J)$, and $2J^2c_j\in\mathbb{Z}$.
Using the fixed constant $\beta>1$ in \Cref{lemma:abs-polynomial-approx}, set $\Upsilon\coloneqq2^{1+\ceil*{496\beta/\varepsilon}}$.
The coefficient bound gives $\sum_{j=1}^J\abs{c_j}+\tau(n) \leq 2^{\ceil*{496\beta/\varepsilon}}+2\leq \Upsilon$.
Combining \Cref{lem:classical-transfer,lemma:td-efficient-signed-approx-dyadic} gives the error bound
\begin{equation}
    \label{eq:classical-approximation}
    \abs*{\sum_{j=1}^J c_j \JS_2(D_{0,\lambda_j},D_{1,\lambda_j})-\TV(D_0,D_1)}\leq\varepsilon \enspace.
\end{equation}

We implement dyadic convex combinations using the classical version of the interval selection in \circuitref{circuit:dyadic-convex-combi-states}. For $K$ component circuits and weights with common denominator $2^L$, partition $\{0,\ldots,2^L-1\}$ into consecutive intervals whose lengths are $2^L$ times the weights. A uniform $L$-bit integer selects an interval, and a Boolean multiplexer outputs the corresponding circuit's sample, with or without the interval label. All component circuits are evaluated on independent uniform inputs, independently of the selector. For components with $r+1$ output bits, the construction uses $O\rbra*{K(L+r+\log K)}$ additional Boolean gates.

Let $U_1$ denote the uniform distribution on $\binset$. For every $u\in\binset^k$, let $\delta_u$ denote the point-mass distribution on $\binset^k$ defined by $\delta_u(x)\coloneqq \begin{cases} 1,&x=u\\ 0,&x\neq u\end{cases}$ for each $\forall x\in\binset^k$. 
The coefficients in \Cref{table:classical-polarization-newC0,table:classical-polarization-newC1} are nonnegative and sum to one in each table. Their denominators divide $2\Upsilon J^2$, since $2^t$ divides $2J^2$ and $\tau(n)\in2^{-t}\mathbb{Z}$.
We obtain Boolean circuits $C'_0$ and $C'_1$, sampling $D'_0$ and $D'_1$, respectively, by applying the dyadic convex combination construction to the circuit-coefficient pairs in \Cref{table:classical-polarization-newC0,table:classical-polarization-newC1}, with $J+2$ terms and $L=\log(2\Upsilon J^2)$. Both circuits output the selected sample together with its label, with the terms ordered as $j=1,\ldots,J$, followed by the threshold and remainder terms. The labels utilize $\ceil*{\log(J+2)}$ bits.

\begin{table}[ht]
    \centering
    \begin{tabular}{@{}c c c c@{}}
        \toprule
        Term & Circuit & Distribution & Coefficient \\
        \midrule
        $j\in[J]$ with $c_j\geq0$
        & $C_{\geq0}$
        & $D_{\geq0}\coloneqq U_1\otimes D_+$
        & $\pi_j\coloneqq c_j/\Upsilon$ \\
        \midrule
        $j\in[J]$ with $c_j<0$
        & $C_{<0,j}$
        & $D_{<0,j}\coloneqq\frac12\sum_{z\in\binset}\delta_z\otimes D_{z,\lambda_j}$
        & $\pi_j\coloneqq-c_j/\Upsilon$ \\
        \midrule
        Threshold offset
        & $C_{0^{r+1}}$
        & $\delta_{0^{r+1}}$
        & $\pi_{J+1}\coloneqq\tau(n)/\Upsilon$ \\
        \midrule
        Remainder
        & $C_{0^{r+1}}$
        & $\delta_{0^{r+1}}$
        & $\pi_{J+2}\coloneqq1-\rbra[\big]{\sum_{j=1}^J\abs{c_j}+\tau(n)}/\Upsilon$ \\
        \bottomrule
    \end{tabular}
    \caption{Ingredients for constructing $C'_0$.}
    \label{table:classical-polarization-newC0}
\end{table}

\begin{table}[ht]
    \centering
    \begin{tabular}{@{}c c c c@{}}
        \toprule
        Term & Circuit & Distribution & Coefficient \\
        \midrule
        $j\in[J]$ with $c_j\geq0$
        & $C_{<0,j}$
        & $D_{<0,j}=\frac12\sum_{z\in\binset}\delta_z\otimes D_{z,\lambda_j}$
        & $\pi_j\coloneqq c_j/\Upsilon$ \\
        \midrule
        $j\in[J]$ with $c_j<0$
        & $C_{\geq0}$
        & $D_{\geq0}=U_1\otimes D_+$
        & $\pi_j\coloneqq-c_j/\Upsilon$ \\
        \midrule
        Threshold offset
        & $C_{U_1}$
        & $U_1\otimes\delta_{0^r}$
        & $\pi_{J+1}\coloneqq\tau(n)/\Upsilon$ \\
        \midrule
        Remainder
        & $C_{0^{r+1}}$
        & $\delta_{0^{r+1}}$
        & $\pi_{J+2}\coloneqq1-\rbra[\big]{\sum_{j=1}^J\abs{c_j}+\tau(n)}/\Upsilon$ \\
        \bottomrule
    \end{tabular}
    \caption{Ingredients for constructing $C'_1$.}
    \label{table:classical-polarization-newC1}
\end{table}

The resulting distributions $D_0$ and $D_1$ are given by
\begin{subequations}
\label{eq:classical-implementation}
\begin{align}
    D'_0 &=\sum_{j\in[J]\colon c_j\geq0}\pi_j\delta_{j-1}\otimes D_{\geq0} +\sum_{j\in[J]\colon c_j<0}\pi_j\delta_{j-1}\otimes D_{<0,j}\notag\\*
    &\qquad+\pi_{J+1}\delta_J\otimes\delta_{0^{r+1}} +\pi_{J+2}\delta_{J+1}\otimes\delta_{0^{r+1}}\enspace,\\
    D'_1 &=\sum_{j\in[J]\colon c_j\geq0}\pi_j\delta_{j-1}\otimes D_{<0,j} +\sum_{j\in[J]\colon c_j<0}\pi_j\delta_{j-1}\otimes D_{\geq0}\notag\\*
    &\qquad+\pi_{J+1}\delta_J\otimes U_1\otimes\delta_{0^r} +\pi_{J+2}\delta_{J+1}\otimes\delta_{0^{r+1}}\enspace.
\end{align}
\end{subequations}
Here, $\delta_{j-1}$, $\delta_J$, and $\delta_{J+1}$ refer to the binary encodings of the interval labels.
Next, we construct the circuit ingredients in \Cref{table:classical-polarization-newC0,table:classical-polarization-newC1}:\footnote{$C_{0^{r+1}}$ outputs $r+1$ zero bits, while $C_{U_1}$ outputs one fair bit followed by $r$ zero bits.}
\begin{itemize}
    \item $C_{\geq0}$: Applying the dyadic convex combination construction to the circuit-coefficient pairs $(0\Vert C_0,1/4)$, $(0\Vert C_1,1/4)$, $(1\Vert C_0,1/4)$, and $(1\Vert C_1,1/4)$, with four terms and $L=2$, gives a circuit sampling $D_{\geq0}$. Here, $b\Vert C_z$ denotes the circuit prepending the fixed bit $b$ to the output of $C_z$, so its output distribution is $\delta_b\otimes D_z$.
    \item $C_{<0,j}$: For each $j\in[J]$, applying the dyadic convex combination construction to the circuit-coefficient pairs $\rbra[\big]{0\Vert C_0,\frac{1+\lambda_j}{4}}$, $\rbra[\big]{0\Vert C_1,\frac{1-\lambda_j}{4}}$, $\rbra[\big]{1\Vert C_0,\frac{1-\lambda_j}{4}}$, and $\rbra[\big]{1\Vert C_1,\frac{1+\lambda_j}{4}}$, with four terms and $L=\log(64J)$, gives a circuit sampling $D_{<0,j}$. The choice $L=\log(64J)$ is valid because $(1\pm\lambda_j)/4=(16J\pm j)/(64J)$.
\end{itemize}
In both ingredient constructions, only the selected sample is output; the interval label is omitted.

\parheading{Analysis.}
Since the labels $j-1$, $J$, and $J+1$ occur with respective probabilities $\pi_j$, $\pi_{J+1}$, and $\pi_{J+2}$ in $D'_0$ and $D'_1$, applying the chain rule for entropy (\Cref{lemma:joint-entropy}) to \Cref{eq:classical-implementation} yields
\begin{subequations}
\label{eq:classical-encoding}
\begin{align}
    \H_2(D'_0)-\H_2(D'_1)
    &=\frac1\Upsilon\sum_{j=1}^Jc_j\rbra*{\H_2(D_{\geq0})-\H_2(D_{<0,j})}-\frac{\tau(n)}{\Upsilon}\\
    &=\frac1\Upsilon\sum_{j=1}^Jc_j\rbra*{\H_2(D_+)-\frac{\H_2(D_{0,\lambda_j})+\H_2(D_{1,\lambda_j})}{2}}-\frac{\tau(n)}{\Upsilon}\\
    &=\frac1\Upsilon\sum_{j=1}^Jc_j\JS_2(D_{0,\lambda_j},D_{1,\lambda_j})-\frac{\tau(n)}{\Upsilon}\enspace.
\end{align}
\end{subequations}
Here, exchanging $D_{\geq0}$ and $D_{<0,j}$ when $c_j<0$ gives the signed coefficient $c_j/\Upsilon$, while the threshold branches contribute $-\tau(n)/\Upsilon$ and the remainder branches contribute zero in the first line. The second line uses $\H_2(D_{\geq0})=1+\H_2(D_+)$ and $\H_2(D_{<0,j})=1+\frac12\sum_{z\in\binset}\H_2(D_{z,\lambda_j})$. The last line follows from $(D_{0,\lambda_j}+D_{1,\lambda_j})/2=D_+$.

Utilizing \Cref{eq:classical-approximation}, the choice of $\tau(n)$, and $6\varepsilon<\Delta$, we obtain
\begin{itemize}
    \item For \emph{yes} instances, $\TV(D_0,D_1)\geq a(n)$ implies that
    \[ \frac{1}{\Upsilon}\rbra[\Bigg]{\sum_{j=1}^Jc_j\JS_2(D_{0,\lambda_j},D_{1,\lambda_j})-\tau(n)}
        \geq\frac{\TV(D_0,D_1)-\varepsilon-\tau(n)}{\Upsilon}
        \geq\frac{\Delta/2-2\varepsilon}{\Upsilon}>\frac{\varepsilon}{\Upsilon}\eqqcolon g\enspace.
    \]
    Here, the second inequality uses $\tau(n)\leq(a(n)+b(n))/2+\varepsilon$.

    \item For \emph{no} instances, $\TV(D_0,D_1)\leq b(n)$ implies that
    \[ \frac{1}{\Upsilon}\rbra[\Bigg]{\sum_{j=1}^Jc_j\JS_2(D_{0,\lambda_j},D_{1,\lambda_j})-\tau(n)} 
        \leq\frac{\TV(D_0,D_1)+\varepsilon-\tau(n)}{\Upsilon}
        \leq\frac{-\Delta/2+2\varepsilon}{\Upsilon}<-\frac{\varepsilon}{\Upsilon}\enspace.
    \]
    Here, the second inequality uses $\tau(n)\geq(a(n)+b(n))/2-\varepsilon$.
\end{itemize}
Consequently, $\log\rbra*{1/g} = 1 + \ceil*{496\beta/\varepsilon} + \log\rbra*{1/\varepsilon} = O(\log n)$, where the last equality follows from $1/\varepsilon\leq16/\Delta=O(\log n)$.
Thus, $g\geq n^{-O(1)}$, giving the required promise gap in \Cref{lemma:ED-in-SZK}.\footnote{We may pad the descriptions of $C'_0,C'_1$ so that their total length $N$ is at least $n$. Choosing a nondecreasing polynomial $q$ with $g\geq1/q(n)$ then gives $g\geq1/q(N)$, as required for an $\ED[1/q]$ instance.}

\vspace{1em}
It remains to check the computational efficiency of the reduction. Each term circuit in \Cref{table:classical-polarization-newC0,table:classical-polarization-newC1} uses $O(1)$ copies of $C_0$ and $C_1$ and $O(r+\log J)$ additional Boolean gates. The final convex combination uses $J+2=O(1/\varepsilon)$ terms and $L=\log(2\Upsilon J^2)=O(1/\varepsilon)$, with $O\rbra*{J(L+r+\log(J+2))}$ additional Boolean gates.
Since all dyadic coefficients are computable in polynomial time in $n$ and $1/\varepsilon$ and have $O(1/\varepsilon)$-bit representations, the circuit descriptions of $C'_0$ and $C'_1$ can be computed in time $\poly(n,1/\varepsilon)=\poly(n,1/\Delta)$ and have polynomial size in the same parameters. Since $1/\Delta=O(\log n)$, both bounds are polynomial in $n$.
The reduction to $\ED[1/q]$, together with \Cref{lemma:ED-in-SZK}, therefore establishes $\SD[a,b]\in\SZK$.
\end{proof}

\section{\texorpdfstring{$\QSD\in\QSZK$}{}: polarizing regime but beyond logarithmic precision}
\label{sec:QSD-in-QSZK}

Next, we improve the parameter regime for which $\QSD[a,b] \in \QSZK$ in~\cite[Theorem~4.5]{Liu23} from $a(n)^2- \sqrt{2\ln{2}} b(n) \geq 1/\poly(n)$ to $a(n)^2-b(n) \geq 1/\poly(n)$, which serves as a quantum counterpart of~\cite[Corollary~1.7]{BDRV19} and leads to a simple proof that $\SD[a,b]\in\SZK$ when $a(n)^2-b(n) \geq 1/\poly(n)$ by encoding the distributions as $n$-qubit diagonal states: 

\begin{theorem}[Improved \QSZK{} containment of \QSD{}]
    \label{thm:QSD-in-QSZK-improved}
    Let $a, b \colon \Naturals \to [0, 1]$ be efficiently computable functions such that $0 \leq b(n) < a(n) \leq 1$ for every $n\in\Naturals$. If \[a(n)^2-b(n) \geq 1/\poly(n)\] for all sufficiently large $n$, then $\QSD[a,b] \in \QSZK$. 
\end{theorem}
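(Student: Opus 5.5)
We plan a Karp reduction from $\QSD[a,b]$ to $\QED[g]$ with $g(n)\geq 1/\poly(n)$, and then invoke \Cref{lemma:QED-in-QSZK}. The construction uses a single smoothed pair $(\rho_{0,\lambda},\rho_{1,\lambda})$ from \Cref{eq:state-interpolate} and the parameterized sandwich \Cref{eq:para-QJS-vs-TD}:
\[
\frac{\lambda^2}{2\ln2}\TD(\rho_0,\rho_1)^2\leq\QJS_2(\rho_{0,\lambda},\rho_{1,\lambda})\leq\frac{\lambda^2}{2\ln2(1-\lambda^2)}\TD(\rho_0,\rho_1).
\]
The point of working with this pair is that the upper bound loses only a factor $1/(1-\lambda^2)$ relative to the lower bound, rather than the factor $\sqrt{2\ln 2}$ lost in~\cite{Liu23}.

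\textbf{Step 1: the sandwich.} The lower bound comes from the quantum Pinsker inequality applied to each relative entropy $\D(\rho_{z,\lambda}\|\rho_+)$, using $\TD(\rho_{z,\lambda},\rho_+)=\lambda\TD(\rho_0,\rho_1)/2$.

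For the upper bound we use \Cref{lem:smoothed-integral}. Since $K(t;\rho_0,\rho_1)\leq K(0;\rho_0,\rho_1)=\TD(\rho_0,\rho_1)$ is nonincreasing and vanishes for $t\geq1$, we get
\[
\QJS(\rho_{0,\lambda},\rho_{1,\lambda})\leq\lambda^2\,\TD(\rho_0,\rho_1)\int_0^1\frac{\dd t}{1-\lambda^2t^2}.
\]
This is already of the right order, since the integral is at most $1/(1-\lambda^2)$. To get the sharp constant $1/(2(1-\lambda^2))$, we use $K(t)\leq(1-t)\TD(\rho_0,\rho_1)$, which follows from convexity of $K$ together with $K(1)=0$. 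Then
\[
\int_0^1\frac{1-t}{1-\lambda^2t^2}\,\dd t\leq\frac{1}{2(1-\lambda^2)},
\]
which gives the upper bound after dividing by $\ln 2$.

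\textbf{Step 2: choice of $\lambda$ and the \QED{} instance.} Let $\delta\coloneqq a^2-b\geq1/\poly(n)$, and choose a dyadic $\lambda\in(0,1/2]$ with $\lambda^2=\Theta(\delta)$ such that $b/(1-\lambda^2)\leq b+\delta/2$. Yes instances then give $\QJS_2\geq\lambda^2a^2/(2\ln2)$, and no instances give $\QJS_2\leq\lambda^2(b+\delta/2)/(2\ln2)$. The resulting gap is $\lambda^2\delta/(4\ln2)=\Theta(\delta^2)$, which is inverse-polynomial.

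We build $\rho'_0=\tfrac12(I_2/2\otimes\rho_+)\oplus\cdots$, combining $\varrho=I_2/2\otimes\rho_+$ against $\varsigma=\tfrac12\sum_z\ketbra{z}{z}\otimes\rho_{z,\lambda}$ as in \Cref{table:polarization-newQ0,table:polarization-newQ1}. We use a single $j$ with coefficient $1/2$, together with a threshold branch of weight proportional to a dyadic approximation $\kappa$ of the midpoint $\lambda^2(a^2+b+\delta/2)/(4\ln2)$. The threshold branch is realized, as before, by mixing $\ketbra{\bar0}{\bar0}$ against $I_2/2\otimes\ketbra{\bar0}{\bar0}$, which contributes $\pm$ one bit of entropy.

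By \Cref{lemma:dyadic-linear-combi-states} and \Cref{lemma:joint-entropy}, we get
\[
\S_2(\rho'_0)-\S_2(\rho'_1)=\tfrac12\bigl(\QJS_2(\rho_{0,\lambda},\rho_{1,\lambda})-2\kappa\bigr).
\]
This is at least $g$ on yes instances and at most $-g$ on no instances, with $g=\Theta(\delta^2)$, provided $\kappa$ is computed to precision $O(\delta^2)$; $\ln2$ is irrational, so the dyadic rounding has to be accounted for in this precision. All circuits have size $\poly(n,\log(1/\delta))$.

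\textbf{Main obstacle.} The delicate point is the upper bound in Step 1. We must extract exactly the constant $\lambda^2/(2\ln2(1-\lambda^2))$, because any constant factor larger than $1/2$ would reintroduce a multiplicative loss on $b$. Our first attempt will be the convexity bound $K(t)\leq(1-t)K(0)$ described above. If that bound turns out not to be tight enough, we will fall back on the derivative route via the Daleckii--Krein formula and operator monotonicity of $\log$ mentioned in the introduction.
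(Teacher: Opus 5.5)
Your proposal is correct, but it takes a different route from the paper in both steps.

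\textbf{Upper bound.} The paper's proof (\Cref{lemma:QJS-vs-TD-parameterized}) differentiates $F(t)=\QJS(\rho_{0,t},\rho_{1,t})$ via the Daleckii--Krein formula. It then bounds $F'(t)\leq\TD(\rho_0,\rho_1)\arctanh(t)$ using H\"older's inequality and operator monotonicity of the logarithm on $\supp(\rho_+)$, and integrates to get $g(\lambda)\TD(\rho_0,\rho_1)$ with $g(\lambda)=\ln2-\H\bigl((1+\lambda)/2\bigr)$. Your chord bound $K(t)\leq(1-t)\TD(\rho_0,\rho_1)$, inserted into \Cref{lem:smoothed-integral}, gives \emph{exactly} the same intermediate bound, because
\[
\lambda^2\int_0^1\frac{1-t}{1-\lambda^2t^2}\,\dd t=\lambda\arctanh(\lambda)+\tfrac12\ln(1-\lambda^2)=g(\lambda)\leq\frac{\lambda^2}{2(1-\lambda^2)}.
\]
So the concern in your ``main obstacle'' paragraph is unfounded, and the Daleckii--Krein fallback is unnecessary. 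Your version reuses the hockey-stick machinery the paper already needs for \Cref{lemma:td-scalar-error-bound} and avoids matrix calculus. The paper's version is self-contained and does not depend on the Hirche--Tomamichel representation.

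\textbf{Reduction.} The paper simply outputs $(\rho_{0,\lambda},\rho_{1,\lambda})$ as a $\QJSP[a',b']$ instance with $a'-b'=\Theta(\delta^2)$ and invokes \Cref{lemma:QJSP-in-QSZK}. You instead inline the $\QJSP$-to-$\QED$ step using the table construction and invoke \Cref{lemma:QED-in-QSZK}. This is equivalent, but you need to fix a factor of two. A threshold branch of weight $w$ contributes $-w$, so $\S_2(\rho'_0)-\S_2(\rho'_1)=\frac12\QJS_2(\rho_{0,\lambda},\rho_{1,\lambda})-w$. With $w=\kappa$, as your display implies, you are comparing $\QJS_2$ against $2\kappa$. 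That value exceeds the yes-threshold $\lambda^2a^2/(2\ln2)$, so the yes case would fail. Take $w=\kappa/2$ instead (still dyadic), which gives $\frac12(\QJS_2-\kappa)$ and the claimed gap $g=\Theta(\delta^2)$.
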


Similar to~\cite[Theorem~4.5]{Liu23}, our approach for proving~\Cref{thm:QSD-in-QSZK-improved} provides a new reduction from \QSD{} to \QJSP{} without the eariler parameter loss. Here, the \textsc{Quantum Jensen--Shannon Divergence Problem} (\QJSP{}) is defined similarly to \QSD{} (see \Cref{def:QSD}), except that the closeness measure is $\QJS_2(\rho_0,\rho_1)$ instead of $\TD(\rho_0,\rho_1)$. 
Then, the \QSZK{} containment follows immediately from~\cite[Lemma 4.4]{Liu23}, which essentially uses the quantum entropy extraction approach to polarize quantum distances~\cite{BASTS10}:   

\begin{lemma}[\QJSP{} is in \QSZK{}, adapted from~{\cite[Lemma 4.4]{Liu23}}]
    \label{lemma:QJSP-in-QSZK}
    Let $a, b \colon \Naturals \to [0, 1]$ be efficiently computable functions such that $0 \leq b(n) < a(n) \leq 1$ for every $n\in\Naturals$. If \[a(n)-b(n) \geq 1/\poly(n)\] for all sufficiently large $n$, then $\QJSP[a,b] \in \QSZK$.
\end{lemma}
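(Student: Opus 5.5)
The statement to prove is \Cref{lemma:QJSP-in-QSZK}: $\QJSP[a,b]\in\QSZK$ whenever $a-b\geq 1/\poly(n)$. The plan is to reduce $\QJSP[a,b]$ to $\QED[g]$ with $g\geq 1/\poly(n)$ and then invoke \Cref{lemma:QED-in-QSZK}. This is the same entropy-encoding template used in the proof of \Cref{thm:QSD-in-QSZK-natural-regime}, but now with a single Jensen--Shannon term and no polynomial-approximation loss.

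\textbf{Step 1 (entropy circuits).} Given the $\QJSP$ instance $(Q_0,Q_1)$, I would build two circuits. The first, $Q_{\ge0}$, prepares $\varrho\coloneqq I_2/2\otimes\rho_+$. The second prepares $\varsigma\coloneqq\frac12\sum_{z}\ketbra{z}{z}\otimes\rho_z$. Both come from \Cref{lemma:dyadic-linear-combi-states} with four equal weights. By \Cref{lemma:joint-entropy},
\[
\S_2(\varrho)-\S_2(\varsigma)=\S_2(\rho_+)-\tfrac12(\S_2(\rho_0)+\S_2(\rho_1))=\QJS_2(\rho_0,\rho_1).
\]

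\textbf{Step 2 (shifting by the threshold).} Choose a dyadic $\tau$ with $|\tau-(a+b)/2|\le (a-b)/8$; it needs only $O(\log n)$ bits. Form
\[
\rho'_0=\tfrac12\ketbra{0}{0}\otimes\varrho+\tfrac12\ketbra{1}{1}\otimes\omega_0,
\qquad
\rho'_1=\tfrac12\ketbra{0}{0}\otimes\varsigma+\tfrac12\ketbra{1}{1}\otimes\omega_1,
\]
where the pair $(\omega_0,\omega_1)$ has entropy difference exactly $-\tau$. For example, take $\omega_0=\ketbra{\bar0}{\bar0}$ and let $\omega_1$ be a dyadic mixture that outputs $I_2/2\otimes\ketbra{\bar0}{\bar0}$ with probability $\tau$ and $\ketbra{\bar0}{\bar0}$ otherwise, with a label register so that the joint entropy theorem gives entropy exactly $\tau$. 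This is the threshold-offset gadget from the tables in the proof of \Cref{thm:QSD-in-QSZK-natural-regime}. The shifted difference then satisfies
\[
\S_2(\rho'_0)-\S_2(\rho'_1)=\tfrac12\bigl(\QJS_2(\rho_0,\rho_1)-\tau\bigr).
\]

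\textbf{Step 3 (promise gap).} On yes instances this quantity is at least $\tfrac12\cdot\tfrac{3(a-b)}{8}$. On no instances it is at most the negative of the same amount. Hence $g=\Theta(a-b)\ge 1/\poly(n)$, and \Cref{lemma:QED-in-QSZK} gives the containment. The circuit sizes are polynomial because every gadget uses $O(1)$ calls to $Q_0,Q_1$ together with dyadic weights of $O(\log n)$ bits.

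\textbf{Main obstacle.} I expect the only delicate point to be that $\QED[g]$ in \Cref{lemma:QED-in-QSZK} is stated in terms of description length. The padding and input-length bookkeeping from the footnotes therefore have to be redone so that $g$ stays inverse-polynomial in the new instance length. If the exact-entropy offset gadget turns out to be awkward, I would instead absorb $\tau$ by padding both sides with maximally mixed qubits in dyadic proportions, which works because entropy is additive under tensoring.
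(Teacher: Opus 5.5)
Your overall route (encode $\QJS_2(\rho_0,\rho_1)=\S_2(\varrho)-\S_2(\varsigma)$ via \Cref{lemma:joint-entropy}, shift by a dyadic threshold $\tau$, then invoke \Cref{lemma:QED-in-QSZK}) is the standard reduction behind the cited result. The paper itself gives no proof and only cites~\cite[Lemma 4.4]{Liu23}. Steps 1 and 3 are correct.

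\textbf{The gap is in Step 2: the offset gadget as you instantiate it does not have entropy difference $-\tau$.}
\begin{itemize}
\item Take $\omega_0=\ketbra{\bar{0}}{\bar{0}}$ and $\omega_1=\tau\ketbra{1}{1}\otimes\frac{I_2}{2}\otimes\ketbra{\bar{0}}{\bar{0}}+(1-\tau)\ketbra{0}{0}\otimes\ketbra{\bar{0}}{\bar{0}}$. Then \Cref{lemma:joint-entropy} gives $\S_2(\omega_1)=\H_2(\tau)+\tau$, because the label register itself carries entropy $\H_2(\tau)$.
\item If you drop the label, you get $\S_2(\omega_1)=\H_2(\tau/2)$ instead.
\item Either way the shift is wrong, and this breaks the reduction. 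For example, $a=0.6$ and $b=0.4$ force $\tau\approx 1/2$. The labelled version then gives an offset near $3/2>1\geq\QJS_2(\rho_0,\rho_1)$. So even yes instances would be mapped to negative entropy differences.
\end{itemize}

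\textbf{Fix.} Let the label appear with the same distribution on both sides, so that its entropy cancels. Keep $\omega_1$ as above and take $\omega_0=\tau\ketbra{1}{1}\otimes\ketbra{\bar{0}}{\bar{0}}+(1-\tau)\ketbra{0}{0}\otimes\ketbra{\bar{0}}{\bar{0}}$. This gives $\S_2(\omega_0)-\S_2(\omega_1)=-\tau$ exactly, and hence $\S_2(\rho'_0)-\S_2(\rho'_1)=\frac12(\QJS_2(\rho_0,\rho_1)-\tau)$ as you wanted. This is how the tables in the proof of \Cref{thm:QSD-in-QSZK-natural-regime} avoid the problem: the threshold branch sits under a common outer label with the same weight $\pi_{J+1}$ in both $\rho'_0$ and $\rho'_1$.

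\textbf{Your fallback has the same pitfall.} Tensoring with maximally mixed qubits shifts entropy only by integers. Mixing them ``in dyadic proportions'' reintroduces label entropy unless the label is shared. A clean version of the fallback:
\begin{itemize}
\item Let $k=O(\log n)$ be the number of fractional bits of $\tau$, and take $K=2^k$ tensor copies so that $K\tau$ is an integer.
\item Append $K\tau$ maximally mixed qubits to the $\varsigma^{\otimes K}$ side, padding the $\varrho^{\otimes K}$ side with zeros.
\item The entropy difference becomes $K(\QJS_2(\rho_0,\rho_1)-\tau)$, with $K=\poly(n)$.
\end{itemize}
With either correction, your gap computation and the appeal to \Cref{lemma:QED-in-QSZK} go through.
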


The key ingredient underlying our approach is the following parameterized inequality relating the quantum Jensen--Shannon divergence to the trace distance: 

\begin{lemma}[$\QJS$ vs.~$\TD$, a parameterized version]
    \label{lemma:QJS-vs-TD-parameterized}
    Let $\rho_0$ and $\rho_1$ be quantum states of the same dimension. Let $\lambda \in [0,1)$. Define the quantum states $\rho_{j,\lambda}$ for $j\in\binset$ as in \Cref{eq:state-interpolate}.
    Then, with $g(\lambda) \coloneqq \ln{2} - \H\rbra[\big]{\frac{1+\lambda}{2}}$, we have
    \[ \frac{\lambda^2}{2} \TD(\rho_0,\rho_1)^2 \leq \QJS(\rho_{0,\lambda},\rho_{1,\lambda}) \leq g(\lambda) \TD(\rho_0,\rho_1)\enspace. \]
    In particular, we have the following simplified inequality: 
    \[ \frac{\lambda^2}{2} \TD(\rho_0,\rho_1)^2 \leq \QJS(\rho_{0,\lambda},\rho_{1,\lambda}) \leq \frac{\lambda^2}{2 (1-\lambda^2)} \TD(\rho_0,\rho_1)\enspace. \]
\end{lemma}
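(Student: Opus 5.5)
The plan is to derive both bounds from the smoothed integral representation in \Cref{lem:smoothed-integral}, together with the quantum Pinsker inequality for the lower bound.

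\emph{Lower bound.} Write $\QJS(\rho_{0,\lambda},\rho_{1,\lambda}) = \frac12\sum_{z}\D(\rho_{z,\lambda}\|\rho_+)$, using $(\rho_{0,\lambda}+\rho_{1,\lambda})/2=\rho_+$. The quantum Pinsker inequality in natural-log form gives $\D(\sigma\|\tau)\geq 2\TD(\sigma,\tau)^2$. Since $\rho_{z,\lambda}-\rho_+ = (-1)^z\lambda\rho_-$, each term has trace distance $\lambda\TD(\rho_0,\rho_1)/2$. Summing the two terms then yields $\frac{\lambda^2}{2}\TD(\rho_0,\rho_1)^2$, which is exactly the claimed lower bound.

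\emph{Upper bound.} Use \Cref{lem:smoothed-integral}:
\[
\QJS(\rho_{0,\lambda},\rho_{1,\lambda}) = \lambda^2\int_0^1 \frac{K(t;\rho_0,\rho_1)}{1-\lambda^2t^2}\,\dd t .
\]
As in the proof of \Cref{lemma:td-scalar-error-bound}, $K(t)=\int_t^1 G(s)\,\dd s$, where $G$ is nonnegative and nonincreasing with $G(0)\leq 1$, and $K(0)=\TD(\rho_0,\rho_1)$. Integrating by parts exactly as in \Cref{eq:QJS-dyadic-sum} with a single term, and noting that for $|x|<1$ one has $\Phi(x)\ln 2 = \ln 2-\H\rbra*{\frac{1+x}{2}}$, gives
\[
\QJS(\rho_{0,\lambda},\rho_{1,\lambda}) = \int_0^1 G(t)\, h'(t)\,\dd t, \qquad h(t)\coloneqq g(\lambda t),
\]
with $h'\geq 0$ on $[0,1]$ (here $g$ is even and increasing on $[0,1)$, and $h(0)=0$). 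The remaining inequality $\int_0^1 G h'\leq g(\lambda)\int_0^1 G$ is the main obstacle. Since $G$ is nonincreasing, the inequality is equivalent to $\int_0^1 G\,(h'-g(\lambda))\,\dd t\leq 0$, and the function $h'-g(\lambda)$ integrates to zero over $[0,1]$. Moreover, $h'(t)=\lambda g'(\lambda t)$ is nondecreasing in $t$ because $g$ is convex ($g''(x)=1/(1-x^2)>0$). So $h'-g(\lambda)$ is negative at small $t$ and positive at large $t$, with total integral zero. A nonincreasing weight $G$ places more mass where this function is negative, so Chebyshev's integral inequality (or a second-mean-value argument as in \Cref{lemma:bonnet-mean-value}, applied to $u\mapsto\int_0^u(h'-g(\lambda))=h(u)-ug(\lambda)\leq 0$ by convexity of $h$ with $h(0)=0$) gives the sign. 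To be precise: write $G=\int \mathbf 1_{[0,s]}\,d\mu(s)$ plus a constant part, i.e.\ as a mixture of indicators of initial segments; each contributes $h(s)-s\,g(\lambda)\leq 0$. Hence $\QJS\leq g(\lambda)\TD(\rho_0,\rho_1)$.

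\emph{Simplified form.} It remains to show $g(\lambda)\leq\frac{\lambda^2}{2(1-\lambda^2)}$. Use the series $g(\lambda)=\sum_{k\geq1}\frac{\lambda^{2k}}{(2k-1)2k}$ from \Cref{eq:td-series} (scaled by $\ln2$). Each coefficient satisfies $\frac{1}{(2k-1)2k}\leq\frac12$, so
\[
g(\lambda)\leq \tfrac12\sum_{k\geq1}\lambda^{2k} = \frac{\lambda^2}{2(1-\lambda^2)} .
\]
The endpoint $\lambda=0$ is trivial, since both sides of every inequality then vanish.
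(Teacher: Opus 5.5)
Your proof is correct. The lower bound (Pinsker applied to $\D(\rho_{z,\lambda}\|\rho_+)$) and the simplified form (Taylor series of $g$) coincide with the paper's, but your upper bound takes a genuinely different route.

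\textbf{The paper's route.} It works directly with matrices. It differentiates $F(t)=\QJS(\rho_{0,t},\rho_{1,t})$ along the path $\rho_+\pm t\rho_-$ using the Daleckii--Krein formula. It then bounds $F'(t)\le \TD(\rho_0,\rho_1)\arctanh(t)$ pointwise, via H\"older and operator monotonicity of the logarithm applied to $(1-t)\rho_+\preceq\rho_+\pm t\rho_-\preceq(1+t)\rho_+$. Integrating from $0$ to $\lambda$ gives the bound.

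\textbf{Your route.} You reuse the hockey-stick machinery. \Cref{lem:smoothed-integral}, together with the integration by parts from \Cref{lemma:td-scalar-error-bound}, gives $\QJS(\rho_{0,\lambda},\rho_{1,\lambda})=\int_0^1 G(t)\,\lambda\arctanh(\lambda t)\,\dd t$ with $\int_0^1 G=\TD(\rho_0,\rho_1)$. The claim then reduces to a scalar Chebyshev (rearrangement) inequality between the nonincreasing $G$ and the nondecreasing $h'$.

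\textbf{What each buys.} Your argument shows that the constant $g(\lambda)$ comes purely from convexity of $g$. It also ties this lemma to the scalar-error framework used for the natural-regime polarization. The cost is dependence on the Hirche--Tomamichel integral representation. The paper's argument is self-contained matrix analysis and needs none of that machinery.

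\textbf{Two small points.}
\begin{enumerate}
\item \Cref{lem:smoothed-integral} is stated only for $\lambda\in[0,1/2]$. You should note that its proof carries over verbatim to $\lambda\in[0,1)$, since the substitution $\gamma=(1+\lambda t)/(1-\lambda t)$ stays finite.
\item \Cref{lemma:bonnet-mean-value} as stated in the paper only bounds an absolute value, so it cannot supply the sign you need. Your Chebyshev (or layer-cake) argument is what actually closes that step, and it is correct.
\end{enumerate}
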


We need the following version of the Pinsker inequality to proceed with the proof:
\begin{lemma}[Quantum Pinsker inequality, adapted from~{\cite[Theorem 5.38]{Watrous18}}]
    \label{lemma:quantum-pinsker}
    Let $\rho_0$ and $\rho_1$ be quantum states of the same dimension. Then the following inequality holds: 
    \[ \D(\rho_0\|\rho_1) \geq 2 \cdot \TD(\rho_0,\rho_1)^2\enspace. \]
    Here, the quantum relative entropy $\D(\rho_0\|\rho_1)$ is defined using the \emph{natural} logarithm. 
\end{lemma}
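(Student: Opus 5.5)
The plan is to reduce the quantum statement to the classical binary Pinsker inequality by a single well-chosen measurement, and then verify the binary case by elementary calculus. Let $\Pi \coloneqq {\bf1}_{\rho_0-\rho_1\succ0}$ be the Holevo--Helstrom projector. Then $\TD(\rho_0,\rho_1)=\Tr(\Pi(\rho_0-\rho_1))$, since $\rho_0-\rho_1$ is traceless and its positive and negative parts have equal trace. Set $p\coloneqq\Tr(\Pi\rho_0)$ and $q\coloneqq\Tr(\Pi\rho_1)$, so that $p-q=\TD(\rho_0,\rho_1)$.

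\emph{Step 1 (data processing).} Consider the measurement channel
\[
\calM(X)\coloneqq\Tr(\Pi X)\ketbra{0}{0}+\Tr((I-\Pi)X)\ketbra{1}{1}.
\]
By monotonicity of the quantum relative entropy under CPTP maps (Lindblad--Uhlmann), we have $\D(\rho_0\|\rho_1)\geq \D(\calM(\rho_0)\|\calM(\rho_1))=d(p\|q)$. Here $d(p\|q)\coloneqq p\ln\frac pq+(1-p)\ln\frac{1-p}{1-q}$ is the binary relative entropy. If $\D(\rho_0\|\rho_1)=\infty$ there is nothing to prove, so we may assume supports are compatible, with the usual conventions $0\ln 0=0$.

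\emph{Step 2 (binary Pinsker).} It remains to show $d(p\|q)\geq 2(p-q)^2$ for all $p,q\in[0,1]$. Fix $p$ and let $h(q)\coloneqq d(p\|q)-2(p-q)^2$ on $(0,1)$. Then $h(p)=0$ and
\[
h'(q)=-\frac pq+\frac{1-p}{1-q}+4(p-q)=(q-p)\rbra*{\frac1{q(1-q)}-4}.
\]
Since $q(1-q)\leq1/4$, the second factor is nonnegative. Hence $h'\leq0$ for $q<p$ and $h'\geq0$ for $q>p$, so $h$ attains its minimum $0$ at $q=p$. The endpoints $q\in\{0,1\}$ either give $d(p\|q)=\infty$ or follow by continuity.

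Combining the two steps gives $\D(\rho_0\|\rho_1)\geq 2(p-q)^2=2\TD(\rho_0,\rho_1)^2$.

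The main obstacle is Step 1, since it relies on the monotonicity of quantum relative entropy, which is a deep theorem. I would simply invoke it as standard (e.g., \cite[Theorem~5.38]{Watrous18} context or Lindblad's theorem) rather than reprove it. Everything else is routine.
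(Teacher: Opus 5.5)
Your proposal is correct and is essentially the standard argument behind \cite[Theorem~5.38]{Watrous18}, which the paper cites without reproving: measure with the Holevo--Helstrom projector onto the positive part of $\rho_0-\rho_1$, apply monotonicity of quantum relative entropy under this measurement channel, and finish with the binary Pinsker inequality $d(p\|q)\geq 2(p-q)^2$. Your derivative computation $h'(q)=(q-p)(1/(q(1-q))-4)$ verifies that last step, and your continuity remark covers the boundary cases $p,q\in\{0,1\}$. One small adjustment: attribute the monotonicity step to the data-processing theorem itself (Lindblad--Uhlmann), not to Theorem~5.38, since that is the very statement being proved.
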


\begin{proof}[Proof of \Cref{lemma:QJS-vs-TD-parameterized}]
    We begin by defining $\rho_\pm \coloneqq (\rho_0 \pm \rho_1)/2$. Then, $\rho_{0,\lambda}$ and $\rho_{1,\lambda}$ satisfy 
    \begin{equation}
        \label{eq:TD-parameterized}
        \forall j\in\binset, \quad
        \TD(\rho_{j,\lambda},\rho_+) 
        = \TD\rbra*{ \rho_+ +(-1)^j \lambda \rho_-, \rho_+ } = \frac{\lambda}{2} \TD(\rho_0,\rho_1)\enspace. 
    \end{equation}
    Intuitively, $\rho_{j,\lambda}$ moves continuously from $\rho_+$ toward $\rho_j$ as $\lambda$ increases from $0$ to $1$, analogous to~\cite[Proposition 4.8]{BDRV19}.
    Since $\ker \rho_+ = \ker \rho_0 \cap \ker \rho_1$, it follows that $\rho_0$, $\rho_1$, and $\rho_\pm$ are all supported on $\calH \coloneqq \supp(\rho_+)$. We restrict our attention to $\calH$, where $\rho_+$ is positive definite, and all logarithms and relative entropies appearing throughout the proof are taken on this support. 

    Noting that $\rho_{0,\lambda}+\rho_{1,\lambda} = 2 \rho_+$, the lower bound follows immediately by combining the quantum Pinsker inequality (\Cref{lemma:quantum-pinsker}) with \Cref{eq:TD-parameterized}:
    \[ \QJS(\rho_{0,\lambda},\rho_{1,\lambda}) = \frac{1}{2} \sum_{j\in\binset} \D(\rho_{j,\lambda},\rho_+) \geq \sum_{j\in\binset} \TD(\rho_{j,\lambda},\rho_+)^2 = \frac{\lambda^2}{2} \TD(\rho_0,\rho_1)^2\enspace. \]

    \vspace{1em}
    It remains to establish the upper bound, which is the more challenging direction. We now sketch the proof. To control how the divergence changes, we regard $\lambda$ as the endpoint of the path $\rho_+ \pm t\rho_-$ starting at $t=0$ and integrate the infinitesimal change in the divergence along this path. Symmetry makes the first-order variation vanish at $t=0$, while the operator monotonicity of the logarithm~\cite[Section 5.3.7]{Bhatia09} bounds the derivative. Integrating from $0$ to $\lambda$ therefore produces the desired quadratic scaling $O\rbra*{\lambda^2 \TD(\rho_0,\rho_1)}$. 

    Next, we proceed with the actual proof by defining the following function:
    \begin{subequations}
        \label{eq:operator-path}
        \begin{align}
            \forall t\in[0,1), \quad F(t) &\coloneqq \QJS(\rho_{0,t}, \rho_{1,t})\\ 
            &= \frac{1}{2} \Tr\rbra*{ \rbra*{\rho_+ + t \rho_-} \ln\rbra*{\rho_+ + t \rho_-} }\\
            &\qquad + \frac{1}{2} \Tr\rbra*{ \rbra*{\rho_+ - t \rho_-} \ln\rbra*{\rho_+ - t \rho_-} }- \Tr(\rho_+ \ln \rho_+)\enspace.
        \end{align}
    \end{subequations}
    
    A direct calculation shows that $F(0)=0$ and $F(\lambda) = \QJS(\rho_{0,\lambda},\rho_{1,\lambda})$, and thus $F(t)$ describes the aforementioned path parameterized by $t$. We then bound the changes of the divergence along the path using the Daleckii--Krein formula (e.g.,~\cite[Section 5.3.1]{Bhatia09}). In particular, let $H(t)$ be a differentiable positive-definite operator path, and let $f$ be a continuously differentiable real function on an interval containing its spectrum. Taking $f(x)=x \ln x$, we obtain
    \begin{equation}
        \label{eq:trace-functional-rule}
        \frac{\dd}{\dd t} \Tr(H \ln H) = \Tr\rbra*{ f'(H(t)) H'(t) } = \Tr\rbra*{ H'(\ln H + I) }, \quad\text{where } H'(t) \coloneqq \frac{\dd H(t)}{\dd t}\enspace.
    \end{equation}
    
    Applying \Cref{eq:trace-functional-rule} to the path $F(t)$ defined in \Cref{eq:operator-path} yields
    \begin{subequations}
        \label{eq:path-changes}
        \begin{align}
            F'(t) &= \frac{1}{2} \Tr\rbra*{ \rho_- \rbra*{ \ln\rbra*{\rho_+ + t\rho_-} - \ln\rbra*{\rho_+ - t\rho_-} } }\\
            &\leq \abs*{F'(t)}\\
            &\leq \frac{1}{2} \norm*{\rho_-}_1 \cdot \norm*{ \ln\rbra*{\rho_+ + t\rho_-} - \ln\rbra*{\rho_+ - t\rho_-} }_\infty\\
            &\leq \frac{1}{2} \TD(\rho_0,\rho_1) \cdot \ln\rbra*{\frac{1+t}{1-t}}\enspace.
        \end{align}
    \end{subequations}
    Here, the third line follows from the H\"older inequality for Schatten norms (\Cref{lemma:matrix-Holder}), and the last line uses the identity $\norm{\rho_-}_1 = \TD(\rho_0,\rho_1)$ and the inequality 
    \begin{equation}
        \label{eq:log-operator-norm-bound}
        \norm*{ \ln\rbra*{\rho_+ + t\rho_-} - \ln\rbra*{\rho_+ - t\rho_-} }_\infty\ \leq \ln(1+t) - \ln(1-t)\enspace.
    \end{equation}
    To prove \Cref{eq:log-operator-norm-bound}, we start by observing that $-\rho_+ \preceq \rho_- \preceq \rho_+$ since both $\rho_0$ and $\rho_1$ are positive semidefinite, which implies $(1-t)\rho_+ \preceq \rho_+ \pm t\rho_- \preceq (1+t)\rho_+$. Combining this observation with the operator monotonicity of the logarithm~\cite[Section 5.3.7]{Bhatia09}, we obtain
    \begin{equation}
        \label{eq:log-operator-bound}
        \ln \rho_+ + \ln(1-t)I \preceq \ln\rbra*{\rho_+ \pm t\rho_-} \preceq \ln\rho_+ + \ln(1+t)I\enspace. 
    \end{equation}
    Consequently, \Cref{eq:log-operator-bound} directly implies \Cref{eq:log-operator-norm-bound}. 

    Noting that $\frac{1}{2} \ln\rbra*{(1+t)/(1-t)} = \arctanh(t)$, integrating the derivative $F'(t)$ in \Cref{eq:path-changes} from $0$ to $\lambda$ gives that
    \[ F(\lambda) = \int_0^\lambda F'(t)\,\dd t \leq \TD(\rho_0,\rho_1) \int_{0}^{\lambda} \arctanh(t)\,\dd t = g(\lambda) \TD(\rho_0,\rho_1)\enspace. \]
    Here, $g(\lambda) = \ln{2} - \H\rbra[\big]{\frac{1+\lambda}{2}} = \frac{1}{2} \rbra*{(1+\lambda)\ln(1+\lambda) + (1-\lambda)\ln(1-\lambda)}$.
    Lastly, we complete the proof by obtaining the simplified upper bound from the Taylor series:
    \[ g(\lambda) = \sum_{j=1}^{\infty} \frac{\lambda^{2j}}{(2j-1)(2j)} \leq \frac{1}{2} \sum_{j=1}^\infty \lambda^{2j} = \frac{\lambda^2}{2(1-\lambda^2)}\enspace. \qedhere \]
\end{proof}

\begin{remark}[Connection to the classical approach of~\cite{BDRV19}]
    For two distributions $D_0$ and $D_1$, writing $D_\pm \coloneqq (D_0 \pm D_1)/2$, the construction of~\cite[Section 4.2.1]{BDRV19} replaces the pair $D_+ \pm D_-$ by $D_+ \pm \lambda D_-$ and analyzes the resulting Jensen--Shannon divergence through a scalar expansion around $D_+$. A direct quantum analogue of this analysis does not work, as explained in~\cite[Remark~4.6]{Liu23}: while the triangular discrimination (which is related to the total variation distance) serves as a constant-factor approximation to the Jensen--Shannon divergence, the corresponding quantum quantities do not satisfy an analogous relation. Here, we retain the same interpolation $\rho_+ \pm t \rho_-$ but bypass this obstacle by analyzing the quantum Jensen--Shannon divergence directly: we bound its infinitesimal change along the path using the Fr\'{e}chet derivative and matrix inequalities and then integrate from $t=0$ to $t=\lambda$. 
\end{remark}

With the new inequalities established in \Cref{lemma:QJS-vs-TD-parameterized}, we are ready to prove \Cref{thm:QSD-in-QSZK-improved}:

\begin{proof}[Proof of \Cref{thm:QSD-in-QSZK-improved}]
    Using the \QSZK{} containment in \Cref{lemma:QJSP-in-QSZK}, it suffices to establish a Karp reduction from $\QSD[a,b]$ with $a(n)^2 - b(n) \geq 1/p(n)$ for some polynomial $p(n) \geq 1$ to $\QJSP[a',b']$ with $a'(n)-b'(n) \geq 1/q(n)$, where $q(n) \coloneqq 64 \ln{2} \cdot p(n)^2$ is also polynomial in $n$. 

    Let the quantum circuit pair $(Q_0,Q_1)$ be an instance of \QSD{}, where each $Q_j$ prepares a purification of the state $\rho_j$ for $j\in\binset$. 
    We now construct the quantum circuit $Q'_0$ that prepares $\rho'_0 \coloneqq \rho_{0,\lambda}$ by applying the dyadic convex combination construction (\Cref{lemma:dyadic-linear-combi-states}) to the circuit-coefficient pairs $(Q_0,(1+\lambda)/2)$ and $(Q_1,(1-\lambda)/2)$, with $J=2$ and $L=k+1$. 
    Similarly, the quantum circuit $Q'_1$ that prepares $\rho'_1\coloneqq \rho_{1,\lambda}$ is obtained by applying \Cref{lemma:dyadic-linear-combi-states} to $(Q_0,(1-\lambda)/2)$ and $(Q_1,(1+\lambda)/2)$, with $J=2$ and $L=k+1$.
    Since $k = O(\log{n}) \leq \poly(n)$, both circuit descriptions can be computed in polynomial time. 
    
    Applying \Cref{lemma:QJS-vs-TD-parameterized} to the promise of $\QSD[a,b]$, we obtain the following:
    \begin{itemize}
        \item If $\TD(\rho_0,\rho_1) \geq a(n)$, then $\QJS_2(\rho'_0,\rho'_1) \geq \frac{\lambda^2}{2\ln{2}} \TD(\rho_0,\rho_1)^2 \geq \frac{\lambda^2 a(n)^2}{2\ln{2}} \coloneqq a'(n)$;
        \item If $\TD(\rho_0,\rho_1) \leq b(n)$, then we have
        \begin{align*}
            \QJS_2(\rho'_0,\rho'_1) \leq \frac{\lambda^2}{2\ln{2} (1-\lambda^2)} \TD(\rho_0,\rho_1) &\leq \frac{\lambda^2}{2\ln{2}} \cdot \frac{b(n)}{1-\lambda^2}\\
            &= \frac{\lambda^2}{2\ln{2}} \cdot \rbra*{b(n) + \frac{b(n) \lambda^2}{1-\lambda^2}}\\
            &\leq \frac{\lambda^2}{2\ln{2}} \cdot \rbra*{b(n) + 2\lambda^2}\\
            &\leq \frac{\lambda^2}{2\ln{2}} \cdot \rbra*{b(n)+\frac{1}{2p(n)}} \coloneqq b'(n)\enspace.
        \end{align*}
        Here, the third line follows from the fact that $0 \leq b(n) \leq 1$ and $\lambda^2 \leq 1/\rbra*{4p(n)} \leq 1/4$, while the latter inequality also implies the last line. 
    \end{itemize}

    It remains to check the promise gap. Since $\lambda^2 > 1/(16p(n))$, a direct calculation shows that
    \[ a'-b' = \frac{\lambda^2}{2\ln{2}} \rbra*{a^2-b-\frac{1}{2p(n)}} \geq \frac{\lambda^2}{4 \ln{2} \cdot p(n)} > \frac{1}{64\ln{2} \cdot p(n)^2} \coloneqq \frac{1}{q(n)}\enspace. \]
    Therefore, we conclude that $\QSD[a,b]\in \QSZK$ when $a(n)^2-b(n) \geq 1/p(n)$ via a Karp reduction to $\QJSP[a',b'] $ with $a'(n)-b'(n) \geq 1/q(n)$. 
\end{proof}


\section{Classical messages suffice for a laconic prover}
\label{sec:classical-messages-suffice}

We show that in two-message quantum interactive proof systems, the prover's response can be made classical without incurring a large overhead in the communication. Specifically, we generalize the statement $\qqQAM = \qcQAM$ from~\cite[Theorem 1.7(ii)]{KLGN19} to arbitrary two-message quantum interactive proof systems, while explicitly tracking the length of the prover’s response:

\begin{theorem}[$\QIP_{\ell\text{-}\qubit} = \QIP_{2\ell\text{-}\bit}$]
\label{thm:QIPellqubit=QIP2ellbit}
For every efficiently computable function $\ell(n)$, $c(n)$, and $s(n)$ satisfying $1\leq \ell(n)\leq \poly(n)$ and $0 \leq s(n) < c(n) \leq 1$,
\[\QIP_{\ell\text{-}\qubit}\sbra*{2,c,s} = \QIP_{2\ell\text{-}\bit}\sbra*{2,c,s}\enspace.\]
\end{theorem}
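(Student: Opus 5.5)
We prove the two inclusions separately, using superdense coding for one direction and teleportation for the other. In both cases the protocol is modified so that the optimal acceptance probability is preserved exactly; completeness and soundness then carry over unchanged.

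\emph{From $\QIP_{2\ell\text{-}\bit}$ to $\QIP_{\ell\text{-}\qubit}$.} Fix a $\QIP_{2\ell\text{-}\bit}[2,c,s]$ verifier $V=(V_1,V_2)$. The new verifier $V'$ first prepares $\ell$ EPR pairs on registers $(\sfE,\sfE')$. It runs $V_1$ and sends the message register $\sfM$ together with $\sfE$, keeping $\sfE'$. The prover returns an $\ell$-qubit register $\sfP$. The verifier performs the Bell measurement on $(\sfP,\sfE')$, interprets the $2\ell$-bit outcome $r$ as the prover's answer, and runs $V_2$ on it.

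For completeness, an honest prover measures $\sfM$ to obtain $r$ and encodes $r$ into $\sfE$ by the Pauli operator $X^{r_1}Z^{r_2}$ (superdense coding). The verifier then decodes $r$ with certainty, so the acceptance probability is the same as before.

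For soundness, the Bell measurement outcome is a classical string $r$. Any prover action on $(\sfM,\sfE)$ therefore produces a joint distribution of $r$ and the verifier's remaining state $\sfM\sfW$. This distribution is realized by the POVM on $\sfM$ that the prover obtains by also preparing $\sfE'$ itself: $\sfE\sfE'$ is a state independent of $\sfW$, so it can be absorbed into the prover's private workspace. By \Cref{lemma:optimal-acceptance-classical-response}, the cheating value is at most $\omega(V)$.

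\emph{From $\QIP_{\ell\text{-}\qubit}$ to $\QIP_{2\ell\text{-}\bit}$.} Fix a $\QIP_{\ell\text{-}\qubit}[2,c,s]$ verifier. The new verifier likewise sends $\ell$ halves of EPR pairs $\sfE$ along with $\sfM$ and keeps $\sfE'$. The prover returns $2\ell$ classical bits $r$. The verifier applies the Pauli correction $(X^{r_1}Z^{r_2})^\dagger$ to $\sfE'$ and runs the original $V_2$ with $\sfE'$ in place of the prover's quantum register.

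For completeness, the honest prover prepares its $\ell$-qubit answer and teleports it by a Bell measurement on that answer and $\sfE$. After correction, $\sfE'$ holds exactly the intended answer.

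Soundness is the main obstacle. We need that for every POVM $\{M_r\}$ on $(\sfM,\sfE)$, the post-correction state on $(\sfE',\sfW)$, averaged over $r$, is of the form $(\Phi\otimes\mathrm{id}_\sfW)(\ketbra{\psi_x}{\psi_x})$ for some channel $\Phi$ mapping $\sfM$ to $\ell$ qubits. The channel is obtained explicitly: the cheating prover, holding $\sfM$, itself prepares an EPR pair $\sfE\sfE'$, applies $\{M_r\}$, applies the correction $(X^{r_1}Z^{r_2})^\dagger$ to $\sfE'$, and sends $\sfE'$. Since $\sfE\sfE'$ is a fixed state in tensor product with $\sfM\sfW$, the two scenarios yield identical joint states on $\sfE'\sfW$. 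Hence every classical-answer cheating strategy is simulated by a quantum-answer strategy in the original protocol, and the acceptance probability is at most $s$.

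The remaining checks are that the two simulations are well defined as channels and that the Pauli and Bell circuits keep the verifier polynomial time. Both are routine.
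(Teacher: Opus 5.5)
Your proof is correct. Your protocols are exactly the paper's: teleportation for $\QIP_{\ell\text{-}\qubit}\subseteq\QIP_{2\ell\text{-}\bit}$ and superdense coding for the converse, with the same completeness argument. Where you genuinely differ is the soundness step.

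\emph{The paper's soundness argument.} It is non-constructive. Because the verifier's private register is never touched, the final state $\sigma$ on the verifier's registers has the same marginal on $\sfV$ as $\ket{\psi}_{\sfM_1\sfV}$; this holds after the Pauli correction and, respectively, after the Bell measurement. By the unitary equivalence of purifications, after padding with ancillas to match dimensions, there is a channel $\Lambda$ acting on $\sfM_1$ alone with $(\Lambda\otimes\Id_{\sfV})(\ketbra{\psi}{\psi})=\sigma$. In the superdense-coding direction the paper then composes $\Lambda$ with a computational-basis measurement to obtain the required POVM.

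\emph{Your soundness argument.} You exhibit the simulating prover explicitly. It prepares the EPR pairs itself, runs the given cheating prover, and then performs the verifier's decoding step locally (the Bell measurement, respectively the Pauli correction). This is valid because the EPR pairs start in tensor product with $\sfM\sfW$ and the decoding never acts on $\sfW$.

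\emph{Comparison.} Your route is more elementary and constructive: the simulator is just the cheating prover plus the verifier's own $O(\ell)$-gate decoding circuit. The paper's route never needs to inspect the decoding step at all; it uses only that $\sfV$ is untouched and that the initial joint state is pure.

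One small slip: in the superdense-coding soundness you refer to ``the verifier's remaining state $\sfM\sfW$''. After the first message the verifier keeps only $\sfW$, besides $\sfE'$, which the Bell measurement consumes. The relevant object is therefore the joint state of $r$ and $\sfW$.
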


As we will see in the proof of \Cref{thm:QIPellqubit=QIP2ellbit}, our construction preserves the public-coin property, yielding the following quantitative version of the statement $\qqQAM = \qcQAM$.

\begin{corollary}[$\qqQAM = \qcQAM$, a quantitative version]
\label{cor:qqQAMell=qcQAM2ell}
For every efficiently computable function $\ell(n)$, $c(n)$, and $s(n)$ satisfying $1\leq \ell(n)\leq \poly(n)$ and $0 \leq s(n) < c(n) \leq 1$,
\[\qqQAM[\ell, c, s] = \qcQAM[2\ell, c, s]\enspace.\]
\end{corollary}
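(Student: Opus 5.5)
The plan is to prove the two inclusions separately, using teleportation for $\QIP_{\ell\text{-}\qubit}\subseteq\QIP_{2\ell\text{-}\bit}$ and superdense coding for the reverse. In both directions we enlarge the verifier's first message with EPR halves. We then check that completeness and soundness are preserved exactly, by showing that the two systems have the same maximum acceptance probability.

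\emph{Qubits to bits (teleportation).} Let $V=(V_1,V_2)$ be a $\QIP_{\ell\text{-}\qubit}[2,c,s]$ verifier. The new verifier $V'$ runs $V_1$ and prepares $\ell$ EPR pairs on $(\sfE_P,\sfE_V)$. It sends $\sfM$ together with $\sfE_P$ and keeps $\sfE_V$. On receiving $2\ell$ classical bits $(x,z)$, it applies the Pauli correction $\PauliX^{x}\PauliZ^{z}$ to $\sfE_V$, treats $\sfE_V$ as the prover's $\ell$-qubit response register, and runs $V_2$.

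Completeness follows because an honest prover who would send an $\ell$-qubit state $\sfM'$ instead teleports it. It performs a Bell measurement on $(\sfM',\sfE_P)$ and sends the $2\ell$ outcome bits, so the verifier reconstructs exactly the state on $(\sfM',\text{rest})$ and the acceptance probability is unchanged.

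For soundness, take any cheating prover for $V'$. It is a POVM $\{N_{x,z}\}$ on its private register, $\sfM$ and $\sfE_P$. We simulate it by an $\ell$-qubit prover for $V$: this prover locally prepares its own EPR pairs $(\sfE_P,\sfE_V)$ and applies $\{N_{x,z}\}$. Conditioned on outcome $(x,z)$, it applies $\PauliX^x\PauliZ^z$ to its copy of $\sfE_V$ and sends $\sfE_V$ as its quantum response. The joint state entering $V_2$ is then identical to that in $V'$, so the acceptance probabilities coincide. Hence $\omega(V')\le\omega(V)$, and together with completeness we get equality.

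\emph{Bits to qubits (superdense coding).} Let $V$ be a $\QIP_{2\ell\text{-}\bit}$ verifier. The new verifier $V'$ additionally sends halves of $\ell$ EPR pairs. On receiving an $\ell$-qubit register, it Bell-measures that register jointly with its own EPR halves, which yields $2\ell$ bits, and runs $V_2$ on those bits.

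Completeness holds because an honest prover superdense-codes its string: it applies $\PauliX^x\PauliZ^z$ to its EPR halves and sends them back, so the Bell measurement returns exactly $(x,z)$.

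For soundness, observe that any $\ell$-qubit response $\sfM'$ is combined with the fixed EPR halves $\sfE_V$ and then measured in the Bell basis. The verifier's post-measurement action therefore depends only on the classical outcome. The induced distribution of outcomes, correlated with $\sfW$, is generated by a POVM on the prover's side: the prover's channel followed by the Bell measurement with the verifier's EPR halves. This is an instrument on $(\sfM,\sfE_P,\text{private})$ together with $\sfE_V$. The key step is to show that this instrument acts on $\sfE_V$ and the prover's registers but not on $\sfW$, so by \Cref{lemma:optimal-acceptance-classical-response} it amounts to a quantum-to-classical channel on the registers the prover controls plus the auxiliary $\sfE_V$.

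This is where I expect the main subtlety: $\sfE_V$ is the verifier's register, not the prover's. The point to exploit is that $\sfE_V$ is merely half of an EPR pair independent of $\sfW$. A $2\ell$-bit prover in the original system can therefore prepare this maximally entangled ancilla itself and perform the composite measurement locally. Since the resulting state on $(\sfM,\sfE_P,\sfE_V)$ has the same distribution relative to $\sfW$, we get $\omega(V')\le\omega(V)$.

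\emph{Efficiency and the corollary.} Both constructions add only $O(\ell)$ gates and EPR pairs, so $V'$ is polynomial-time whenever $\ell\le\poly(n)$. If $V$ is quantum public coin, meaning $V_1$ only sends EPR halves, then $V'$ also only sends EPR halves. Hence the same maps prove \Cref{cor:qqQAMell=qcQAM2ell}.
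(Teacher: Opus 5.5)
Your proof is correct. The protocols match the paper's: teleportation through $\ell$ extra EPR pairs for qubits-to-bits, superdense coding for bits-to-qubits, and public-coin structure preserved because the only new verifier message is EPR halves.

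The difference is the soundness argument. The paper (proof of \Cref{thm:QIPellqubit=QIP2ellbit}) argues existentially. The verifier's private register is untouched after the first message, so the final state on (reconstructed message register, private register) has the same private marginal as the pure state $\ket{\psi}$. By unitary equivalence of purifications, there is then a channel $\Lambda$ on the original message register reproducing that final state. In the superdense direction this channel is followed by a computational-basis measurement, and $\Lambda$ serves as the simulating prover.

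You instead build the simulating prover explicitly. It prepares the EPR pairs itself, runs the given cheating strategy, and performs the verifier's extra step locally (the Pauli correction, respectively the Bell measurement). This is legitimate because those steps act only on $\sfE_V$ and the prover's message, never on $\sfW$, and the EPR pairs are in tensor product with $\ket{\psi}_{\sfM\sfW}$.

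Your route is more elementary. It also gives a simulating prover whose complexity is that of the original cheating prover plus $O(\ell)$ gates, which the purification argument does not provide. The paper's route is somewhat more general: it needs only purity of $\ket{\psi}$ and an unchanged private marginal, so it would still work if the verifier's extra resources were not preparable by the prover.

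In your second direction, the detour through the ``instrument'' and \Cref{lemma:optimal-acceptance-classical-response} is unnecessary. Your closing local-simulation observation is by itself the complete soundness proof.
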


\begin{proof}[Proof of \Cref{thm:QIPellqubit=QIP2ellbit}]
We show two containments.

\paragraph{The containment $\QIP_{\ell\text{-}\qubit}\sbra*{2,c,s} \subseteq \QIP_{2\ell\text{-}\bit}\sbra*{2,c,s}$.}

Given a $\QIP\sbra*{2}$ protocol $\protocol{P_1}{V_1}$ where the prover $P_1$ sends $\ell$ qubits, we provide a $\QIP\sbra*{2}$ protocol $\protocol{P_2}{V_2}$ for the same promise problem as $\protocol{P_1}{V_1}$ in \Cref{protocol:QIP2ellbit[2]} such that the prover $P_2$ sends $2\ell$ bits. The construction is based on the idea of quantum teleportation \cite{BBCJPW93}. A similar teleportation-based construction has been used in~\cite[Lemma 7.2]{KLGN19} to prove ${\rm qcq}\text{-}\QAM \subseteq {\rm qcc}\text{-}\QAM$, and also appears in~\cite[Theorem 1.7(ii)]{KLGN19}.

\begingroup
\LinesNotNumbered
\begin{algorithm}[!ht]
    \SetAlgorithmName{Protocol}{protocol}{List of Protocols}
    \caption{A $\QIP_{2\ell\text{-}\bit}\sbra*{2}$ protocol $\protocol{P_2}{V_2}$ based on a $\QIP_{\ell\text{-}\qubit}\sbra*{2}$ protocol $\protocol{P_1}{V_1}$.}
	\label{protocol:QIP2ellbit[2]}
    \SetEndCharOfAlgoLine{.}
    \SetKwFor{While}{}{:}{}
    \SetKwFor{For}{For}{:}{}
    \SetKwIF{If}{ElseIf}{Else}{If}{:}{elif}{Else:}{}%
    \SetKwComment{Comment}{// }{}
    \SetKwInOut{Input}{Input}
    \SetKwInOut{Output}{Output}

    \smallskip
    \Input{$x \in \calI \coloneq (\calI_\yes,\calI_\no)$, where $\calI \in \QIP_{\ell\text{-}\qubit}\sbra*{2,c,s}$ corresponds to $\protocol{P_1}{V_1}$.}
    \smallskip
    \Output{ACCEPT or REJECT.}
    \medskip

    \textbf{1.} $V_2$ simulates $V_1$ on the instance $x$ to obtain a quantum state on registers $(\sfM_1, \sfV)$ where $\sfM_1$ contains the message of $V_1$ and $\sfV$ is the private register of $V_1$. $V_2$ then prepares $\ell$ EPR pairs $\ket{\Phi}_{\sfP'\sfV'} \coloneq \frac{1}{\sqrt{2^{\ell}}}\sum_{t \in \binset^\ell}\ket{t, t}_{\sfP'\sfV'}$, and sends registers $(\sfM_1, \sfP')$ to $P_2$ while keeping the registers $(\sfV, \sfV')$.
    \medskip
    
    \textbf{2.} $V_2$ receives $(a, b)$, where $a, b \in \binset^\ell$ are supposed to the measurement outcome that $P_2$ obtains after simulating $P_1$ on register $\sfM_1$ to obtain an $\ell$-qubit message state in $\sfM_2$, and doing a Bell-basis measurement on registers $(\sfM_2, \sfP')$.
    \medskip
    
    \textbf{3.} $V_2$ applies $\PauliX(a)\PauliZ(b)$ on the register $\sfV'$, and invokes $V_1$ on the instance $x$ where we use the state in register $\sfV'$ as the message sent by $P_1$, and the state in register $\sfV$ as the private state of $V_1$. $V_2$ outputs the result of $V_1$.
\end{algorithm}
\endgroup

\begin{description}
    \item[Completeness.] In the protocol $\protocol{P_2}{V_2}$, the honest prover $P_2$ performs the quantum teleportation on the quantum message, and thus the verifier $V_2$ can recover the same $\ell$-qubit message as in the original protocol. Hence the completeness is preserved.
    \item[Soundness.] Suppose a cheating prover $P_2^*$ convinces $V_2$ on input $x$ with probability $\epsilon$. We show that there exists a cheating prover $P_1^*$ that convinces $V_1$ on the same input $x$ with the same probability $\epsilon$.

    Let $\sigma_{\sfV'\sfV}$ be the state on registers $(\sfV', \sfV)$ when $V_2$ interacts with $P_2^*$ after $V_2$ applies the Pauli correction $X(a)Z(b)$. Because no one has touched the register $\sfV$ since $V_2$ sends the state in the first step, we obtain $\Tr_{\sfV'}(\sigma_{\sfV'\sfV}) = \Tr_{\sfM_1}\rbra*{\ketbra{\psi}{\psi}_{\sfM_1\sfV}}$ where $\ket{\psi}_{\sfM_1\sfV}$ is the joint state of registers $(\sfM_1, \sfV)$ after $V_1(x)$ outputs the message. 

    Next we show that there exists a quantum channel $\Lambda \colon \sfM_1 \to \sfV'$ such that
    \begin{align}\label{eqn:exist-channel}
        (\Lambda\otimes \Id_\sfV) \rbra*{\ketbra{\psi}{\psi}_{\sfM_1\sfV}} = \sigma_{\sfV'\sfV}\enspace.
    \end{align}

    Let $\ket{\phi}_{\sfA\sfV'\sfV}$ be a purification of $\sigma_{\sfV'\sfV}$. Without loss of generality, we can assume the size of the environment register $\sfA$ is larger than the size of $\sfM_1$. We have that \[\Tr_{\sfA\sfV'}\rbra*{\ketbra{\phi}{\phi}_{\sfA\sfV'\sfV}} = \Tr_{\sfV'}\rbra*{\sigma_{\sfV'\sfV}} = \Tr_{\sfM_1}\rbra*{\ketbra{\psi}{\psi}_{\sfM_1\sfV}}\enspace.\]

    By the unitary equivalence of purifications~\cite[Theorem 2.12]{Watrous18}, there exists a unitary $U$ acting trivially on $\sfV$ such that,
    \[(U\otimes I_\sfV)\ket{\psi}_{\sfM_1\sfV}\ket{\bar{0}}_{\sfA'} = \ket{\phi}_{\sfA\sfV'\sfV}\enspace, \]  
    where $\ket{\bar{0}}$ denotes the state of the the ancillary qubits used to ensure that the left-hand side and the right-hand side have the same number of qubits. 
    Then one can implement the following channel, which satisfies \Cref{eqn:exist-channel}: initialize the ancillary qubits in the state $\ket{\bar{0}}_{\sfA'}$, apply the above unitary $U$, and trace out the register $\sfA$.

    The cheating prover $P_1^*$ can use the channel $\Lambda$ in \Cref{eqn:exist-channel} to convince $V_1$ on input $x$ with exactly the same probability $\epsilon$: on receiving $\sfM_1$, $P_1^*$ applies the channel $\Lambda$ and sends back the register $\sfV'$. Then the verifier $V_1$'s view is exactly $\sigma$, so $V_1$ outputs accept with probability $\epsilon$. Hence the soundness is preserved. 
\end{description}

\paragraph{The containment $\QIP_{2\ell\text{-}\bit}\sbra*{2,c,s} \subseteq \QIP_{\ell\text{-}\qubit}\sbra*{2,c,s}$.}
Given a $\QIP\sbra*{2}$ protocol $\protocol{P_1}{V_1}$ where the prover $P_1$ sends $2\ell$ bits, we provide a $\QIP\sbra*{2}$ protocol $\protocol{P_2}{V_2}$ for the same promise problem as $\protocol{P_1}{V_1}$ in \Cref{protocol:QIPellqubit[2]} such that the prover $P_2$ sends $\ell$ qubits. The construction is based on the idea of superdense coding \cite{BW92}.

\begingroup
\LinesNotNumbered
\begin{algorithm}[!ht]
    \SetAlgorithmName{Protocol}{protocol}{List of Protocols}
    \caption{A $\QIP_{\ell\text{-}\qubit}\sbra*{2}$ protocol $\protocol{P_2}{V_2}$ based on a $\QIP_{2\ell\text{-}\bit}\sbra*{2}$ protocol $\protocol{P_1}{V_1}$.}
	\label{protocol:QIPellqubit[2]}
    \SetEndCharOfAlgoLine{.}
    \SetKwFor{While}{}{:}{}
    \SetKwFor{For}{For}{:}{}
    \SetKwIF{If}{ElseIf}{Else}{If}{:}{elif}{Else:}{}%
    \SetKwComment{Comment}{// }{}
    \SetKwInOut{Input}{Input}
    \SetKwInOut{Output}{Output}

    \smallskip
    \Input{$x \in \calI \coloneq (\calI_\yes,\calI_\no)$, where $\calI \in \QIP_{2\ell\text{-}\bit}\sbra*{2}$ corresponds to $\protocol{P_1}{V_1}$.}
    \smallskip
    \Output{ACCEPT or REJECT.}
    \medskip

    \textbf{1.} $V_2$ simulates $V_1$ on the instance $x$ to obtain a quantum state on registers $(\sfM_1, \sfV)$ where $\sfM_1$ contains the message of $V_1$ and $\sfV$ is the private register of $V_1$. $V_2$ then prepares $\ell$ EPR pairs $\ket{\Phi}_{\sfP'\sfV'} \coloneq \frac{1}{\sqrt{2^{\ell}}}\sum_{t \in \binset^\ell}\ket{t, t}_{\sfP'\sfV'}$, and sends registers $(\sfM_1, \sfP')$ to $P_2$ while keeping the registers $(\sfV, \sfV')$.
    \medskip
    
    \textbf{2.} $V_2$ receives a register $\sfP'$, which is supposed to be obtained through the following procedure: $P_2$ simulates $P_1$ on register $\sfM_1$ to get a $2\ell$-bit string $s$, and then applies $X(s_1)Z(s_2)$ to the register $\sfP'$, where $s_1$ be the first $\ell$ bits of the string $s$ and $s_2$ be the second $\ell$ bits of the string $s$.
    \medskip
    
    \textbf{3.} $V_2$ measures $(\sfP', \sfV')$ in Bell basis to get an outcome $(s_1, s_2)$ where $s_1, s_2 \in \binset^\ell$, and invokes $V_1$ on the instance $x$ where we use the state in register $\sfV$ as the private state of $V_1$, and use $s_1 \parallel s_2$ as the $2\ell$-bit message sent by $P_1$. $V_2$ outputs the result of $V_1$.
\end{algorithm}
\endgroup

\begin{description}
    \item[Completeness.] In the protocol $\protocol{P_2}{V_2}$, the honest prover $P_2$ transmits the $2\ell$-bit classical message by sending an $\ell$-qubit quantum message via superdense coding, and thus the verifier $V_2$ can recover the same $2\ell$-bit classical message as in the original protocol. Hence the completeness is preserved.
    \item[Soundness.] Suppose an adversary $P_2^*$ convinces $V_2$ on input $x$ with probability $\epsilon$. We show that there exists an adversary $P_1^*$ that convinces $V_1$ on input $x$ with probability $\epsilon$.

    Let $\sigma$ be the joint state of the measurement outcome and register $\sfV$ after $V_2$ measures $(\sfP', \sfV')$ in the Bell basis. Then we can write $\sigma$ as $\sum_{s_1, s_2}\ketbra{s_1, s_2}{s_1, s_2}_{\sfM_2} \otimes \rho_{s_1, s_2}$ where $\rho_{s_1, s_2}$ is the subnormalized state on $\sfV$ conditioned on the Bell-basis measurement outcome $(s_1, s_2)$. 
    Since no one has touched register $\sfV$ after $V_2$ sends the state in the first step, we obtain $\Tr_{\sfM_2}(\sigma_{\sfM_2\sfV}) = \Tr_{\sfM_1}\rbra*{\ketbra{\psi}{\psi}_{\sfM_1\sfV}}$, where $\ket{\psi}_{\sfM_1\sfV}$ is the joint state of registers $(\sfM_1, \sfV)$ after $V_1(x)$ outputs the message. 

    Next, we show that there exists a POVM $\{M_{s_1, s_2}\}_{s_1, s_2}$ on register $\sfM_1$ such that 
    \begin{align}\label{eqn:exist-POVM-to-get-rho}
        \rho_{s_1, s_2} = \Tr_{\sfM_1} \rbra*{\rbra*{M_{s_1, s_2}\otimes I_\sfV}\ketbra{\psi}{\psi}_{\sfM_1\sfV} }\enspace.
    \end{align}

    By the same argument as in the soundness analysis of \Cref{protocol:QIP2ellbit[2]}, there exists a quantum channel $\Lambda \colon \sfM_1 \to \sfM_2$ such that 
    \[\rbra*{\Lambda\otimes\Id_{\sfV}}(\ketbra{\psi}{\psi}_{\sfM_1\sfV}) = \sigma_{\sfM_2\sfV} = \sum_{s_1, s_2}\ketbra{s_1, s_2}{s_1, s_2}_{\sfM_2} \otimes \rho_{s_1, s_2}\enspace.\]

    Then the POVM induced by first applying the channel $\Lambda$, and then making a computational basis measurement on $\sfM_2$ satisfies \Cref{eqn:exist-POVM-to-get-rho}.

    The cheating prover $P_1^*$ can use the POVM $\{M_{s_1, s_2}\}_{s_1, s_2}$ from \Cref{eqn:exist-POVM-to-get-rho} to convince $V_1$ on input $x$ with exactly the same probability $\epsilon$: upon receiving $\sfM_1$, $P_1^*$ applies the POVM $\{M_{s_1, s_2}\}_{s_1, s_2}$ to register $\sfM_1$, and sends back the measurement result. Then the verifier $V_1$'s view is exactly $\sigma$, so $V_1$ accepts with probability $\epsilon$. Hence the soundness is preserved. \qedhere
\end{description}
\end{proof}


\section{Omitted proofs}

\subsection{Proof of \texorpdfstring{\Cref{lemma:extractor-works}}{}}\label{subsection:proof-of-extractor}

\extractorWorks*

\begin{proof}
Define $A_u \coloneq \frac{1}{2^\ell}\sum_{s \in \binset^\ell}(-1)^{\innerprodF{u}{s}} X_{x, s}$. Notice that $\varsigma$ and $\varsigma'$ have a common component $(1 - p_x)\ketbra{1, 0, 0^\ell, 0, \bar{0}}{1, 0, 0^\ell, 0, \bar{0}}$, we can derive that
\begin{align*}
&\TD\rbra*{\varsigma, \varsigma'}\\
=& \frac{1}{2}\Tr\rbra*{\abs{\frac{1}{2^{2\ell + 2}}\sum_{u, s \in \binset^\ell, v, b \in \binset}(-1)^{h_{u, v}(s) + b + 1}\ketbra{0}{0} \otimes \ketbra{b}{b} \otimes \ketbra{u, v}{u, v} \otimes X_{x, s}}}\\
=& \frac{1}{2^{2\ell + 3}} \sum_{u \in \binset^\ell, v, b \in \binset}\Tr\rbra*{\abs{\sum_{s \in \binset^\ell}(-1)^{h_{u, v}(s)}X_{x, s}}}\\
=& \frac{1}{2^{\ell + 1}}\sum_{u \in \binset^\ell}\Tr\rbra*{\abs{A_u}}\\
=& \frac{1}{2^{\ell + 1}}p_x + \frac{1}{2^{\ell + 1}}\sum_{u \in \binset^\ell \setminus \{0^\ell\}}\Tr\rbra*{\abs{A_u}}\\
\leq & \frac{1}{2^{\ell + 1}} p_x + \frac{1}{2^{\ell + 1}}\sqrt{\rbra*{2^\ell - 1}\sum_{u \in \binset^\ell \setminus \{0^\ell\}}\Tr\rbra*{\abs{A_u}}^2}
\end{align*} 
where in the second line, we plug in $\varsigma$ and $\varsigma'$, in the third line, we use the fact that the state in the second line is block diagonal, in the fifth line, we use the fact that $X_{x, s} \succeq 0$ and thus $\Tr\rbra*{\abs{A_{0^\ell}}} = \frac{1}{2^\ell}\Tr\rbra*{\abs{\sum_{s \in \binset^\ell}X_{x, s}}} = \frac{1}{2^\ell}\Tr\rbra*{\sum_{s \in \binset^\ell}X_{x, s}} = p_x$, and we apply the Cauchy--Schwarz inequality in the last line.

Let $\overline{X}_{x} \coloneq \frac{1}{2^\ell}\sum_{s \in \binset^\ell}X_{x, s}$. By \cite[Lemma 5.1.2]{Renner05}, since the support of $\overline{X}_x$ contains the support of $A_u$,
\begin{align}\label{eqn:upper-bound-for-sqrt(A_u)}
    \Tr\rbra*{\abs{A_u}}^2 \leq \Tr\rbra*{\overline{X}_x}\Tr\rbra*{\rbra*{\overline{X}_x^{-1/4}A_u\overline{X}_x^{-1/4}}^2} = p_x \Tr\rbra*{\rbra*{\overline{X}_x^{-1/4}A_u\overline{X}_x^{-1/4}}^2}\enspace.
\end{align}
Here, all negative powers of $\overline{X}_{x}$ are understood as the corresponding powers of its Moore--Penrose inverse on $\supp\rbra{\overline{X}_{x}}$.
Thus we can continue the above inequality to get that
\begin{subequations}
\label{eqn:the-inequality-to-be-continued}
\begin{align}
&\TD\rbra*{\varsigma, \varsigma'}\\
\leq & \frac{1}{2^{\ell + 1}} p_x + \frac{1}{2^{\ell + 1}} \sqrt{\rbra*{2^\ell - 1}p_x\sum_{u \in \binset^\ell \setminus \{0^\ell\}}\Tr\rbra*{\rbra*{\overline{X}_x^{-1/4}A_u\overline{X}_x^{-1/4}}^2}}\\
= & \frac{1}{2^{\ell + 1}} p_x + \frac{1}{2^{2\ell + 1}} \sqrt{\rbra*{2^\ell - 1}p_x\sum_{\substack{u \in \binset^\ell \setminus \{0^\ell\}, \\s, s' \in \binset^\ell}}(-1)^{\innerprodF{u}{s \oplus s'}}\Tr\rbra*{\overline{X}_x^{-1/2}X_{x, s}\overline{X}_x^{-1/2}X_{x, s'}}}\enspace,
\end{align}
\end{subequations}
where in the second line, we use \Cref{eqn:upper-bound-for-sqrt(A_u)}, and in the last line, we plug in $A_u$.

Furthermore, we simplify the term $\sum_{u \in \binset^\ell \setminus \{0^\ell\}, s, s' \in \binset^\ell}(-1)^{\innerprodF{u}{s \oplus s'}}\Tr\rbra*{\overline{X}_x^{-1/2}X_{x, s}\overline{X}_x^{-1/2}X_{x, s'}}$ as shown below: 
\begin{align*}
    &\sum_{\substack{u \in \binset^\ell \setminus \{0^\ell\}, \\s, s' \in \binset^\ell}}(-1)^{\innerprodF{u}{s \oplus s'}}\Tr\rbra*{\overline{X}_x^{-1/2}X_{x, s}\overline{X}_x^{-1/2}X_{x, s'}}\\
    =& (2^\ell - 1)\sum_{s \in \binset^\ell}\Tr\rbra*{\overline{X}_x^{-1/2}X_{x, s}\overline{X}_x^{-1/2}X_{x, s}} - \sum_{s, s' \in \binset^\ell \text{ s.t. } s \ne s'}\Tr\rbra*{\overline{X}_x^{-1/2}X_{x, s}\overline{X}_x^{-1/2}X_{x, s'}}\\
    =& 2^\ell(2^\ell - 1)\sum_{s \in \binset^\ell}\Tr\rbra*{M_sX_{x, s}} - 2^\ell\sum_{s, s' \in \binset^\ell \text{ s.t. } s \ne s'}\Tr\rbra*{M_s X_{x, s'}}
\end{align*}
where we split the summation according to whether $s = s'$ in the first line, and in the last line, for $s \in \binset^\ell$, we define $M_s \coloneq \frac{1}{2^{\ell}}\overline{X}_x^{-1/2} X_{x, s}\overline{X}_x^{-1/2}$.

Since each element $M_s$ is positive semidefinite and $\sum_{s}M_s = \overline{X}_x^{-1/2} \overline{X}_x \overline{X}_x^{-1/2} \preceq I$, we can derive a POVM $\{N_s\}_s$ with $N_{0^\ell} = M_{0^\ell} + I - \sum_s M_s$ and $N_s = M_s$ for $s \ne 0^\ell$. Moreover, $\sum_s M_s$ is the projection onto the space $\supp(\overline{X}_x)$ and thus $\Tr\rbra*{N_s X_{x, s'}} = \Tr\rbra*{M_s X_{x, s'}}$ for every $s, s' \in \binset^\ell$. As a result, the above term can be further simplified and upper bounded as below: 
\begin{subequations}
\label{eqn:simplification-of-the-term}
\begin{align}
    &\sum_{\substack{u \in \binset^\ell \setminus \{0^\ell\}, \\s, s' \in \binset^\ell}}(-1)^{\innerprodF{u}{s \oplus s'}}\Tr\rbra*{\overline{X}_x^{-1/2}X_{x, s}\overline{X}_x^{-1/2}X_{x, s'}}\\
    =& 2^\ell(2^\ell - 1)\sum_{s \in \binset^\ell}\Tr\rbra*{N_sX_{x, s}} - 2^\ell\sum_{s, s' \in \binset^\ell \text{ s.t. } s \ne s'}\Tr\rbra*{N_s X_{x, s'}}\\
    =& 2^\ell(2^\ell - 1)\sum_{s \in \binset^\ell}\Tr\rbra*{N_sX_{x, s}} - 2^\ell\sum_{s \in \binset^\ell}\Tr\rbra*{(I - N_s) X_{x, s}}\\
    =& 2^{2\ell} \sum_{s \in \binset^\ell}\Tr\rbra*{N_sX_{x, s}} - 2^{2\ell} p_x\\
    \leq& 2^{2\ell} b - 2^{2\ell} p_x\enspace,
\end{align}
\end{subequations}
where the last line uses $\max_{\{M^*_s\}_{s \in \binset^\ell}} \sum_{s \in \binset^\ell}\Tr\rbra*{M^*_s X_{x, s}} \leq b$.

\Cref{eqn:the-inequality-to-be-continued,eqn:simplification-of-the-term} yield
\begin{align*}
    \TD\rbra*{\varsigma, \varsigma'} \leq& \frac{1}{2^{\ell + 1}} p_x + \frac{1}{2^{\ell + 1}} \sqrt{\rbra*{2^\ell - 1}p_x(b - p_x)}\\
    =& \frac{b}{2^{\ell + 1}} \cdot \rbra*{ \frac{p_x}{b} + \sqrt{\rbra*{2^\ell - 1}\frac{p_x}{b}\rbra*{1 - \frac{p_x}{b}}} }\\
    \leq& \frac{b}{2^{\ell + 1}} \cdot \max_{y \in [0, 1]} \rbra*{y + \sqrt{\rbra*{2^\ell - 1}y(1 - y)}}\\
    \leq & \frac{b}{2^{\ell + 1}} \cdot \frac{1 + 2^{\ell/2}}{2}\enspace.
\end{align*}
Here, we use that $b \geq \sum_{s \in \binset^\ell}\Tr(\frac{I}{2^\ell} \cdot X_{x, s}) = p_x$ in the third line.
\end{proof}

\end{document}